\def\arcr{\@arraycr}
\definecolor{notecolor}{rgb}{0.84,0.12,0.5}
\newcommand{\contcost}[0]{\ensuremath{\mathbb{T}}}
\newcommand{\invrwalk}[0]{\ensuremath{\text{inv}(s,s_{\min})}}
\newcommand{\toolname}[0]{\textsf{Absynth}}
\newcommand{\numexamp}[0]{\ensuremath{39}}
\newcommand{\word}[1]{\ensuremath{\mathop{\,\mathsf{#1}\,}}}
\newcommand{\code}[1]{\text{\sl #1}}
\newcommand{\tree}[2][XX]{%
   \ifthenelse{\equal{#1}{XX}}%
      {\ensuremath{T(#2)}}%
            {\ensuremath{T^{#1}(#2)}}}
\newcommand{\tr}[2][XX]{%
   \ifthenelse{\equal{#1}{XX}}%
      {\ensuremath{T(#2)}}%
            {\ensuremath{T^{#1}(#2)}}}
\newlength{\rWidth}
\newcommand{\funold}[3][XX]{%
   \ifthenelse{\equal{#1}{XX}}%
      {\ensuremath{#2 {\rightarrow} #3}}%
      { \settowidth{\rWidth}{\ensuremath{#1}}%
        \ensuremath{ #2 {\xrightarrow{\hspace{\rWidth}}\hspace{-0.84\rWidth}}\!\!\!^%
         {#1}
         \hspace{0.2\rWidth}\,\, #3}}}
\newcommand{\sem}[1]{\ensuremath{\llbracket #1 \rrbracket}}
\newcommand{\Rule}[4][]{\ensuremath{\inferrule*[right={(#2)},#1]{#3}{#4}}}
\newcommand{\RuleToplabel}[4][]{\ensuremath{\inferrule[(#2)]{#3}{#4}}}
\newcommand{\Qplus}[0]{\ensuremath{\mathbb{Q}_{\geq 0}}}
\newcommand{\Q}[0]{\ensuremath{\mathbb{Q}}}
\newcommand{\Rplus}[0]{\ensuremath{\mathbb{R}_{\geq 0}}}
\newcommand{\context}[0]{\ensuremath{\Gamma}}
\newcommand{\expt}[2]{\ensuremath{\mathbb{E}_{#1}(#2)}}
\newcommand{\prob}[1]{\ensuremath{\mathbb{P}(#1)}}
\newcommand{\dist}[1]{\ensuremath{\mu_{#1}}}
\newcommand{\proofcomment}[1]{\ensuremath{\dagger\,\,#1\,\,\dagger}}
\newcommand{\interval}[1]{\ensuremath{{|}{[}{#1}{]}{|}}}
\newcommand{\defineas}[0]{\ensuremath{:=}}
\newcommand{\progname}[1]{\ensuremath{\textsl{#1}}}
\newcommand{\State}[0]{\Sigma}                                        
\newcommand{\rul}[1]{\textcolor{ACMRed}{\sc #1}}                      
\newcommand{\transformer}[3]{\ensuremath{\mathop{\,\mathsf{ert}\,}[#1,#2](#3)}}   
\newcommand{\dotminus}{\mathbin{\text{\@dotminus}}}
\newcommand{\@dotminus}{%
  \ooalign{\hidewidth\raise1ex\hbox{.}\hidewidth\cr$\m@th-$\cr}%
}
\newcommand{\jan}[1]{{\color{blue}Jan: #1}}
\newcommand{\chan}[1]{{\color{orange}Chan: #1}}
\begin{document}

\title[Bounded Expectations: Resource Analysis for Probabilistic Programs]{Bounded Expectations: Resource Analysis for Probabilistic Programs}         


\author{Van Chan Ngo}
\orcid{nnnn-nnnn-nnnn-nnnn}             
\affiliation{
  \institution{Carnegie Mellon University}           
  \streetaddress{5000 Forbes Avenue}
  \city{Pittsburgh}
  \state{PA}
  \country{USA}
}
\email{channgo@cmu.edu}          


\author{Quentin Carbonneaux}

\affiliation{
  \institution{Yale University}           
  \streetaddress{Street3b Address2b}
  \city{New Haven}
  \state{CT}
  \country{USA}
}
\email{quentin.carbonneaux@yale.edu}         

\author{Jan Hoffmann}
\orcid{nnnn-nnnn-nnnn-nnnn}             
\affiliation{
  \institution{Carnegie Mellon University}           
  \streetaddress{5000 Forbes Avenue}
  \city{Pittsburgh}
  \state{PA}
  \country{USA}
}

\email{jhoffmann@cmu.edu}         

\begin{abstract}
  %
  This paper presents a new static analysis for deriving upper bounds
  on the expected resource consumption of probabilistic programs. The
  analysis is fully automatic and derives symbolic bounds that are
  multivariate polynomials of the inputs.
  The new technique combines manual state-of-the-art reasoning
  techniques for probabilistic programs with an effective method
  for automatic resource-bound analysis of deterministic programs.
  It can be seen as both, an extension of automatic amortized resource
  analysis (AARA) to probabilistic programs and an automation of
  manual reasoning for probabilistic programs that is based on weakest
  preconditions.
  An advantage of the technique is that it combines the clarity and
  compositionality of a weakest-precondition calculus with the
  efficient automation of AARA.
  As a result, bound inference can be reduced to off-the-shelf LP
  solving in many cases and automatically-derived bounds can be
  interactively extended with standard program logics if the
  automation fails.
  Building on existing work, the soundness of the analysis is proved
  with respect to an operational semantics that is based on Markov decision
  processes.
  The effectiveness of the technique is demonstrated with a prototype
  implementation that is used to automatically analyze \numexamp{}
  challenging probabilistic programs and randomized algorithms.
  Experimental results indicate that the derived constant factors
  in the bounds are very precise and even optimal for many programs.
\end{abstract}


\begin{CCSXML}
<ccs2012>
<concept>
<concept_id>10011007.10011006.10011008</concept_id>
<concept_desc>Software and its engineering~General programming languages</concept_desc>
<concept_significance>500</concept_significance>
</concept>
<concept>
<concept_id>10003456.10003457.10003521.10003525</concept_id>
<concept_desc>Social and professional topics~History of programming languages</concept_desc>
<concept_significance>300</concept_significance>
</concept>
</ccs2012>
\end{CCSXML}

\ccsdesc[500]{Software and its engineering~General programming languages}
\ccsdesc[300]{Social and professional topics~History of programming languages}

\keywords{probabilistic programming, resource bound analysis, static analysis}  

\maketitle

\vspace{-1.5ex}
\section{Introduction}
\label{sec:intro}

Probabilistic programming~\cite{Kozen81,Pfeffer16} is an increasingly
popular technique for implementing and analyzing Bayesian Networks and
Markov Chains~\cite{GordonHNR14}, randomized
algorithms~\cite{Barthe2016}, cryptographic
constructions~\cite{Barthe2009}, and machine-learning
algorithms~\cite{Ghahramani2015}.
Compared with deterministic programs, reasoning about probabilistic
programs adds additional complexity and challenges. As a result, there
is a renewed interest in developing automatic and manual analysis and
verification techniques that help programmers to reason about their
probabilistic code. Examples of such developments are probabilistic
program logics~\cite{Kozen81,McIver04,CelikuM05,Kamin16,OlmedoKKM16},
automatic probabilistic invariant
generation~\cite{Chakarov13,ChatterjeeFG16}, abstract interpretation
for probabilistic programs~\cite{Monniaux01,Monniaux05,CousotM12},
symbolic inference~\cite{GehrMV16}, and probabilistic model
checking~\cite{Katoen16}.

One important property that is often part of the formal and informal
analysis of programs is \emph{resource bound analysis}: What is the
amount resources such as time, memory or energy that is required to
execute a program? Over the past decade, the programming language
community has developed numerous tools that can be used to
(semi-)automatically derive non-trivial symbolic
resource bounds for
imperative~\cite{BrockschmidtEFFG14,GulwaniMC09,SinnZV14,KincaidBBR2017}
and functional
programs~\cite{Jost03,CicekBGGH16,AvanziniLM15,DannerLR15}. 

Existing techniques for resource bound analysis can be applied
to derive bounds on the \emph{worst-case} resource consumption of
probabilistic programs. However, if the control-flow
is influenced by probabilistic choices then a worst-case analysis is
often not applicable because there is no such upper bound. 
%
Consider the example \code{trader($s_{min}$,s)} in Figure~\ref{fig:rwalk_intro}(a) that
implements a 1-dimensional random walk to model the fluctuations of a
stock price \code{s}. With probability $\frac{1}{4}$
the price increases by $1$
point and with probability $\frac{3}{4}$
the price decreases by $1$
point. After every price change, a trader performs an action
\code{trade(s)}---like buying 10 shares---that can depend on the
current price until the stock price falls to $\code{$s_{min}$}$.
In the worst case, \code{s} will be incremented in every
loop iteration and the loop will not terminate. 
However, the loop terminates with probability $1$, called \emph{almost surely} (a.s.) termination~\cite{Fioriti15}. 

While a.s. termination is a useful property, we might be also
interested in the distribution of the \emph{number of loop iterations}
or, considering a different resource, in \emph{the spending of our
  trader}. The distribution of the spending depends of course on the
implementation of the auxiliary function \code{trade()}. In general,
it is not straightforward to derive such distributions, even for
relatively simple programs. Consider for example the implementation of
\code{trade()} in Figure~\ref{fig:rwalk_intro}(b). It models a trader
that buys between $0$ and $10$ shares according to a uniform
distribution. It is not immediately
clear what the distribution of the cost is. 
%
\begin{figure}[th!]
\vspace{-2ex}
  \centering
\begin{minipage}[b]{0.28\textwidth}
\centering
\begin{lstlisting}[basicstyle=\scriptsize,escapeinside={@}{@}]
void trader(int @$s_{min}$@,int s) {
  assume (@$s_{min}$@>=0);
  while (@$s>s_{min}$@) {
    s=s+1 @$\oplus_{\frac{1}{4}}$@ s=s-1;
    trade(s); 
  }
}
\end{lstlisting}
\vspace{-2ex}
\footnotesize{$\progname{(a)}$}
\vspace{-1ex}
\end{minipage}%
\begin{minipage}[b]{0.20\textwidth}
\centering
\begin{lstlisting}[basicstyle=\scriptsize,escapeinside={@}{@}]
void trade(int s) {
  int nShares;
  nShares=unif(0,10);
  while (nShares>0) {
    nShares=nShares-1;
    cost=cost+s; 
  }
}
\end{lstlisting}
\vspace{-2ex}
  \footnotesize{$\progname{(b)}$}
\vspace{-1ex}
\end{minipage}%
\caption{(a) A 1-dimensional random walk that models the progression
  of a stock price that is incremented with probability $\frac{1}{4}$
  and decremented with probability $\frac{3}{4}$
  while it is greater than $s_{\min}$. (b) A trader that decides
  to buy between $0$
  and $10$
  shares by sampling from a uniform distribution. The program cost is modeled by the global variable
  \code{cost}. 
  }
  \label{fig:rwalk_intro}
\vspace{-3ex}
\end{figure}

In this article, we are introducing a new method for deriving bounds
on the expected resource consumption of probabilistic programs.  Our
technique derives symbolic polynomial bounds, is fully automatic, and
generates certificates that are derivations in a quantitative program
logic~\cite{Kamin16,OlmedoKKM16}. 
For example, given the function \code{trade($s_{min}$,s)}, our technique
automatically derives the bound $2 {\cdot} {\max}(0,  s{-}s_{min})$ on the expected number of loop
iterations. For the total spending of the trader, our technique automatically
derives the bound
$5 {\cdot} {\max}(0, s{-}s_{min}) + 10 {\cdot} {\max}(0, s{-}s_{min})
{\cdot} {\max}(0, s_{min}) + 5 {\cdot} {\max}(0, s{-}s_{min})^2$
on the expected value of the variable \code{cost} in less than $7$
seconds. Both bounds are tight in the sense that they precisely
describe the expected resource consumption. 

To the best of our knowledge, we present the first fully automatic
analysis for deriving symbolic bounds on the expected resource consumption of
probabilistic programs with probabilistic branchings and sampling from discrete
distributions. It is also one of the few techniques that can
automatically derive polynomial properties. 
Different resource metrics can be defined either by
using a resource-counter variable or by using \code{tick} commands. The analysis
is compositional, automatically tracks size changes, and derives whole
program bounds. Note that derived time bounds also imply \emph{positive
  termination}~\cite{Fioriti15}, that is, termination with bounded expected runtime.
Compositionality is particularly tricky for probabilistic programs
since the composition of two positively
terminating programs is not a positively
terminating in general.
While we focus on bounds on the expected cost, the analysis can also
be used to derive worst-case bounds. Moreover, we can adapt the
analysis (following~\cite{NgoDFH16}) to also derive lower bounds on
the expected resource usage.

Resource bound analysis and static analysis of probabilistic programs
have developed largely independently. The key insight of our work is
that there are close connections between (manual) quantitative
reasoning methods for probabilistic programs and automatic resource
analyses for deterministic programs. Our novel analysis combines
probabilistic quantitative reasoning using a weakest precondition (WP)
calculus~\cite{McIver04,CelikuM05,Kamin16,OlmedoKKM16} with an
automatic resource analysis method that is known as automatic
amortized resource analysis
(AARA)~\cite{Jost03,HoffmannW15,CarbonneauxHRZ13,CarbonneauxHZ15},
while preserving the best properties of both worlds. On the one hand,
we have the strength and conceptual simplicity of the WP calculus.  On
the other hand, we get template-based bound inference that can be
efficiently reduced to off-the-shelf LP solving.

We implemented our analysis in the tool \toolname. We currently
support imperative integer programs that features procedures, recursion, and loops. 
We have performed an evaluation with \numexamp{} probabilistic programs
and randomized algorithms that include examples from the literature
and new challenging benchmarks.  To determine the precision of the analysis,
we compared the statically-derived bounds with the experimentally-measured resource
usage for different inputs derived by sampling.
Our experiments show that we often derive a precise bound on
the expected resource usage and the automatic inference
usually takes only seconds.  

In summary, we make the following contributions.
\begin{itemize}
\item We describe the first automatic analysis that derives symbolic
  bounds on the expected resource usage of probabilistic programs.
\item We prove the soundness of the method by showing that a
  successful bound analysis produces derivations in a probabilistic WP
  calculus~\cite{OlmedoKKM16}.
\item We show the effectiveness of the technique with a prototype
  implementation and by successfully analyzing \numexamp{} examples
  from a new benchmark set and from previous work on probabilistic programs and randomized
  algorithms.
\end{itemize}
The advantages of our technique are compositionality, efficient
reduction of bound inference to LP solving, and
compatibility with manual bound derivation in the WP calculus.


\vspace{-1.0ex}
\section{Probabilistic programs}
\label{sec:programmodel}
%
In this section we first recall some essential concepts and notations from probability theory that are used in this paper. We then present the syntax of our imperative probabilistic programming language.
\vspace{-1.0ex}
\subsection{Essential notions and concepts}
The interested readers can find more detailed descriptions in reference textbooks~\cite{Ash00}.
%
\vspace{-.5ex}
\paragraph{Probability space.} 
Consider a random experiment, the set $\Omega$ of all possible
outcomes is called the \emph{sample space}. 
A \emph{probability space} is a triple
$(\Omega, \mathcal{F}, \mathbb{P})$, where $\mathcal{F}$ is a
$\sigma$-algebra of $\Omega$ and $\mathbb{P}$ is a probability measure
for $\mathcal{F}$, that is, a function from $\mathcal{F}$ to the
closed interval $[0,1]$ such that $\prob{\Omega} = 1$ and
$\prob{A \cup B} = \prob{A} + \prob{B}$ for all disjoint sets
$A, B \in \mathcal{F}$. The elements of $\mathcal{F}$ are called
\emph{events}. 
A function $f: \Omega \rightarrow \Omega'$ is \emph{measurable} w.r.t
$\mathcal{F}$ and $\mathcal{F'}$ if $f^{-1}(B) \in \mathcal{F}$ for
all $B \in \mathcal{F'}$.
\vspace{-.5ex}
\paragraph{Random variable.} 
A \emph{random variable} $X$ is a measurable function from a probability space $(\Omega, \mathcal{F}, \mathbb{P})$ to the real numbers, e.g., it is a function $X: \Omega \rightarrow \mathbb{R} \cup \{-\infty,+\infty\}$ such that for every Borel set $B \in \mathcal{B}$, $X^{-1}(B) \defineas \{\omega \in \Omega | X(\omega) \in B\} \in \mathcal{F}$. Then the function $\dist{X}(B) = \prob{X^{-1}(B)}$, called \emph{probability distribution}
. If $\dist{X}$ measures on a countable set of reals, or the range of $X$ is countable, then $X$ is called a \emph{discrete random variable}. If $\dist{X}$ gives zero measure to every singleton set, then $X$ is called a \emph{continuous random variable}. The distribution $\dist{X}$ is often characterized by the \emph{cumulative distribution function} defined by $F_{X}(x) = \prob{X \leq x} = \dist{X}((-\infty,x])$.
\vspace{-.5ex}
\paragraph{Expectation.} 
The \emph{expected value} of a discrete random variable $X$ is the weighted average $\expt{}{X} \defineas \sum_{x_i \in R_{X}}x_i\prob{X=x_i}$, where $R_{X}$ is the range of $X$. To emphasize the distribution $\dist{X}$, we often write $\expt{\dist{X}}{X}$ instead of $\expt{}{X}$. An important property of the expectation is \emph{linearity}. If $X$ and $X'$ are random variables and $\lambda, \mu \in \mathbb{R}$ then $Y = \lambda X + \mu X'$ is a random variable and $\expt{}{Y} = \lambda\expt{}{X} + \mu\expt{}{X'}$.
\vspace{-1.0ex}
\subsection{Syntax of probabilistic programs}
The probabilistic programming language we use is a simple imperative
integer language structured into expressions and
commands. 
The abstract syntax is given by the grammar in Figure \ref{fig:syntax}.
%
%
The command $\word{id} = e \word{bop} R$, where $R$ is a (discrete)
random variable whose probability distribution is $\dist{R}$ (written
as $R \sim \dist{R}$), is a \emph{random sampling} assignment. 
It first samples according to the distribution
$\dist{R}$ to obtain a sample value then evaluates the
expression in which $R$ is replaced by the sample value. Finally, the
evaluated value is assigned to $\word{id}$.
The command $c_1 \oplus_{p} c_2$ is a \emph{probabilistic branching}.
It executes the command $c_1$ with probability $p$, or the command $c_2$
with probability $(1-p)$.

The command $\word{if} \star \; c_1 \word{else} c_2$ is a
\emph{non-deterministic} choice between $c_1$ and $c_2$.  The command
$\word{call} P$ makes a (possibly recursive) call to the procedure
with identifier
$P \in \word{PID}$. In this article, we assume that the procedures only
manipulate the global program state. Thus, we avoid to use local
variables, arguments, and $\word{return}$ commands for passing
information across procedure calls. However, we support local
variables and $\word{return}$ commands in the
implementation. Arguments, can be easily simulated by using global
variables as registers.

%
\begin{wrapfigure}{R}{0.16\textwidth}
\vspace{-2ex}
\centering
\begin{minipage}[b]{0.15\textwidth}
\centering
\begin{lstlisting}[basicstyle=\footnotesize,xleftmargin=0.00cm,stepnumber=1,escapeinside={@}{@}]
while (h<=t) {
  t=t+1;
  h=h+unif(0,10)
  @$\oplus_{\frac{1}{2}}$@ skip;
  tick(1);
}
\end{lstlisting}
\end{minipage}%
\vspace{-3ex}
\caption{The $\progname{race}$ program.}
\vspace{-3ex}
\label{fig:race}
\end{wrapfigure}
We include the built-in primitive \word{assert \; e} that terminates
the program if the expression $e$ evaluates to 0 and does
nothing otherwise. 
The primitive \word{tick(q)}, where $q \in \Qplus$ 
is used to
model resource usage of commands and thus to define the
\emph{cost model}. As we have seen in the introduction, regular
variables can also be used to define a cost model.
%
\begin{figure}[t!]
\vspace{-1ex}
\small {
\begin{displaymath}
\begin{array}{ll}
e & := \word{id} \mid n \mid e_1 \word{bop} e_2 \\
c & := \word{skip} \mid \word{abort} \mid \word{assert} e \mid \word{tick}(q) \mid \word{id} = e \\
  & \mid \word{id} = e \word{bop} R \mid \word{if} e \; c_1 \word{else} c_2 \mid \word{if} \star \; c_1 \word{else} c_2 \\
  & \mid c_1 \oplus_{p} c_2 \mid c_1; c_2 \mid \word{while} e \; c \mid \word{call} P \\
\word{bop} & := {+} \mid {-} \mid {*} \mid \word{div} \mid \word{mod} \mid {==} \mid {<>} \mid {>} \mid {<} \mid {<=} \mid {>=} \mid {\&} \mid {\mid} \\
R & \sim \dist{R} \; (\text{probability distribution})    
\end{array}
\end{displaymath}
} 
\vspace{-2ex}
\caption{Abstract syntax of the probabilistic language.}
\vspace{-3ex}
\label{fig:syntax}
\end{figure}


To represent a complete program we use a pair $(c,\mathcal{D})$, where
$c \in C$ is the body of the $\word{main}$ procedure and
$\mathcal{D}: \word{PID} \rightarrow C$ is a map from procedure
identifiers to their bodies. 
A command with no procedure calls is called \emph{closed} command.
%
The program $\progname{race}$ in Figure
\ref{fig:race} shows a complete example of a probabilistic program that
is adapted from~\cite{Chakarov13}. It models a race between a hare
(variable $h$) and a tortoise (variable $t$). The race ends when the
hare is ahead of the tortoise (exit condition $h > t$). After each unit of time (expressed using the $\word{tick}$
command, line $6$), the tortoise goes one step forward (line $3$). With probability
$\frac{1}{2}$ (line $5$) the hare remains at its position and with the same
probability it advances a random number of steps between $0$ and $10$ (line $4$). 
\vspace{-1.5ex}
\section{Expected resource bound analysis}
\label{sec:overview}
In this section, we show informally how automatic amortized resource
analysis (AARA)~\cite{Jost03,HoffmannW15,CarbonneauxHRS17} can be
generalized to compute upper bounds on the expected resource
consumption of probabilistic programs.  We first illustrate with a
classic analysis of a one-dimensional random walk how developers
currently analyze the expected resource usage. We then recap AARA for imperative
programs~\cite{CarbonneauxHZ15,CarbonneauxHRS17}. Finally, we explain
the new concept of the \emph{expected potential method} that we
develop in this work and apply it to our random walk and more challenging examples. 
\vspace{-1.5ex}
\subsection{Manual analysis of a simple random walk}
\label{sec:manualana}

Consider the following implementation of a random walk.
\begin{lstlisting}[basicstyle=\scriptsize,escapeinside={@}{@}]
while (x > 0) { x = x - 1 @$\oplus_{\frac 3 4}$@ x = x + 1; tick(1); }
\end{lstlisting}
The traditional way to analyze this program is using recurrence relations.
Let $T(n)$ be the expected runtime of the program when $x$ is initially $n$. By expected runtime we mean the expected number of $\word{tick}(1)$ statements executed.
Then, for all $n \ge 1$, $T(n)$ satisfies the following equation
$$
  T(n) = 1 + \frac 3 4 T(n-1) + \frac 1 4 T(n+1)
$$
This is a recurrence relation of degree 3, but it can be solved more easily by defining $D(n)$ to be $T(n) - T(n-1)$ and rewriting the equation as
$
3 D(n) = 4 + D(n+1)
$. 
One systematic way to find solutions for this equation is to use the generating function $G(z) = \sum_{n \ge 1} D(n)z^n$.
Writing $D$ for $D(1)$, we get 
$$
  3G(z) = 4 \sum_{n\ge 1} z^n + \frac 1 z G(z) - D = \frac{4z}{1-z} + \frac 1 z G(z) - D
$$
And thus, using algebra and 
generating functions, we have 
\begin{equation*}
\label{eq:Gn}
  G(z) 
  = \sum_{n\ge 1} (2 - D\cdot3^{n-1} + 2\cdot 3^{n-1}) z^n
\end{equation*}

To finish the reasoning, we have to find the constant $D$ using a boundary condition. 
It is clear that $T(0) = 0$, because the loop is never entered when the program is started with $x = 0$.
To find $T(1)$, we observe that the program has to first reach the state $x = n-1$ to terminate with $x = n$.  
Additionally, because each coin flip is independent of the others, reaching $n-1$ from $n$ should take the same expected time as reaching 0 from 1.
Thus, $T(n) = n\cdot T(1) = n \cdot D$ and consequently $D(n) = T(n) - T(n-1) = D$ for $n \ge 1$.
Therefore, we have $D = 2$ and $T(n) = 2\cdot n$.

There are several reasons why this classic method make it hard to automate.
First, inferring the recurrence relations is not always simple.
For example, more complex iteration patterns complicate this process.
Additionally, it is difficult to formally prove a correspondence between the program and the recurrence relation.
Second, the method for solving recurrence relations is fragile. 
If the decrement is $x = x - 2$ instead of $x = x - 1$, then the above boundary condition does not work anymore. 
Moreover, recurrence relations become of higher degrees and more difficult to solve with the use of bigger constants and multiple variables. 
Finally, the classic method is not compositional.
When programs become larger, it does not provide a principled way to reason independently on smaller components.

\vspace{-1.5ex}
\subsection{The potential method} 
It has been shown in the past decade that the \emph{potential method} of amortized analysis
provides an interesting alternative to classic resource analysis with recurrence relations. 

Assume that a program $c$ executes with initial state $\sigma$ to 
final state $\sigma'$ and
consumes $n$ resource units as defined by the tick commands, denoted $(c, \sigma) \Downarrow_n \sigma'$.
The idea of amortized analysis is to define a \emph{potential function}
$\Phi : \Sigma \to \Qplus$ that maps program states to non-negative
numbers and to show that $\Phi(\sigma) \geq n$ for all $\sigma$ such
that $(c, \sigma) \Downarrow_n \sigma'$.

To reason compositionally, we have to take into account the state
resulting from a program execution. We thus use two potential
functions $\Phi$ and $\Phi'$ that specify the available potential
before and after the execution, respectively. The functions must
satisfy the constraint $\Phi(\sigma) \ge n + \Phi'(\sigma')$ for all
states $\sigma$ and $\sigma'$ such that
$(c, \sigma) \Downarrow_n \sigma'$.  Intuitively, $\Phi(\sigma)$ must
be sufficient to pay for the resource cost of the computation and
for the potential $\Phi'(\sigma')$ on the resulting state
$\sigma'$. Thus, if $(\sigma, c_1) \Downarrow_n \sigma'$ and
$(\sigma', c_2) \Downarrow_m \sigma''$, we have
$\Phi(\sigma) \ge n + \Phi'(\sigma')$ and
$\Phi'(\sigma') \ge m + \Phi''(\sigma'')$ and therefore
$\Phi(\sigma) \ge (n + m) + \Phi''(\sigma'')$.  Note that the initial
potential function $\Phi$ provides an upper bound on the resource
consumption of the whole program.  What we have observed is that, if
we define $\{\Phi\} {c}\{\Phi'\}$ to mean
$
\forall \sigma\, n\, \sigma'.\, (\sigma, c) \Downarrow_n \sigma' \implies \Phi(\sigma) \ge
n + \Phi'(\sigma')
$, then the following familiar inference rule is valid.
$$
\vspace{-.5ex}
\Rule{QSeq}
{\{\Phi\}{c_1}\{\Phi'\} \\ \{\Phi'\}{c_2}\{\Phi''\}}
{\{\Phi\} {c_1; c_2}  \{\Phi''\}}
\vspace{-.5ex}
$$
%
Other language constructs lead to rules for the potential
functions that look very similar to Hoare logic or effect system
rules.  These rules enable reasoning about resource usage
in a flexible and compositional way, which, as a side effect, produces
a certificate for the derived resource bound.


The derivation of a resource bound using potential functions is best
explained by example. In the following deterministic example, the
worst-case cost can be bounded by $|[x,y]| = \max(0,y{-}x)$. 
\begin{lstlisting}[basicstyle=\footnotesize,escapeinside={@}{@}]
while (x<y) { x=x+1; tick(1); }
\end{lstlisting}
To derive this bound, we start with the initial potential $\Phi_0 =
|[x,y]|$, which
we also use as the loop invariant. For the loop body we have (like in
Hoare logic) to derive a triple
$\{\Phi_0\}{x=x+1;\word{tick}(1)}\{\Phi_0\}$. We can only do so
if we utilize the fact that $x<y$ at the beginning of the loop body.
The reasoning then works as follows. We start with the potential
$|[x,y]|$ and the fact that $|[x,y]| > 0$ before the assignment. If
we denote the updated version of $x$ after the assignment by $x'$ then
the relation $|[x,y]| = |[x',y]| + 1$ between the potentials before and
after the assignment holds. This means that we have the
potential $|[x,y]| + 1$ before $\word{tick}(1)$ which consumes $1$ resource unit. 
We end up with potential $|[x,y]|$ after the loop body and have established the loop invariant.

\vspace{-1.5ex}
\subsection{The expected potential method}
Maybe surprisingly, the potential method of AARA can be adapted to
reason about upper bounds on the expected cost of probabilistic
programs.

Like in the deterministic case, we would like to work with triples of
the form $\{\Phi\}c\{\Phi'\}$ for potential functions
$\Phi,\Phi' : \Sigma \to \Qplus$ to ensure compositional reasoning.
However, in the case of probabilistic programs we need to take into
account the expected value of the potential function $\Phi'$ w.r.t 
the distribution of final states resulting from the
program execution. 
More precisely, the two functions must satisfy the constraint
%
$$\Phi(\sigma) \geq
\expt{\sem{c,\mathcal{D}}(\sigma)}{\text{cost}}
+ \expt{\sem{c,\mathcal{D}}(\sigma)}{\Phi'}$$
for all program states $\sigma$. Here we write
$\sem{c,\mathcal{D}}(\sigma)$ to denote the probability distribution
over the final states as specified by the program $(c,\mathcal{D})$
and initial state $\sigma$. Finally, $\expt{\sem{c,\mathcal{D}}(\sigma)}{\text{cost}}$ is the expected resource usage of executing $c$ from the initial state $\sigma$.  
The intuitive meaning is that the
potential $\Phi(\sigma)$ is sufficient to pay for the expected
resource consumption of the execution from $\sigma$ and the expected
potential with respect to the probability distribution over the final
states. Let $\Phi(\sigma) \geq
\expt{\sem{c,\mathcal{D}}(\sigma)}{\text{cost}}
+ \expt{\sem{c,\mathcal{D}}(\sigma)}{\Phi'}$
and
$\Phi'(\sigma'_i) \geq
\expt{\sem{c',\mathcal{D}}(\sigma'_i)}{\text{cost}}
+ \expt{\sem{c',\mathcal{D}}(\sigma'_i)}{\Phi''}$
for all sample states $\sigma'_i$ from $\sem{c,\mathcal{D}}(\sigma)$. 
Then we have for all states $\sigma$
$$
\Phi(\sigma) 
\geq \expt{\sem{c;c',\mathcal{D}}(\sigma)}{\text{cost}} + \expt{\sem{c;c',\mathcal{D}}(\sigma)}{\Phi''}
$$
Hence, the initial potential $\Phi$ gives an upper-bound on the
expected value of resource consumption of the sequence $c;c'$ like in
the sequential version of potential-based reasoning.  If we write
$\{\Phi\} c \{\Phi'\}$ to mean
$\Phi(\sigma) \geq
\expt{\sem{c,\mathcal{D}}(\sigma)}{\text{cost}}
+ \expt{\sem{c,\mathcal{D}}(\sigma)}{\Phi'}$
for all program states $\sigma$ then we have again the familiar rule \textsc{QSeq}
for compositional reasoning above. 

Note that expected and worst-case resource consumption are identical
for deterministic programs.  Therefore, the expected potential method
derives worst-case bounds for deterministic programs.

\begin{figure}[t!]
\vspace{-2ex}
\centering
\begin{minipage}[b]{0.25\textwidth}
\centering
\begin{lstlisting}[basicstyle=\scriptsize,escapeinside={@}{@}]
@$\devcomment{{q}{:=}{\frac{T}{{pK1}{-}({1}{-}{p}){K2}}}}$@
@$\devcomment{{.};{0}{+}{q}{\cdot}{\interval{{x}{,}{n}{+}{K1}{-}{1}}}}$@
while (x<n) {
  @$\devcomment{{x}{<}{n};{T}{-}{q}{\cdot}{K1}{+}{q}{\cdot}{\interval{{x}{,}{n}{+}{K1}{-}{1}}}}$@
  x=x+K1
  @$\devcomment{{x}{\leq}{n}{+}{K1}{-}{1};{T}{+}{q}{\cdot}{\interval{{x}{,}{n}{+}{K1}{-}{1}}}}$@
  @$\oplus_{p}$@
  @$\devcomment{{x}{<}{n};{T}{+}{q}{\cdot}{K2}{+}{q}{\cdot}{\interval{{x}{,}{n}{+}{K1}{-}{1}}}}$@
  x=x-K2;
  @$\devcomment{{x}{\leq}{n}{+}{K1}{-}{1};{T}{+}{q}{\cdot}{\interval{{x}{,}{n}{+}{K1}{-}{1}}}}$@
  tick(T);
  @$\devcomment{{x}{\leq}{n}{+}{K1}{-}{1};{0}{+}{q}{\cdot}{\interval{{x}{,}{n}{+}{K1}{-}{1}}}}$@
}
\end{lstlisting}
\vspace{-1.5ex}
\footnotesize{$\progname{rdwalk}$\\
${\frac{T}{{pK1}{-}({1}{-}{p}){K2}}}{\cdot}{\interval{{x}{,}{n}{+}{K1}{-}{1}}}$}
\end{minipage}%
\begin{minipage}[b]{0.25\textwidth}
\centering
\begin{lstlisting}[basicstyle=\scriptsize,escapeinside={@}{@}]
@$\devcomment{{.};{\frac{2}{3}}{\cdot}{\interval{{x}{,}{n}}}{+}{2}{\cdot}{\interval{{y}{,}{m}}}}$@
while (x+3<=n) {
  @$\devcomment{{x}{+}{3}{\leq}{n};{\frac{2}{3}}{\cdot}{\interval{{x}{,}{n}}}{+}{2}{\cdot}{\interval{{y}{,}{m}}}}$@
  if (y<m)
    @$\devcomment{{y}{<}{m};{\frac{2}{3}}{\cdot}{\interval{{x}{,}{n}}}{+}{2}{\cdot}{\interval{{y}{,}{m}}}}$@
    y=y+unif(0,1);
    @$\devcomment{{y}{\leq}{m};{2}{\cdot}{\frac{1}{2}}{+}{\frac{2}{3}}{\cdot}{\interval{{x}{,}{n}}}{+}{2}{\cdot}{\interval{{y}{,}{m}}}}$@
  else
    @$\devcomment{{x}{+}{3}{\leq}{n};{\frac{2}{3}}{\cdot}{\interval{{x}{,}{n}}}{+}{2}{\cdot}{\interval{{y}{,}{m}}}}$@
    x=x+unif(0,3);
    @$\devcomment{{x}{\leq}{n};{\frac{2}{3}}{\cdot}{\frac{3}{2}}{+}{\frac{2}{3}}{\cdot}{\interval{{x}{,}{n}}}{+}{2}{\cdot}{\interval{{y}{,}{m}}}}$@
  tick(1);
  @$\devcomment{{x}{\leq}{n};{0}{+}{\frac{2}{3}}{\cdot}{\interval{{x}{,}{n}}}{+}{2}{\cdot}{\interval{{y}{,}{m}}}}$@
}
\end{lstlisting}
\vspace{-1.5ex}
\footnotesize{$\progname{rdspeed}$\\
${\frac{2}{3}}{\cdot}{\interval{{x}{,}{n}}}{+}{2}{\cdot}{\interval{{y}{,}{m}}}$}
\end{minipage}
\vspace{-2ex}
\caption{Derivations of bounds for single loop programs.}
\label{fig:singleloops}
\vspace{-3ex}
\end{figure}

\begin{figure*}[t!]
\vspace{-2ex}
\centering
\begin{minipage}[b]{0.27\textwidth}
\centering
\begin{lstlisting}[basicstyle=\scriptsize,escapeinside={@}{@}]
@$\devcomment{{.};{0}{+}{\frac{33}{20}}{\cdot}{\interval{{y}{,}{z}}}{+}{\frac{3}{20}}{\cdot}{\interval{{0}{,}{y}}}}$@
while (z-y>2) {
  @$\devcomment{{y}{+}{2}{<}{z};{0}{+}{\frac{33}{20}}{\cdot}{\interval{{y}{,}{z}}}{+}{\frac{3}{20}}{\cdot}{\interval{{0}{,}{y}}}}$@
  y=y+bin(3,2,3);
  @$\devcomment{{y}{\leq}{z};{3}{+}{\frac{33}{20}}{\cdot}{\interval{{y}{,}{z}}}{+}{\frac{3}{20}}{\cdot}{\interval{{0}{,}{y}}}}$@
  tick(3);
  @$\devcomment{{y}{\leq}{z};{0}{+}{\frac{33}{20}}{\cdot}{\interval{{y}{,}{z}}}{+}{\frac{3}{20}}{\cdot}{\interval{{0}{,}{y}}}}$@
}
@$\devcomment{{y}{+}{2}{\geq}{z};{0}{+}{\frac{33}{20}}{\cdot}{\interval{{y}{,}{z}}}{+}{\frac{3}{20}}{\cdot}{\interval{{0}{,}{y}}}}$@
while (y>9) {
  @$\devcomment{{y}{>}{9};{\frac{-1}{2}}{+}{\frac{33}{20}}{\cdot}{\interval{{y}{,}{z}}}{+}{\frac{3}{20}}{\cdot}{\interval{{0}{,}{y}}}}$@
  y=y-10
  @$\devcomment{{y}{\geq}{0};{1}{+}{\frac{33}{20}}{\cdot}{\interval{{y}{,}{z}}}{+}{\frac{3}{20}}{\cdot}{\interval{{0}{,}{y}}}}$@
  @$\oplus_{\frac{2}{3}}$@
  @$\devcomment{{y}{>}{9};{1}{+}{\frac{33}{20}}{\cdot}{\interval{{y}{,}{z}}}{+}{\frac{3}{20}}{\cdot}{\interval{{0}{,}{y}}}}$@
  skip;
  @$\devcomment{{y}{\geq}{0};{1}{+}{\frac{33}{20}}{\cdot}{\interval{{y}{,}{z}}}{+}{\frac{3}{20}}{\cdot}{\interval{{0}{,}{y}}}}$@
  tick(1);
  @$\devcomment{{y}{\geq}{0};{0}{+}{\frac{33}{20}}{\cdot}{\interval{{y}{,}{z}}}{+}{\frac{3}{20}}{\cdot}{\interval{{0}{,}{y}}}}$@
}
\end{lstlisting}
\vspace{-1.5ex}
\footnotesize{$\progname{prseq}$\\
${(}{\frac{3}{2}}{+}{\frac{3}{20}}{)}{\cdot}{\interval{{y}{,}{z}}}{+}{\frac{3}{20}}{\cdot}{\interval{{0}{,}{y}}}$}
\end{minipage}%
\begin{minipage}[b]{0.40\textwidth}
\centering
\begin{lstlisting}[basicstyle=\scriptsize,escapeinside={@}{@}]
@$\devcomment{{q}{:=}{\interval{{0}{,}{s_{\min}}}} {+}{\interval{{s_{\min}}{,}{s}}}}$@
@$\devcomment{\invrwalk{}{:=}{10}{\interval{s_{\min},s}}{\cdot}{\interval{0,s_{\min}}}{+}{10\binom{\interval{s_{\min},s}}{2}}{+}{10}{\interval{s_{\min},s}}}$@
trader(int @$s_{\min}$@,int s) {
   @$\devcomment{{.};{0}{+}\invrwalk}$@
   assume (@$s_{\min}$@>=0);
   @$\devcomment{{s_{\min}}{\geq}{0};{0}{+}{\invrwalk}}$@
   while (@$s>s_{\min}$@) {
      @$\devcomment{{s_{\min}}{\geq}{0}{\land}{s}{>}{s_{\min}};{15}{+}{15}{\cdot}{q} {+} {\invrwalk{}}}$@
      s=s+1
      @$\devcomment{{s_{\min}}{\geq}{0}{\land}{s}{\geq}{s_{\min}};{5}{\cdot}{q} {+} {\invrwalk{}}}$@
      @$\oplus_{\frac{1}{4}}$@
      @$\devcomment{{s_{\min}}{\geq}{0}{\land}{s}{>}{s_{\min}};{-}{5}{-}{5}{\cdot}{q} {+} {\invrwalk{}}}$@
      s=s-1;
      @$\devcomment{{s_{\min}}{\geq}{0}{\land}{s}{\geq}{s_{\min}};{5}{\cdot}{q} {+} {\invrwalk{}}}$@
      trade(s); 
      @$\devcomment{{s_{\min}}{\geq}{0};{0}{+}\invrwalk}$@
   }
   @$\devcomment{{s_{\min}}{\geq}{0}{\land}{s}{\leq}{s_{\min}};{0}{+}\invrwalk}$@
}
\end{lstlisting}
\vspace{-1.5ex}
\footnotesize{$\progname{trader}$\\
$10 \interval{s_{\min},s} {\cdot} \interval{0,s_{\min}} + 10\binom{\interval{s_{\min},s}}{2} + 10\interval{s_{\min},s}$}
\end{minipage}%
\begin{minipage}[b]{0.32\textwidth}
\centering
\begin{lstlisting}[basicstyle=\scriptsize,escapeinside={@}{@}]
@$\devcomment{{.};{\frac{10}{9}}{\cdot}{(}{\frac{1000}{19}}{+}{9}{)}{\cdot}{\interval{{n}{,}{0}}}{+}{\frac{1}{19}}{\cdot}{\interval{{0}{,}{y}}}}$@
while (n<0) {
  @$\devcomment{{n}{<}{0};{\frac{-1}{9}}{\cdot}{(}{\frac{1000}{19}}{+}{9}{)}{+}{\frac{10}{9}}{\cdot}{(}{\frac{1000}{19}}{+}{9}{)}{\cdot}{\interval{{n}{,}{0}}}{+}{\frac{1}{19}}{\cdot}{\interval{{0}{,}{y}}}}$@
  n=n+1
  @$\devcomment{{n}{\leq}{0};{(}{\frac{1000}{19}}{+}{9}{)}{+}{\frac{10}{9}}{\cdot}{(}{\frac{1000}{19}}{+}{9}{)}{\cdot}{\interval{{n}{,}{0}}}{+}{\frac{1}{19}}{\cdot}{\interval{{0}{,}{y}}}}$@
  @$\oplus_{\frac{9}{10}}$@
  @$\devcomment{{n}{<}{0};{(}{\frac{1000}{19}}{+}{9}{)}{+}{\frac{10}{9}}{\cdot}{(}{\frac{1000}{19}}{+}{9}{)}{\cdot}{\interval{{n}{,}{0}}}{+}{\frac{1}{19}}{\cdot}{\interval{{0}{,}{y}}}}$@
  skip;
  @$\devcomment{{n}{\leq}{0};{(}{\frac{1000}{19}}{+}{9}{)}{+}{\frac{10}{9}}{\cdot}{(}{\frac{1000}{19}}{+}{9}{)}{\cdot}{\interval{{n}{,}{0}}}{+}{\frac{1}{19}}{\cdot}{\interval{{0}{,}{y}}}}$@
  y=y+1000;
  @$\devcomment{{n}{\leq}{0};{9}{+}{\frac{10}{9}}{\cdot}{(}{\frac{1000}{19}}{+}{9}{)}{\cdot}{\interval{{n}{,}{0}}}{+}{\frac{1}{19}}{\cdot}{\interval{{0}{,}{y}}}}$@
  while (y>=100 && @$\star$@) {
    @$\devcomment{{y}{\geq}{100};{\frac{-5}{9}}{+}{9}{+}{\frac{10}{9}}{\cdot}{(}{\frac{1000}{19}}{+}{9}{)}{\cdot}{\interval{{n}{,}{0}}}{+}{\frac{1}{19}}{\cdot}{\interval{{0}{,}{y}}}}$@
    y=y-100
    @$\devcomment{{y}{\geq}{0};{9}{+}{5}{+}{\frac{10}{9}}{\cdot}{(}{\frac{1000}{19}}{+}{9}{)}{\cdot}{\interval{{n}{,}{0}}}{+}{\frac{1}{19}}{\cdot}{\interval{{0}{,}{y}}}}$@
    @$\oplus_{\frac{1}{2}}$@
    @$\devcomment{{y}{\geq}{100};{\frac{5}{9}}{+}{9}{+}{\frac{10}{9}}{\cdot}{(}{\frac{1000}{19}}{+}{9}{)}{\cdot}{\interval{{n}{,}{0}}}{+}{\frac{1}{19}}{\cdot}{\interval{{0}{,}{y}}}}$@
    y=y-90;
    @$\devcomment{{y}{\geq}{0};{9}{+}{5}{+}{\frac{10}{9}}{\cdot}{(}{\frac{1000}{19}}{+}{9}{)}{\cdot}{\interval{{n}{,}{0}}}{+}{\frac{1}{19}}{\cdot}{\interval{{0}{,}{y}}}}$@
    tick(5);
    @$\devcomment{{y}{\geq}{0};{9}{+}{\frac{10}{9}}{\cdot}{(}{\frac{1000}{19}}{+}{9}{)}{\cdot}{\interval{{n}{,}{0}}}{+}{\frac{1}{19}}{\cdot}{\interval{{0}{,}{y}}}}$@
  }
  tick(9);
  @$\devcomment{{y}{<}{100};{\frac{10}{9}}{\cdot}{(}{\frac{1000}{19}}{+}{9}{)}{\cdot}{\interval{{n}{,}{0}}}{+}{\frac{1}{19}}{\cdot}{\interval{{0}{,}{y}}}}$@
}
\end{lstlisting}

\vspace{-1.5ex}
\footnotesize{$\progname{prnes}$\\
${\frac{10}{9}}{\cdot}{(}{\frac{1000}{19}}{+}{9}{)}{\cdot}{\interval{{n}{,}{0}}}{+}{\frac{1}{19}}{\cdot}{\interval{{0}{,}{y}}}$}
\end{minipage}%
\vspace{-2ex}
\caption{Derivations of bounds on the expected value of ticks for
  probabilistic programs. 
  Example $\progname{prseq}$ contains a sequential loop so that the iterations of the second loop
  depend on the the first one and $\progname{prnes}$ contains an interacting nested loop. 
  In the derivation of $\progname{trader}$, we derive the non-linear bound $\invrwalk{}$ on the 
  global variable \code{cost} from the stock price example.}
\label{fig:challengingexamples}
\vspace{-3ex}
\end{figure*}

\vspace{-.5ex}
\paragraph{Analyzing a random walk.} 
Using the expected potential method simplifies the reasoning significantly and, as we show in this
article, can be automated using a template-based approach and LP
solving.

Consider the simple random walk again whose expected
resource usage is $\Phi = 2|[0,x]|$. This is the potential
that we have available before the loop and it will also serve as a
loop invariant. We have to prove that the
potential right after the probabilistic branching should be $2|[0,x]|+1$,
to pay for the cost of the tick statement and to restore the loop
invariant.  What remains to justify is how the probabilistic
branching turns the potential $2|[0,x]|$ into $2|[0,x]|+1$.
To do so, we reason backwards independently on the two
branches.  For each branch, what is the initial potential required
to ensure an exit potential of $2|[0,x]| +1$?
\emph{(i)} The assignment $x = x - 1$ needs initial potential
  $\Phi_1(x) = 2|[0,x]| - 1$. Indeed if we
  write $x'$ for the value of $x$ after the assignment then
  $2|[0,x]| - 1 = 2|[0,x'+1]| - 1 = 2|[0,x']| + 1$.
\emph{(ii)} Similarly, the second branch needs the initial potential $\Phi_2(x) = 2|[0,x]| + 3$.
Intuitively, since we enter the first and second branches with probabilities 
$\frac{3}{4}$ and $\frac{1}{4}$, respectively, thus the initial potential
for the probabilistic branching should be the weighted sum
$\frac{3}{4}\Phi_1(x) + \frac{1}{4}\Phi_2(x) = 2|[0,x]|$.
This reasoning would restore the loop invariant and prove
the desired bound.

The beauty of the potential method is that this reasoning is in fact sound!
But note that this is not a trivial result. For instance,
replacing the probabilistic branching by its ``average''
action $x = x - \frac 1 2$ would be unsound in general.
\vspace{-.5ex}
\paragraph{General random walk.}
Now consider the generalized version $\progname{rdwalk}$ in Figure~\ref{fig:singleloops}. It 
simulates a general \emph{one-dimensional random
  walk}~\cite{Grimmett92} with arbitrary positive constants
$K_1, K_2, T$, and $p$. The expected number of loop iterations, and
thus the expected cost modeled by the command $\word{tick}(T)$, is
bounded iff $(\star)$ ${pK_1}{-}{(}{1}{-}{p}{)}{K_2} {>} {0}$. In
this case, the expected distance for ``forward walking'' is bigger
than the expected distance for backward walking.

For the annotations in the figure, we use semicolons to separate 
the logical assertions from the potential functions.
The analysis for this example is very similar to the simple one.
It is only valid when the condition $(\star)$ is satisfied
since the initial potential function would be negative otherwise.
In the implementation, the automatic analysis reports that no bound
can be found if the program does not satisfy this requirement.
Note that the classic method would be a lot more complex in
this more general case.  Indeed, the degree of the recurrence
relation to solve would be ${K_1}{+}{K_2}{+}{1}$ and if ${K_1}{>}{1}$ the boundary
condition argument we gave in Section~\ref{sec:manualana}
would not be valid anymore.
\vspace{-1.5ex}

\subsection{Bound derivations for challenging examples}

We show how the expected potential method can derive
polynomial bounds for challenging probabilistic programs with single,
nested, and sequential loops, as well as procedure calls 
(more examples are given in Appendix~\ref{app:morederivation}). 
All the
presented bounds are derived automatically by our tool
\toolname{} whose implementation and inference algorithm are discussed in
Sections~\ref{sec:constraint}
and~\ref{sec:experiment}. 
%
\vspace{-.5ex}
\paragraph{Single loops.} 
Examples $\progname{rdwalk}$ and $\progname{rdspeed}$
in Figure~\ref{fig:singleloops} show that our expected potential
method can handle \emph{tricky iteration patterns}.
Example $\progname{rdwalk}$ has already been discussed.
Example $\progname{rdspeed}$ is randomized version of the one from papers on worst-case bound
analysis~\cite{GulwaniMC09,CarbonneauxHZ15}. The iteration first increases $y$ by $0$
or $1$ until it reaches $m$ randomly according to the uniform distribution. Then $x$ is increased by $k \in [0,3]$
which is sampled uniformly. \toolname{} derives the tight bound
${\frac{2}{3}}{\cdot}{\interval{{x}{,}{n}}}{+}{2}{\cdot}{\interval{{y}{,}{m}}}$.
The derivation is similar to the derivation of the bound for
$\progname{rdwalk}$.
To reason about the sampling construct, we consider the effect of all
possible samples on the potential function. With the same reasoning for probabilistic 
branching, we then 
compute the weighted sum on the resulting initial potentials where the
weights are assigned following the distribution. 
In some cases (like in this one), it is sound to just replace the distribution with
the expected outcome. However, this does not work in general since the
resource consumption could depend on the sample in a non-uniform way.

\vspace{-.5ex}
\paragraph{Nested and sequential loops.} 
The examples in Figure~\ref{fig:challengingexamples} also show how the expected
potential method can effectively handle \emph{interacting nested} and
\emph{sequential loops}. Example $\progname{prnes}$
contains a nested loop, in which for each iteration of the outer loop,
the variable $n$
is increased by $1$
or remains unchanged with respective probabilities $\frac{9}{10}$
and $\frac{1}{10}$.
In each iteration, the variable $y$
is incremented by $1000$.
We use the condition \lstinline[mathescape]{y>=100 && $\star$}
to express that we can non-deterministically exit the inner loop. Here,
the variable $y$
is randomly decremented by $100$
or $90$
with probability $\frac{1}{2}$.
The analysis discovers that only the expected runtime
${\frac{5}{95}}{\cdot}{\interval{{0}{,}{y}}} =
{\frac{1}{19}}{\cdot}{\interval{{0}{,}{y}}}$ of the first execution of
the inner loop depends on the initial value of $y$.
For the other executions of the loop, the expected number of ticks is
${\frac{1}{19}}{\cdot}{1000}$.
The derivation infers a tight bound in which
${\frac{10}{9}}{\cdot}{\interval{{n}{,}{0}}}$
is the expected number of iterations of the outer loop. The example
$\progname{prseq}$
shows the capability of the expected potential method to take into
account the interactions between sequential loops by deriving the
expected value of the size changes of the variables. In the first
loop, the variable $y$
is incremented by sampling from a binomial distribution with
parameters $n = 3$
and $p = \frac{2}{3}$.
In the second loop, $y$
is decreased by $10$
with probability $\frac{2}{3}$
or left unchanged otherwise. It accurately derives
the expected value of the size change of $y$
by transferring the potential $\interval{{y}{,}{z}}$
to $\interval{{0}{,}{y}}$. 
\vspace{-.5ex}
\paragraph{Non-linear bounds.} 
Example $\progname{trader}$ has non-linear bound
that demonstrates that our expected potential method can handle
programs with nested loops and procedure calls which often have a
super-linear expected resource consumption. 
In the derivation $\invrwalk{}$
acts as a loop invariant. We assume that we already established the bound
$5 \interval{0,s} = 5(\interval{0,s_{\min}} +\interval{s_{\min},s})$
for the procedure \code{trade(s)}. 
The crucial point
for understanding the derivation is the reasoning about the
probabilistic branching. First, observe that the weighted sum of the
two pre-potentials of the branches is equal to $\invrwalk{}$
if the weights are $\frac{1}{4}$
and $\frac{3}{4}$.
Second, verify that the post potential is equal to
the potential in the respective pre-potential if we substitute $s+1$
(or $s-1$) for $s$.
\vspace{-1.5ex}
\subsection{Limitations} 
Deriving symbolic resource bounds is an
undecidable problem, thus our technique does not work if loops and
recursive functions have complex control flow. Furthermore,
our implementation currently only derives polynomial bounds and 
supports sampling from discrete distributions with a finite domain.

In the technical development,
we do not cover local variables and function arguments. However, the
extension of our analysis is straightforward and local variables are implemented in
\toolname{}. While we conjecture that the technique also works for
non-monotone resources like memory that can become available during
the evaluation, our current meta-theory only covers monotone resources
like time.


\vspace{-1.5ex}
\section{Derivation system for probabilistic quantitative analysis}
\label{sec:inferencerule}
In this section, we describe the inference system used by our analysis .
It is presented like a program logic and enables compositional reasoning. As we explain in Section~\ref{sec:constraint}, the inference of a derivation can be reduced to LP solving.
\vspace{-1.5ex}
\subsection{Potential functions}
\label{sec:potfuncs}
The main idea to automate resource analysis using the potential method is to fix the shape of the potential functions.
More formally, potential functions are taken to be linear combinations of more elementary \emph{base functions}.
Finding the suitable base functions for a given program is discussed in Section~\ref{sec:experiment}. 
For now we assume a given list of $N$ base functions.
For convenience, they are represented as a vector $B = (b_1, \dots, b_N)$, 
where each $b_i : \State \to \Q$ maps program states to rational numbers. 
Building on base functions, a potential function is defined by $N$ coefficients $q_1, \dots, q_N \in \mathbb Q$ as
$
  \Phi(\sigma) = \sum_{i = 1}^N q_i \cdot b_i(\sigma)
$. 
For presentation purposes the coefficients are stored as a vector $Q = (q_1, \dots, q_N)$, called \emph{potential annotation}.
Each potential annotation corresponds to a potential function $\Phi_Q$ that we can concisely express as the dot product $\langle Q \cdot B \rangle$.

Note that potential annotations form a vector space.
Additionally, using the bi-linearity of the dot product, operations on the vector space of annotations correspond directly to operations on potential functions, that is 
$
  \Phi_{\lambda Q + \mu Q'} = \langle \lambda Q + \mu Q' \cdot B \rangle =
  \lambda \langle Q \cdot B \rangle + \mu \langle Q' \cdot B \rangle = \lambda \Phi_{Q} + \mu \Phi_{Q'}
$. 
In the following, we assume that the constant function $\mathbf{1}$ defined by $\lambda\sigma.1$ is in the list of base functions~$B$.
This way, the constant potential function $\lambda\sigma.k$ can be represented with a potential annotation where the coefficient of $\mathbf{1}$ is $k$ and all other coefficients are $0$.
\vspace{-1.5ex}
\subsection{Judgements}
The main judgement of our inference system defines the validity of a triple $\vdash \{\context, Q\} c \{\context', Q'\}$.
In the triple, $c$ is a command, $\{\context, Q\}$ is the precondition, and $\{\context', Q'\}$ is the postcondition. $\context$ is a \emph{logical context} and $Q$ is a potential annotation.
The logical context $\context \in \mathcal P(\State)$ is a predicate on program states inferred by our implementation using abstract interpretation. It describes a set of permitted states at a given program point.

Leaving the logical contexts aside---they have the same semantics as in Hoare logic---the meaning of a triple $\{\cdot, Q\} c \{\cdot, Q'\}$ is that, for any continuation command $c'$ that has its expected cost bounded by $\Phi_{Q'}$, the command $c;c'$ has its expected cost bounded by $\Phi_Q$.
When looking for the expected cost of the command $c$, one can simply use \word{skip} as the command $c'$ and derive a triple where $\Phi_{Q'} = \mathbf{0}$. In that case, $\Phi_Q$ is a bound on the expected cost of the command $c$.
To handle procedure calls, the judgement for a triple uses a \emph{specification context} $\Delta$.
This context assigns \emph{specifications} to the procedures of the program and permits a compositional analysis that also handles recursive procedures.
A specification is a valid pair of pre- and post-conditions for the procedure body, denoted $\Delta \vdash \{\context;Q\}\,\mathcal D (P)\,\{\context';Q'\}$. 
The judgement $\vdash \Delta$ defined by the rule \rul{ValidCtx} means that all the procedure specifications in the context $\Delta$ are valid, that is, the specifications are correct pre- and post-conditions for the procedure bodies.
Note that a context $\Delta$ can contain multiple specifications for the same procedure.
This enables a context-sensitive analysis.
%
\vspace{-.5ex}
\paragraph{Notations and conventions.}
For a program state $\sigma \in \Sigma$ (e.g., a map
from variable identifiers to integers), we write $\sem{e}_{\sigma}$ to denote the value of the expression $e$
in $\sigma$ and $\sigma[v/x]$ for the program state $\sigma$ extended with the mapping of $x$ to $v$. 
For a probability distribution $\dist{}$, we use $\sem{\dist{} : v}$ to indicate the probability that $\dist{}$ takes value $v$. We use $\Sigma$ to
denote the set of program states. 
The entailment relation on logical contexts $\context \models \context'$ means that $\context$ is stronger than $\context'$.
We write $\sigma \models \context$ when $\sigma \in \context$. For a proposition $p$, we write $\context \models p$ to mean that any state $\sigma$ such that $\sigma \models \context$ satisfies $p$.
For an expression $e$ and a variable $x$, $\context \land e$ stands for the logical context $\{ \sigma ~|~ \sigma \models \context \land \sem e_\sigma \neq 0 \}$ and $\context[e/x]$ stands for the logical context $\{ \sigma ~|~ \sigma[e/x] \models \context \}$.

For potential annotations $Q$, $Q'$ and $\diamond \in \{<,=,\ldots\}$,
the relation $Q \diamond Q'$ means that their components are constrained point-wise, that is, $\bigwedge_i q_i \diamond q'_i$.
Additionally, we write $Q \pm c$ where $c \in \mathbb Q$ to denote the annotation $Q'$ obtained from $Q$ by setting the coefficient $q'_i$ of the base function $\mathbf{1}$ to $q_i \pm c$ and leaving the other coefficients unchanged.

Finally, because potential functions always have to be non-negative, any rule that derives a triple $\{\context;Q\}c\{\context';Q'\}$ has two extra implicit assumptions: $Q \succeq_\context 0$ and $Q' \succeq_{\context'} 0$.
The fact that these assumptions imply the non-negativity of the potential functions becomes clear when we explain the meaning of $\succeq_.$ in the next section.
\vspace{-1.5ex}
\subsection{Inference rules}
\label{sec:inferencerules}
\begin{figure}[!t]
\vspace{-2ex}
\centering
\small {



\begin{mathpar}
\RuleToplabel{\rul{ValidCtx}}
{
  P : (\context;Q,\context';Q') \in \Delta \Rightarrow \Delta \vdash \{\context;Q\}\,\mathcal D (P)\,\{\context';Q'\}
}
{ 
  \vdash \Delta 
}
\hspace{0.1cm}\hfill
\RuleToplabel{\textcolor{ACMRed}{Q:Skip}}
{
}
{
\vdash \{\context;Q\} \word{skip} \{\context;Q\}
}

\RuleToplabel{\textcolor{ACMRed}{Q:Assert}}
{
}
{
\vdash \{\context;Q\} \word{assert} e\, \{\context \wedge e;Q\}
}
\hfill
\RuleToplabel{\textcolor{ACMRed}{Q:NonDet}}
{
\vdash \{\context;Q\} c_1 \{\context';Q'\} \\
\vdash \{\context;Q\} c_2 \{\context';Q'\}
}
{
\vdash \{\context;Q\} \word{if} \star \; c_1 \word{else} c_2\, \{\context';Q'\}
}

\RuleToplabel{\textcolor{ACMRed}{Q:If}}
{
\vdash \{\context \wedge e;Q\} c_1 \{\context';Q'\} \\\\
\vdash \{\context \wedge \neg e;Q\} c_2 \{\context';Q'\}
}
{
\vdash \{\context;Q\} \word{if} e \; c_1 \word{else} c_2\, \{\context';Q'\}
}
\hfill
\RuleToplabel{\textcolor{ACMRed}{Q:Loop}}
{
\vdash \{\context \wedge e;Q\} c \{\context;Q\}
}
{
\vdash \{\context;Q\} \word{while} e \; c\, \{\context \wedge \neg e;Q\}
}

\RuleToplabel{\textcolor{ACMRed}{Q:PIf}}
{
Q = p{\cdot}Q_1 + (1-p){\cdot}Q_2 \\
\vdash \{\context;Q_1\} c_1 \{\context';Q'\} \\
\vdash \{\context;Q_2\} c_2 \{\context';Q'\}
}
{
\vdash \{\context;Q\} c_1 \oplus_p c_2 \{\context';Q'\}
}
\hfill
\RuleToplabel{\textcolor{ACMRed}{Q:Seq}}
{
\vdash \{\context;Q\} c_1 \{\context';Q'\} \\\\
\vdash \{\context';Q'\} c_2 \{\context'';Q''\}
}
{
\vdash \{\context;Q\} c_1; c_2 \{\context'';Q''\}
}

\RuleToplabel{\textcolor{ACMRed}{Q:Sample}}
{
	\context \models R \in [a,b] \\
	\forall v_i \in [a,b]. \sem{\dist{R} : v_i} = p_i \\
	\forall v_i. \vdash \{\context;Q_i\} x = e \word{bop} v_i \{\context';Q'\} \\
	{\textstyle Q = \sum_{i}p_i{\cdot}Q_i}
}
{
	\vdash \{\context;Q\} x = e \word{bop} R \{\context';Q'\}
}
\and
\RuleToplabel{\textcolor{ACMRed}{Q:Assign}}
{
  A = (a_{i,j}) \\
  {\textstyle \forall j \in \mathcal S_{x=e}.\, b_j[e/x] = \sum_i a_{i,j} \cdot b_i} \\
  {\textstyle \forall j\not\in \mathcal S_{x=e}.\, a_{i,j} = 0} \\
  {\textstyle \forall j\not\in \mathcal S_{x=e}.\, q'_j = 0} \\
  Q = AQ' \\
}
{
\vdash \{\context[e/x];Q\} x = e \{\context;Q'\}
}
\and
\RuleToplabel{\textcolor{ACMRed}{Q:Call}}
{
P : (\context;Q,\context';Q') \in \Delta \\
x \in \Qplus
}
{
\vdash \{\context;Q+x\} \word{call} P\, \{\context';Q'+x\}
}
\hfill
\RuleToplabel{\textcolor{ACMRed}{Q:Tick}}
{
}
{
\vdash \{\context;Q\} \word{tick}(q)\, \{\context;Q - q\}
}

\RuleToplabel{\textcolor{ACMRed}{Relax}}
{
  F = (F_1, \dots, F_N) \\
  \vec{u} = (u_1, \dots, u_N)^\intercal \\\\
  \forall i. \context \models \Phi_{F_i} \ge 0 \\
  \textstyle \forall i. u_i \ge 0 \\
  Q' = Q - F\vec{u} \\
}
{
Q \succeq_\context Q'
}
\hspace{0.1cm}\hfill
\RuleToplabel{\textcolor{ACMRed}{Q:Abort}}
{
}
{
\vdash \{\context;0\} \word{abort} \{\context';Q'\}
}

\RuleToplabel{\textcolor{ACMRed}{Q:Weaken}}
{
\context \models \context_0 \\
Q \succeq_{\context} Q_0 \\
\vdash \{\context_0;Q_0\} c \{\context_0';Q_0'\}  \\
\context_0' \models \context' \\
Q_0' \succeq_{\context'_0} Q'
}
{
\vdash \{\context;Q\} c \{\context';Q'\}
}
\end{mathpar}
\vspace{-2ex}
} 
\vspace{-2ex}
\caption{Inference rules of the derivation system.}
\vspace{-3ex}
\label{fig:inferencerules3}
\end{figure}


%
Figure~\ref{fig:inferencerules3} gives the complete set of rules.
We informally describe some important rules and justify their validity. Section~\ref{sec:soundness} gives details about the formal soundness proof.

The rule \rul{Q:Tick} is the only one that accounts for the effect of consuming resources.
The \word{tick} command does not change the program state, so we require the logical context $\context$ in the pre- and postcondition to be the same.
Let $c'$ be a command with an expected resource bound $\Phi_{Q'} = \Phi_Q - q$. 
Because the cost of $\word{tick}(q)$ is exactly $q$ resource units, the expected resource bound of $\word{tick}(q);c'$ is exactly $\Phi_{Q'} + q = \Phi_Q$. 

The rule \rul{Q:PIf} accounts for probabilistic branching. 
Let $c'$ be a continuation command with an expected resource bound $\Phi_{Q'}$, $T_1$ and $T_2$ be the resource usage of executing $c_1$ and $c_2$, respectively. 
Then by the linearity of the expectations, the expected resource bound of the command $(c_1 \oplus_p c_2); c'$ is $T_c = p\cdot (T_1 + \Phi_{Q'} + (1-p)\cdot (T_2 + \Phi_{Q'})$. Using the hypothesis triples for $c_1$ and $c_2$, we have $T_c \leq p \cdot \Phi_{Q_1} + (1-p) \Phi_{Q_2} = \Phi_{Q}$. 

The second probabilistic rule \rul{Q:Sample} deals with sampling assignments.
Recall that $R$ is a random variable following a distribution $\dist R$.
The essence of the rule is that, since we assumed that $R$ is bounded, we can treat a sampling assignment as a (nested) probabilistic branching.
Each of the branches contains an assignment $x = x \word{bop} v$ with $v \in \mathbb Z$ and is executed with probability $p$, the probability of the event $R = v$.
The preconditions of each of the branches are combined like in the \rul{Q:PIf} rule.

The rules \rul{Q:Assign} and \rul{Q:Weaken} are similar to the ones of a previous implementation of AARA for the analysis of non-probabilistic programs~\cite{CarbonneauxHRS17} but have been adapted to our structured probabilistic language.
In the rule \rul{Q:Assign} for $x = e$, we represent the state transformation as a linear transformation on potential functions.
If $\Phi_{Q'}$ is a bound on the expected cost of $c'$, then $\Phi_{Q'}[e/x]$ is the expected resource bound of $x=e; c'$
%
To encode this constraint as a linear program (this is necessary to enable automation using LP solving), we find all the \emph{stable} base functions, denoted 
$\mathcal S_{x=e}$, that is, all the base functions $b_j$ for which there exists $(a_{i,j})_i \in \mathbb Q$ such that $b_j[e/x] = \sum_i a_{i,j} \cdot b_i$. That means 
a function is stable if its extending with the mapping of $x$ to $e$ can be represented by the set of based functions. 
Finally, to ensure that the transformation w.r.t the assignment $x = e$ on the potential function $\Phi_{Q'}$ is linear, we require that all the base functions that are not stable have their coefficients set to $0$ in $Q'$. 
With these constraints, we have that $\Phi_{Q'}[e/x] = \Phi_{AQ'} = \Phi_{Q}$ where $A$ is the $({N}{\times}{N})$ matrix with coefficients $(a_{i,j})$, hence justifying the validity. 

The essence of the \rul{Q:Weaken} is that it is always safe to add potential in the precondition and remove potential in the postcondition.
This concept of more (or less) potential is made precise by the predicate $Q \succeq_\context Q'$.
Semantically, $Q \succeq_\context Q'$ encodes---using linear constraints---the fact that in all states $\sigma \models \context$, we have $\Phi_{Q'}(\sigma) \ge \Phi_{Q}(\sigma)$ (see Lemma~\ref{lem:relax} in Appendix~\ref{app:soundnessproof}). 
The \rul{Relax} rule uses \emph{rewrite functions} $(F_i)_i$, as introduced in~\cite{CarbonneauxHRS17}.
Rewrite functions are linear combinations of base functions that can be proved non-negative in the logical context $\Gamma$.
Using rewrite functions, the idea of the judgement $Q \succeq_\context Q'$ is that, to obtain $Q'$, one has to subtract a non-negative quantity from $Q$.

The rule \rul{Q:Call} handles procedure calls.
The pre- and postcondition for the procedure $P$ are fetched from the specification context $\Delta$.
Then, a non-negative \emph{frame} $x \in \Qplus$ is added to the procedure specification.
This frame allows to pass some constant potential through the procedure call and is required for the analysis of most non-tail-recursive algorithms.
The idea of this framing is that if the triple $\{.;Q\}c\{.;Q'\}$ is valid, we can also take $Q'{+}x$ as postcondition and the need for the extra cost $x$ required by the continuation command can be threaded up to the precondition that becomes $Q{+}x$.
In the soundness proof, this framing boils down to the ``propagation of constants'' property on the calculus~\cite{OlmedoKKM16} used in our formal soundness proof.

\vspace{-1.5ex}
\section{Automatic constraint generation and solving using LP solvers}
\label{sec:constraint}
The automatic bound derivation is split in two phases.
First, derivation templates and constraints are generated by inductively applying the inference rules to the input program.
During this first phase, the coefficients of the potential annotations are left as symbolic names
and the inequalities are collected as constraints.
Each symbolic name corresponds to a variable in a linear program.
Second, we feed the linear program to an off-the-shelf LP solver\footnote{We use Coin-Or's CLP.}.
If the LP solver returns a solution, we obtain a valid derivation and
extract the expected resource bound. 
Otherwise, an error is reported. 
\begin{figure*}[th!]
\vspace{-2ex}
\begin{minipage}[b]{0.60\textwidth}
\begin{mathpar}
\scriptsize {
\inferrule*[right=\scriptsize{\textcolor{ACMRed}{Q:Loop}}]
{
      \inferrule*[right=\scriptsize{\textcolor{ACMRed}{Q:Seq}}]
      {
        \inferrule*[right=\scriptsize{\textcolor{ACMRed}{Q:PIf}},vdots=0.6cm,rightskip=5.5cm]
        {
          \inferrule*[right=\scriptsize{\textcolor{ACMRed}{$\mbox{\sc Q:Weaken}_1$}}]
          {
            \inferrule*[right=\scriptsize{\textcolor{ACMRed}{$\mbox{\sc Q:Assign}_1$}},rightskip=1cm]
            {
            }
            {
              \vdash \{x \geq 2;Q^{d1}\} x = x - 1 \{.;P^{d1}\}
            }
          }
          {
             \vdash \{x \geq 2;Q^{w1}\} x = x - 1 \{.;P^{w1}\}
          }
          \hspace{0.2cm}
          \inferrule*[right=\scriptsize{\textcolor{ACMRed}{$\mbox{\sc Q:Weaken}_2$}}]
          {
            \inferrule*[right=\scriptsize{\textcolor{ACMRed}{$\mbox{\sc Q:Assign}_2$}},rightskip=1cm]
            {
            }
            {
              \vdash \{x \geq 2;Q^{d2}\} x = x - 2 \{.;P^{d2}\}
            }
          }
          {
             \vdash \{x \geq 2;Q^{w2}\} x = x - 2 \{.;P^{w2}\}
          }
        }
        {
          \vdash \{x \geq 2;Q^{pi}\} x = x  - 1 \oplus_{\frac{1}{3}} x = x - 2 \{.;P^{pi}\} \hspace{4cm}
        }
        \inferrule*[right=\scriptsize{\textcolor{ACMRed}{Q:Tick}}]
        {
          \\
        }
        {
          \vdash \{.;Q^{ti}\} \word{ tick}(1) \{.;P^{ti}\}
        } 
      }
      {
        \vdash \{x \geq 2;Q^{sq}\} \; x = x  - 1 \oplus_{\frac{1}{3}} x = x - 2; \word{tick}(1) \{.;P^{sq}\}
      } 
}
{
  \vdash \{.;Q\} ~\word{while} (x >= 2) \{ x = x  - 1 \oplus_{\frac{1}{3}} x = x - 2; \word{tick}(1) \}~ \{x < 2;P\}  
} 
}
\end{mathpar}
\end{minipage}%
\hfill
\begin{minipage}[b]{0.34\textwidth}
\scriptsize {
\begin{tabular*}{1.0\textwidth}{@{\extracolsep{\fill}} l l}
Constraints & Rules \\
\hline
$Q = Q^{sq} = P^{sq} = P$ & \textcolor{ACMRed}{\textsc{Q:Loop}} \\
$Q^{sq} = Q^{pi} \land P^{pi} = Q^{ti} \land P^{ti} = P^{sq}$ & \textcolor{ACMRed}{\textsc{Q:Seq}} \\
$Q^{pi} = \frac{1}{3}{\cdot}Q^{w1} + \frac{2}{3}{\cdot}Q^{w2} \land P^{pi} = P^{w1} = P^{w2}$ & \textcolor{ACMRed}{\textsc{Q:PIf}} \\
$Q^{w1} \succeq_{(x \geq 2)} Q^{d1} \land P^{d1} \succeq_{(x \geq 2)} P^{w1}$ & \textcolor{ACMRed}{$\mbox{\sc Q:Weaken}_1$} \\
$ q^{d1}_1 = p^{d1}_1 \land q^{d1}_{x0} = 0 \land q^{d1}_{x1} = p^{d1}_{x0} \land$ & \textcolor{ACMRed}{$\mbox{\sc Q:Assign}_1$} \\
$ q^{d1}_{x2} = p^{d1}_{x1} \land p^{d1}_{x2} = 0$ & \\
$Q^{w2} \succeq_{(x \geq 2)} Q^{d2}; P^{d2} \succeq_{(x \geq 2)} P^{w2}$ & \textcolor{ACMRed}{$\mbox{\sc Q:Weaken}_2$} \\
$ q^{d2}_1 = p^{d2}_1 \land q^{d2}_{x0} = 0 \land q^{d2}_{x1} = 0 \land$ & \textcolor{ACMRed}{$\mbox{\sc Q:Assign}_2$} \\
$ q^{d2}_{x2} = p^{d2}_{x0} \land p^{d2}_{x1} = 0 \land p^{d1}_{x2} = 0$  & \\
$Q^{ti} = P^{ti} + 1$ & \textcolor{ACMRed}{\textsc{Q:Tick}}
\end{tabular*}
}
\end{minipage}%
\vspace{-2ex}
\caption{Inference of a derivation using linear constraint solving.}
\vspace{-3ex}
\label{fig:derivationexample}
\end{figure*}
\vspace{-.5ex}
\paragraph{Generating linear constraints.} 
A detailed example of this process is shown in Figure~\ref{fig:derivationexample}.
Note that \rul{Q:Weaken} is applied twice. 
Since this rule is not syntax-directed, it can be applied at any point during the derivation.
In our implementation, we apply it around all assignments. This proved sufficient in practice and limits the number of constraints generated.
In the figure, the potential annotations are represented by an upper-case letter $P$ or $Q$ with an optional superscript.
For example, $Q$ represents the potential function
\begin{equation*}
\small{q_{1} \cdot \mathbf{1} + q_{x0} \cdot |[0, x]| + q_{x1} \cdot |[1, x]| + q_{x2} \cdot |[2, x]|}
\end{equation*}
The set of base functions is $\mathbf{1}$
and $|[i, x]|$ for $i \in \{ 0, 1, 2 \}$. 
We will see that they are sufficient to infer a bound.
Details of how to select base functions are given in Section~\ref{sec:experiment}.
To apply weakening, we need rewrite functions, we pick
$$
\small{
\begin{array}{c}
  F_0 = 1;
  F_1 = -1 + |[0, x]| - |[1, x]|; 
  F_2 = -2 + |[0, x]| - |[2, x]|
\end{array}
}
$$
$F_1$ is applicable (i.e., non-negative) iff $x \ge 1$.
Similarly, $F_2$ is applicable iff $x \ge 2$.
This means that both rewrite functions can be used at the beginning of the loop body, when $x \ge 2$ can be proved because of the loop condition.

The constraints given in the table in Figure~\ref{fig:derivationexample} use shorthand notations to constrain all the coefficients of two annotations.
For instance $Q = Q^{sq}$ should be expanded into $q_1 = q^{sq}_1 \land q_{x0} = q^{sq}_{x0} \land q_{x1} = q^{sq}_{x1} \land q_{x2} = q^{sq}_{x2}$.
The most interesting rules are the probabilistic branching, the two weakenings, and the two assignments.
For the probabilistic branching, following \rul{Q:PIf}, the preconditions of the two branches are linearly combined using the weights $\frac 1 3$ and $1 - \frac 1 3 = \frac 2 3$.

We now discuss the first weakening. The second one generates an identical set of constraints---but the LP solver will give it a different solution.
The most interesting constraints are the ones for $Q^{w1} \succeq_{(x \ge 2)} Q^{d1}$.
This relation is defined by the rule \rul{Relax} in Figure~\ref{fig:inferencerules3} and involves finding all the applicable rewrite functions in the logical state $x \ge 2$.
As discussed above, $F_0$, $F_1$, and $F_2$ are all applicable, and the following system of constraints, written in matrix notation, is generated.
\begin{equation*}
\label{eq:relax1}
\footnotesize{
\left(
\begin{array}{c}
 q^{d1}_1 \\
 q^{d1}_{x0} \\
 q^{d1}_{x1} \\
 q^{d1}_{x2}
\end{array}
\right)
=
\left(
\begin{array}{c}
 q^{w1}_1 \\
 q^{w1}_{x0} \\
 q^{w1}_{x1} \\
 q^{w1}_{x2}
\end{array}
\right)
-
\left(
\begin{array}{l}
 1\;{-}{1}\;{-}{2} \\
 0\;\;\;1\;\;\;1  \\
 0\;{-}{1}\;\;\;0  \\
 0\;\;\;0\;{-}{1} \\
\end{array}
\right)
\left(
\begin{array}{c}
 u_0 \\
 u_1 \\
 u_2
\end{array}
\right)
\land
\left(
\begin{array}{c}
 u_0 \\
 u_1 \\
 u_2
\end{array}
\right)
\ge
\left(
\begin{array}{c}
 0 \\
 0 \\
 0
\end{array}
\right)
}
\end{equation*}
The columns of the $({4}{\times}{3})$ matrix correspond, in order, to $F_0$, $F_1$, and $F_2$.
The coefficients $(u_i)$ are fresh names that are local to this weakening.

For the first assignment \textcolor{ACMRed}{$\mbox{\sc Q:Assign}_1$},
the stable set discussed in Section~\ref{sec:inferencerules} is $\mathcal S_{x = x - 1} = \{ 1, |[0, x]|, |[1, x]| \}$.
Indeed, only  $|[2, x]|$ is unstable since it becomes $|[3, x]|$ after the assignment $x = x - 1$.
Since the assignment leaves $\mathbf{1}$ unchanged and changes $|[0, x]|$ into $|[1, x]|$ and $|[1, x]|$ into $|[2, x]|$, the system of constraints generated is
\begin{equation*}
\label{eq:assign1}
\footnotesize{
\left(
\begin{array}{c}
 q^{d1}_1 \\
 q^{d1}_{x0} \\
 q^{d1}_{x1} \\
 q^{d1}_{x2}
\end{array}
\right)
=
\left(
\begin{array}{l}
 1\;\;0\;\;0\;\;0 \\
 0\;\;0\;\;0\;\;0 \\
 0\;\;1\;\;0\;\;0 \\
 0\;\;0\;\;1\;\;0 \\
\end{array}
\right)
\left(
\begin{array}{c}
 p^{d1}_1 \\
 p^{d1}_{x0} \\
 p^{d1}_{x1} \\
 p^{d1}_{x2}
\end{array}
\right)
\hspace{0.4cm}\land\hspace{0.4cm}
p^{d1}_{x2} = 0
}
\end{equation*}
or $q^{d1}_1 = p^{d1}_1 \land q^{d1}_{x1} = p^{d1}_{x0} \land q^{d1}_{x2} = p^{d1}_{x1} \land p^{d1}_{x2} = 0$.

\vspace{-.5ex}
\paragraph{Solving the constraints.}
The LP solver does not only find a solution that satisfies the constraints, it also optimizes a linear objective function.
In our case, we would like to find the tightest---i.e, smallest---upper bound on the expected resource consumption.
In the implementation, we use an iterative scheme that takes full advantage of the incremental solving capabilities of modern LP solvers.
Starting at the maximum degree $d$, we ask the LP solver to minimize the coefficients $(q^d_i)_i$ of all the base functions of degree $d$.
If a solution $(k^d_i)_i$ is returned, we add the constraints $\bigwedge_i q^d_i = k^d_i$ to the linear program.
Then, the same scheme is iterated for base functions of degree $d-1, d-2, \dots, 1$.
For our running example, the first objective function for the linear coefficients is
$
  20 \cdot q_{x0} + 10 \cdot q_{x1} + 1 \cdot q_{x2}
$.
The weight of the coefficients are set to signify facts about the base functions to the LP solver.
For instance, $q_{x0}$ gets a smaller weight than $q_{x1}$ because $|[0, x]| \ge |[1, x]|$ for all $x$.
The final solution returned by the LP solver is $q_{x0} = \frac 3 5$ and $q_\star = 0$ otherwise.
Thus the derived bound is $\frac 3 5 |[0, x]|$.


\vspace{-1.5ex}
\section{Soundness of the analysis}
\label{sec:soundness}
The soundness of the analysis is proved with respect to an operational semantics based on Markov decision processes (see Appendix~\ref{app:opeationalsemantics}). It leverages previous work on probabilistic programs by relying on the soundness of a weakest pre-expectation (WP) calculus~\cite{Kamin16,OlmedoKKM16}. The weakest pre-expectation $\word{ert}[c,\mathcal{D}](f)(\sigma)$ is the (exact) expected amount
of resources consumed by a program $(c,\mathcal{D})$ started in state $\sigma$ if it is followed by a computation that has an expected resource consumption given by a function $f: \Sigma \rightarrow \Rplus \cup \{\infty\}$. See  Appendix~\ref{app:wpcalculus} for more details and a formal definition.

We first interpret the pre- and postconditions of the triples as expectations.
This interpretation is a function $\mathcal T$ that maps $\{\context;Q\}$ to the assertion $\mathcal T(\context;Q)$ defined as
$
\mathcal{T}(\context;Q)(\sigma) := \max(\context(\sigma), \Phi_{Q}(\sigma))
$, where $\Phi_{Q}$ is the potential function associated with the quantitative annotation $Q$ and $\context$ is lifted as a function on states such that $\context(\sigma)$ is 0 if $\sigma \models \context$ and $\infty$ otherwise.
The soundness of the automatic analysis can now be stated formally w.r.t the WP calculus.
\vspace{-.5ex}
\begin{theorem}[Soundness of the automatic analysis]
\label{theo:soundness}
Let $c$ be a command in a larger program $(\_, \mathcal D)$.
If 
$\vdash \{\context;Q\} c \{\context';Q'\}$ 
is derivable, then $\forall \sigma \in \Sigma$, the following holds 
$$
\mathcal{T}(\context;Q)(\sigma) \geq \word{ert}[c,\mathcal{D}](\mathcal{T}(\context';Q'))(\sigma)
$$
\vspace{-1.0ex}
\end{theorem}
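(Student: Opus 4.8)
The plan is to proceed by rule induction on the derivation of $\vdash \{\context;Q\} c \{\context';Q'\}$, showing for each inference rule that the conclusion's soundness inequality follows from the soundness of the premises, using the corresponding characterizing equation or monotonicity property of the $\word{ert}$ calculus. The key observation connecting the two worlds is that $\mathcal{T}(\context;Q)(\sigma) = \max(\context(\sigma),\Phi_Q(\sigma))$ behaves like the potential annotation $Q$ exactly on states satisfying $\context$ (where it equals $\Phi_Q(\sigma)$), and is $\infty$ elsewhere (which trivially satisfies any upper-bound inequality since $\word{ert}$ applied to such a postcondition-expectation will also be forced large on the relevant predecessors, or the inequality holds vacuously). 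So for each rule I would first reduce to states $\sigma \models \context$ and then match the potential-algebra manipulation done in the rule against the defining equation of $\word{ert}$ for the corresponding command.

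The main steps, organized by rule: For \rul{Q:Skip}, \rul{Q:Assert}, \rul{Q:Tick}, \rul{Q:Assign}, this is a direct calculation: $\word{ert}$ of these atomic commands is given by an explicit formula ($\word{ert}[\word{tick}(q)](f) = q + f$, $\word{ert}[x{=}e](f) = f[e/x]$, etc.), and the rule's constraints on $Q,Q'$ were designed precisely so that $\Phi_Q = \word{ert}[c](\Phi_{Q'})$ pointwise on $\context$; for \rul{Q:Assign} I would invoke the stable-base-function/matrix argument already spelled out in Section~\ref{sec:inferencerules} to get $\Phi_{Q'}[e/x] = \Phi_{AQ'} = \Phi_Q$. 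For \rul{Q:Seq}, use the compositionality (associativity) of $\word{ert}$: $\word{ert}[c_1;c_2](f) = \word{ert}[c_1](\word{ert}[c_2](f))$, apply the IH for $c_2$ then monotonicity of $\word{ert}[c_1]$ then the IH for $c_1$. For \rul{Q:If} and \rul{Q:NonDet}, use that $\word{ert}$ of a conditional is the guarded combination (resp. the max) of the branches, together with the IHs; the logical-context refinement $\context \wedge e$ matches the guard. For \rul{Q:PIf} and \rul{Q:Sample}, use that $\word{ert}$ of a probabilistic choice (resp. a bounded sampling, unfolded as a finite probabilistic choice) is the convex combination $\sum_i p_i\,\word{ert}[c_i](f)$, and combine with the rule's linear-combination constraint $Q = \sum_i p_i Q_i$ and the bilinearity $\Phi_{\sum p_i Q_i} = \sum p_i \Phi_{Q_i}$ noted in Section~\ref{sec:potfuncs}. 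For \rul{Q:Weaken} and \rul{Relax}, use monotonicity of $\word{ert}$ in $f$ plus Lemma~\ref{lem:relax} (that $Q \succeq_\context Q'$ implies $\Phi_{Q'} \ge \Phi_Q$ on $\context$), chaining context-entailment with potential-ordering at both ends. For \rul{Q:Abort}, $\word{ert}[\word{abort}](f) = 0$ (the program diverges with no cost), and the precondition is $\{\context;0\}$ so $\mathcal{T}(\context;0)(\sigma) \ge 0$ holds trivially.

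The hard part, and the place where I expect the real work, is the pair \rul{Q:Loop} (together with its interaction with \rul{Q:Call} and \rul{ValidCtx}), because loops and recursion are the only non-structural constructs and $\word{ert}$ of a loop is defined as a least fixed point of the loop-characteristic functional. For \rul{Q:Loop} I would show that the interpretation $\mathcal{T}(\context;Q)$ of the loop invariant is a \emph{prefixed point} of that characteristic functional --- i.e. it is mapped to something below itself by one unfolding, using the IH on the loop body together with the guarded-conditional and sequencing facts above --- and then invoke the Park/Knaster--Tarski induction principle for $\word{ert}$ (the ``loop invariant'' rule of the WP calculus of \cite{OlmedoKKM16}) to conclude that the least fixed point lies below $\mathcal{T}(\context;Q)$. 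For \rul{Q:Call} and \rul{ValidCtx}, the induction is over the (well-founded) structure of derivations with the specification context $\Delta$ treated as a set of assumed valid triples: I would show that $\vdash\Delta$ plus the fact that $\word{ert}$ for a call equals $\word{ert}$ of the (possibly recursively unfolded) body justifies each assumed specification, with the constant-frame $x \in \Qplus$ in \rul{Q:Call} handled by the ``propagation of constants'' property $\word{ert}[c](f + x) = \word{ert}[c](f) + x$ of the calculus, exactly as flagged in Section~\ref{sec:inferencerules}. The technical care needed is to set up the induction hypothesis so that it simultaneously covers body derivations and call sites without circularity, which is standard for logical-relations-style soundness proofs but must be stated precisely; I would defer the full fixed-point bookkeeping to the appendix and present only the prefixed-point computation for the loop body in the main text.
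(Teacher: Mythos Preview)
Your proposal is correct and follows essentially the same route as the paper: rule induction with a case analysis matching each inference rule against the corresponding $\word{ert}$ equation, a prefixed-point argument plus Park's theorem for \rul{Q:Loop}, and the propagation-of-constants property for the frame in \rul{Q:Call}. The only place the paper is more concrete than your sketch is the recursion case, where it uses the explicit $n$th-inlining characterization $\word{ert}[\word{call}\,P,\mathcal{D}] = \sup_n \word{ert}[\word{call}^{\mathcal{D}}_n P]$ together with an inner induction on $n$, which is precisely the ``fixed-point bookkeeping'' you flagged as deferred; note also that propagation of constants is only an inequality $\word{ert}[c](\mathbf{k}+f) \sqsubseteq \mathbf{k}+\word{ert}[c](f)$, not an equality, though the direction is the one you need.
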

\vspace{-1.5ex}
\begin{proof}
The proof is done by induction on the program structure and the derivation using the inference rules.
See Appendix~\ref{app:soundnessproof} for details.
\end{proof}
\vspace{-1.5ex}


\vspace{-1.5ex}
\section{Implementation and experiments}
\label{sec:experiment}
In this section, we first describe the implementation of the automatic
analysis in the tool \toolname{}. Then, we evaluate the performance
of our tool on a set of challenging examples.\footnote{The source code of the examples, \toolname{}, the experiments, and the simulation-based comparison have been submitted as auxiliary material.}
\vspace{-1.5ex}
\subsection{Implementation}
\label{sec:implementation}
\toolname{} is implemented in OCaml and consists of about $5000$ LOC. 
The tool currently works on imperative integer programs written in a Python-like syntax that supports recursive procedures.
It also has a C interface based on LLVM. Currently, \toolname{} supports four common distributions: Bernoulli, binomial, hyper-geometric, and uniform. 
However, there are no limitations to the distributions that can be supported as long as they have a finite domain.
\vspace{-.5ex}
\paragraph{Potential functions.} 
To discovery the bounds on expected resource usage automatically, in this work, we focus on inferring polynomial potential functions that are \emph{linear combinations} of \emph{base functions} picked among the monomials. Formally, they are defined by the following syntax.
\begin{displaymath}
\vspace{-1.0ex}
\begin{array}{lll}
M & :=  \mathbf{1} \mid x \mid M_1{\cdot}M_2 \mid \word{max}(0,P) 	& x \in \word{VID} \\
P & := k{\cdot}M \mid P_1{+}P_2     									& k \in \mathbb{Q}
\end{array}
\vspace{-1.0ex}
\end{displaymath}
\vspace{-.5ex}
\paragraph{Generating base and rewrite functions.} 
Our analysis can work with every set of base functions.
While it would be possible to to fix a set of functions once and for all as in previous work on resource analysis~\cite{HoffmannAH10,CarbonneauxHZ15}, we found that it is more effective to select the base functions for each program using a heuristic~\cite{CarbonneauxHRS17}.
The abstract interpretation (AI) used in \toolname{} to infer logical contexts derives linear inequalities between program variables
and uses a Presburger decision procedure.
%
Our implementation uses these inequalities to heuristically generate a set of base and rewrite functions.
For example, if $n > x$ at one program point, the heuristic will add the monomial $\max(0,n-x)$ as a base function.
Higher-degree base functions can be constructed by considering successive powers and products of simpler base functions. 
One can use a more complex and powerful AI such as the Apron library~\cite{Jeannet09}.
In practice, we found that our simple AI is sufficient to infer many bounds and provides good performance.

When the heuristic adds a new base function, a set of rewrite functions is enriched to allow transfers of potential to and from the new base function.
For instance, for the base function $\max(0, n-x)$ we add the rewrite function $F = \max(0,n-x) - \max(0,n-x-1) - 1$.
$F$ can be used for an assignment $x = x + 1$ when $n-x > 0$. 
So if $\max(0,n-x)$ is the potential before the assignment, $F$ can be used to turn it into $\max(0,n-(x+1))  + 1$ which, after the assignment, becomes $\max(0, n-x) + 1$, effectively extracting one unit of constant potential.

\vspace{-.5ex}
\paragraph{User interaction.} 
Occasionally, when a program requires a complex potential transformation, our heuristic might not be sophisticated enough to identify 
an appropriate set of rewrite functions. 
In this case, the user can manually specify a set
of rewrite functions as hints to be used by the analysis.
These hints, in contrast with typical assertions, 
have no runtime effect and do not compromise soundness. In
particular, before using a rewrite function, the
analyzer checks that its non-negativity condition is
satisfied.

\vspace{-1.5ex}
\subsection{Experimental evaluation}
\vspace{-.5ex}
\paragraph{Evaluation setup.} 
To evaluated the practicality of our framework, we have designed and
collected $\numexamp$
challenging examples with different looping and recursion patterns
that depend on probabilistic branching and sampling assignments.  In total,
the benchmark consists of more than $1000$ LOC.

The programs \progname{bayesian}~\cite{GordonHNR14}, 
\progname{filling}~\cite{Sankaranarayanan13}, \progname{race}, \progname{2drwalk}, \progname{robot},
\progname{roulette}~\cite{Chakarov13}, and \progname{sampling}~\cite{Kamin16}
have been described in the literature on probabilistic programs, in which  
their expected resource consumption has been analyzed manually.  
The programs \progname{C4B\_$*$},
\progname{prseq}, \progname{prseq}, \progname{preseq\_bin},  
\progname{prspeed}, \progname{rdseql},
\progname{rdspeed}, and \progname{recursive} are probabilistic versions of deterministic
examples from previous
work~\cite{CarbonneauxHRS17,CarbonneauxHZ15,GulwaniMC09}. The other
examples are either adaptations of classic randomized algorithms or
handcrafted new programs that demonstrate particular capabilities of
our analysis. Section~\ref{sec:overview} contains some representative
listings.

To measure the expected resource usage of all examples by simulation,
we uniformly chose the range of inputs to be $1000$ to $5000$ and
allowed only $1$ input variable to vary while choosing fixed random
values for other inputs.\footnote{We reduced the input ranges of
  polynomial programs by an order of magnitude because their simulation
  runtime is very long.}  We sampled the resource usage
$10000$ times for each input 
using the GSL-GNU scientific library~\cite{gsl}. We then compared the results
to our statically computed bounds. The simulation is implemented in
C$^{++}$ and consists of more than $5000$ LOC.

The experiments were run on a machine with an Intel Core i5 2.4 GHz
processor and 8GB of RAM under macOS 10.13.1. The LP solver we use is CoinOr CLP~\cite{CoinOrCLP}.
\vspace{-.5ex}
\paragraph{Results.} 
\begin{table}
	\vspace{-2ex}
  \begin{center}
{\footnotesize
\begin{tabular*}{0.5\textwidth}{@{\extracolsep{\fill}} l l l l}
\multicolumn{4}{c}{Linear programs} \\
\hline
Program & Expected bound & Error($\%$) & Time(s)\\
\hline
\noalign{\vskip 1mm}
\word{2drwalk}
& ${2}{\cdot}{\interval{d,n+1}}$
& $0.170$
& 2.278 \\
\word{bayesian}
& ${5}{\cdot}{\interval{0,n}}$
& $0$
& 0.272 \\
\word{ber}
& ${2}{\cdot}{\interval{x,n}}$
& $0.026$
& 0.008 \\
\word{bin}
& ${0.2}{\cdot}{\interval{0,n+9}}$
& $0.290$
& 0.281 \\
\word{C4B\_t09}
& ${8.27273}{\cdot}{\interval{0,x}}$
& $5.362$
& 0.061 \\
\word{C4B\_t13}
& ${1.25}{\cdot}{\interval{0,x}} + {\interval{0,y}}$
& $0.009$
& 0.045 \\
\word{C4B\_t15}
& ${2}{\cdot}{\interval{0,x}}$
& A.S
& 0.044 \\
\word{C4B\_t19}
& ${\interval{0,k+i+51}} + {2}{\cdot}{\interval{100,i}}$
& $2.711$
& 0.058 \\
\word{C4B\_t30}
& ${0.5}{\cdot}{\interval{0,x+2}} + {0.5}{\cdot}{\interval{0,y+2}}$
& W.C
& 0.032 \\
\word{C4B\_t61}
& ${0.060606}{\cdot}{\interval{0,l-1}} + {\interval{0,l}}$
& $0.754$
& 0.028 \\
\word{condand}
& ${\interval{0,m}} + {\interval{0,n}}$
& A.S
& 0.010 \\
\word{cooling}
& ${0.42}{\cdot}{\interval{0,t+5}} + {\interval{\text{st},\text{mt}}}$
& $0.192$
& 0.079 \\
\word{fcall}
& ${2}{\cdot}{\interval{x,n}}$
& $0.025$
& 0.008 \\
\word{filling}
& ${0.037037}{\cdot}{\interval{0,\text{vol}+2}} +$
& $0.713$
& 0.615 \\
& ${0.333333}{\cdot}{\interval{0,\text{vol}+10}} +$
&
& \\
& ${0.296296}{\cdot}{\interval{0,\text{vol}+11}}$
&
&\\
\word{hyper}
& ${5}{\cdot}{\interval{x,n}}$
& $0.061$
& 0.013 \\
\word{linear01}
& ${0.6}{\cdot}{\interval{0,x}}$
& $0.036$
& 0.016 \\
\word{miner}
& ${7.5}{\cdot}{\interval{0,n}}$
& $0.071$
& 0.077 \\
\word{prdwalk}
& ${1.14286}{\cdot}{\interval{x,n+4}}$
& $0.128$
& 0.052 \\
\word{prnes}
& ${68.4795}{\cdot}{\interval{0,-n}} + {0.052631}{\cdot}{\interval{0,y}}$
& $0.122$
& 0.057 \\
\word{prseq}
& ${1.65}{\cdot}{\interval{y,x}} + {0.15}{\cdot}{\interval{0,y}}$
& $0.144$
& 0.057 \\
\word{prseq\_bin}
& ${1.65}{\cdot}{\interval{y,x}} + {0.15}{\cdot}{\interval{0,y}}$
& $0.150$
& 0.082 \\
\word{prspeed}
& ${2}{\cdot}{\interval{y,m}} + {0.666667}{\cdot}{\interval{x,n}}$
& $0.039$
& 0.057 \\
\word{race}
& ${0.666667}{\cdot}{\interval{h,t+9}}$
& $0.294$
& 0.245 \\
\word{rdseql}
& ${2.25}{\cdot}{\interval{0,x}} + {\interval{0,y}}$
& $0.007$
& 0.025 \\
\word{rdspeed}
& ${2}{\cdot}{\interval{y,m}} + {0.666667}{\cdot}{\interval{x,n}}$
& $0.039$
& 0.040 \\
\word{rdwalk}
& ${2}{\cdot}{\interval{x,n+1}}$
& $0.075$
& 0.012 \\
\word{robot}
& ${0.384615}{\cdot}{\interval{0,n+6}}$
& R.D
& 2.658 \\
\word{roulette}
& ${4.93333}{\cdot}{\interval{n,10010}}$
& $0.282$
& 1.216 \\
\word{sampling}
& ${2}{\cdot}{\interval{0,n}}$
& $0.026$
& 3.347 \\
\word{sprdwalk}
& ${2}{\cdot}{\interval{x,n}}$
& $0.032$
& 0.017 \\
\noalign{\vskip 1mm}
\hline
\multicolumn{4}{c}{Polynomial programs} \\
\hline
\noalign{\vskip 1mm}
\word{complex}
& ${6}{\cdot}{\interval{0,m}}{\cdot}{\interval{0,n}} + {3}{\cdot}{\interval{0,n}}{+}{\interval{0,y}}$
& $0.118$
& 3.415 \\
\word{multirace}
& ${2}{\cdot}{\interval{0,m}}{\cdot}{\interval{0,n}} + {4}{\cdot}{\interval{0,n}}$
& $0.703$
& 9.034 \\
\word{pol04}
& ${4.5}{\cdot}{\interval{0,x}}^{2} + {7.5}{\cdot}{\interval{0,x}}$
& $0.779$
& 0.585 \\
\word{pol05}
& ${\interval{0,x}}^{2} + {\interval{0,x}}$
& $0.431$
& 0.353 \\
\word{pol06}
& ${0.625}{\cdot}{\interval{\text{min},s}} +$ 
& A.S
& 7.066 \\
& ${2}{\cdot}{\interval{\text{min},s}}{\cdot}{\interval{0,\text{min}}} + {0.625}{\cdot}{\interval{\text{min},s}}^{2}$
&\\
\word{pol07}
& ${1.5}{\cdot}{\interval{0,n-2}}{\cdot}{\interval{0,n-1}}$ 
& $0.008$
& 4.534 \\
\word{rdbub}
& ${3}{\cdot}{\interval{0,n}}^{2}$
& $0.106$
& 0.190 \\
\word{recursive}
& ${0.25}{\cdot}{\interval{l,h}}^{2}{+}{1.75}{\cdot}{\interval{l,h}}$
& $0.281$
& 3.791 \\
\word{trader}
& ${5}{\cdot}{\interval{s_{\text{min}},s}}^{2} + {5}{\cdot}{\interval{s_{\text{min}},s}} +$
& $0.251$
& 7.262 \\
& ${10}{\cdot}{\interval{s_{\text{min}},s}}{\cdot}{\interval{0,s_{\text{min}}}}$
&
&\\
\end{tabular*}}
\end{center}


  \caption{Automatically-derived bounds on the expected number of \word{ticks} with \toolname{}.}
  \vspace{-7ex}
  \label{tab:evalation}
\end{table}
The results of the evaluation are compiled in
Table~\ref{tab:evalation}. The table is split into linear and
non-linear bounds. It contains the inferred bounds, the total time
taken by \toolname{}, and the means in percentage of the absolute errors 
between the measured expected values and the inferred bounds.
%
In general, the analysis finds bounds quickly: All the examples are
processed in less than 10 seconds. The analysis time mainly depends on
three factors: the number of variables in the program, the number of
base functions, and the size of the distribution's domain in the sampling
commands. 
The user can specify a maximal degree of
the bounds to control the number of base functions under consideration.  
Our inference rule for the sampling commands is very
precise but the price we pay for the precision is a linear constraint
set whose size is proportional to the range of the sampling distribution.
%

As shown in the \emph{Error}
column, the derived bounds are often not only asymptotically tight but
also contain very precise constant factors.
Figure~\ref{fig:measuredplot} shows representative plots of
comparisons of the inferred bounds and measured cost samples. Our experiments indicate that the computed bounds are
close to the measured expected numbers of ticks. 
%
Appendix~\ref{app:simulation} contains plots for the other benchmarks.

%
However, there is no guarantee that \toolname{} infers asymptotically tight bounds 
and there are many classes of bounds that \toolname{} cannot derive. For example, 
for the programs whose errors are denoted by $A.S$ in Table~\ref{tab:evalation} we did
not compute asymptotically tight bounds. \progname{C4B\_t15} has logarithmic 
expected cost, thus the best bound that \toolname{} can derive is a linear bound. 
Similarly, $\interval{{0}{,}{n}}{+}\interval{{0}{,}{m}}$ is the best bound that can be inferred 
for \progname{condand} whose expected cost is
$2{\cdot}\text{min}\{\interval{{0}{,}{n}},\interval{{0}{,}{m}}\}$. Another source of imprecise constant 
factors in the bounds is rounding. The program \progname{robot} has an imprecise constant 
factor, denoted $R.D$ in the table, because it contains a deep nesting of probabilistic choices. 

Since we do not assume a particular distribution of the inputs,
the bounds on the expected cost have to consider the worst case inputs.
If a program does not contain probabilistic constructs then we preform in fact a worst-case analysis. 
Thus, comparing with the sampled expected cost on the worst-case inputs gives us a very small error even the derived bound is not asymptoticallt tight. 
For instance, \toolname{} derives the untight bound ${0.5}{\cdot}{\interval{{0}{,}{x+2}}}{+}{0.5}{\cdot}{\interval{{0}{,}{y+2}}}$ for \progname{C4B\_t30} whose expected cost is ${0.5}{\cdot}{\interval{{0}{,}{2}{\cdot}(\text{min}\{x,y\}{+}{2})}}$. If we compare the derived bound with the sampled exected cost on the worst-case inputs (e.g., values of $x$ and $y$ such that $x=y$), then we obtain a very small error. 
We mark the error with W.C in this case. 
%
%
\begin{figure*}[th!]
\vspace{-2ex}
\centering
\begin{minipage}[b]{0.33\textwidth}
\centering
\includegraphics[width=1.0\textwidth]{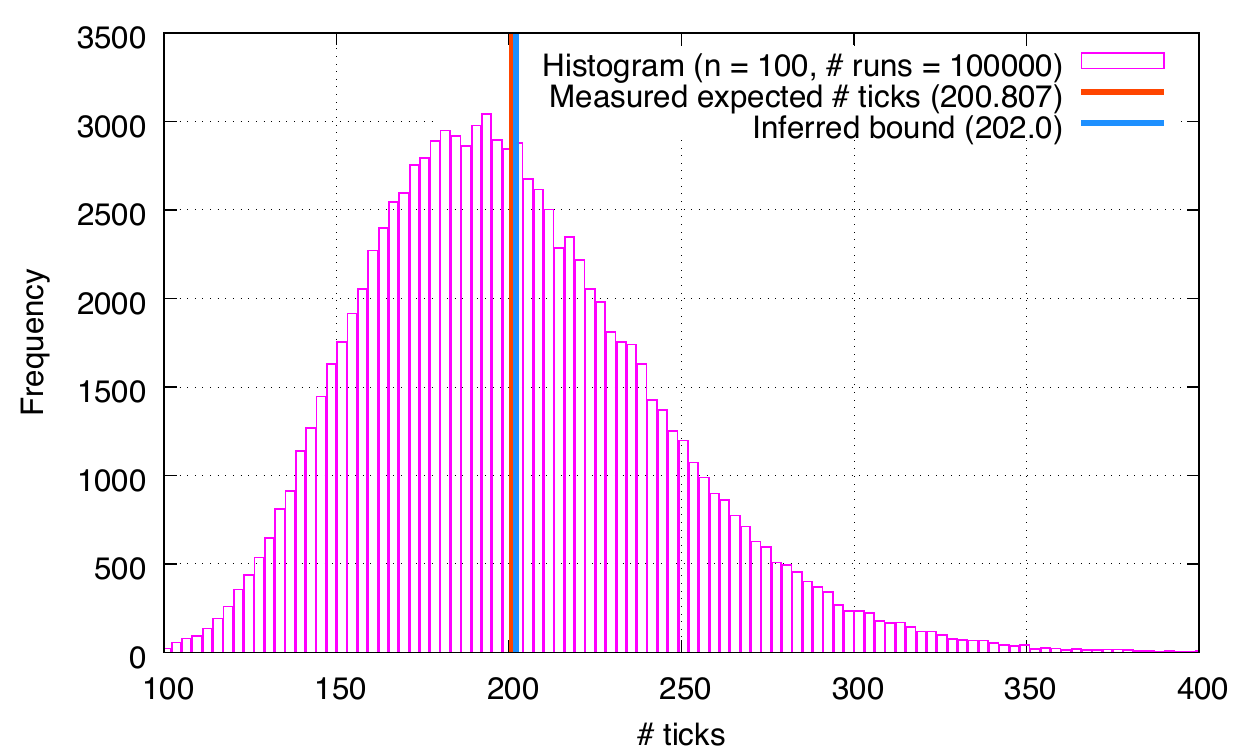}
\end{minipage}%
\begin{minipage}[b]{0.33\textwidth}
\centering
\includegraphics[width=1.1\textwidth]{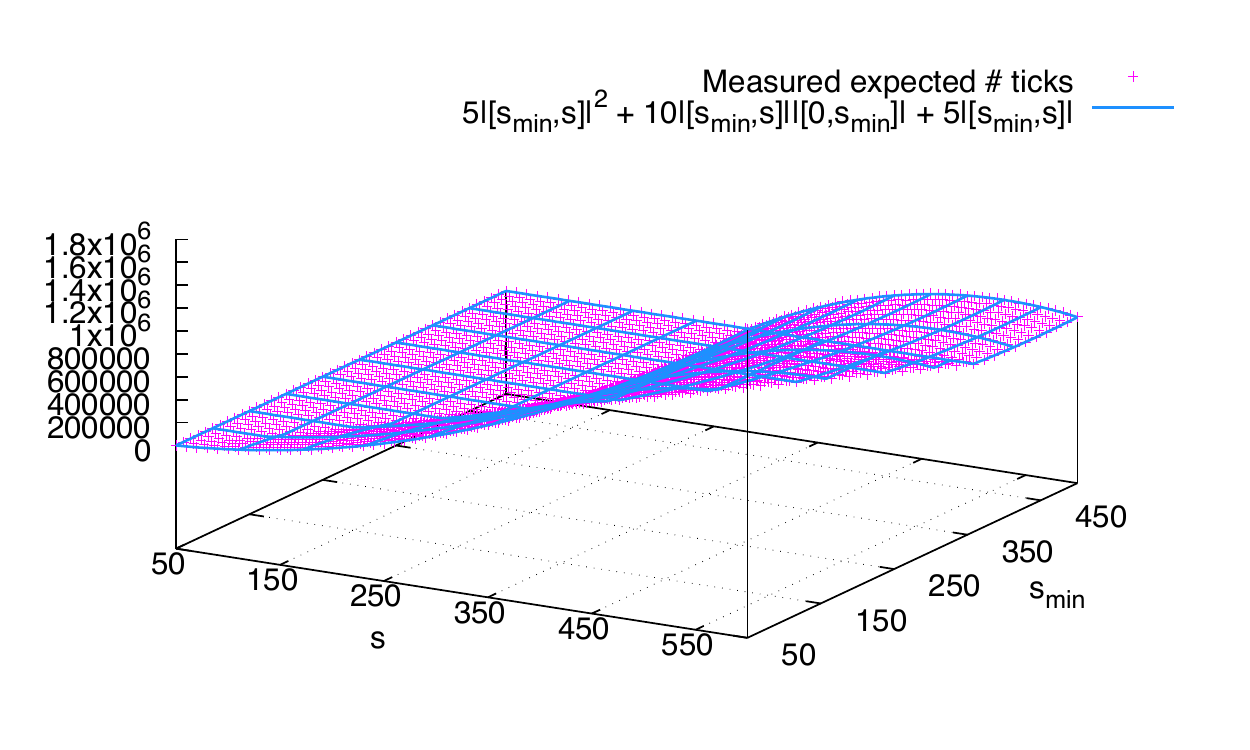}
\end{minipage}%
\begin{minipage}[b]{0.33\textwidth}
\centering
\includegraphics[width=1.0\textwidth]{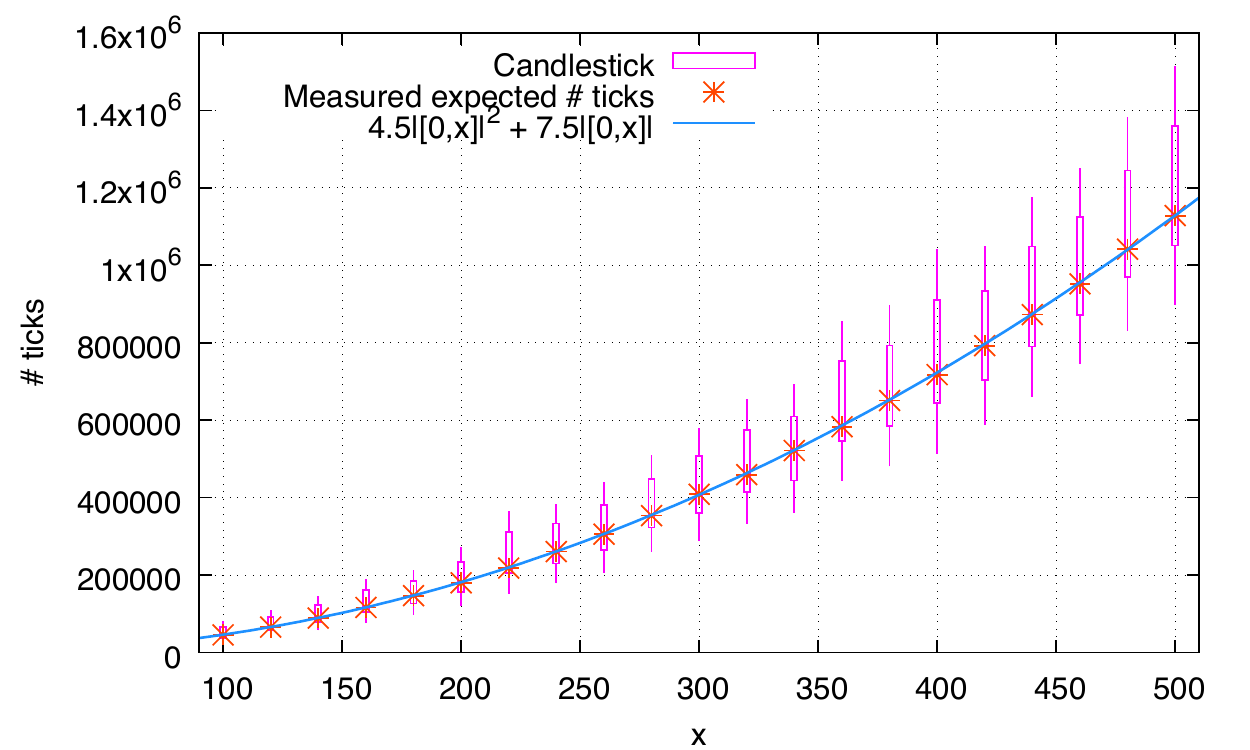}
\end{minipage}%
\vspace{-2ex}
\caption{Comparison of automatically derived bounds with measured
  cost samples. On the left: histogram of the distribution of \#ticks for
  \progname{rdwalk} with $n=100$. On the right: The inferred bound on
  the expected \#ticks (blue lines) compared to the measured expected
  values for various input sizes (red crosses) for \progname{trader}
  (at the center) and \progname{pol04} (on the right). In the latter,
  the candlesticks represent the highest and lowest sampled
  values and the second and third quartile.}
\vspace{-3ex}
\label{fig:measuredplot}
\end{figure*}


\vspace{-1.5ex}
\section{Related work}
\label{sec:relatedwork}

Our work is a confluence of ideas from automatic resource bound
analysis and analysis of probabilistic programs. They have
been extensively studied but developed independently. In spite
of abundant related research, we are not aware of existing techniques
that can automatically derive symbolic bounds on the expected runtime
of probabilistic programs.
\vspace{-.5ex}
\paragraph{Resource bound analysis.} 
Most closely related to
our work is prior work on AARA for deterministic programs. 
AARA has been introduced in~\cite{Jost03} for automatically deriving linear worst-case bounds
for first-order functional programs. The technique has been
generalized to derive polynomial
bounds~\cite{HoffmannH10,HoffmannAH12,HoffmannS13,HofmannM14,HofmannM15},
lower bounds~\cite{NgoDFH16}, and to handle (strictly evaluated)
programs with arrays and references~\cite{LichtmanH17}, higher-order
functions~\cite{Jost10,HoffmannW15}, lazy functional
programs~\cite{SimoesVFJH12,VasconcelosJFH15}, object-oriented
programs~\cite{Jost06,HofmannR13}, and user defined data
types~\cite{Jost09,HoffmannW15}. It also has been integrated into separation logic~\cite{Atkey10} and 
proof assistants~\cite{Nipkow15,ChargueraudP15}. 
A distinctive common theme of sharing 
is compositionality and automatic bound inference via LP solving.

In contrast to our work,
all prior research on AARA targets deterministic programs and derives
worst-case bounds rather than bounds on the expected resource usage. 
In our formulation of AARA for probabilistic programs, we build on
prior work that integrated AARA into Hoare logic to derive bounds for
imperative code~\cite{Nielson87,CarbonneauxHRZ13,CarbonneauxHZ15,CarbonneauxHRS17}, 
a new technique for deriving polynomial bounds on the expected
resource usage of programs with probabilistic sampling and branching.

Beyond AARA there exists many other approaches to automatic worst-case
resource bound analysis for deterministic programs. They are based on
sized types~\cite{Vasconcelos08}, linear dependent
types~\cite{LagoG11,LagoP13}, refinement
types~\cite{CicekGA15,CicekBGGH16}, annotated type
systems~\cite{Crary00,Danielsson08},
defunctionalization~\cite{AvanziniLM15}, recurrence
relations~\cite{Grobauer01,Benzinger04,AlonsoG12,FloresH14,AlbertFR15,DannerLR15,KincaidBBR2017},
abstract interpretation~\cite{GulwaniMC09,BlancHHK10,Zuleger11,SinnZV14,CernyHKRZ15},
template based assume-guarantee reasoning~\cite{MadhavanKK17}, measure
functions~\cite{ChatterjeeFG17}, and techniques from term
rewriting~\cite{AvanziniM13,NoschinskiEG13,BrockschmidtEFFG14,FrohnNHBG16}. 
These techniques do not apply to
probabilistic programs and do not derive bounds on expected resource
usage. 

The decision to base our analysis on AARA is mainly motivated
by the strong connection to existing techniques for (manually)
analyzing expected runtime (see next paragraph) and the general
advantages of AARA, including compositionality, tracking of
amortization effects, flexible cost models, and efficient bound
inference using LP solving.

We are only aware of few works that study the analysis of expected
resource usage of probabilistic programs. Chatterjee et
al.~\cite{ChatterjeeFM17} propose a technique for solving recurrence
relations that arise in the analysis of expected runtime cost. Their
technique can derive bounds of the form $O(\log n)$,
$O(n)$,
and $O(n \log n)$.
Similarly, Flajolet et al.~\cite{FlSaZi91} describe an automatic for
average-case analysis that is based on generating functions and that
can be seen as a method for solving recurrences.
While these techniques apply to recurrences that describe the resource
usage of randomized algorithms, the works do not propose a technique
for deriving recurrences from a program. It is therefore not a
push-button analysis for probabilistic programs but complementary to
our work since they can derive logarithmic bounds. 
\vspace{-.5ex}
\paragraph{Analysis of probabilistic programs.} 
Considering work on analyzing probabilistic programs, most closely
related is a recent line of work by Kaminski et
al.~\cite{Kamin16,OlmedoKKM16}. The goal of this
work is to characterize the expected runtime of
probabilistic programs. However, they use a WP 
calculus to derive pre-expectations and do not consider any 
automation. The technique can be seen as a generalization of
quantitative Hoare logic~\cite{CarbonneauxHRZ13,CarbonneauxHZ15} for
AARA to the probabilistic setting but does not provide support for
automatic reasoning. In fact, when attempting to generalize AARA to
probabilistic programs we were first unaware of the existing work and
rediscovered some of the proof rules. Our contributions are new
specialized proof rules that allow for automation using LP solving and
a prototype implementation of the new technique. While our soundness
proof is original, it leverages the proof by Kaminski et al. by
relying on the soundness of the rules for weakest preconditions.

The use of pre-expectations for reasoning about probabilistic programs
dates back to the pioneering work of Kozen and
others~\cite{Kozen81,McIver04,CelikuM05}. It has been automated using
constraint generation~\cite{KatoenMMM10} and abstract
interpretation~\cite{Chakarov14} to derive quantitative invariants.
However, it is unclear how to use them to automatically derive symbolic (polynomial) bounds like in our work. 
%
%

Another body of research relies probabilistic pushdown automata and
martingale theory to analyze the termination time~\cite{BrazdilKKV15} 
and the expected number of steps~\cite{EsparzaKM05}. 
The use of martingale theory to automatically analyze probabilistic
programs has been pioneered in~\cite{Chakarov13}. While their technique also relies on linear
constraints, it is proving almost-sure termination instead of resource
bounds and relies on Farka's lemma. More general methods
\cite{ChatterjeeFG16} are able to synthesize polynomial
ranking-supermartingales for proving termination.

Abstract interpretation has also been applied to probabilistic
programs~\cite{Monniaux01,Monniaux05,CousotM12} but we are not aware
of its application to derive bounds on the expected resource usage. 
%
Another approach to automatically analyze probabilistic programs is based on 
symbolic inference~\cite{GehrMV16} and analyzing execution paths with 
statistical techniques~\cite{Sankaranarayanan13,GeldenhuysDV12,BorgesFdPV14}. 
%
In the context of analyzing differential privacy, there are works with limited automation that
focus on deriving bounds on the privacy budget for probabilistic
programs~\cite{Haeberlen2011,Barthe2014}. 

\vspace{-1.5ex}
\section{Conclusion}
\label{sec:conclusion}

We have introduced a new technique for automatically inferring
polynomial bounds on the expected resource consumption of
probabilistic programs. The technique is a combination of existing
manual quantitative reasoning for probabilistic programs and an
automatic worst-case bound analysis for deterministic programs.
The effectiveness of the technique is demonstrated with an
implementation and the automatic analysis of challenging examples from
previous work.

In the future, we plan to study how to build on the introduced technique to
automatically derive tail bounds, that is, worst-case bounds that hold
with high probability. 
We are also working on a more direct soundness
argument that also works for non-monotone resources. Finally, we plan
to build on Resource Aware ML~\cite{HoffmannW15} to apply the expected
potential method to (higher-order) functional programs.

\bibliography{lit,publications}

\appendix
\clearpage
\vspace{-1.5ex}
\section{Operational cost semantics}
\label{app:opeationalsemantics}
Following existing work~\cite{OlmedoKKM16}, we provide an operational
semantics for probabilistic programs using pushdown MDPs extended with
a reward function. Interested readers can find more details about MDPs
in the literature~\cite{EsparzaKM05,Baier08}. 

A program state $\sigma: \word{VID} \rightarrow \mathbb{Z}$ is a map
from variable identifiers to integer values. We write
$\sem{e}_{\sigma}$ to denote the value of the expression $e$
in the program state $\sigma$. We write $\sigma[v/x]$ for the program state
$\sigma$ extended with the mapping of $x$ to $v$. For a probability
distribution $\dist{}$, we use $\sem{\dist{} : v}$ to indicate the
probability that $\dist{}$ assigns to value $v$. We use $\Sigma$ to
denote the set of program states.
%
\begin{figure*}[!th]
\centering
\small{
\begin{mathpar}
\RuleToplabel{\textcolor{ACMRed}{S:Term}}
{
}
{
(\downarrow,\sigma) \xrightarrow{\tau,\epsilon,\epsilon,1} (\word{Term},\sigma)
}

\RuleToplabel{\textcolor{ACMRed}{S:Skip}}
{
\word{cmd}(\ell) = \word{skip} \\
\word{fcmd}(\ell) = \ell'
}
{
(\ell,\sigma) \xrightarrow{\tau,\gamma,\gamma,1} (\ell',\sigma)
}

\RuleToplabel{\textcolor{ACMRed}{S:Abort}}
{
  \word{cmd}(\ell) = \word{abort} \\
}
{
  (\ell,\sigma) \xrightarrow{\tau,\gamma,\gamma,1} (\ell,\sigma)
}

\RuleToplabel{\textcolor{ACMRed}{S:Return}}
{
}
{
  (\downarrow,\sigma) \xrightarrow{\tau,\ell',\epsilon,1} (\ell',\sigma)
}

\RuleToplabel{\textcolor{ACMRed}{S:Assert}}
{
  \sem{e}_{\sigma} = \word{true} \\
  \word{cmd}(\ell) = \word{assert} e \\
  \word{fcmd}(\ell) = \ell'
}
{
  (\ell,\sigma) \xrightarrow{\tau,\gamma,\gamma,1} (\ell',\sigma)
}


\RuleToplabel{\textcolor{ACMRed}{S:Tick}}
{
  \word{cmd}(\ell) = \word{tick} \\
  \word{fcmd}(\ell) = \ell'
}
{
(\ell,\sigma) \xrightarrow{\tau,\gamma,\gamma,1} (\ell',\sigma)
}

\RuleToplabel{\textcolor{ACMRed}{S:LoopB}}
{
  \word{cmd}(\ell) = \word{while} e \; c\\\\
  \sem{e}_{\sigma} = \word{true}\\
  \word{fcmd}(\ell) = \ell'
}
{
  (\ell,\sigma) \xrightarrow{\tau,\gamma,\gamma,1} (\ell',\sigma)
}

\RuleToplabel{\textcolor{ACMRed}{S:PIfL}}
{
  \word{cmd}(\ell) = c_1 \oplus_{p} c_2 \\
  \word{fcmd}(\ell) = \ell'
}
{
  (\ell,\sigma) \xrightarrow{\tau,\gamma,\gamma,p} (\ell',\sigma)
}

\RuleToplabel{\textcolor{ACMRed}{S:PIfR}}
{
  \word{cmd}(\ell) = c_1 \oplus_{p} c_2 \\
  \word{scmd}(\ell) = \ell'
}
{
  (\ell,\sigma) \xrightarrow{\tau,\gamma,\gamma,{1}{-}{p}} (\ell',\sigma)
}

\RuleToplabel{\textcolor{ACMRed}{S:LoopE}}
{
  \word{cmd}(\ell) = \word{while} e \; c\\\\
  \sem{e}_{\sigma} = \word{false}\\
  \word{scmd}(\ell) = \ell'
}
{
  (\ell,\sigma) \xrightarrow{\tau,\gamma,\gamma,1} (\ell',\sigma)
}

\RuleToplabel{\textcolor{ACMRed}{S:Assign}}
{
  \word{cmd}(\ell) = \word{id} = e \\
  \word{fcmd}(\ell) = \ell' \\
  \sigma' = \sigma[\sem{e}_{\sigma}/\word{id}]
}
{
  (\ell,\sigma) \xrightarrow{\tau,\gamma,\gamma,1} (\ell',\sigma')
}

\RuleToplabel{\textcolor{ACMRed}{S:Sample}}
{
  \word{cmd}(\ell) = \word{id} = e \word{bop} R \\
  \word{fcmd}(\ell) = \ell' \\\\
  \sem{\dist{R} : v} = p > 0 \\
  \sigma' = \sigma[\sem{e}_{\sigma}\word{bop}v/\word{id}]
}
{
  (\ell,\sigma) \xrightarrow{\tau,\gamma,\gamma,p} (\ell',\sigma')
}

\RuleToplabel{\textcolor{ACMRed}{S:NonDetL}}
{
  \word{cmd}(\ell) = \word{if} \star \; c_1 \word{else} c_2 \\
  \word{fcmd}(\ell) = \ell'
}
{
  (\ell,\sigma) \xrightarrow{\word{Th},\gamma,\gamma,1} (\ell',\sigma)
}

\RuleToplabel{\textcolor{ACMRed}{S:NonDetR}}
{
  \word{cmd}(\ell) = \word{if} \star \; c_1 \word{else} c_2 \\
  \word{scmd}(\ell) = \ell'
}
{
  (\ell,\sigma) \xrightarrow{\word{El},\gamma,\gamma,1} (\ell',\sigma)
}

\RuleToplabel{\textcolor{ACMRed}{S:Call}}
{
  \word{cmd}(\ell) = \word{call} P\\
  \word{fcmd}(\ell) = \ell'
}
{
  (\ell,\sigma) \xrightarrow{\tau,\gamma,\gamma.\ell',1} (\word{init}(\mathcal{D}),\sigma)
}

\RuleToplabel{\textcolor{ACMRed}{S:IfT}}
{
  \sem{e}_{\sigma} = \word{true} \\
  \word{cmd}(\ell) = \word{if} e \; c_1 \word{else} c_2 \\
  \word{fcmd}(\ell) = \ell'
}
{
  (\ell,\sigma) \xrightarrow{\tau,\gamma,\gamma,1} (\ell',\sigma)
}

\RuleToplabel{\textcolor{ACMRed}{S:IfF}}
{
  \sem{e}_{\sigma} = \word{false} \\
  \word{cmd}(\ell) = \word{if} e \; c_1 \word{else} c_2 \\
  \word{scmd}(\ell) = \ell'
}
{
  (\ell,\sigma) \xrightarrow{\tau,\gamma,\gamma,1} (\ell',\sigma)
}
\end{mathpar}}
\vspace{-1.5ex}
\caption{Rules of the probabilistic pushdown transition relation.}
\vspace{-1.5ex}
\label{fig:semantics}
\end{figure*}

Given a program $(c,\mathcal{D})$,
let $L$
be the finite set of all program locations. Let $\ell_0 \in L$
be the initial location of $c$,
$\downarrow$
be the special symbol indicating a termination of a procedure, and let
$\word{Term}$ be a special symbol
for the termination of the whole program. For simplicity, we
assume the existence of auxiliary functions that can be defined
inductively on commands. The function
$\word{init}: C \rightarrow L$
maps procedure bodies to the initial program locations,
the functions $\word{cmd}: L \rightarrow C$
maps locations to their corresponding commands, and
$\word{fcmd}: L \rightarrow L \cup \{\downarrow\}$
and $\word{scmd}: L \rightarrow L \cup \{\downarrow\}$
map locations to their first and second successors, respectively. If a
location $\ell$
has no such successor, then $\word{fcmd}(\ell) = \downarrow$
and $\word{scmd}(\ell) = \downarrow$.

\vspace{-.5ex}
\paragraph{Probabilistic transitions.}

A program configuration is of the form $(\ell, \sigma)$,
where $\sigma \in \Sigma$
is the current program state and
$\ell \in L \cup \{\downarrow, \word{Term}\}$
is a program location indicating the current command or a special
symbol. We view the configurations as states of a pushdown MDP (the
general definition follows) with transitions of the form

\vspace{-2.5ex}
$$(\ell,\sigma) \xrightarrow{\alpha,\gamma,\overline \gamma,p} (\ell',\sigma')$$
\vspace{-2.5ex}

Pushdown MDPs operate on stack of locations that act as return
addresses. In a such a transition, $(\ell,\sigma)$
is the current configuration, $\alpha \in \word{Act}$
is an action, $\gamma$
is the program location on top of the return stack (or $\epsilon$
to indicate an empty stack), $\overline \gamma$
is a finite sequence of program locations to be pushed on the return
stack, $p$
is the probability of the transition, and $(\ell',\sigma')$
is the configuration after the transition. Like in a standard MPD, at
each program configuration, the sum of probabilities of the outgoing
edges labeled with a fixed action is either $0$ or $1$.
The transition rules of our pushdown MDP is given in
Figure~\ref{fig:semantics}. In the rules, we identify a singleton
symbol $\gamma$
with a one element sequence. 

The set of actions is given by
$\word{Act} \defineas \{\word{Th}, \tau, \word{El}\}$
where $\word{Th}$
is the action for the \emph{then} branch, $\word{El}$
is the action for the \emph{else} branch of a non-deterministic choice command, and $\tau$
is the standard action for other transitions. 
In the rule \rul{S:Call} for procedure calls, the location of the
command after the call command is pushed on the location stack and the
control is moved to the first location of the body of the callee procedure. When a
procedure terminates it reaches a state of the form
$(\downarrow,\sigma)$.
If the location stack is non-empty then the rule \rul{S:Return} is
applicable and a return location $\ell'$
is popped from the stack. If the stack is empty then the rule
\rul{S:Term} is applicable and the program terminates.
\vspace{-.5ex}
\paragraph{Resource consumption.} To complete our pushdown MDP
semantics, we have to define the resource consumption of a
computation. We do so by introducing a \emph{reward function} that
assigns a reward to each configuration. The resource consumption of an
execution is then the sum of the rewards of the visited configurations.

For simplicity, we assume that the cost of the program is defined
exclusively by the $\word{tick}(q)$
command, which consumes $q \geq 0$
resource units. Thus we assign the reward $q$
to configurations with locations $\ell$
that contain the command $\word{tick}(q)$
and a reward of $0$ to other configurations.

To facilitate composition, it is handy to define the reward function
with respect to a function
$f: \Sigma \rightarrow \mathbb{R} \cup \{\infty\}$
that defines the reward of a continuation after the program
terminates. So we define the reward of the configuration
$(\word{Term},\sigma)$ to be $f(\sigma)$.
\vspace{-.5ex}
\paragraph{Pushdown MDPs.}

In summary, the semantics of a probabilistic program $(c,\mathcal{D})$
with the initial state $\sigma_0$
is defined by the pushdown MDP
$\mathfrak{M}^{f}_{\sigma_0}\sem{c,\mathcal{D}} =
(S,s_0,\word{Act},P,\mathcal{E},\epsilon,\mathfrak{Re})$, where
\begin{itemize}
  \item $S \defineas \{(\ell,\sigma) \mid \ell \in L \cup \{\downarrow, \word{Term}\}, \sigma \in \Sigma\}$,
  \item $s_0 \defineas (\ell_0, \sigma_0)$, 
  \item $\word{Act} \defineas \{\word{Th}, \tau, \word{El}\}$,
  \item the transition probability relation $P$ is defined by the rules in Figure \ref{fig:semantics},
  \item $\mathcal{E} \defineas L \cup \{\epsilon\}$
  \item $\epsilon$ is the bottom–of–stack symbol, and
  \item $\mathfrak{Re} : S \rightarrow \mathbb{R}^+_0$ is the reward function defined as follows.
  $$
  \mathfrak{Re}(s) \defineas
  \begin{cases}
    f(\sigma)   & \quad \text{if } s = (\word{Term},\sigma) \\
    q           & \quad \text{if } s = (\ell,\sigma) \wedge \word{cmd}(\ell) = \word{tick}(q) \\
    0           & \quad \text{otherwise}
  \end{cases}
  $$
\end{itemize}
\vspace{-.5ex}
\paragraph{Expected resource usage.} 
The expected resource usage of the program $(c,\mathcal{D})$ is the expected reward collected when the associated pushdown MDP 
$\mathfrak{M}^{f}_{\sigma}\sem{(c,\mathcal{D})}$ eventually reaches the set of terminated 
states $(\word{Term},\_)$
from the starting state $s_0 = (\ell_0, \sigma)$, denoted ${\word{ExpRew}}^{\mathfrak{M}^{f}_{\sigma}\sem{c,\mathcal{D}}}(\word{Term})$. Formally, it is defined as 
$$
\begin{cases}
\infty	\quad \text{if } \word{inf}_{\mathfrak{S}} \sum_{\widehat{\pi} \in \Pi(s_0,\word{Term})} \mathbb{P}^{\mathfrak{M}^{f}_{\mathfrak{S},\sigma}\sem{c,\mathcal{D}}}(\widehat{\pi}) < 1 & \\
\word{sup}_{\mathfrak{S}} \sum_{\widehat{\pi} \in \Pi(s_0,\word{Term})} \mathbb{P}^{\mathfrak{M}^{f}_{\mathfrak{S},\sigma}\sem{c,\mathcal{D}}}(\widehat{\pi}) \cdot \mathfrak{Re}(\widehat{\pi}) \quad \text{otherwise} &
\end{cases}
$$
where 
$\mathfrak{S}$ is a \emph{scheduler} for the pushdown MDP mapping a finite sequence of states to an action. Intuitively, it resolves the non-determinism by giving an action given a sequence of states that has been visited. Thus, $\mathfrak{S}$ induces a Markov chain, denoted $\mathfrak{M}^{f}_{\mathfrak{S},\sigma}\sem{c,\mathcal{D}}$, from $\mathfrak{M}^{f}_{\sigma}\sem{c,\mathcal{D}}$. 
$\Pi(s_0,\word{Term})$ denotes the set of all finite paths from $s_0$ to some state $(\word{Term},\sigma')$ in $\mathfrak{M}^{f}_{\mathfrak{S},\sigma}\sem{c,\mathcal{D}}$, $\mathbb{P}^{\mathfrak{M}^{f}_{\sigma}\sem{c,\mathcal{D}}}(\widehat{\pi})$ is the probability of the finite path $\widehat{\pi}$. And $\mathfrak{Re}(\widehat{\pi})$ is the \emph{cumulative reward} collected along $\widehat{\pi}$.
 \vspace{-1.5ex}
\section{Weakest pre-expectation transformer}
\label{app:wpcalculus}
A weakest pre-expectation (WP) calculus~\cite{McIver04,Gretz14} expresses the resource usage of program $(c,\mathcal{D})$ using an \emph{expected runtime transformer} given in continuation-passing style.
The transformer used to analyze our language is defined in Table~\ref{tab:ert};
it operates on the set of functions $\contcost \defineas \{f \mid f : \Sigma \rightarrow \Rplus \cup \{\infty\}\}$, usually called \emph{expectations}. 
In our case, it is a good intuition to think of expectations as mere potential functions.
More precisely, the transformer $\word{ert}[c,\mathcal{D}](f)(\sigma)$ computes the expected number of ticks consumed by the program $(c,\mathcal{D})$ from the input state $\sigma$ and followed by a computation that has an expected tick consumption given by $f$.
Because $f$ is evaluated in the final state and $\word{ert}[c,\mathcal{D}](f)$ is evaluated in the initial state, they are called the \emph{pre-} and \emph{post-expectation}, respectively.
If one chooses the post-expectation $f$ to be the constantly zero function then $\word{ert}[c,\mathcal{D}](\mathbf{0})$ is the expected number of ticks for the program. 
\vspace{-.5ex}
\paragraph{Definition of the expected cost transformer.}
The rules defining the expected cost transformer follow the structure of the command $c$.
We describe the intuition behind a few rules of the transformer.
If $c$ is $\word{tick}(q)$, the expected cost for $c$ followed by a computation of expected cost $f$ is $\mathbf{q} + f$, because the (deterministic) cost of the $\word{tick}(q)$ command is precisely $q$.
For a sequence statement, $\word{ert}[c_1;c_2,\mathcal{D}]$ is defined as the application of $\word{ert}[c_1,\mathcal{D}]$ to the expected value obtained from $\word{ert}[c_2,\mathcal{D}]$; this is the usual continuation-passing style definition.
For a conditional statement, $\word{ert}[\word{if} e \; c_1 \word{else} c_2,\mathcal{D}]$ is defined as the expected cost of the branch that will be executed.
For a non-deterministic choice, $\word{ert}[\word{if} \star \; c_1 \word{else} c_2,\mathcal{D}]$ is the maximum between the expected costs of two branches.
For a probabilistic branching, $\word{ert}[c_1 \oplus_{p} c_2,\mathcal{D}]$ is the weighted sum of the expected costs of two branches.
Similarly, the expected cost of a sampling assignment is defined by considering all the outcomes weighted according the random variable's distribution.
Lastly, the expected cost of loops and procedure calls are expressed using least fixed points.
For procedure calls, an auxiliary cost transformer $\word{ert}[{\cdot}]^{\sharp}_{X}(f)$ is needed (See Section~\ref{subsec:charrecursive}). 
It is parameterized over another expected cost transformer $X:\contcost \to \contcost$.
Its definition is almost identical to the one of the regular expected cost transformer except for procedure calls where $\word{ert}[\word{call} P]^{\sharp}_{X}(f) = X(f)$.
The justification that the fixed points used in the definition exist can be found in the previous work~\cite{McIver04,Gretz14}.
%
\begin{table*}[!th]
\centering
\begin{tabular*}{1.0\textwidth}{@{\extracolsep{\fill}} l l}
\hline
$c$ 												& ${\word{ert}}[c,\mathcal{D}](f)$ \\
\hline
$\word{abort}$										& $\mathbf{0}$ \\									
$\word{skip}$	& $f$ \\
$\word{tick}(q)$ 									& $\mathbf{q} + f$ \\
$\word{assert} e$	& $\sem{e :\word{true}} {\cdot} f$ \\
$\word{id} = e$										& $f[e/\word{id}]$ \\
$\word{id} = e \word{bop} R$					    & $\lambda\sigma.\expt{\dist{R}}{\lambda v.f(\sigma[e \word{bop} v/\word{id}])}$ \\
$\word{if} e \; c_1 \word{else} c_2$                & $\sem{e:\word{true}}{\cdot}{\word{ert}}[c_1,\mathcal{D}](f) + \sem{e:\word{false}}{\cdot}{\word{ert}}[c_2,\mathcal{D}](f)$ \\
$\word{if} \star \; c_1 \word{else} c_2$		    & $\word{max}\{{\word{ert}}[c_1,\mathcal{D}](f), {\word{ert}}[c_2,\mathcal{D}](f)\}$ \\
$c_1 \oplus_{p} c_2$								& $p{\cdot}{\word{ert}}[c_1,\mathcal{D}](f) + (1-p){\cdot}{\word{ert}}[c_2,\mathcal{D}](f)$\\
$c_1; c_2$										    & ${\word{ert}}[c_1,\mathcal{D}]({\word{ert}}[c_2,\mathcal{D}](f))$ \\
$\word{while} e \; c$								& ${\word{lfp}} X.(\sem{e:{\word{true}}}{\cdot}{\word{ert}}[c,\mathcal{D}](X) + \sem{e:\word{false}}{\cdot}f)$\\
$\word{call} P$										& ${\word{lfp}} X.({\word{ert}}[\mathcal{D}(P)]^{\sharp}_{X})(f)$
\end{tabular*}
\caption{Definition of the expected cost transformer {\word{ert}}. $\mathbb{E}_{\dist{R}}[h] \defineas \sum_{v}\prob{R = v}{\cdot}h(v)$ represents the expected value of the random variable $h$ w.r.t the distribution $\dist{R}$. $\word{max}\{f_1,f_2\} \defineas \lambda \sigma.\word{max}\{f_1(\sigma),f_2(\sigma)\}$. ${\word{lfp}} X.F(X)$ is the least fixed point of the function $F$.}
\vspace{-2ex}
\label{tab:ert}
\end{table*}
\vspace{-.5ex}
\paragraph{Example use of the transformer.} 
For example, let $c$ be the body of the example loop $\progname{rdwalk1}$ from Section~\ref{sec:manualana}, and let $f$ be the expectation function $2x$.
Then the expected cost transformer $\word{ert}[c,\mathcal{D}](f)$ is computed as follows.
\begin{align*}
&\word{ert}[x=x-1 \oplus_{3{/}4} x=x+1; \word{tick}(1), {\mathcal{D}}]~(2x) \\
&= \word{ert}[x=x-1 \oplus_{3{/}4} x=x+1, {\mathcal{D}}]~(\word{ert}[\word{tick}(1), \mathcal D]~(2x)) \\
&= \word{ert}[x=x-1 \oplus_{3{/}4} x=x+1, {\mathcal{D}}]~(1 + 2x) \\
&= \textstyle \frac 3 4 \word{ert}[x=x-1, {\mathcal{D}}]~(1 + 2x) + \\
&\;\;\;\;\textstyle \frac 1 4 \word{ert}[x=x+1, {\mathcal{D}}]~(1 + 2x) \\
&= \textstyle \frac 3 4 (1 + 2x - 2) + \frac 1 4 (1 + 2x + 2) \\
&= 2x = f
\end{align*}
In fact, this computation has established that $f$ is an invariant for the body of the loop.
Because the expected cost of loops is defined as a fixed point, finding invariants is critical for the analysis of programs with loops.
In general, however, finding exact invariants like the one we just found is a hard problem.
Instead, one can find a so-called upper invariant, and those provide upper bounds on the loop's cost.
Inferring such upper invariants is precisely the role of the rule \rul{Q:Loop} in our system.
\vspace{-.5ex}
\paragraph{Soundness of the transformer.}

The following theorem states the soundness of the expected cost transformer with respect to the MDP-based semantics.
\begin{theorem}[Soundness of the transformer]
\label{theo:ert}
Let $(c,\mathcal{D})$ be a probabilistic program and $f \in \contcost$ be an expectation.
Then, for every program state $\sigma \in \State$, the following holds 
$$
{\word{ExpRew}}^{\mathfrak{M}^{f}_{\sigma}\sem{c,\mathcal{D}}}(\word{Term}) = \word{ert}[c,\mathcal{D}](f)(\sigma)
$$
\end{theorem}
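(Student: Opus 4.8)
This theorem is the operational soundness of the expected-cost transformer: the value $\word{ert}[c,\mathcal{D}](f)(\sigma)$ computed by Table~\ref{tab:ert} coincides with the demonic expected reward collected by the pushdown MDP $\mathfrak{M}^{f}_{\sigma}\sem{c,\mathcal{D}}$ before it reaches $\word{Term}$. The plan is to adapt the correctness proof of the expected-runtime calculus of Kaminski et al.~\cite{OlmedoKKM16} (see also~\cite{Gretz14}) to our pushdown-MDP model, the one genuine addition being the treatment of procedure calls through the return stack. Concretely, I would show that both sides of the claimed equality are the \emph{least} solution of one and the same system of Bellman-style fixed-point equations — one equation per program location, plus one entry-to-exit summary per procedure — and then conclude by a structural induction on $c$ that $\word{ert}[c,\mathcal{D}](f)(\sigma)$ is exactly the value this least solution assigns to the entry location of $c$ in state $\sigma$ with post-continuation $f$.

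On the operational side, I would first recall the standard fact (e.g.~\cite{Baier08,EsparzaKM05}) that the expected total reward until reaching a target set in an MDP with non-negative rewards is the least fixed point, over the $\omega$-cpo $\contcost$ of $[0,\infty]$-valued functions ordered pointwise, of the Bellman-optimality operator (with the usual $\infty$-convention when the target cannot be reached almost surely under some scheduler), and that this least fixed point is the pointwise supremum of the Kleene iterates from $\mathbf{0}$. On the $\word{ert}$ side, each clause of Table~\ref{tab:ert} is, by inspection of the transition rules, precisely the Bellman backup at the corresponding location: $\word{tick}(q)$ adds the reward $q$; an assignment or guard evaluation performs the deterministic substitution, respectively the guarded split weighting the two successors by $\sem{e:\word{true}}$ and $\sem{e:\word{false}}$; $c_1 \oplus_p c_2$ takes the $p$-weighted average of its successors; a sampling assignment averages over the outcomes weighted by $\dist{R}$; $\word{if}\star$ takes the pointwise $\max$, which matches the supremum over schedulers because for this reward structure a cost-maximizing scheduler may be chosen greedily at the nondeterministic node and optimally afterwards; and a sequence factors by the strong Markov property of the MDP at the configuration reached when $c_1$ finishes, which is exactly the continuation-passing composition $\word{ert}[c_1,\mathcal{D}](\word{ert}[c_2,\mathcal{D}](f))$. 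These are the base and compositional cases of the induction, and the $\infty$-case is threaded through them using that $\mathbf{q}+\infty=\infty$, convex combinations with $\infty$ behave as expected, and $\max$ preserves $\infty$.

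The two constructs that ``close'' a local least fixed point are loops and procedure calls. For $\word{while}\,e\,c$, the operational value $W$ at the loop header satisfies the one-step-unfolding equation $W = \sem{e:\word{true}}\cdot W_c + \sem{e:\word{false}}\cdot f$, where $W_c$ is the operational expected reward of one execution of the body $c$ followed by continuing with reward $W$; by the induction hypothesis applied to the body, the right-hand side is the characteristic functional $\Psi$ of the $\word{lfp}$ in the matching clause of Table~\ref{tab:ert}. Since $\word{ert}$ is $\omega$-continuous in its argument (a property established together with well-definedness of the fixed points in~\cite{OlmedoKKM16,Gretz14}), $\Psi$ is $\omega$-continuous, and one matches the Kleene iterates of $\Psi$ from $\mathbf{0}$ against the operational quantities ``reward collected while forced to exit within $n$ iterations'' level by level, then takes suprema, obtaining $W = \word{lfp}\,\Psi = \word{ert}[\word{while}\,e\,c,\mathcal{D}](f)$. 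For $\word{call} P$ the same scheme is run at the summary level: taking a call pushes the return location and runs the body, and on $\downarrow$ pops and resumes, so the operational reward of a call site equals the operational reward of the body entry with post-continuation given by the operational reward from the return location — a mutually recursive system over all procedures whose least solution is matched, again by $\omega$-continuity, against the iterates of the functional defining $\word{ert}[\word{call} P,\mathcal{D}](f) = \word{lfp}\,X.\,\word{ert}[\mathcal{D}(P)]^\sharp_X(f)$.

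The step I expect to be the main obstacle is twofold. First, the pushdown bookkeeping: one must make precise, and prove, that the operational expected reward is invariant under the part of the return stack that a run never touches, so that ``reward from a call site'' is well defined as a procedure summary and the per-procedure equation above really has the shape of $\word{ert}[\cdot]^\sharp$; this is the part that goes beyond the flat MDPs of~\cite{OlmedoKKM16}. Second, the $\infty$-convention: one must show that the $[0,\infty]$-valued least fixed point is $\infty$ at $\sigma$ exactly when $\word{inf}_{\mathfrak{S}}\sum_{\widehat\pi \in \Pi(s_0,\word{Term})}\mathbb{P}(\widehat\pi) < 1$, i.e.\ that a ``diverging'' $\word{lfp}$ faithfully encodes ``some scheduler prevents almost-sure termination, or drives the conditional expected reward to infinity''; this is the delicate core of the Kaminski-style correctness argument, where monotone convergence, the choice of the demonic $\sup/\inf$ semantics, and the $\omega$-continuity of the functionals all have to be combined with care. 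For the recursion-free fragment the whole argument specializes to the known correctness theorem of~\cite{OlmedoKKM16,Gretz14}.
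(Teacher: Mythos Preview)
Your proposal is correct and takes essentially the same approach as the paper: the paper's proof is simply ``By induction on the command $c$. The details can be found in~\cite{Kamin16,OlmedoKKM16}.'' Your plan to run a structural induction on $c$, matching each clause of Table~\ref{tab:ert} against the corresponding Bellman backup of the pushdown MDP and closing loops and calls via $\omega$-continuous least fixed points, is exactly the Kaminski-style argument the paper defers to, and you have in fact spelled out far more of it---including the pushdown summary treatment for \word{call}---than the paper itself does.
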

\begin{proof}
By induction on the command $c$.
The details can be found in~\cite{Kamin16,OlmedoKKM16}.
\end{proof}

\vspace{-1.5ex}
\section{Bounded loops and recursive procedure calls}
\subsection{Bounded loops}
\label{subsec:boundedloops}
The expected cost transformer for \word{while} loops is defined using the fixed point techniques. Alternatively, we can express the expected cost transformer for loops using bounded loops. A bounded loop is obtained by successively unrolling the loop up to a finite number of executions of the loop body.
\begin{lemma}
\label{lem:loopif}
Let $c$ be a command w.r.t a declaration $\mathcal{D}$. Then
$$
{\word{ert}}[\word{while} e \; c,\mathcal{D}] = {\word{ert}}[\word{if} e \; \{c; \word{while} e \; c\} \word{else} \word{skip},\mathcal{D}] 
$$
\end{lemma}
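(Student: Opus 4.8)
The plan is to unfold the definition of the expected cost transformer on both sides and reduce the claim to the defining fixed-point equation of the \word{while} case. Fix an arbitrary post-expectation $f \in \contcost$ and write $\Psi_f : \contcost \to \contcost$ for the functional
$$
\Psi_f(X) \defineas \sem{e:\word{true}} {\cdot} \word{ert}[c,\mathcal{D}](X) + \sem{e:\word{false}} {\cdot} f ,
$$
so that, by the clause for \word{while} loops in Table~\ref{tab:ert}, $\word{ert}[\word{while} e \; c,\mathcal{D}](f) = \word{lfp}\,\Psi_f$. The existence of this least fixed point, together with the monotonicity (and continuity) of $X \mapsto \word{ert}[c,\mathcal{D}](X)$ on the pointwise-ordered $\omega$-cpo $\contcost$, is exactly what is provided by the cited work~\cite{McIver04,Gretz14}; these are the only structural facts about $\word{ert}[c,\mathcal{D}]$ that the argument needs.

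First I would compute the right-hand side purely by unfolding. Applying the clause for $\word{if} e \; c_1 \word{else} c_2$, then the clause for sequential composition (which gives $\word{ert}[c_1;c_2,\mathcal{D}] = \word{ert}[c_1,\mathcal{D}] \circ \word{ert}[c_2,\mathcal{D}]$), and finally the clause for \word{skip}, one obtains
$$
\word{ert}[\word{if} e \; \{c; \word{while} e \; c\} \word{else} \word{skip},\mathcal{D}](f)
= \sem{e:\word{true}} {\cdot} \word{ert}[c,\mathcal{D}]\bigl(\word{ert}[\word{while} e \; c,\mathcal{D}](f)\bigr) + \sem{e:\word{false}} {\cdot} f ,
$$
which is precisely $\Psi_f\bigl(\word{ert}[\word{while} e \; c,\mathcal{D}](f)\bigr) = \Psi_f(\word{lfp}\,\Psi_f)$. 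Since $\word{lfp}\,\Psi_f$ is in particular a fixed point of $\Psi_f$, this equals $\word{lfp}\,\Psi_f = \word{ert}[\word{while} e \; c,\mathcal{D}](f)$, the left-hand side. As $f$ and the implicit state $\sigma$ are arbitrary, the two transformers coincide as functions $\contcost \to \contcost$.

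There is essentially no obstacle here beyond bookkeeping: the lemma is the observation that one unrolling of the loop is subsumed by the fixed-point identity $\word{lfp}\,\Psi_f = \Psi_f(\word{lfp}\,\Psi_f)$. The one point that deserves a sentence of care is that the recursive occurrence of $\word{while} e\; c$ on the right only appears through its own $\word{ert}$, so no circularity is introduced and the side conditions for $\word{lfp}\,\Psi_f$ to exist remain those inherited from~\cite{McIver04,Gretz14} (and used in Theorem~\ref{theo:ert}). If one preferred to avoid citing lfp properties directly, the same conclusion follows by proving $\leq$ and $\geq$ separately via Park induction, but the fixed-point argument above is the cleanest route.
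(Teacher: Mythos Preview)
Your proof is correct and essentially identical to the paper's: both introduce the characteristic functional $F_f$ (your $\Psi_f$), use the fixed-point identity $\word{lfp}\,F_f = F_f(\word{lfp}\,F_f)$, and unfold the \word{if}/sequence/\word{skip} clauses from Table~\ref{tab:ert} to match the two sides. The only cosmetic difference is direction—you unfold the right-hand side toward $\Psi_f(\word{lfp}\,\Psi_f)$, whereas the paper starts from the left-hand side and unfolds outward—but the content is the same.
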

\begin{proof}
For all $f \in \contcost$, consider the following characteristic function
$$
F_{f}(X) = \sem{e:{\word{true}}}{\cdot}{\word{ert}}[c,\mathcal{D}](X) + \sem{e:\word{false}}{\cdot}f
$$
We reason as follows
$$
\begin{array}{ll}
& {\word{ert}}[\word{while} e \; c,\mathcal{D}](f) \\
   & \proofcomment{\text{Table } \ref{tab:ert}} \\
 = & \word{lfp} F_f \\
   & \proofcomment{\text{Definition of fixed point}} \\
 = & F_f(\word{lfp} F_f) \\
 = & \sem{e:{\word{true}}}{\cdot}{\word{ert}}[c,\mathcal{D}](\word{lfp} F_f) + \sem{e:\word{false}}{\cdot}f\\
   & \proofcomment{\text{Table } \ref{tab:ert}} \\
 = & \sem{e:{\word{true}}}{\cdot}{\word{ert}}[c,\mathcal{D}]({\word{ert}}[\word{while} e \; c,\mathcal{D}](f)) + \\
 & \sem{e:\word{false}}{\cdot}f\\
   & \proofcomment{\text{Table } \ref{tab:ert}} \\
 = & \sem{e:{\word{true}}}{\cdot}{\word{ert}}[c; \word{while} e \; c,\mathcal{D}](f) + \sem{e:\word{false}}{\cdot}f\\
   & \proofcomment{\word{ert}[\word{skip},\mathcal{D}](f) = f} \\
 = & \sem{e:{\word{true}}}{\cdot}{\word{ert}}[c; \word{while} e \; c,\mathcal{D}](f) + \\
 & \sem{e:\word{false}}{\cdot}{\word{ert}}[\word{skip},\mathcal{D}](f)\\
   & \proofcomment{\text{Table } \ref{tab:ert}} \\
 = & {\word{ert}}[\word{if} e \; \{c; \word{while} e \; c\} \word{else} \word{skip},\mathcal{D}](f)
\end{array}
$$
\end{proof}
The \emph{$n^{th}$ bounded} execution of a \word{while} loop command, denoted $\word{while}^{k} e \; c$, is defined as follows.
\begin{align*}
{\word{while}}^{0} e \; c \defineas & \word{abort} \\
{\word{while}}^{n+1} e \; c \defineas & \word{if} e \; \{c; {\word{while}}^{n} e \; c\} \word{else} \word{skip}
\end{align*}
The following theorem states that the expected cost transformer can be expressed via the supremum of a sequence of bounded executions.
\begin{theorem}
\label{theo:loopsup}
Let $\mathcal{D}$ be a declaration, the following holds for all $n \in \mathbb{N}$ and $f \in \contcost$.
\begin{align*}
{\word{sup}}_{n}{\word{ert}}[{\word{while}}^{n} \; e \; c, \mathcal{D}](f) = & {\word{lfp}} X.(\sem{e:{\word{true}}}{\cdot}{\word{ert}}[c,\mathcal{D}](X) \\
& + \sem{e:\word{false}}{\cdot}f)
\end{align*}
\end{theorem}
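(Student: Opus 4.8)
The plan is to recognize the right-hand side as a Kleene iteration and match it term-by-term with the bounded unrollings on the left. Fix $f \in \contcost$ and let
$$
F_f(X) \;=\; \sem{e:\word{true}}{\cdot}\word{ert}[c,\mathcal{D}](X) \;+\; \sem{e:\word{false}}{\cdot}f ,
$$
a functional on the $\omega$-complete lattice $\contcost$ ordered pointwise, with bottom element $\mathbf{0} = \lambda\sigma.0$ and top $\lambda\sigma.\infty$. The right-hand side of the theorem is exactly $\word{lfp}\,F_f$ by the loop clause of Table~\ref{tab:ert}. The first step is to invoke the standard fact (established in the cited prior work on the \word{ert} calculus, which is what guarantees the fixed points in Table~\ref{tab:ert} exist) that $\word{ert}[c,\mathcal{D}](-)$ is $\omega$-continuous; hence $F_f$ is $\omega$-continuous and Kleene's theorem gives
$$
\word{lfp}\,F_f \;=\; {\word{sup}}_{n}\, F_f^{\,n}(\mathbf{0}),
$$
where the $F_f^{\,n}(\mathbf{0})$ form an increasing chain (by induction on $n$, using $\mathbf{0} \preceq F_f(\mathbf{0})$ and monotonicity of $F_f$).

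The second step is to prove, by induction on $n$, the identity
$$
\word{ert}[\word{while}^{n}\, e\; c,\mathcal{D}](f) \;=\; F_f^{\,n}(\mathbf{0}).
$$
For $n = 0$ this is $\word{ert}[\word{abort},\mathcal{D}](f) = \mathbf{0} = F_f^{\,0}(\mathbf{0})$, directly from the \word{abort} clause of Table~\ref{tab:ert}. For the step, expand $\word{while}^{n+1}\, e\; c = \word{if}\, e\; \{c; \word{while}^{n}\, e\; c\}\, \word{else}\, \word{skip}$ and unfold the transformer using the conditional, sequencing, and \word{skip} clauses:
$$
\word{ert}[\word{while}^{n+1}\, e\; c,\mathcal{D}](f)
= \sem{e:\word{true}}{\cdot}\word{ert}[c,\mathcal{D}]\big(\word{ert}[\word{while}^{n}\, e\; c,\mathcal{D}](f)\big) + \sem{e:\word{false}}{\cdot}f ,
$$
and then substitute the induction hypothesis to get $F_f\big(F_f^{\,n}(\mathbf{0})\big) = F_f^{\,n+1}(\mathbf{0})$. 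Combining the two steps yields
$$
{\word{sup}}_{n}\,\word{ert}[\word{while}^{n}\, e\; c,\mathcal{D}](f) = {\word{sup}}_{n}\, F_f^{\,n}(\mathbf{0}) = \word{lfp}\,F_f ,
$$
which is the claim. (Lemma~\ref{lem:loopif} can be used as an alternative to, or a double-check of, the identification $\word{lfp}\,F_f = \word{ert}[\word{while}\, e\; c,\mathcal{D}](f)$.)

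\textbf{Main obstacle.} The only non-routine ingredient is the appeal to $\omega$-continuity of $\word{ert}[c,\mathcal{D}]$: monotonicity alone (Knaster--Tarski) would give existence of $\word{lfp}\,F_f$ but not its presentation as the countable supremum $\sup_n F_f^{\,n}(\mathbf{0})$, and it is precisely this presentation that lets us match the least fixed point against the sequence of finite unrollings. I would therefore state continuity explicitly as the property imported from~\cite{McIver04,Gretz14} (it follows from the fact that each clause of \word{ert} is built from continuous operations — scaling by a $\{0,1\}$-valued weight, addition, composition, and countable convex combinations / expectations), and keep the rest of the argument as the straightforward structural induction sketched above.
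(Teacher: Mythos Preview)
Your proposal is correct and follows essentially the same approach as the paper: define the characteristic functional $F_f$, prove by induction on $n$ that $\word{ert}[\word{while}^{n}\, e\; c,\mathcal{D}](f) = F_f^{\,n}(\mathbf{0})$, and then apply Kleene's fixed point theorem. You are in fact slightly more careful than the paper, which justifies Kleene by citing only monotonicity of $\word{ert}$ rather than $\omega$-continuity (continuity is proved separately in the paper's appendix on basic properties of $\word{ert}$).
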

\begin{proof}
For all $f \in \contcost$, consider the following characteristic function
$$
F_{f}(X) = \sem{e:{\word{true}}}{\cdot}{\word{ert}}[c,\mathcal{D}](X) + \sem{e:\word{false}}{\cdot}f
$$
Let $C_n \defineas {\word{while}}^{n} e \; c$, we first prove that for every $n \in \mathbb{N}$, $\word{ert}[C_n,\mathcal{D}](f) = F^{n}_{f}(\mathbf{0})$, where $F^{0}_{f} \defineas \word{id}$ and $F^{k+1}_f \defineas F_f \circ F^{k}_f$. The proof is done by induction on $n$.
\begin{itemize}
\item \emph{Base case.} It is intermediately satisfied because 
$$
\word{ert}[\word{abort},\mathcal{D}](f) = \mathbf{0} = F^{0}_{f}(\mathbf{0})
$$
\item \emph{Induction case.} Assume that $\word{ert}[C_n,\mathcal{D}](f) = F^{n}_{f}(\mathbf{0})$, we reason as follows
$$
\begin{array}{ll}
& \word{ert}[C_{n+1},\mathcal{D}](f) \\
   & \proofcomment{\text{Definition of bounded loops}} \\
 = & \word{ert}[\word{if} e \; \{c; {\word{while}}^{n} e \; c\} \word{else} \word{skip},\mathcal{D}](f) \\
   & \proofcomment{\text{Table } \ref{tab:ert}} \\
 = & \sem{e:{\word{true}}}{\cdot}{\word{ert}}[c; {\word{while}}^{n} e \; c,\mathcal{D}](f) + \sem{e:\word{false}}{\cdot}f\\
   & \proofcomment{\text{Table } \ref{tab:ert}} \\
 = & \sem{e:{\word{true}}}{\cdot}{\word{ert}}[c,\mathcal{D}](\word{ert}[{\word{while}}^{n} e \; c,\mathcal{D}](f)) + \\
 & \sem{e:\word{false}}{\cdot}f\\
   & \proofcomment{\text{By I.H}} \\
 = & \sem{e:{\word{true}}}{\cdot}{\word{ert}}[c,\mathcal{D}](F^{n}_{f}(\mathbf{0})) + \sem{e:\word{false}}{\cdot}f\\
   & \proofcomment{\text{Definition of } F^{k+1}_f} \\
 = & F^{n+1}_f(\mathbf{0})
\end{array}
$$
\end{itemize}
$F_f$ is monotone because of the monotonicity of \word{ert}. Therefore, using Kleene’s Fixed Point Theorem, if holds that 
$$
\begin{array}{ll}
{\word{sup}}_{n}{\word{ert}}[C_n, \mathcal{D}](f) & = {\word{sup}}_{n}F^{n}_{f}(\mathbf{0}) = \word{lfp}X. F_f(X) \\
& = \word{ert}[\word{while} e \; c,\mathcal{D}](f)
\end{array}
$$
\end{proof}
\vspace{-1.0ex}
\subsection{Characterization of procedure call}
\label{subsec:charrecursive}
The detailed definition of the characteristic function for (recursive) procedure call is given in Table \ref{tab:charfunctioncall}. 
\begin{table*}[!th]
\centering
\begin{tabular*}{1.0\textwidth}{@{\extracolsep{\fill}} l l}
\hline
$c$ 																& $\word{ert}[c]^{\sharp}_{X}(f)$ \\
\hline
$\word{skip}$, $\word{weaken}$						& $f$ \\
$\word{abort}$										& $\mathbf{0}$ \\									
$\word{tick}(q)$ 									& $\mathbf{q} + f$ \\
$\word{assert} e$									& $\sem{e:\word{true}}{\cdot}f$ \\
$\word{id} = e$										& $f[e/\word{id}]$ \\
$\word{id} = e \word{bop} R$					    & $\lambda\sigma.\mathbb{E}_{\dist{R}}[\lambda v.f(\sigma[e \word{bop} v/\word{id}])]$ \\
$\word{if} e \; c_1 \word{else} c_2$                & $\sem{e:\word{true}}{\cdot}\word{ert}[c_1]^{\sharp}_{X}(f) + \sem{e:\word{false}}{\cdot}\word{ert}[c_2]^{\sharp}_{X}(f)$ \\
$\word{if} \star \; c_1 \word{else} c_2$		    & $\word{max}\{\word{ert}[c_1]^{\sharp}_{X}(f), \word{ert}[c_2]^{\sharp}_{X}(f)\}$ \\
$c_1 \oplus_{p} c_2$								& $p{\cdot}\word{ert}[c_1]^{\sharp}_{X}(f) + (1-p){\cdot}\word{ert}[c_2]^{\sharp}_{X}(f)$\\
$c_1; c_2$										    & $\word{ert}[c_1]^{\sharp}_{X}(\word{ert}[c_2]^{\sharp}_{X}(f))$ \\
$\word{while} e \; c$								& $\word{lfp} Y.(\sem{e:\word{true}}{\cdot}\word{ert}[c]^{\sharp}_{X}(Y) + \sem{e:\word{false}}{\cdot}f)$\\
$\word{call} P$										& $X(f)$
\end{tabular*}
\caption{Characterization of procedure call.}
\label{tab:charfunctioncall}
\end{table*}
The following lemma says that if a procedure $P$ has a closed body (e.g., there is no procedure calls) then the characteristic function w.r.t the expected cost transformer of the body of $P$ gives exactly the expected cost transformer of the command $\word{call} P$. 
\begin{lemma}
\label{lem:charfunctioncall}
For every command $c$ and closed command $c'$, the following holds where the declaration $\mathcal{D}(P) = c'$.
$$
{\word{ert}}[c]^{\sharp}_{{\word{ert}}[c',\mathcal{D}]} = {\word{ert}}[c,\mathcal{D}]
$$
\end{lemma}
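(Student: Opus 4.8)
The proof goes by structural induction on the command $c$, but it is cleanest to first record an auxiliary fact that does the real work: \emph{for every closed command $d$ and every transformer $X : \contcost \to \contcost$, one has $\word{ert}[d]^{\sharp}_{X} = \word{ert}[d,\mathcal{D}]$} (in particular, $\word{ert}[d]^{\sharp}_{X}$ does not depend on $X$). This auxiliary claim is itself proved by structural induction on $d$. Since $d$ is closed it contains no $\word{call}$ command, so the single clause in which the definitions of $\word{ert}[\cdot]^{\sharp}_{X}$ (Table~\ref{tab:charfunctioncall}) and $\word{ert}[\cdot,\mathcal{D}]$ (Table~\ref{tab:ert}) differ never applies; for every other command former the two tables give the \emph{same} expression in the transformers of the immediate subcommands, which are again closed, so the induction hypothesis closes the case. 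The only clause needing a comment is $\word{while}\, e\, d_0$: by the induction hypothesis $\word{ert}[d_0]^{\sharp}_{X} = \word{ert}[d_0,\mathcal{D}]$, hence the two characteristic functions $Y \mapsto \sem{e:\word{true}}{\cdot}\word{ert}[d_0]^{\sharp}_{X}(Y) + \sem{e:\word{false}}{\cdot}f$ and $Y \mapsto \sem{e:\word{true}}{\cdot}\word{ert}[d_0,\mathcal{D}](Y) + \sem{e:\word{false}}{\cdot}f$ coincide, and therefore so do their least fixed points (these exist by monotonicity of the transformers, as recalled above).

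Granting the auxiliary fact, I prove the statement by structural induction on $c$, fixing $X_0 := \word{ert}[c',\mathcal{D}]$. For every command former other than $\word{call}$, Tables~\ref{tab:ert} and~\ref{tab:charfunctioncall} define $\word{ert}[c,\mathcal{D}]$ and $\word{ert}[c]^{\sharp}_{X_0}$ by literally the same expression in the transformers of the subcommands, so the induction hypothesis yields the equality (again using, in the $\word{while}$ case, that equal characteristic functions have equal least fixed points). The remaining case is $c = \word{call} P$. On one side, $\word{ert}[\word{call} P]^{\sharp}_{X_0}(f) = X_0(f) = \word{ert}[c',\mathcal{D}](f)$ by the choice of $X_0$. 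On the other side,
$$\word{ert}[\word{call} P,\mathcal{D}](f) \;=\; \word{lfp}\, X.\,(\word{ert}[\mathcal{D}(P)]^{\sharp}_{X})(f) \;=\; \word{lfp}\, X.\,\word{ert}[c',\mathcal{D}](f) \;=\; \word{ert}[c',\mathcal{D}](f),$$
where the middle equality applies the auxiliary fact to the closed command $c' = \mathcal{D}(P)$ (so the body of the fixed point is the constant $\word{ert}[c',\mathcal{D}](f)$, independent of $X$) and the last equality holds because the least fixed point of a constant function is that constant. Thus both sides equal $\word{ert}[c',\mathcal{D}](f)$, completing the induction.

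The step that needs the most care is precisely the $\word{call}$ case: it is the only place where the definitions of the two transformers genuinely diverge, and where the one-step/least-fixed-point bundling in the definition of $\word{ert}[\word{call} P,\mathcal{D}]$ must be shown to collapse. That collapse is exactly what the closedness of $c'$ buys us, via the auxiliary lemma --- without it, $\word{ert}[c']^{\sharp}_{X}$ would depend on $X$ and the fixed point would not degenerate. Everything else is bookkeeping over the (identical, away from $\word{call}$) defining clauses, with monotonicity of the transformers --- already available in the cited development --- guaranteeing that the least fixed points appearing in the $\word{while}$ clause are well defined and pinned down by their characteristic functions.
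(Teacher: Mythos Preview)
Your proof is correct but takes a genuinely different route from the paper's. Both argue by structural induction on $c$, and the non-$\word{call}$ cases are handled identically; the divergence is entirely in the $\word{call}\ P$ case.

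The paper treats $\word{call}\ P$ by invoking Theorem~\ref{theo:limitapproximation}, which expresses $\word{ert}[\word{call}\ P,\mathcal{D}]$ as the supremum $\sup_n \word{ert}[\word{call}^{\mathcal{D}}_n P]$ of the finite inlinings. Since $c'=\mathcal{D}(P)$ is closed, every inlining $\word{call}^{\mathcal{D}}_n P$ with $n\ge 1$ collapses syntactically to $c'$, so the supremum is the constant $\word{ert}[c',\mathcal{D}]$, which by Table~\ref{tab:charfunctioncall} equals $\word{ert}[\word{call}\ P]^{\sharp}_{\word{ert}[c',\mathcal{D}]}$.

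You instead isolate the key observation up front --- that for closed $d$ the transformer $\word{ert}[d]^\sharp_X$ is independent of $X$ and equals $\word{ert}[d,\mathcal{D}]$ --- and use it to show directly that the characteristic map $X \mapsto \word{ert}[c']^\sharp_X$ in the fixed-point definition is constant, so its least fixed point is that constant. This is more elementary: it works straight from the defining tables and the fixed-point definition, without routing through the inlining approximations. It also sidesteps a dependency tangle latent in the paper's presentation, since Theorem~\ref{theo:limitapproximation} itself cites this very lemma in its proof. What the paper's route buys, on the other hand, is uniformity: the inlining characterization is reused throughout the development (Lemma~\ref{lem:functioncall}, the soundness proof for $\word{call}$, the algebraic properties), so treating this case the same way keeps all $\word{call}$ reasoning on a single footing.
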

\begin{proof}
The proof is done by induction on the structure of $c$. If $c$ has the form that is different from $\word{call} P$, then by I.H, the definition of the expected cost transformer, and the characteristic function, it follows directly. For instance, we illustrate the proof with the conditional and loop commands. If $c$ is of the form $\word{if} e \; c_1 \word{else} c_2$. For all $f \in \contcost$, we reason as follows.
$$
\begin{array}{ll}
& {\word{ert}}[c]^{\sharp}_{{\word{ert}}[c',\mathcal{D}]}(f) \\
   & \proofcomment{\text{Table } \ref{tab:charfunctioncall}} \\
 = & \sem{e:\word{true}}{\cdot}{\word{ert}}[c_1]^{\sharp}_{{\word{ert}}[c',\mathcal{D}]}(f) + \\
 & \sem{e:\word{false}}{\cdot}{\word{ert}}[c_2]^{\sharp}_{{\word{ert}}[c',\mathcal{D}]}(f) \\
   & \proofcomment{\text{By I.H for } {\word{ert}}[c_1]^{\sharp}_{{\word{ert}}[c',\mathcal{D}]} \text{ and } {\word{ert}}[c_2]^{\sharp}_{{\word{ert}}[c',\mathcal{D}]}} \\
 = & \sem{e:\word{true}}{\cdot}{\word{ert}}[c_1,\mathcal{D}](f) + \sem{e:\word{false}}{\cdot}{\word{ert}}[c_2,\mathcal{D}](f) \\
   & \proofcomment{\text{Table } \ref{tab:ert}} \\
 = & {\word{ert}}[c,\mathcal{D}](f)
\end{array}
$$
If $c$ is of the form $\word{while} e \; c_1$, then we have the following for all $f \in \contcost$.
$$
\begin{array}{ll}
& {\word{ert}}[c]^{\sharp}_{{\word{ert}}[c',\mathcal{D}]}(f) \\
   & \proofcomment{\text{Table } \ref{tab:charfunctioncall}} \\
 = & {\word{lfp}} Y.(\sem{e:\word{true}}{\cdot}{\word{ert}}[c_1]^{\sharp}_{{\word{ert}}[c',\mathcal{D}]}(Y) + \sem{e:\word{false}}{\cdot}f) \\
  & \proofcomment{\text{By I.H for } {\word{ert}}[c_1]^{\sharp}_{{\word{ert}}[c',\mathcal{D}]}(Y)} \\
 = & {\word{lfp}} Y.(\sem{e:\word{true}}{\cdot}{\word{ert}}[c_1,\mathcal{D}](Y) + \sem{e:\word{false}}{\cdot}f) \\
  & \proofcomment{\text{Table } \ref{tab:ert}} \\
 = & {\word{ert}}[c,\mathcal{D}](f)
\end{array}
$$
We consider the case that $c$ is of the form $\word{call} P$. For all $n \geq 1$, ${\word{call}}^{\mathcal{D}}_{n}P = c'[{{\word{call}}^{\mathcal{D}}_{n-1}P}{/}{\word{call}} P] = c'$ since $c'$ is closed. Hence, for all $f \in \contcost$, we have the following.
$$
\begin{array}{ll}
& {\word{ert}}[\word{call}P,\mathcal{D}](f) \\
  & \proofcomment{\text{Theorem } \ref{theo:limitapproximation}} \\
 = & {\word{sup}}_{n}\word{ert}[{\word{call}}^{\mathcal{D}}_{n}P](f) \\
  & \proofcomment{{\word{ert}}[{\word{call}}^{\mathcal{D}}_{0}P](f) = \mathbf{0}; {\word{call}}^{\mathcal{D}}_{n+1}P = c'}\\
 = & {\word{sup}}_{n}\word{ert}[{\word{call}}^{\mathcal{D}}_{n+1}P](f) = {\word{sup}}_{n}\word{ert}[c',\mathcal{D}](f) \\
  & \proofcomment{\text{The supremum of constant sequence}} \\
 = & {\word{ert}}[c',\mathcal{D}](f) \\
  & \proofcomment{\text{Table } \ref{tab:charfunctioncall}} \\
 = & {\word{ert}}[{\word{call}}P]^{\sharp}_{{\word{ert}}[c',\mathcal{D}]}(f)
\end{array}
$$
\end{proof}
\vspace{-1.0ex}
\subsection{Syntactic replacement of procedure call}
\label{subsec:replacement}
Table \ref{tab:functioncall} gives the formal inductive definition of the syntactic replacement of procedure calls $c[{c'}{/}{\word{call}} P]$ on the structure of the command $c$. 
\begin{table*}[!th]
\centering
\begin{tabular*}{1.0\textwidth}{@{\extracolsep{\fill}} l l}
\hline
$c$ 																& $c[{c'}{/}{\word{call}} P]$ \\
\hline
$\word{skip}$, $\word{abort}$, $\word{assert} e$, $\word{weaken}$, 	& $c$ \\
$\word{tick}(q)$, $\word{id} = e$, $\word{id} = e \word{bop} R$     &     \\							
$\word{call} P$														& $c'$ \\
$\word{if} e \; c_1 \word{else} c_2$                				& $\word{if} e \; c_1[{c'}{/}{\word{call}} P] \word{else} c_2[{c'}{/}{\word{call}} P]$ \\
$\word{if} \star \; c_1 \word{else} c_2$		    				& $\word{if} \star \; c_1[{c'}{/}{\word{call}} P] \word{else} c_2[{c'}{/}{\word{call}} P]$ \\
$c_1 \oplus_{p} c_2$												& $c_1[{c'}{/}{\word{call}} P] \oplus_{p} c_2[{c'}{/}{\word{call}} P]$ \\
$c_1; c_2$										    				& $c_1[{c'}{/}{\word{call}} P]; c_2[{c'}{/}{\word{call}} P]$\\
$\word{while} e \; c$												& $\word{while} e \; c[{c'}{/}{\word{call}} P]$
\end{tabular*}
\caption{Syntactic replacement of procedure call.}
\vspace{-2.0ex}
\label{tab:functioncall}
\end{table*}
The following lemma says the property of the replacement by a closed command w.r.t the expected cost transformer \word{ert}, where the declaration $\mathcal{D}(P) = c'$.
\begin{lemma}
\label{lem:functioncall}
For every command $c$ and closed command $c'$, the following holds
$$
\word{ert}[c[{c'}{/}{\word{call}} P],\mathcal{D}] = \word{ert}[c,\mathcal{D}]
$$
\end{lemma}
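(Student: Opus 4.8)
The plan is to prove the identity by structural induction on the command $c$, using the compositional definition of $\word{ert}[\cdot,\mathcal{D}]$ from Table~\ref{tab:ert} together with the inductive definition of the syntactic replacement $c[c'/\word{call}P]$ from Table~\ref{tab:functioncall}. For every atomic command other than $\word{call}P$ — that is $\word{skip}$, $\word{abort}$, $\word{assert}\,e$, $\word{tick}(q)$, $\word{id}=e$, $\word{id}=e\,\word{bop}\,R$, and any $\word{call}\,Q$ with $Q\neq P$ — the replacement is the identity by Table~\ref{tab:functioncall}, so the claim is immediate. The base case $c=\word{call}P$ is where closedness of $c'$ is used: by Table~\ref{tab:functioncall} we have $(\word{call}P)[c'/\word{call}P]=c'$, so I must show $\word{ert}[c',\mathcal{D}]=\word{ert}[\word{call}P,\mathcal{D}]$, and this is exactly Lemma~\ref{lem:charfunctioncall} instantiated with the command $\word{call}P$ in place of $c$, since by the definition of the auxiliary transformer in Table~\ref{tab:charfunctioncall} one has $\word{ert}[\word{call}P]^{\sharp}_{\word{ert}[c',\mathcal{D}]}(f)=\word{ert}[c',\mathcal{D}](f)$.

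For the inductive step, in each case the replacement commutes with the command constructor (Table~\ref{tab:functioncall}) and the transformer is defined compositionally over the same constructor (Table~\ref{tab:ert}), so I would unfold both definitions one layer and substitute the induction hypotheses for the immediate subcommands. For $c=c_1;c_2$ this gives $\word{ert}[c_1[c'/\word{call}P],\mathcal{D}](\word{ert}[c_2[c'/\word{call}P],\mathcal{D}](f))$; applying the hypothesis for $c_2$ and then the hypothesis for $c_1$ (at the post-expectation $\word{ert}[c_2,\mathcal{D}](f)$, which is legitimate because the hypothesis is an equality of maps on $\contcost$) yields $\word{ert}[c_1;c_2,\mathcal{D}](f)$. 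The cases $\word{if}\,e\,c_1\word{else}c_2$, $\word{if}\star\,c_1\word{else}c_2$, and $c_1\oplus_p c_2$ are the same, merely with a different outer combinator ($\sem{e:\cdot}$-weighted sum, $\word{max}$, and $p$-weighted sum). For $c=\word{while}\,e\,c_1$ the induction hypothesis gives $\word{ert}[c_1[c'/\word{call}P],\mathcal{D}]=\word{ert}[c_1,\mathcal{D}]$, so the two loop characteristic functions $X\mapsto\sem{e:\word{true}}{\cdot}\word{ert}[c_1[c'/\word{call}P],\mathcal{D}](X)+\sem{e:\word{false}}{\cdot}f$ and $X\mapsto\sem{e:\word{true}}{\cdot}\word{ert}[c_1,\mathcal{D}](X)+\sem{e:\word{false}}{\cdot}f$ coincide, hence so do their least fixed points, which is the claim.

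The hard part is essentially nonexistent: every step is routine bookkeeping, and the single substantive case ($\word{call}P$) is already packaged by Lemma~\ref{lem:charfunctioncall}. The one subtlety worth stating carefully is that $[c'/\word{call}P]$ is a \emph{one-step} unfolding — the copy of $c'$ that is spliced in is not itself re-traversed — so the induction is on $c$ alone and the hypothesis is never applied ``inside'' $c'$; the soundness of performing this single unfolding is precisely what Lemma~\ref{lem:charfunctioncall} supplies via the assumption that $c'$ is closed. An equivalent and slightly cleaner organization would first show, by the same induction and with no closedness assumption, that $\word{ert}[c[c'/\word{call}P],\mathcal{D}]=\word{ert}[c]^{\sharp}_{\word{ert}[c',\mathcal{D}]}$, and then invoke Lemma~\ref{lem:charfunctioncall} once at the end to replace the right-hand side by $\word{ert}[c,\mathcal{D}]$.
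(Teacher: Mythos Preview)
Your proposal is correct and follows the same structural induction as the paper. The only difference is in the base case $c=\word{call}\,P$: the paper argues directly via Theorem~\ref{theo:limitapproximation}, observing that since $c'$ is closed every inlining ${\word{call}}^{\mathcal D}_{n}P$ with $n\geq 1$ equals $c'$, so the supremum collapses to $\word{ert}[c',\mathcal D]$; you instead invoke Lemma~\ref{lem:charfunctioncall} at $c=\word{call}\,P$. Since the paper's own proof of that lemma's call case is precisely the same supremum argument, the two routes are the same in substance, merely packaged differently.
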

\begin{proof}
The proof is done by induction on the structure of $c$. If $c$ has the form that is different from $\word{call} P$, then by I.H, the definition of the expected cost transformer, and the syntactic replacement of procedure calls, it follows directly. For instance, we illustrate the proof with the conditional and loop commands. If $c$ is of the form $\word{if} e \; c_1 \word{else} c_2$. We reason as follows.
$$
\begin{array}{ll}
& {\word{ert}}[c[{c'}{/}{\word{call}} P],\mathcal{D}] \\
  & \proofcomment{\text{Table } \ref{tab:functioncall}} \\
 = & {\word{ert}}[{\word{if}} e \; c_1[{c'}{/}{\word{call}} P] \word{else} c_2[{c'}{/}{\word{call}} P],\mathcal{D}] \\
  & \proofcomment{\text{Table } \ref{tab:ert}} \\
 = & \sem{e:\word{true}}{\cdot}{\word{ert}}[c_1[{c'}{/}{\word{call}} P],\mathcal{D}] + \\
 & \sem{e:\word{false}}{\cdot}{\word{ert}}[c_2[{c'}{/}{\word{call}} P],\mathcal{D}] \\
  & \proofcomment{\text{By I.H for } {\word{ert}}[c_1[{c'}{/}{\word{call}} P],\mathcal{D}], {\word{ert}}[c_2[{c'}{/}{\word{call}} P],\mathcal{D}]} \\
 = & \sem{e:\word{true}}{\cdot}{\word{ert}}[c_1,\mathcal{D}] + \sem{e:\word{false}}{\cdot}{\word{ert}}[c_2,\mathcal{D}] \\
  & \proofcomment{\text{Table } \ref{tab:ert}} \\
 = & {\word{ert}}[c,\mathcal{D}]
\end{array}
$$
If $c$ is of the form $\word{while} e \; c_1$, then for all $f \in \contcost$, we have the following.
$$
\begin{array}{ll}
& {\word{ert}}[c[{c'}{/}{\word{call}} P],\mathcal{D}](f) \\
  & \proofcomment{\text{Table } \ref{tab:functioncall}} \\
 = & {\word{ert}}[{\word{while}} e \; c_1[{c'}{/}{\word{call}} P]](f) \\
  & \proofcomment{\text{Table } \ref{tab:ert}} \\
 = & {\word{lfp}} X.(\sem{e:\word{true}}{\cdot}\word{ert}[c_1[{c'}{/}{\word{call}} P],\mathcal{D}](X) + \\
 & \sem{e:\word{false}}{\cdot}f) \\
  & \proofcomment{\text{By I.H for } {\word{ert}}[c_1[{c'}{/}{\word{call}} P],\mathcal{D}]} \\
 = & {\word{lfp}} X.(\sem{e:\word{true}}{\cdot}\word{ert}[c_1,\mathcal{D}](X) + \sem{e:\word{false}}{\cdot}f) \\
  & \proofcomment{\text{Table } \ref{tab:ert}} \\
 = & {\word{ert}}[c,\mathcal{D}](f)
\end{array}
$$

We consider the case that $c$ is of the form $\word{call} P$. For all $n \geq 1$, ${\word{call}}^{\mathcal{D}}_{n}P = c'[{{\word{call}}^{\mathcal{D}}_{n-1}P}{/}{\word{call}} P] = c'$ since $c'$ is closed. Hence, for all $f \in \contcost$, we have the following.
$$
\begin{array}{ll}
& {\word{ert}}[\word{call}P,\mathcal{D}](f) \\
  & \proofcomment{\text{Theorem } \ref{theo:limitapproximation}} \\
 = & {\word{sup}}_{n}\word{ert}[{\word{call}}^{\mathcal{D}}_{n}P](f) \\
  & \proofcomment{{\word{ert}}[{\word{call}}^{\mathcal{D}}_{0}P](f) = \mathbf{0}; {\word{call}}^{\mathcal{D}}_{n+1}P = c'}\\
 = & {\word{sup}}_{n}\word{ert}[{\word{call}}^{\mathcal{D}}_{n+1}P](f) = {\word{sup}}_{n}\word{ert}[c',\mathcal{D}](f) \\
  & \proofcomment{\text{The supremum of constant sequence}} \\
 = & {\word{ert}}[c',\mathcal{D}](f) \\
  & \proofcomment{{\word{call}}P[{c'}{/}{\word{call}P}] = c'} \\
 = & {\word{ert}}[\word{call}P[{c'}{/}{\word{call}P}],\mathcal{D}](f)
\end{array}
$$
\end{proof}
\vspace{-1.0ex}
\subsection{Finite approximation for procedure call}
\label{subsec:supfixpoint}
The expected cost transformer for (recursive) procedure calls is defined using the fixed point techniques. Alternatively, we borrow the approach in~\cite{OlmedoKKM16} to express the expected cost  transformer for procedure calls as the limit of its finite approximations, or truncations, and show that they are equivalent. Let $P$ be a procedure, the \emph{$n^{th}$ inlining} call of $P$ w.r.t a declaration $\mathcal{D}$, denoted ${\word{call}}^{\mathcal{D}}_{n}P$, is defined as follows.
\begin{align*}
{\word{call}}^{\mathcal{D}}_{0}P 	\defineas & \word{abort} \\
{\word{call}}^{\mathcal{D}}_{n+1}P 	\defineas & \mathcal{D}(P)[{\word{call}}^{\mathcal{D}}_{n}P{/}{\word{call}P}]
\end{align*}
where $c[{c'}{/}{\word{call}} P]$ is defined in Table \ref{tab:functioncall}. Intuitively, ${\word{call}}^{\mathcal{D}}_{n}P$ is a sequence of approximations of $\word{call}P$ where the ''worst'' approximation ${\word{call}}^{\mathcal{D}}_{0}P$, while the approximation gets more precise when $n$ increases. 

The expected cost transformer for procedure calls using the limit of its finite approximation is equivalent to the transformer defined using the fixed point techniques as stated by the following theorem.
\begin{theorem}[Limit of finite approximations]
\label{theo:limitapproximation}
Let $P$ be a procedure w.r.t a declaration $\mathcal{D}$, the following holds for all $n \in \mathbb{N}$.
\begin{align*}
{\word{sup}}_{n}{\word{ert}}[{\word{call}}^{\mathcal{D}}_{n}P] = & \word{lfp} X.(\word{ert}[\mathcal{D}(P)]^{\sharp}_{X})
\end{align*}
\end{theorem}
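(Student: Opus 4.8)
The plan is to prove Theorem~\ref{theo:limitapproximation} by showing that the sequence of transformers ${\word{ert}}[{\word{call}}^{\mathcal{D}}_{n}P]$ is exactly the Kleene iteration sequence of the monotone functional $\Psi \defineas \lambda X.\,{\word{ert}}[\mathcal{D}(P)]^{\sharp}_{X}$, so that its supremum coincides with $\word{lfp}\,\Psi$ by the Kleene Fixed Point Theorem (as already used in the proof of Theorem~\ref{theo:loopsup}). Concretely, I would first establish by induction on $n$ that ${\word{ert}}[{\word{call}}^{\mathcal{D}}_{n}P] = \Psi^{n}(\mathbf{0})$, where $\Psi^{0} \defineas \lambda f.\mathbf{0}$ (matching ${\word{call}}^{\mathcal{D}}_{0}P = \word{abort}$, whose transformer is $\mathbf{0}$ by Table~\ref{tab:ert}) and $\Psi^{k+1} \defineas \Psi \circ \Psi^{k}$. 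The base case is immediate from the definition of ${\word{call}}^{\mathcal{D}}_{0}P$. For the inductive step, using ${\word{call}}^{\mathcal{D}}_{n+1}P = \mathcal{D}(P)[{\word{call}}^{\mathcal{D}}_{n}P{/}{\word{call}}P]$, I need the key substitution lemma: for any command $c$ and any command $c'$ (not necessarily closed), ${\word{ert}}[c[c'{/}{\word{call}}P],\mathcal{D}] = {\word{ert}}[c]^{\sharp}_{{\word{ert}}[c',\mathcal{D}]}$ — this is the generalization of Lemma~\ref{lem:charfunctioncall}/Lemma~\ref{lem:functioncall} to the non-closed case, dropping the closedness hypothesis for the direction we actually need (where the substituent is an inlining, itself possibly containing further calls). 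Granting that, ${\word{ert}}[{\word{call}}^{\mathcal{D}}_{n+1}P,\mathcal{D}] = {\word{ert}}[\mathcal{D}(P)]^{\sharp}_{{\word{ert}}[{\word{call}}^{\mathcal{D}}_{n}P,\mathcal{D}]} = \Psi({\word{ert}}[{\word{call}}^{\mathcal{D}}_{n}P,\mathcal{D}]) = \Psi(\Psi^{n}(\mathbf{0})) = \Psi^{n+1}(\mathbf{0})$ by the induction hypothesis.

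Once the chain ${\word{ert}}[{\word{call}}^{\mathcal{D}}_{n}P] = \Psi^{n}(\mathbf{0})$ is in place, I would invoke monotonicity and continuity of $\Psi$. Monotonicity of each $X \mapsto {\word{ert}}[c]^{\sharp}_{X}(f)$ follows by a routine induction on $c$ from the monotonicity of all the pointwise operations ($+$, multiplication by the indicator weights $\sem{e:\word{true}}$ etc., $\word{max}$, expectation $\mathbb{E}_{\dist{R}}$, substitution $f[e/x]$, and $\word{lfp}$ in the loop case — the latter monotone in its parameter by standard fixpoint-of-fixpoint reasoning), hence $\Psi$ is monotone on the $\omega$-cpo $\contcost$ ordered pointwise, which has $\mathbf{0}$ as least element and suprema of $\omega$-chains computed pointwise. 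Continuity (preservation of suprema of $\omega$-chains) is established the same way, by structural induction showing each syntactic construct's transformer is $\omega$-continuous in $X$; the only slightly delicate case is the nested $\word{lfp} Y$ for inner loops, where one uses the exchange-of-suprema argument that a least fixed point of a continuous operator depending continuously on a parameter is itself continuous in the parameter. Then Kleene's theorem gives $\word{lfp}\,\Psi = {\word{sup}}_{n}\Psi^{n}(\mathbf{0}) = {\word{sup}}_{n}{\word{ert}}[{\word{call}}^{\mathcal{D}}_{n}P]$, which is exactly the claimed equation ${\word{sup}}_{n}{\word{ert}}[{\word{call}}^{\mathcal{D}}_{n}P] = \word{lfp}\,X.(\word{ert}[\mathcal{D}(P)]^{\sharp}_{X})$.

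I expect the main obstacle to be the substitution lemma in the generality needed here. Lemma~\ref{lem:functioncall} as stated requires $c'$ to be a \emph{closed} command, but in the inductive step above the substituent ${\word{call}}^{\mathcal{D}}_{n}P$ is generally not closed (it still mentions $\word{call}P$ at deeper nesting for $n \geq 1$... wait, actually by its definition ${\word{call}}^{\mathcal{D}}_{n}P$ contains only copies of ${\word{call}}^{\mathcal{D}}_{n-1}P$, so after full unfolding it \emph{is} closed — but its transformer is most naturally analyzed via the $^{\sharp}$-characterization without appealing to closedness). The clean way around this is to prove directly, by induction on $c$, the identity ${\word{ert}}[c[c'{/}{\word{call}}P],\mathcal{D}] = {\word{ert}}[c]^{\sharp}_{X_0}$ with $X_0 \defineas {\word{ert}}[c'[{\word{call}}P{/}{\word{call}}P],\mathcal{D}] = {\word{ert}}[c',\mathcal{D}]$ — i.e., essentially re-proving Lemma~\ref{lem:charfunctioncall} but tracking that the hypothesis actually used is only $X_0 = {\word{ert}}[c',\mathcal{D}]$, which holds by definition for any $c'$ regardless of closedness, since the $\word{call}P$ case reads ${\word{ert}}[\word{call}P]^{\sharp}_{X_0}(f) = X_0(f) = {\word{ert}}[c',\mathcal{D}](f)$ directly from Table~\ref{tab:charfunctioncall} and the fact that $(\word{call}P)[c'{/}{\word{call}}P] = c'$. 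The remaining structural cases (conditionals, probabilistic branching, sampling, sequencing, loops) go through verbatim as in the proofs of Lemmas~\ref{lem:charfunctioncall} and~\ref{lem:functioncall}, using the induction hypothesis and the matching clauses of Tables~\ref{tab:ert} and~\ref{tab:charfunctioncall}. With that lemma established, the rest of the argument is a standard Kleene-iteration computation.
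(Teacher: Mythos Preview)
Your proposal is correct and takes essentially the same approach as the paper: show by induction on $n$ that ${\word{ert}}[{\word{call}}^{\mathcal{D}}_{n}P]$ coincides with the $n$-th Kleene iterate of $X \mapsto {\word{ert}}[\mathcal{D}(P)]^{\sharp}_{X}$ (via the substitution/characterization lemmas), then apply Kleene's Fixed Point Theorem. Your detour about closedness is unnecessary since, as you yourself observe, each ${\word{call}}^{\mathcal{D}}_{n}P$ is closed and the paper invokes Lemmas~\ref{lem:functioncall} and~\ref{lem:charfunctioncall} under exactly that hypothesis; you are, if anything, slightly more careful than the paper in spelling out continuity rather than citing only monotonicity before invoking Kleene.
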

\begin{proof}
For all $f \in \contcost$, consider the following characteristic function
$$
F_f(X) = \word{ert}[\mathcal{D}(P)]^{\sharp}_{X}(f)
$$
where $\word{ert}[c]^{\sharp}_{X}(f)$ is defined in Table \ref{tab:charfunctioncall}. We first prove that for every $n \in \mathbb{N}$, $\word{ert}[{\word{call}}^{\mathcal{D}}_{n}P](f) = F^{n}_f(\mathbf{0})$, where $F^{0}_f \defineas \word{id}$ and $F^{k+1}_f \defineas F_f \circ F^{k}_f$. The proof is done by induction on $n$.
\begin{itemize}
\item \emph{Base case.} It is intermediately satisfied because 
$$
\word{ert}[\word{abort},\mathcal{D}](f) = \mathbf{0} = F^{0}_{f}(\mathbf{0})
$$
\item \emph{Induction case.} Assume that $\word{ert}[{\word{call}}^{\mathcal{D}}_{n}P](f) = F^{n}_f(\mathbf{0})$, we reason as follows
$$
\begin{array}{ll}
& \word{ert}[{\word{call}}^{\mathcal{D}}_{n+1}P](f) \\
   & \proofcomment{\text{Definition of inlining}} \\
 = & \word{ert}[\mathcal{D}(P)[{\word{call}}^{\mathcal{D}}_{n}P{/}\word{call}P]](f) \\
   & \proofcomment{\text{Lemma \ref{lem:functioncall} for closed command } {\word{call}}^{\mathcal{D}}_{n}P} \\
 = & \word{ert}[\mathcal{D}(P), \mathcal{D}](f) \\
   & \proofcomment{\text{Lemma \ref{lem:charfunctioncall} for closed command } {\word{call}}^{\mathcal{D}}_{n}P} \\
 = & {\word{ert}}[\mathcal{D}(P)]^{\sharp}_{{\word{ert}}[{\word{call}}^{\mathcal{D}}_{n}P]}(f) \\
   & \proofcomment{\text{By I.H}} \\
 = & {\word{ert}}[\mathcal{D}(P)]^{\sharp}_{F^{n}_f(\mathbf{0})}(f) \\
   & \proofcomment{\text{Definition of } F_f} \\
 = & F_f(F^{n}_f(\mathbf{0})) \\
   & \proofcomment{\text{Definition of } F^{n+1}_f} \\
 = & F^{n+1}_f(\mathbf{0})
\end{array}
$$
\end{itemize}
$F_f$ is monotone because of the monotonicity of ${\word{ert}}^{\sharp}_{X}[\cdot]$. Therefore, using Kleene’s Fixed Point Theorem, if holds that 
$$
{\word{sup}}_{n}{\word{ert}}[{\word{call}}^{\mathcal{D}}_{n}P] = \word{lfp} X.(\word{ert}[\mathcal{D}(P)]^{\sharp}_{X}) = \word{ert}[\word{call} P, \mathcal{D}]
$$
\end{proof}
 \vspace{-1.5ex}
\section{Proof of the soundness}
\label{app:soundnessproof}
\subsection{Potential relax}
\label{app:relax}
\begin{lemma}[Potential relax]
\label{lem:relax}
For each program state $\sigma$ such that $\sigma \models \context$, if $Q \succeq_{\context} Q'$, then $\Phi_{Q}(\sigma) \geq \Phi_{Q'}(\sigma)$.
\end{lemma}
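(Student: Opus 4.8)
The plan is simply to unfold the definition of the judgement $Q \succeq_\context Q'$ and read off the inequality. By the rule \rul{Relax} (the only rule deriving such judgements), $Q \succeq_\context Q'$ asserts the existence of rewrite annotations $(F_i)_i$ and coefficients $(u_i)_i$ with $u_i \ge 0$ and $\context \models \Phi_{F_i} \ge 0$ for every $i$, together with the vector identity $Q' = Q - \sum_i u_i F_i$ (an equality between potential annotations, i.e.\ between vectors in $\Q^N$, which is what the rule writes as $Q' = Q - F\vec u$). Once this is spelled out, the argument is a one-line computation.

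First I would transport the annotation identity $Q' = Q - \sum_i u_i F_i$ to the level of potential functions. As recorded in Section~\ref{sec:potfuncs}, the dot product $\langle{\cdot}\rangle$ against the fixed base-function vector $B$ is bi-linear, so the assignment $Q \mapsto \Phi_Q$ is a linear map from annotations to potential functions; hence
\[
  \Phi_{Q'} \;=\; \Phi_{Q - \sum_i u_i F_i} \;=\; \Phi_Q \;-\; \sum_i u_i\,\Phi_{F_i}.
\]

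Next I would fix a state $\sigma$ with $\sigma \models \context$. From $\context \models \Phi_{F_i} \ge 0$ and $\sigma \models \context$, the semantics of the entailment $\context \models p$ gives $\Phi_{F_i}(\sigma) \ge 0$, and since $u_i \ge 0$ each summand $u_i\,\Phi_{F_i}(\sigma)$ is non-negative. Evaluating the displayed identity at $\sigma$ then yields $\Phi_Q(\sigma) - \Phi_{Q'}(\sigma) = \sum_i u_i\,\Phi_{F_i}(\sigma) \ge 0$, i.e.\ $\Phi_Q(\sigma) \ge \Phi_{Q'}(\sigma)$, which is the claim. If $\succeq_\context$ is understood to be closed under reflexivity and transitivity (chaining several \rul{Relax} steps), the general case follows by an immediate induction on the length of the chain, using transitivity of $\ge$ on $\Q$.

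There is essentially no obstacle here; the only points requiring a little care are (i) justifying the passage from the vector equation on annotations to the pointwise identity on potential functions, which is exactly where the linearity of $Q \mapsto \Phi_Q$ is used, and (ii) reading the premise $\context \models \Phi_{F_i} \ge 0$ correctly as ``$\Phi_{F_i}(\sigma) \ge 0$ for every $\sigma \models \context$''. It is worth noting in passing that this lemma is also what vindicates the implicit side conditions $Q \succeq_\context 0$ attached to every triple: instantiating $Q' = \mathbf{0}$ gives $\Phi_Q(\sigma) \ge 0$ for all $\sigma \models \context$, so the proof must not circularly presuppose non-negativity of potentials.
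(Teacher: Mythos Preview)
Your proposal is correct and follows essentially the same approach as the paper's own proof: unfold the \rul{Relax} rule, pass from the annotation identity $Q' = Q - F\vec u$ to the corresponding identity on potential functions, and conclude using $u_i \ge 0$ and $\Phi_{F_i}(\sigma) \ge 0$ for $\sigma \models \context$. The only cosmetic difference is that the paper carries out the coordinate computation explicitly (expanding $\Phi_{Q'}(\sigma)$ as $\sum_i q'_i b_i(\sigma)$ and rearranging sums), whereas you invoke the linearity of $Q \mapsto \Phi_Q$ directly; your remark about a possible reflexive--transitive closure of $\succeq_\context$ is unnecessary here since the paper defines the relation by a single application of \rul{Relax}.
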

\begin{proof}
Lemma \ref{lem:relax} is proved using the rule \textcolor{ACMRed}{\textsc{Q:Relax}} as follows. Let $B = (b_1,\cdots,b_n)^{\intercal}$ be the set of all based functions, then we have 
$$
\begin{array}{lclcl}
\Phi_Q(\sigma) 		& = & \langle Q{\cdot}B \rangle (\sigma) 	& = & \sum^{n}_{i = 1}q_i{\cdot}b_i(\sigma) \\
\Phi_{Q'}(\sigma) 	& = & \langle Q'{\cdot}B \rangle (\sigma) 	& = & \sum^{n}_{i = 1}q'_i{\cdot}b_i(\sigma) \\
\Phi_{F_k}(\sigma) 	& = & \langle F_k{\cdot}B \rangle (\sigma) 	& = & \sum^{n}_{i = 1}f^k_i{\cdot}b_i(\sigma)
\end{array}
$$
Consider any program state $\sigma$ such that $\sigma \models \context$, we reason as follows
$$
\begin{array}{ll}
& (q'_1,\cdots,q'_n) \\
   & \proofcomment{Q' = Q - F\vec{u} \text{ and matrix multiplication}} \\
 = & (q_1,\cdots,q_n) - (\sum^{N}_{k=1}f^k_1{\cdot}u_k,\cdots,\sum^{N}_{k=1}f^k_n{\cdot}u_k) \\
   & \proofcomment{\text{Vector subtraction}} \\
 = & ((q_1 - \sum^{N}_{k=1}f^k_1{\cdot}u_k),\cdots,(q_n - \sum^{N}_{k=1}f^k_n{\cdot}u_k))
\end{array}
$$
$$
\begin{array}{ll}
& \Phi_{Q'}(\sigma) \\
   & \proofcomment{\text{Observation above}} \\
 = & \sum^{n}_{i=1}(q_i - \sum^{N}_{k=1}f^k_i{\cdot}u_k){\cdot}b_i(\sigma) \\
   & \proofcomment{\text{Algebra}} \\
 = & \sum^{n}_{i=1}(q_i{\cdot}b_i(\sigma) - \sum^{N}_{k=1}f^k_i{\cdot}u_k{\cdot}b_i(\sigma)) \\
 = & \sum^{n}_{i=1}q_i{\cdot}b_i(\sigma) - \sum^{n}_{i=1}\sum^{N}_{k=1}f^k_i{\cdot}u_k{\cdot}b_i(\sigma) \\
 = & \sum^{n}_{i=1}q_i{\cdot}b_i(\sigma) - \sum^{N}_{k=1}(\sum^{n}_{i=1}f^k_i{\cdot}u_k{\cdot}b_i(\sigma)) \\
   & \proofcomment{\forall i.\sigma. \Phi_{F_i}(\sigma) \geq 0; \forall i. u_i \geq 0} \\
 \leq & \sum^{n}_{i=1}q_i{\cdot}b_i(\sigma) = \Phi_{Q}(\sigma)
\end{array}
$$
\end{proof}
\vspace{-1.0ex}
\subsection{Soundness of the automatic analysis}
\label{app:soundness}
Let $(c,\mathcal{D})$ be a program, the proof is done by induction on the program structure and the derivation using the inference rules.

\vspace{-.5ex}
\paragraph{Skip} 
For all program states $\sigma \in \Sigma$, we have 
$$
\begin{array}{ll}
& \word{ert}[\word{skip},\mathcal{D}](\mathcal{T}(\context;Q))(\sigma) \\
   & \proofcomment{\text{Table } \ref{tab:ert}} \\
 = & \mathcal{T}(\context;Q)(\sigma) \leq \mathcal{T}(\context;Q)(\sigma)
\end{array}
$$ 

\vspace{-.5ex}
\paragraph{Abort}
For all program states $\sigma \in \Sigma$, we have 
$$
\begin{array}{ll}
& \word{ert}[\word{abort},\mathcal{D}](\mathcal{T}(\context;Q))(\sigma) \\
   & \proofcomment{\text{Table } \ref{tab:ert}} \\
 = & \mathbf{0}(\sigma) \leq \mathcal{T}(\context;Q)(\sigma)
\end{array}
$$ 

\vspace{-.5ex}
\paragraph{Assert} 
Consider any program state $\sigma \in \Sigma$, if $\sigma \not \models \context$, then it follows 
$$
\begin{array}{ll}
& \word{ert}[\word{assert} e,\mathcal{D}](\mathcal{T}(\context \wedge e;Q))(\sigma) \\
   & \proofcomment{\text{Definition of translation function}} \\ 
 \leq & \infty = \mathcal{T}(\context;Q)
\end{array}
$$
If $\sigma \models \context$, we have 
$$
\begin{array}{ll}
& \word{ert}[\word{assert} e,\mathcal{D}](\mathcal{T}(\context \wedge e;Q))(\sigma) \\
   & \proofcomment{\text{Table } \ref{tab:ert}} \\
 = & \sem{e:\word{true}}{\cdot}\mathcal{T}(\context;Q)(\sigma) \\
 \leq & \mathcal{T}(\context;Q)(\sigma)
\end{array}
$$

\vspace{-.5ex}
\paragraph{Weaken} 
Consider any state $\sigma \in \Sigma$, if $\sigma \models \context'_2$ then $\sigma \models \context'_1$ because $\context'_2 \models \context'_1$. We have $Q'_2 \succeq_{\context'_2} Q'_1$, it holds that 
$$
\begin{array}{ll}
& \mathcal{T}(\context'_2;Q'_2)(\sigma) \\
   & \proofcomment{\text{Definition of translation function}} \\
 = & \Phi_{Q'_2}(\sigma) \\
   & \proofcomment{\text{Lemma \ref{lem:relax}}} \\
 \geq & \Phi_{Q'_1}(\sigma) = \mathcal{T}(\context'_1;Q'_1)(\sigma)
\end{array}
$$ 
If $\sigma \not \models \context'_2$, then we get 
$$
\begin{array}{ll}
& \mathcal{T}(\context'_2;Q'_2)(\sigma) \\
   & \proofcomment{\text{Definition of translation function}} \\
 = & \infty \geq \mathcal{T}(\context'_1;Q'_1)(\sigma)
\end{array}
$$
Therefore, we have $\mathcal{T}(\context'_2;Q'_2)(\sigma) \geq \mathcal{T}(\context'_1;Q'_1)(\sigma)$ for all $\sigma \in \Sigma$. Similarly, it follows $\mathcal{T}(\context_1;Q_1)(\sigma) \geq \mathcal{T}(\context_2;Q_2)(\sigma)$ for all $\sigma \in \Sigma$. For all states $\sigma \in \Sigma$, we get 
$$
\begin{array}{ll}
& \mathcal{T}(\context_1;Q_1)(\sigma) \\
      & \proofcomment{\text{Observation above}} \\
 \geq & \mathcal{T}(\context_2;Q_2)(\sigma) \\
      & \proofcomment{\text{By I.H for } C} \\
 \geq & \word{ert}[c,\mathcal{D}](\mathcal{T}(\context'_2;Q'_2))(\sigma) \\
      & \proofcomment{\text{Observation above and monotonicity of \word{ert}}} \\
 \geq & \word{ert}[c,\mathcal{D}](\mathcal{T}(\context'_1;Q'_1))(\sigma)
\end{array}
$$

\vspace{-.5ex}
\paragraph{Tick}
Consider any state $\sigma \in \Sigma$ such that $\sigma \not \models \context$, then 
$$
\begin{array}{ll}
& \mathcal{T}(\context;Q)(\sigma) \\
   & \proofcomment{\text{Definition of translation function}} \\
 = & \infty \geq \word{ert}[\word{tick}(q),\mathcal{D}](\mathcal{T}(\context;Q - q))(\sigma)
\end{array}
$$ 
For all states $\sigma \in \Sigma$ such that $\sigma \models \context$, we have 
$$
\begin{array}{ll}
& \word{ert}[\word{tick}(q),\mathcal{D}](\mathcal{T}(\context;Q - q))(\sigma) \\
   & \proofcomment{\text{Table } \ref{tab:ert}} \\
 = 		& q + \mathcal{T}(\context;Q - q)(\sigma) \\
   & \proofcomment{\text{Definition of translation function}} \\
 = 		& q + \Phi_{Q}(\sigma) - q \leq \mathcal{T}(\context;Q)(\sigma)
\end{array}
$$ 

\vspace{-.5ex}
\paragraph{Assignment}
Suppose $c$ is of the form $x = e$. Hence, the automatic analysis derivation ends with an application of the rule \textcolor{ACMRed}{\textsc{Q:Assign}}. Consider any program state $\sigma$ such that $\sigma \not \models \context[e{/}x]$, 
$$
\begin{array}{ll}
& \mathcal{T}(\context[e{/}x];Q)(\sigma) \\
   & \proofcomment{\text{Definition of translation function}} \\
 = & \infty \geq \word{ert}[c,\mathcal{D}](\mathcal{T}(\context;Q'))(\sigma)
\end{array}
$$
If $\sigma \models \context[e{/}x]$, then we reason as follows
$$
\begin{array}{ll}
& {\word{ert}}[x = e, \mathcal{D}](\mathcal{T}(\context;Q'))(\sigma) \\
   & \proofcomment{\text{Table } \ref{tab:ert}} \\
 = & \mathcal{T}(\context;Q')(\sigma[e{/}x]) \\
   & \proofcomment{\text{Definition of translation function}} \\
 = & \word{max}(\context[e{/}x](\sigma),\Phi_{Q'}(\sigma[e{/}x])) = \Phi_{Q'}(\sigma[e{/}x]) \\
   & \proofcomment{\text{Definition of potential function and rule \textcolor{ACMRed}{\textsc{Q:Assign}}}} \\
 = & \sum_{j \in \mathcal S_{x=e}}q'_j{\cdot}b_j[e{/}x](\sigma) + \sum_{j \not \in \mathcal S_{x=e}}0{\cdot}b_j[e{/}x](\sigma) \\
   & \proofcomment{\text{Rule \textcolor{ACMRed}{\textsc{Q:Assign}}}} \\
 = & \sum_{j \in \mathcal S_{x=e}}q'_j{\cdot}\sum_{i}a_{i,j}{\cdot}b_i(\sigma) = \Phi_{Q}(\sigma) \\
   & \proofcomment{\text{Definition of translation function}} \\
 = & \mathcal{T}(\context[e{/}x];Q)
\end{array}
$$
\vspace{-.5ex}
\paragraph{Sampling} 
Suppose $c$ is of the form $x = e \word{bop} R$, where $R$ is a random variable distributed by the probability distribution $\dist{R}$ and $R \in [a, b]$. Hence, the automatic analysis derivation ends with an application of the rule \textcolor{ACMRed}{\textsc{Q:Sample}}. Let $\context^{i} = \context'[e \word{bop} v_i{/}x]$, then for all $i$, we have $\context \models \context_i$. Consider any program state $\sigma$ such that $\sigma \not \models \context$, 
$$
\begin{array}{ll}
& \mathcal{T}(\context;Q)(\sigma) \\
   & \proofcomment{\text{Definition of translation function}} \\
 = & \infty \geq \word{ert}[c,\mathcal{D}](\mathcal{T}(\context';Q'))(\sigma)
\end{array}
$$
If $\sigma \models \context$, then for all $i$, we have $\sigma \models \context^{i}$ and the following hold
$$
\begin{array}{ll}
& \mathcal{T}(\context;Q_i)(\sigma) \\
   & \proofcomment{\text{Definition of translation function}} \\
 = & \Phi_{Q_i}(\sigma) \\
   & \proofcomment{\text{By I.H for assignment}} \\
 \geq & {\word{ert}}[x = e \word{bop} v_i,\mathcal{D}](\mathcal{T}(\context';Q'))(\sigma) \\
   & \proofcomment{\text{Table } \ref{tab:ert}} \\
 = & \mathcal{T}(\context';Q')(\sigma[e \word{bop} v_i{/}x]) \\
   & \proofcomment{\text{Definition of translation function}} \\
 = & \word{max}(\context^{i}(\sigma), \Phi_{Q'}(\sigma[e \word{bop} v_i{/}x])) \\
 = & \Phi_{Q'}(\sigma[e \word{bop} v_i{/}x]) \\
\end{array}
$$
$$
\begin{array}{ll}
& {\word{ert}}[x = e \word{bop} R, \mathcal{D}](\mathcal{T}(\context';Q'))(\sigma) \\
   & \proofcomment{\text{Table } \ref{tab:ert}} \\
 = & \lambda\sigma.\expt{\dist{R}}{\lambda v.\mathcal{T}(\context';Q')(\sigma[e \word{bop} v/x])} \\
   & \proofcomment{\text{Definition of expectation}} \\
 = & \sum_{i}\sem{\dist{R} : v_i}{\cdot}(\word{max}(\context^{i}(\sigma),\Phi_{Q'}(\sigma[e \word{bop} v_i/x]))) \\
   & \proofcomment{\text{Definition of translation function}} \\
 = & \sum_{i}p_i{\cdot}\Phi_{Q'}(\sigma[e \word{bop} v_i/x]) \\
   & \proofcomment{\text{Observation above}} \\
 \leq & \sum_{i}p_i{\cdot}\Phi_{Q_i}(\sigma) \\
   & \proofcomment{\text{By rule \textcolor{ACMRed}{\textsc{Q:Sample}}}} \\
 = & \Phi_Q(\sigma) = \mathcal{T}(\context;Q)(\sigma)
\end{array}
$$
\vspace{-.5ex}
\paragraph{If}
Suppose $c$ is of the form $\word{if} e \; c_1 \word{else} c_2$, thus the automatic analysis derivation ends with an application of the rule \textcolor{ACMRed}{\textsc{Q:If}}. 
Consider any program state $\sigma$ such that $\sigma \not \models \context$, 
$$
\begin{array}{ll}
& \mathcal{T}(\context;Q)(\sigma) \\
   & \proofcomment{\text{Definition of translation function}} \\
 = & \infty \geq \word{ert}[c,\mathcal{D}](\mathcal{T}(\context';Q'))(\sigma)
\end{array}
$$
If $\sigma \models \context \wedge e$, then we have 
$$
\begin{array}{ll}
& \mathcal{T}(\context;Q)(\sigma) \\
   & \proofcomment{\text{Definition of translation function}} \\
 =  	 & \mathcal{T}(\context \wedge e;Q)(\sigma) \\
& \proofcomment{\text{By I.H for } c_1} \\
 \geq & \word{ert}[c_1,\mathcal{D}](\mathcal{T}(\context';Q'))(\sigma) \\
& \proofcomment{\sem{e:\word{false}}(\sigma) = 0; \sem{e:\word{true}}(\sigma) = 1} \\
 = 	 & \sem{e:\word{true}}(\sigma){\cdot}\word{ert}[c_1,\mathcal{D}](\mathcal{T}(\context';Q'))(\sigma) + \\
&\sem{e:\word{false}}(\sigma){\cdot}\word{ert}[c_2,\mathcal{D}](\mathcal{T}(\context';Q'))(\sigma) \\
   & \proofcomment{\text{Table } \ref{tab:ert}} \\
 = 	 & \word{ert}[c,\mathcal{D}](\mathcal{T}(\context';Q'))(\sigma)
\end{array}
$$
If $\sigma \models \context \wedge \neg e$, then we get 
$$
\begin{array}{ll}
& \mathcal{T}(\context;Q)(\sigma) \\
   & \proofcomment{\text{Definition of translation function}} \\
 =  & \mathcal{T}(\context \wedge \neg e;Q)(\sigma) \\
& \proofcomment{\text{By I.H for } c_2} \\
\geq 	& \word{ert}[c_2,\mathcal{D}](\mathcal{T}(\context';Q'))(\sigma) \\
& \proofcomment{\sem{e:\word{false}}(\sigma) = 1; \sem{e:\word{true}}(\sigma) = 0} \\
 = 	& \sem{e:\word{true}}(\sigma){\cdot}\word{ert}[c_1,\mathcal{D}](\mathcal{T}(\context';Q'))(\sigma) + \\
& \sem{e:\word{false}}(\sigma){\cdot}\word{ert}[c_2,\mathcal{D}](\mathcal{T}(\context';Q'))(\sigma) \\
   & \proofcomment{\text{Table } \ref{tab:ert}} \\
 = 		& \word{ert}[c,\mathcal{D}](\mathcal{T}(\context';Q'))(\sigma)
\end{array}
$$

\vspace{-.5ex}
\paragraph{Nondeterministic branching} 
Suppose $c$ is of the form $\word{if} \star \; c_1 \word{else} c_2$, thus the automatic analysis derivation ends with an application of the rule \textcolor{ACMRed}{\textsc{Q:NonDet}}. 
Consider any program state $\sigma \in \Sigma$, we get 
$$
\begin{array}{ll}
& \mathcal{T}(\context;Q)(\sigma) \\
& \proofcomment{\text{By I.H for } c_1} \\
 \geq & \word{ert}[c_1,\mathcal{D}](\mathcal{T}(\context';Q'))(\sigma) \\
& \proofcomment{\text{By I.H for } c_2} \\
 \geq & \word{ert}[c_2,\mathcal{D}](\mathcal{T}(\context';Q'))(\sigma) \\
& \proofcomment{\text{Algebra}} \\
 \geq & \word{max}\{\word{ert}[c_1,\mathcal{D}](\mathcal{T}(\context';Q'))(\sigma), \\
 & \word{ert}[c_2,\mathcal{D}](\mathcal{T}(\context';Q'))(\sigma)\} \\
   & \proofcomment{\text{Table } \ref{tab:ert}} \\
 = & \word{ert}[c,\mathcal{D}](\mathcal{T}(\context';Q'))(\sigma)
\end{array}
$$

\vspace{-.5ex}
\paragraph{Probabilistic branching} 
Suppose $c$ is of the form $c_1 \oplus_{p} c_2$, thus the automatic analysis ends with an application of the rule \textcolor{ACMRed}{\textsc{Q:PIf}}. Consider any program state $\sigma$ such that $\sigma \not \models \context$, we have
$$
\begin{array}{ll}
& \mathcal{T}(\context;Q)(\sigma) \\
   & \proofcomment{\text{Definition of translation function}} \\
 = & \infty \geq \word{ert}[c,\mathcal{D}](\mathcal{T}(\context';Q'))(\sigma)
\end{array}
$$ 
If $\sigma \models \context$, then it follows 
$$
\begin{array}{ll}
& \mathcal{T}(\context;Q)(\sigma) \\
   & \proofcomment{\text{Definition of translation function and rule \textcolor{ACMRed}{\textsc{Q:PIf}}}} \\
 = & \Phi_{Q}(\sigma) = {p}{\cdot}\Phi_{Q_1}(\sigma) + {(1}{-}{p)}{\cdot}\Phi_{Q_2}(\sigma) \\
 = & {p}{\cdot}\mathcal{T}(\context;Q_1)(\sigma) + {(1}{-}{p)}{\cdot}\mathcal{T}(\context;Q_2)(\sigma)\\
& \proofcomment{\text{By I.H for } c_1 \text{ and } c_2} \\
 \geq & {p}{\cdot}\word{ert}[c_1,\mathcal{D}](\mathcal{T}(\context';Q'))(\sigma) + \\
 & {(1}{-}{p)}{\cdot}\word{ert}[c_2,\mathcal{D}](\mathcal{T}(\context';Q'))(\sigma)\\
   & \proofcomment{\text{Table } \ref{tab:ert}} \\
 = & \word{ert}[c,\mathcal{D}](\mathcal{T}(\context';Q'))(\sigma)
\end{array}
$$

\vspace{-.5ex}
\paragraph{Sequence} 
Suppose $c$ is of the form $c_1; c_2$, thus the automatic analysis derivation ends with an application of the rule \textcolor{ACMRed}{\textsc{Q:Seq}}. 
Consider any program state $\sigma \in \Sigma$, we get 
$$
\begin{array}{ll}
& \mathcal{T}(\context;Q)(\sigma) \\
& \proofcomment{\text{By I.H for } c_1} \\
 \geq & \word{ert}[c_1,\mathcal{D}](\mathcal{T}(\context';Q'))(\sigma)\\
& \proofcomment{\text{By I.H for } c_2 \text{ and monotonicity of \word{ert}}} \\
 \geq & \word{ert}[c_1,\mathcal{D}](\word{ert}[c_2,\mathcal{D}](\mathcal{T}(\context'';Q'')))(\sigma)\\
   & \proofcomment{\text{Table } \ref{tab:ert}} \\
 = & \word{ert}[c,\mathcal{D}](\mathcal{T}(\context'';Q''))(\sigma)
\end{array}
$$

\vspace{-.5ex}
\subsubsection*{Loop}
Suppose $c$ is of the form $\word{while} e \; c_1$, thus the automatic analysis ends with an application of the rule \textcolor{ACMRed}{\textsc{Q:Loop}}. For any program state $\sigma$, we consider the characteristic function $F_{\mathcal{T}(\context \wedge \neg e;Q)}(\mathcal{T}(\context;Q))$ defined as 
$$
\sem{e:\word{true}}{\cdot}\word{ert}[c_1,\mathcal{D}](\mathcal{T}(\context;Q)) + \sem{e:\word{false}}{\cdot}\mathcal{T}(\context \wedge \neg e;Q)
$$ 
If $\sigma \not \models \context$, then 
$$
\begin{array}{ll}
& F_{\mathcal{T}(\context \wedge \neg e;Q)}(\mathcal{T}(\context;Q))(\sigma) \\
   & \proofcomment{\text{Definition of translation function}} \\
 \leq & \infty = \mathcal{T}(\context;Q)(\sigma)
\end{array}
$$ 
If $\sigma \models \context \wedge \neg e$, then we have
$$
\begin{array}{ll}
& F_{\mathcal{T}(\context \wedge \neg e;Q)}(\mathcal{T}(\context;Q))(\sigma) \\
 = & \sem{e:\word{true}}(\sigma){\cdot}\word{ert}[c_1,\mathcal{D}](\mathcal{T}(\context;Q))(\sigma) + \\
& \sem{e:\word{false}}(\sigma){\cdot}\mathcal{T}(\context \wedge \neg e;Q)(\sigma) \\
& \proofcomment{\sem{e:\word{false}}(\sigma) = 1; \sem{e:\word{true}}(\sigma) = 0} \\
 = & \mathcal{T}(\context \wedge \neg e;Q)(\sigma) \\
   & \proofcomment{\text{Definition of translation function}} \\
 = & \Phi_{Q}(\sigma) = \mathcal{T}(\context;Q)(\sigma)
\end{array}
$$
If $\sigma \models \context \wedge e$, we get
$$
\begin{array}{ll}
& \mathcal{T}(\context \wedge e;Q))(\sigma) \\
   & \proofcomment{\text{Definition of translation function}} \\
 = & \mathcal{T}(\context ;Q))(\sigma) = \Phi_{Q}(\sigma)\\
& \proofcomment{\text{By I.H for } c_1} \\
 \geq & \word{ert}[c_1,\mathcal{D}](\mathcal{T}(\context;Q))(\sigma)\\
& \proofcomment{\sem{e:\word{false}}(\sigma) = 0; \sem{e:\word{true}}(\sigma) = 1} \\
 = & F_{\mathcal{T}(\context \wedge \neg e;Q)}(\mathcal{T}(\context;Q))(\sigma)
\end{array}
$$
Thus, for all program states $\sigma \in \Sigma$, we have 
$$
\begin{array}{rrl}
& F_{\mathcal{T}(\context \wedge \neg e;Q)}(\mathcal{T}(\context;Q))(\sigma) & \leq \mathcal{T}(\context;Q)(\sigma)\\

& \proofcomment{\text{Definition of } \sqsubseteq} & \\
\Leftrightarrow & F_{\mathcal{T}(\context \wedge \neg e;Q)}(\mathcal{T}(\context;Q)) & \sqsubseteq \mathcal{T}(\context;Q) \\

& \proofcomment{\text{Park's Theorem\footnotemark~\cite{Wechler92}}} & \\
\Rightarrow & \word{lfp}F_{\mathcal{T}(\context \wedge \neg e;Q)} & \sqsubseteq \mathcal{T}(\context;Q) \\
& \proofcomment{\text{Table } \ref{tab:ert}} & \\
\Leftrightarrow & \word{ert}[c,\mathcal{D}](\mathcal{T}(\context \wedge \neg e;Q)) & \sqsubseteq \mathcal{T}(\context;Q) \\
& \proofcomment{\text{Definition of } \sqsubseteq} & \\
\Leftrightarrow & \word{ert}[c,\mathcal{D}](\mathcal{T}(\context \wedge \neg e;Q'))(\sigma) & \leq \mathcal{T}(\context;Q)(\sigma)
\end{array}
\footnotetext{If $H: \mathcal{D} \rightarrow \mathcal{D}$ is a continuous function over an $\omega$-cpo $(\mathcal{D},\sqsubseteq)$ with bottom element, then $H(d) \sqsubseteq d$ implies $\word{lfp}H \sqsubseteq d$ for all $d \in \mathcal{D}$.}
$$
Following Theorem \ref{theo:loopsup}, the expected cost transformer for loops can be expressed as 
$$
\word{ert}[\word{while} e \; c_1,\mathcal{D}] = {\word{sup}}_{n}{\word{ert}}[{\word{while}}^{n} \; e \; c_1, \mathcal{D}]
$$
Alternatively, we can prove the soundness of the rule \textcolor{ACMRed}{\textsc{Q:Loop}} by showing that for all program states $\sigma$
$$
\forall n. {\word{ert}}[{\word{while}}^{n} \; e \; c_1, \mathcal{D}](\mathcal{T}(\context \wedge \neg e;Q))(\sigma) \leq \mathcal{T}(\context;Q)(\sigma)
$$

\vspace{-.5ex}
\subsubsection*{Procedure call}
Suppose $c$ is of the form $\word{call} P$, thus the automatic analysis ends with applications of the rules \textcolor{ACMRed}{\textsc{Q:Call}} and \textcolor{ACMRed}{\textsc{ValidCtx}}. Since by Theorem \ref{theo:limitapproximation}, $\word{ert}[{\word{call}P,\mathcal{D}}] = {\word{sup}}_{n}{\word{ert}}[{\word{call}}^{\mathcal{D}}_{n}P]$, where the $n^{th}$-inlining of procedure call ${\word{call}}^{\mathcal{D}}_{n}P$ is defined in Section \ref{subsec:supfixpoint}. To prove that 
$$\word{ert}[{\word{call}P,\mathcal{D}}](\mathcal{T}(\context';Q'+c))(\sigma) \leq \mathcal{T}(\context;Q+c)(\sigma)$$ for all program states $\sigma$, it is sufficient to show that 
$$\forall n. {\word{ert}}[{\word{call}}^{\mathcal{D}}_{n}P](\mathcal{T}(\context';Q'+c))(\sigma) \leq \mathcal{T}(\context;Q+c)(\sigma)
$$
The proof is done by induction on the natural value $n$. 
\begin{itemize}
\item \emph{Base case.} It is intermediately satisfied because 
$$
\begin{array}{ll}
& {\word{ert}}[{\word{call}}^{\mathcal{D}}_{0}P](\mathcal{T}(\context';Q'+c))(\sigma) \\
& \proofcomment{\text{Definition of } {\word{call}}^{\mathcal{D}}_{n}P} \\
 = & {\word{ert}}[\word{abort}](\mathcal{T}(\context';Q'+c))(\sigma) \\
& \proofcomment{\text{Table } \ref{tab:ert}} \\
 = & \mathbf{0}(\sigma) \leq \mathcal{T}(\context;Q+c)(\sigma)
\end{array}
$$ 
\item \emph{Induction case.} We reason as follows
$$
\begin{array}{ll}
 & \proofcomment{\text{Rules \textcolor{ACMRed}{\textsc{Q:Call}}, \textcolor{ACMRed}{\textsc{ValidCtx}}, and by I.H for } \mathcal{D}(P)} \\
 & \proofcomment{\text{where the current body of } P \text{ is } {\word{call}}^{\mathcal{D}}_{n}P}  \\
 & {\word{ert}}[\mathcal{D}(P),\mathcal{D}](\mathcal{T}(\context';Q'))(\sigma) \leq \mathcal{T}(\context;Q)(\sigma) \\

 & \proofcomment{\text{Algebra}} \\
\Leftrightarrow & {\word{ert}}[\mathcal{D}(P),\mathcal{D}](\mathcal{T}(\context';Q'))(\sigma) + x \leq \mathcal{T}(\context;Q)(\sigma) + x \\

& \proofcomment{\text{Definition of translation function}}  \\
\Leftrightarrow & {\word{ert}}[\mathcal{D}(P),\mathcal{D}](\mathcal{T}(\context';Q'))(\sigma) + x \leq \mathcal{T}(\context;Q+x)(\sigma) \\

& \proofcomment{\text{Property of \word{ert}}} \\
\Rightarrow & {\word{ert}}[\mathcal{D}(P),\mathcal{D}](\mathcal{T}(\context';Q')+x)(\sigma) \leq \mathcal{T}(\context;Q+x)(\sigma)\\

& \proofcomment{\text{Definition of translation function}} \\
\Leftrightarrow & {\word{ert}}[\mathcal{D}(P),\mathcal{D}](\mathcal{T}(\context';Q'+x))(\sigma)  \leq \mathcal{T}(\context;Q+x)(\sigma) \\




& \proofcomment{\text{Lemma \ref{lem:functioncall} where } {\word{call}}^{\mathcal{D}}_{n}P \text{ is closed}} \\
\Leftrightarrow & {\word{ert}}[\mathcal{D}(P)[{\word{call}}^{\mathcal{D}}_{n}P{/}{\word{call}P}],\mathcal{D}](\mathcal{T}(\context';Q'+c))(\sigma) \\
& \leq \mathcal{T}(\context;Q+c)(\sigma) \\

& \proofcomment{\text{Definition of replacement in Table \ref{tab:functioncall}}}  \\
\Leftrightarrow & {\word{ert}}[{\word{call}}^{\mathcal{D}}_{n+1}P](\mathcal{T}(\context';Q'+c))(\sigma) \leq \mathcal{T}(\context;Q+c)(\sigma) 
\end{array}
$$
\end{itemize}

\vspace{-1.5ex}
\section{Basic properties of the expected cost transformer}
We write $\sqsubseteq$ to denote the point-wise order relation between expectations. 
Formally, let $f, g \in \contcost$, $f \sqsubseteq g$ if and only if $f(\sigma) \leq g(\sigma)$ for all 
program states $\sigma \in \Sigma$. 
\vspace{-1.0ex}
\subsection{The $\omega$-complete partial order}
\begin{lemma}[$\omega$-cpo]
$(\contcost, \sqsubseteq)$ is an $\omega$-complete partial order (cpo) with bottom element $\lambda\sigma.\mathbf{0}$ and top element 
$\lambda\sigma.\mathbf{\infty}$.
\end{lemma}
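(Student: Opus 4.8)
The plan is to verify the three ingredients in the definition of an $\omega$-complete partial order directly, using the pointwise structure of $\contcost = \{f \mid f : \Sigma \to \Rplus \cup \{\infty\}\}$ and the fact that $[0,\infty]$ is itself an $\omega$-cpo under its usual order. First I would check that $\sqsubseteq$ is a partial order on $\contcost$: reflexivity, antisymmetry, and transitivity all follow immediately by applying the corresponding properties of $\leq$ on $\Rplus \cup \{\infty\}$ pointwise, since $f \sqsubseteq g$ means $f(\sigma) \leq g(\sigma)$ for every $\sigma$, and antisymmetry of $\leq$ gives $f(\sigma) = g(\sigma)$ for all $\sigma$, i.e.\ $f = g$ as functions.

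Next I would identify the bottom and top elements. The constant function $\lambda\sigma.\mathbf{0}$ satisfies $\mathbf 0(\sigma) = 0 \leq f(\sigma)$ for every $f \in \contcost$ and every $\sigma$, so it is a least element; dually $\lambda\sigma.\mathbf{\infty}$ satisfies $f(\sigma) \leq \infty = \mathbf\infty(\sigma)$ for every $f$ and $\sigma$, so it is a greatest element. (Note that the statement as phrased only needs a bottom element for the $\omega$-cpo structure; the top element is a harmless extra observation.)

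Finally I would show that every $\omega$-chain has a supremum in $\contcost$. Given a chain $f_0 \sqsubseteq f_1 \sqsubseteq f_2 \sqsubseteq \cdots$ in $\contcost$, define $f : \Sigma \to \Rplus \cup \{\infty\}$ by $f(\sigma) := \sup_n f_n(\sigma)$, which is well defined because for each fixed $\sigma$ the sequence $(f_n(\sigma))_n$ is a nondecreasing sequence in $[0,\infty]$ and $[0,\infty]$ is a complete lattice, so the supremum exists in $\Rplus \cup \{\infty\}$; hence $f \in \contcost$. Then $f$ is an upper bound: $f_n(\sigma) \leq \sup_m f_m(\sigma) = f(\sigma)$ for all $n,\sigma$, so $f_n \sqsubseteq f$. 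And $f$ is least among upper bounds: if $g \in \contcost$ with $f_n \sqsubseteq g$ for all $n$, then $f_n(\sigma) \leq g(\sigma)$ for all $n$, so $f(\sigma) = \sup_n f_n(\sigma) \leq g(\sigma)$, i.e.\ $f \sqsubseteq g$. Thus $\sup_n f_n = f$ exists in $(\contcost,\sqsubseteq)$, completing the verification.

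There is really no hard part here: the lemma is a routine "pointwise lifting" observation, and the only thing to be a little careful about is that suprema are taken in the extended nonnegative reals $[0,\infty]$ rather than in $\Rplus$, so that chains that diverge still have a supremum (namely $\infty$) inside $\contcost$ — this is exactly why $\infty$ is included in the codomain. I would state this explicitly but not belabor it.
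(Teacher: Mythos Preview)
Your proposal is correct and follows essentially the same approach as the paper: verify the partial-order axioms pointwise and construct suprema of $\omega$-chains as pointwise suprema in $[0,\infty]$. If anything, your write-up is slightly more thorough than the paper's, since you explicitly check the bottom/top elements and the least-upper-bound property, whereas the paper only states the pointwise supremum construction and then verifies reflexivity, antisymmetry, and transitivity.
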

\begin{proof}
To prove that $(\contcost, \sqsubseteq)$ is an $\omega$-cpo, we need to show that $(\contcost, \sqsubseteq)$ is a partial order set and then 
to show that every $\omega$-chain $S \subseteq \contcost$ has a supremum. Such a supremum can be constructed by taking the point-wise supremum 
$
\word{sup}S = \lambda\sigma.\word{sup}\{f(\sigma) \mid f \in S\}
$
which always exists because every subset of $\mathbb{R}^{+}_{0}$ has a supremum. Hence, it remains to prove $(\contcost, \sqsubseteq)$ is a partial order set.
\paragraph{Antisymmetry} $f \sqsubseteq g \wedge g \sqsubseteq f$
$$
\begin{array}{ll}
& \proofcomment{\text{Definition of } \sqsubseteq} \\
\Leftrightarrow & \forall \sigma. f(\sigma) \leq g(\sigma) \wedge g(\sigma) \leq f(\sigma) \\

\Leftrightarrow & \forall \sigma. f(\sigma) \leq g(\sigma) \leq f(\sigma) \\
\Rightarrow & \forall \sigma. f(\sigma) = g(\sigma) \\
\Leftrightarrow & f = g
\end{array}
$$ 
\paragraph{Reflexivity} $\forall \sigma. f(\sigma) = f(\sigma)$
$$
\begin{array}{ll}
\Rightarrow & \forall \sigma. f(\sigma) \leq f(\sigma) \\
& \proofcomment{\text{Definition of } \sqsubseteq} \\
\Leftrightarrow & f \sqsubseteq f\\
\end{array}
$$ 
\paragraph{Transitivity} $f \sqsubseteq g \wedge g \sqsubseteq h$
$$
\begin{array}{ll}
& \proofcomment{\text{Definition of } \sqsubseteq} \\
\Leftrightarrow & \forall \sigma. f(\sigma) \leq g(\sigma) \wedge g(\sigma) \leq h(\sigma) \\

\Leftrightarrow & \forall \sigma. f(\sigma) \leq g(\sigma) \leq h(\sigma) \\
\Rightarrow & \forall \sigma. f(\sigma) \leq h(\sigma) \\

& \proofcomment{\text{Definition of } \sqsubseteq} \\
\Leftrightarrow & f \sqsubseteq h
\end{array}
$$ 
\end{proof}
\vspace{-1.0ex}
\subsection{Basic algebraic properties}
The transformer $\word{ert}$ enjoys many algebraic properties such as 
\emph{monotonicity}, \emph{propagation of constants}, \emph{scaling}, 
and \emph{continuity} properties which we summarize in the following lemma.
\begin{lemma}[Algebraic properties of \word{ert}]
For every program $(c,\mathcal{D})$, every $f, g \in \contcost$, every increasing $\omega$-chain 
$f_0 \sqsubseteq f_1 \sqsubseteq \cdots$, every constant expectation $\mathbf{k} = \lambda\sigma.k, k \in \mathbb{R}^{+}_{0}$, and every constant $r \in \mathbb{R}^{+}_{0}$, 
the following properties hold.
\begin{itemize}
\item \text{Continuity:}
$$
{\word{sup}}_{n}\transformer{c}{\mathcal{D}}{f_n} = \transformer{c}{\mathcal{D}}{{\word{sup}}_{n}f_n}
$$
\item \text{Monotonicity:}
$$
f \sqsubseteq g \Rightarrow \transformer{c}{\mathcal{D}}{f} \sqsubseteq \transformer{c}{\mathcal{D}}{g}
$$
\item \text{Propagation of constants:}
$$
\transformer{c}{\mathcal{D}}{\mathbf{k} + f} \sqsubseteq \mathbf{k} + \transformer{c}{\mathcal{D}}{f}
$$
\item \text{Sub-additivity:}
$$
\begin{array}{l}
\transformer{c}{\mathcal{D}}{f + g} \sqsubseteq \transformer{c}{\mathcal{D}}{f} + \transformer{c}{\mathcal{D}}{g}\\
\text{if } c \text{ is non-determinism-free}
\end{array}
$$
\item \text{Scaling:}
$$
\begin{array}{l}
\word{min}(1,r){\cdot}\transformer{c}{\mathcal{D}}{f} \sqsubseteq \transformer{c}{\mathcal{D}}{r{\cdot}f}\\
\transformer{c}{\mathcal{D}}{r{\cdot}f} \sqsubseteq \word{max}(1,r){\cdot}\transformer{c}{\mathcal{D}}{f}
\end{array}
$$
\item \text{Preservation of $\infty$:}
$$
\transformer{c}{\mathcal{D}}{\infty} = \infty \quad \text{if } c \text{ is abort-free}
$$
\end{itemize}
\end{lemma}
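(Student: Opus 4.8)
The plan is to prove all six properties by one structural induction on the command $c$, appealing to Kleene's and Park's fixed-point theorems for the two clauses of Table~\ref{tab:ert} that use a least fixed point, and handling the procedure-call clause by reducing to the closed finite approximants ${\word{call}}^{\mathcal D}_{n}P$ via Theorem~\ref{theo:limitapproximation}. The base cases are the atomic commands $\word{abort}$, $\word{skip}$, $\word{tick}(q)$, $\word{assert}\,e$, $\word{id}=e$ and $\word{id}=e\,\word{bop}\,R$: in each of them $\transformer{c}{\mathcal D}{f}$ is obtained from $f$ by adding a constant expectation, substituting into the state, multiplying by a $\{0,1\}$-valued guard, or taking $\expt{\dist R}{\cdot}$ over a finitely-supported distribution, so every one of the six claims reduces to an elementary pointwise inequality. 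For instance, $\transformer{\word{tick}(q)}{\mathcal D}{f}=\mathbf q+f$ immediately yields monotonicity, continuity, propagation of constants (with equality), sub-additivity, scaling (from $\word{min}(1,r){\cdot}q\le q\le\word{max}(1,r){\cdot}q$), and preservation of $\infty$; the assignment and sampling cases use that substitution and $\expt{\dist R}{\cdot}$ commute with $+\mathbf k$ and with $r{\cdot}(-)$, are $\omega$-continuous, and (for sampling) linear in $f$.

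For the compositional constructs $c_1;c_2$, the conditional $\word{if}\,e$, the probabilistic branching $c_1\oplus_p c_2$, and the non-deterministic choice $\word{if}\star$, the transformer is assembled from transformers for the subcommands by composition, multiplication by the guards $\sem{e:\word{true}}$ and $\sem{e:\word{false}}$, convex combination $p{\cdot}(-)+(1-p){\cdot}(-)$, and pointwise $\word{max}$; since each of these operations is monotone and $\omega$-continuous, monotonicity and continuity lift through the induction hypothesis. Propagation of constants and scaling use $\sem{e:\word{true}}{\cdot}\mathbf k+\sem{e:\word{false}}{\cdot}\mathbf k=\mathbf k$ and $p\mathbf k+(1-p)\mathbf k=\mathbf k$ to move a constant (or a scalar) past a branch, together with the idempotences $\word{max}(1,\word{max}(1,r))=\word{max}(1,r)$ and dually for $\word{min}$, which keep the scaling factors stable under composition. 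Sub-additivity is claimed only for non-determinism-free $c$, so the $\word{if}\star$ clause is simply omitted, and for the remaining constructs the operations involved are additive or sub-additive in the required sense; preservation of $\infty$ is immediate once one checks that no abort-free atomic command sends $\infty$ below $\infty$.

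The two cases requiring real care are $\word{while}\,e\,c$ and $\word{call}\,P$, both least fixed points. Writing $\transformer{\word{while}\,e\,c}{\mathcal D}{f}=\word{lfp}\,F_f$ with $F_f(X)=\sem{e:\word{true}}{\cdot}\transformer{c}{\mathcal D}{X}+\sem{e:\word{false}}{\cdot}f$, the operator $F$ is built from the same continuous, monotone pieces, so $F_f$ is continuous in $X$ and, by the induction hypothesis on $c$, $f\mapsto F_f$ is monotone, continuous, additive, and scaling-compatible in the corresponding sense. Monotonicity and continuity of $\transformer{\word{while}\,e\,c}{\mathcal D}{\cdot}$ then follow because taking least fixed points commutes with $\omega$-suprema: with the Kleene iteration already used for Theorem~\ref{theo:loopsup}, $\word{lfp}\,F_{{\word{sup}}_{n} f_{n}}={\word{sup}}_{m}(F_{{\word{sup}}_{n} f_{n}})^{m}(\mathbf 0)={\word{sup}}_{m}{\word{sup}}_{n}(F_{f_{n}})^{m}(\mathbf 0)={\word{sup}}_{n}\word{lfp}\,F_{f_{n}}$, using $F_{{\word{sup}}_{n} f_{n}}={\word{sup}}_{n} F_{f_{n}}$ pointwise. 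For propagation of constants, for sub-additivity on a non-determinism-free body, and for scaling, I would use Park induction (cited in the excerpt): e.g. to get $\word{lfp}\,F_{\mathbf k+f}\sqsubseteq\mathbf k+\word{lfp}\,F_f$ it suffices to check $F_{\mathbf k+f}(\mathbf k+\word{lfp}\,F_f)\sqsubseteq\mathbf k+\word{lfp}\,F_f$, which unfolds through the induction hypothesis on $c$ and $\sem{e:\word{true}}{\cdot}\mathbf k+\sem{e:\word{false}}{\cdot}\mathbf k=\mathbf k$. Procedure calls go the same way after routing through Theorem~\ref{theo:limitapproximation}: ${\word{ert}}[\word{call}\,P,\mathcal D]={\word{sup}}_{n}{\word{ert}}[{\word{call}}^{\mathcal D}_{n}P]$ and each ${\word{call}}^{\mathcal D}_{n}P$ is closed, hence covered by the structural induction, with the supremum absorbed by the continuity just proved. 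I expect the main obstacle to be precisely the bookkeeping in these two fixed-point clauses: getting the double-supremum exchange and the Park-induction inequalities exactly right, keeping $\word{min}(1,r)$ and $\word{max}(1,r)$ from drifting under iteration, and confirming that the non-determinism-free hypothesis is exactly what makes the sub-additivity step close.
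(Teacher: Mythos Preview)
Your proposal is correct and follows essentially the same route as the paper: a structural induction on $c$, with the atomic cases handled pointwise, the compositional cases lifted through the monotone/continuous combinators, the $\word{while}$ case via the characteristic functional and fixed-point reasoning, and $\word{call}\,P$ via Theorem~\ref{theo:limitapproximation} and the closed approximants ${\word{call}}^{\mathcal D}_n P$.

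Two small tactical differences are worth noting. First, the paper does \emph{not} prove monotonicity by structural induction; it derives it in one line from continuity, using $f\sqsubseteq g \Rightarrow g=\word{sup}\{f,g\}$ and hence $\transformer{c}{\mathcal D}{g}=\word{sup}\{\transformer{c}{\mathcal D}{f},\transformer{c}{\mathcal D}{g}\}\sqsupseteq\transformer{c}{\mathcal D}{f}$. Second, for the scaling property in the $\word{while}$ case the paper does not use Park induction but an explicit induction on the Kleene iterates, showing $\word{min}(1,r){\cdot}F^{n}_{f}(\mathbf 0)\sqsubseteq F^{n}_{r\cdot f}(\mathbf 0)\sqsubseteq\word{max}(1,r){\cdot}F^{n}_{f}(\mathbf 0)$ for all $n$ and then taking suprema. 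Your Park-induction plan handles the upper bound directly, but Park only yields $\word{lfp}\sqsubseteq d$ from $F(d)\sqsubseteq d$, so for the lower bound $\word{min}(1,r){\cdot}\word{lfp}\,F_f\sqsubseteq\word{lfp}\,F_{r\cdot f}$ you would either need the Kleene-iterate argument or the observation that for $r>0$ the lower bound follows from the upper bound applied with scale $1/r$ (and for $r=0$ it is trivial). Either route works; just be aware that Park alone does not close both inequalities.
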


\vspace{-.5ex}
\subsubsection{Continuity}
Let $f_0 \sqsubseteq f_1 \sqsubseteq \cdots$ be an increasing $\omega$-chain. The proof is done by induction on the structure of the command $c$. 

\vspace{-.5ex}
\paragraph{Skip}
$$
\begin{array}{ll}
& {\word{sup}}_{n}\transformer{\word{skip}}{\mathcal{D}}{f_n}\\
& \proofcomment{\text{Table } \ref{tab:ert}} \\
= & {\word{sup}}_{n}f_n \\
& \proofcomment{\text{Table } \ref{tab:ert}} \\
= & \transformer{\word{skip}}{\mathcal{D}}{{\word{sup}}_{n}f_n}
\end{array}
$$ 

\vspace{-.5ex}
\paragraph{Abort} 
$$
\begin{array}{ll}
& {\word{sup}}_{n}\transformer{\word{abort}}{\mathcal{D}}{f_n}\\
& \proofcomment{\text{Table } \ref{tab:ert} \text{ and algebra}} \\
= & \mathbf{0} \\
= & \transformer{\word{abort}}{\mathcal{D}}{{\word{sup}}_{n}f_n}
\end{array}
$$ 

\vspace{-.5ex}
\paragraph{Assert}  
$$
\begin{array}{ll}
& {\word{sup}}_{n}\transformer{\word{assert} e}{\mathcal{D}}{f_n}\\
& \proofcomment{\text{Table } \ref{tab:ert} \text{ and algebra}} \\
= & \sem{e:\word{true}}{\cdot}({\word{sup}}_{n}f_n) \\
= & \transformer{\word{assert} e}{\mathcal{D}}{{\word{sup}}_{n}f_n}
\end{array}
$$ 

\vspace{-.5ex}
\paragraph{Weaken} 
$$
\begin{array}{ll}
& {\word{sup}}_{n}\transformer{\word{weaken}}{\mathcal{D}}{f_n}\\
& \proofcomment{\text{Table } \ref{tab:ert}} \\
= & {\word{sup}}_{n}f_n \\
& \proofcomment{\text{Table } \ref{tab:ert}} \\
= & \transformer{\word{weaken}}{\mathcal{D}}{{\word{sup}}_{n}f_n}
\end{array}
$$ 

\vspace{-.5ex}
\paragraph{Tick} 
$$
\begin{array}{ll}
& {\word{sup}}_{n}\transformer{\word{tick}(q)}{\mathcal{D}}{f_n}\\
& \proofcomment{\text{Table } \ref{tab:ert}} \\
= & {\word{sup}}_{n}(\mathbf{q} + f_n) \\
& \proofcomment{\text{Algebra}} \\
= & \mathbf{q} + {\word{sup}}_{n}f_n \\
& \proofcomment{\text{Table } \ref{tab:ert}} \\
= & \transformer{\word{tick}(q)}{\mathcal{D}}{{\word{sup}}_{n}f_n}
\end{array}
$$ 

\vspace{-.5ex}
\paragraph{Assignment}
$$
\begin{array}{ll}
& {\word{sup}}_{n}\transformer{x = e}{\mathcal{D}}{f_n}\\
& \proofcomment{\text{Table } \ref{tab:ert}} \\
= & {\word{sup}}_{n}(f_n[e/x]) = ({\word{sup}}_{n}{f_n})[e/x] \\ 
& \proofcomment{\text{Table } \ref{tab:ert}} \\
= & \transformer{x = e}{\mathcal{D}}{{\word{sup}}_{n}f_n}
\end{array}
$$ 

\vspace{-.5ex}
\paragraph{Sampling} The proof relies on the Lebesgue's Monotone Convergence Theorem (LMCT). 
$$
\begin{array}{ll}
& {\word{sup}}_{n}\transformer{x = e \word{bop} R}{\mathcal{D}}{f_n}\\
& \proofcomment{\text{Table } \ref{tab:ert}} \\
= & {\word{sup}}_{n}(\lambda\sigma.\expt{\dist{R}}{\lambda v.f_n(\sigma[e \word{bop} v/x])}) \\ 
& \proofcomment{\text{LMCT}} \\
= & \lambda\sigma.\expt{\dist{R}}{{\word{sup}}_{n}\lambda v.f_n(\sigma[e \word{bop} v/x])} \\
= & \lambda\sigma.\expt{\dist{R}}{\lambda v.{\word{sup}}_{n}f_n(\sigma[e \word{bop} v/x])} \\
& \proofcomment{\text{Table } \ref{tab:ert}} \\
= & \transformer{x = e \word{bop} R}{\mathcal{D}}{{\word{sup}}_{n}f_n}
\end{array}
$$ 

\vspace{-.5ex}
\paragraph{If}
The proof relies on the Monotone Sequence Theorem (MST), that is, if $\langle a_n \rangle$ is a monotonic sequence 
in $\mathbb{R}^{+}_{0}$ then ${\word{sup}}_na_n = {\word{lim}}_{n\rightarrow \infty}a_n$. 
$$
\begin{array}{ll} 
& {\word{sup}}_{n}\transformer{\word{if} e \; c_1 \word{else} c_2}{\mathcal{D}}{f_n}\\
& \proofcomment{\text{Table } \ref{tab:ert}} \\
= & {\word{sup}}_{n}(\sem{e:\word{true}}{\cdot}\transformer{c_1}{\mathcal{D}}{f_n} + \\
& \sem{e:\word{false}}{\cdot}\transformer{c_2}{\mathcal{D}}{f_n}) \\

& \proofcomment{\text{MST}} \\
= & {\word{lim}}_{n\rightarrow \infty}(\sem{e:\word{true}}{\cdot}\transformer{c_1}{\mathcal{D}}{f_n} + \\
& \sem{e:\word{false}}{\cdot}\transformer{c_2}{\mathcal{D}}{f_n}) \\

= & \sem{e:\word{true}}{\cdot}{\word{lim}}_{n\rightarrow \infty}\transformer{c_1}{\mathcal{D}}{f_n} + \\
& \sem{e:\word{false}}{\cdot}{\word{lim}}_{n\rightarrow \infty}\transformer{c_2}{\mathcal{D}}{f_n} \\

& \proofcomment{\text{MST}} \\
= & \sem{e:\word{true}}{\cdot}{\word{sup}}_{n}\transformer{c_1}{\mathcal{D}}{f_n} + \\
& \sem{e:\word{false}}{\cdot}{\word{sup}}_{n}\transformer{c_2}{\mathcal{D}}{f_n} \\

& \proofcomment{\text{By I.H on } c_1 \text{ and } c_2} \\
= & \sem{e:\word{true}}{\cdot}\transformer{c_1}{\mathcal{D}}{{\word{sup}}_nf_n} + \\
& \sem{e:\word{false}}{\cdot}\transformer{c_2}{\mathcal{D}}{{\word{sup}}_nf_n} \\

& \proofcomment{\text{Table } \ref{tab:ert}} \\
= & {\transformer{\word{if} e \; c_1 \word{else} c_2}{\mathcal{D}}{\word{sup}}_{n}f_n} 
\end{array}
$$ 

\vspace{-.5ex}
\paragraph{Nondeterministic branching}
$$
\begin{array}{ll} 
& {\word{sup}}_{n}\transformer{\word{if} \star \; c_1 \word{else} c_2}{\mathcal{D}}{f_n}\\
   & \proofcomment{\text{Table } \ref{tab:ert}} \\
 = & {\word{sup}}_{n}({\word{max}}\{\transformer{c_1}{\mathcal{D}}{f_n},\transformer{c_2}{\mathcal{D}}{f_n}\}) \\
 \sqsupseteq & {\word{max}}\{{\word{sup}}_{n}\transformer{c_1}{\mathcal{D}}{f_n},{\word{sup}}_{n}\transformer{c_2}{\mathcal{D}}{f_n}\}\} \\

   & \proofcomment{\text{By I.H on } c_1 \text{ and } c_2} \\
 = & {\word{max}}\{\transformer{c_1}{\mathcal{D}}{{\word{sup}}_{n}f_n},{\transformer{c_2}{\mathcal{D}}{\word{sup}}_{n}f_n}\}\} \\

   & \proofcomment{\text{Table } \ref{tab:ert}} \\
 = & \transformer{\word{if} \star \; c_1 \word{else} c_2}{\mathcal{D}}{{\word{sup}}_{n}f_n} 
\end{array}
$$ 
Let $A = {\word{max}}\{{\word{sup}}_{n}\transformer{c_1}{\mathcal{D}}{f_n},{\word{sup}}_{n}\transformer{c_2}{\mathcal{D}}{f_n}\}$.  
Assume that 
$$
A \sqsubset {\word{sup}}_{n}({\word{max}}\{\transformer{c_1}{\mathcal{D}}{f_n},\transformer{c_2}{\mathcal{D}}{f_n}\})
$$
Then there exists $m \in \mathbb{N}$ such that 
$$
\begin{array}{rll}
A 
& \sqsubset 	& {\word{max}}\{\transformer{c_1}{\mathcal{D}}{f_m},\transformer{c_2}{\mathcal{D}}{f_m}\} \\

&   & \proofcomment{\text{Definition of supremum}} \\
& \sqsubseteq 	& {\word{max}}\{{\word{sup}}_{n}\transformer{c_1}{\mathcal{D}}{f_n},{\word{sup}}_{n}\transformer{c_2}{\mathcal{D}}{f_n}\} \\
& = & A
\end{array}
$$
Therefore, we get 
$$
\begin{array}{rll}
A 
& = & {\word{max}}\{{\word{sup}}_{n}\transformer{c_1}{\mathcal{D}}{f_n},{\word{sup}}_{n}\transformer{c_2}{\mathcal{D}}{f_n}\} \\
& = & \transformer{\word{if} \star \; c_1 \word{else} c_2}{\mathcal{D}}{{\word{sup}}_{n}f_n} \\
& \sqsupseteq & {\word{sup}}_{n}({\word{max}}\{\transformer{c_1}{\mathcal{D}}{f_n},\transformer{c_2}{\mathcal{D}}{f_n}\}) \\
& = & {\word{sup}}_{n}\transformer{\word{if} \star \; c_1 \word{else} c_2}{\mathcal{D}}{f_n} 
\end{array}
$$
By observation above, it follows.

\vspace{-.5ex}
\paragraph{Probabilistic branching} 
The proof relies on the Monotone Sequence Theorem (MST), that is, if $\langle a_n \rangle$ is a monotonic sequence 
in $\mathbb{R}^{+}_{0}$ then ${\word{sup}}_na_n = {\word{lim}}_{n\rightarrow \infty}a_n$. 
$$
\begin{array}{ll}
& {\word{sup}}_{n}\transformer{c_1 \oplus_{p} c_2}{\mathcal{D}}{f_n} \\
   & \proofcomment{\text{Table } \ref{tab:ert}} \\
 = & {\word{sup}}_{n}(p{\cdot}\transformer{c_1}{\mathcal{D}}{f_n} + (1-p){\cdot}\transformer{c_2}{\mathcal{D}}{f_n}) \\

   & \proofcomment{\text{MST}} \\
 = & {\word{lim}}_{n\rightarrow \infty}(p{\cdot}\transformer{c_1}{\mathcal{D}}{f_n} + (1-p){\cdot}\transformer{c_2}{\mathcal{D}}{f_n}) \\

 = & p{\cdot}{\word{lim}}_{n\rightarrow \infty}\transformer{c_1}{\mathcal{D}}{f_n} + (1-p){\cdot}{\word{lim}}_{n\rightarrow \infty}\transformer{c_2}{\mathcal{D}}{f_n} \\

   & \proofcomment{\text{MST}} \\
 = & p{\cdot}{\word{sup}}_{n}\transformer{c_1}{\mathcal{D}}{f_n} + (1-p){\cdot}{\word{sup}}_{n}\transformer{c_2}{\mathcal{D}}{f_n} \\

   & \proofcomment{\text{By I.H on } c_1 \text{ and } c_2} \\
 = & p{\cdot}\transformer{c_1}{\mathcal{D}}{{\word{sup}}_nf_n} + (1-p){\cdot}\transformer{c_2}{\mathcal{D}}{{\word{sup}}_nf_n} \\

   & \proofcomment{\text{Table } \ref{tab:ert}} \\
 = & {\transformer{c_1 \oplus_{p} c_2}{\mathcal{D}}{\word{sup}}_{n}f_n} 
\end{array}
$$

\vspace{-.5ex}
\paragraph{Sequence} 
$$
\begin{array}{ll}
& {\word{sup}}_{n}\transformer{c_1; c_2}{\mathcal{D}}{f_n} \\
   & \proofcomment{\text{Table } \ref{tab:ert}} \\
 = & {\word{sup}}_{n}(\transformer{c_1}{\mathcal{D}}{\transformer{c_2}{\mathcal{D}}{f_n}}) \\

   & \proofcomment{\text{By I.H on } c_1} \\
 = & \transformer{c_1}{\mathcal{D}}{{\word{sup}}_{n}\transformer{c_2}{\mathcal{D}}{f_n}} \\

   & \proofcomment{\text{By I.H on } c_2} \\
 = & \transformer{c_1}{\mathcal{D}}{\transformer{c_2}{\mathcal{D}}{{\word{sup}}_{n}f_n}} \\

   & \proofcomment{\text{Table } \ref{tab:ert}} \\
 = & \transformer{c_1; c_2}{\mathcal{D}}{{\word{sup}}_{n}f_n}
\end{array}
$$ 

\vspace{-.5ex}
\paragraph{Loop}
By I.H for ${\word{while}}^{k} e \; c$ (defined in Section \ref{subsec:boundedloops}), we get
$$
{\word{sup}}_{n}\transformer{{\word{while}}^{k} e \; c}{\mathcal{D}}{f_n} = \transformer{{\word{while}}^{k} e \; c}{\mathcal{D}}{{\word{sup}}_{n}f_n}
$$
On the other hand, by Theorem \ref{theo:loopsup}, $\word{ert}[\word{while} e \; c,\mathcal{D}] = {\word{sup}}_{k}{\word{ert}}[{\word{while}}^{k} e \; c,\mathcal{D}]$, where ${\word{while}}^{k} e \; c$ is the \emph{$k^{th}$ bounded} execution of a \word{while} loop command, we have 
$$
\begin{array}{ll}
& \transformer{\word{while} e \; c}{\mathcal{D}}{{\word{sup}}_{n}f_n} \\
 = & {\word{sup}}_{k}\transformer{{\word{while}}^{k} e \; c}{\mathcal{D}}{{\word{sup}}_{n}f_n} \\

   & \proofcomment{\text{Observation above}} \\
 = & {\word{sup}}_{k}({\word{sup}}_{n}\transformer{{\word{while}}^{k} e \; c}{\mathcal{D}}{f_n}) \\
 = & {\word{sup}}_{n}({\word{sup}}_{k}\transformer{{\word{while}}^{k} e \; c}{\mathcal{D}}{f_n}) \\

   & \proofcomment{\text{By Theorem } \ref{theo:loopsup}} \\
 = & {\word{sup}}_{n}(\transformer{\word{while} e \; c}{\mathcal{D}}{f_n})
\end{array}
$$

\vspace{-.5ex}
\paragraph{Procedure call}
Because ${\word{call}}^{\mathcal{D}}_{k}P$ (defined in Section \ref{subsec:supfixpoint}) is closed command for all $k \in \mathbb{N}$, by I.H we get
$$
{\word{sup}}_{n}{\word{ert}}[{\word{call}}^{\mathcal{D}}_{k}P](f_n) = {\word{ert}}[{\word{call}}^{\mathcal{D}}_{k}P]({{\word{sup}}_{n}f_n})
$$
On the other hand, by Theorem \ref{theo:limitapproximation}, we have 
$$
\word{ert}[{\word{call}P,\mathcal{D}}] = {\word{sup}}_{k}{\word{ert}}[{\word{call}}^{\mathcal{D}}_{k}P]
$$ 
where ${\word{call}}^{\mathcal{D}}_{k}P$ is the $k^{th}$-inlining of procedure call, we have 
$$
\begin{array}{ll}
& \transformer{\word{call}P}{\mathcal{D}}{{\word{sup}}_{n}f_n} \\
 = & {\word{sup}}_{k}{\word{ert}}[{\word{call}}^{\mathcal{D}}_{k}P]({{\word{sup}}_{n}f_n}) \\

   & \proofcomment{\text{Observation above}} \\
 = & {\word{sup}}_{k}({\word{sup}}_{n}{\word{ert}}[{\word{call}}^{\mathcal{D}}_{k}P](f_n)) \\
 = & {\word{sup}}_{n}({\word{sup}}_{k}{\word{ert}}[{\word{call}}^{\mathcal{D}}_{k}P](f_n)) \\

   & \proofcomment{\text{By Theorem } \ref{theo:limitapproximation}} \\
 = & {\word{sup}}_{n}(\transformer{{\word{call}P}}{\mathcal{D}}{f_n})
\end{array}
$$
\vspace{-.5ex}
\subsubsection{Monotonicity}
The monotonicity follows from the continuity of $\word{ert}$ as follows.
$$
\begin{array}{rll}
\transformer{c}{\mathcal{D}}{g} 
&   & \proofcomment{f \sqsubseteq g} \\
& = & \transformer{c}{\mathcal{D}}{{\word{sup}}\{f,g\}} \\

&   & \proofcomment{\text{Continuity}} \\
& = & {\word{sup}}\{\transformer{c}{\mathcal{D}}{f},\transformer{c}{\mathcal{D}}{g}\} \\

&   & \proofcomment{\text{Definition of supremum}} \\
& \sqsupseteq & \transformer{c}{\mathcal{D}}{f}
\end{array}
$$
\vspace{-.5ex}
\subsubsection{Propagation of constants}
The proof is done by induction on the structure of the command $c$. 

\vspace{-.5ex}
\paragraph{Skip} 
$$
\begin{array}{ll}
& {\transformer{\word{skip}}{\mathcal{D}}{\mathbf{k} + f}} \\
   & \proofcomment{\text{Table } \ref{tab:ert}} \\
 = & \mathbf{k} + f \\

   & \proofcomment{\text{Table } \ref{tab:ert}} \\
 = & \mathbf{k} + \transformer{\word{skip}}{\mathcal{D}}{f}
\end{array}
$$ 

\vspace{-.5ex}
\paragraph{Abort} 
$$
\begin{array}{ll}
& {\transformer{\word{abort}}{\mathcal{D}}{\mathbf{k} + f}} \\
   & \proofcomment{\text{Table } \ref{tab:ert} \text{ and algebra}} \\
 = & \mathbf{0} \\
 \sqsubseteq & \mathbf{k} + \transformer{\word{abort}}{\mathcal{D}}{f}
\end{array}
$$ 

\vspace{-.5ex}
\paragraph{Assert} 
$$
\begin{array}{ll}
& \transformer{\word{assert} e}{\mathcal{D}}{\mathbf{k} + f} \\
   & \proofcomment{\text{Table } \ref{tab:ert} \text{ and algebra}} \\
 = & \sem{e:\word{true}}{\cdot}(\mathbf{k} + f)  = \sem{e:\word{true}}{\cdot}\mathbf{k} + \sem{e:\word{true}}{\cdot}f\\
   & \proofcomment{\sem{e:\word{true}} \leq 1} \\
 \sqsubseteq & \mathbf{k} + \transformer{\word{assert} e}{\mathcal{D}}{f} 
\end{array}
$$ 

\vspace{-.5ex}
\paragraph{Weaken} 
$$
\begin{array}{ll}
& \transformer{\word{weaken}}{\mathcal{D}}{\mathbf{k} + f} \\
   & \proofcomment{\text{Table } \ref{tab:ert}} \\
 = & \mathbf{k} + f \\
   & \proofcomment{\text{Table } \ref{tab:ert}} \\
 = & \mathbf{k} + \transformer{\word{weaken}}{\mathcal{D}}{f} 
\end{array}
$$ 

\vspace{-.5ex}
\paragraph{Tick}
$$
\begin{array}{ll}
& \transformer{\word{tick}(q)}{\mathcal{D}}{\mathbf{k} + f} \\
   & \proofcomment{\text{Table } \ref{tab:ert}} \\
 = & \mathbf{q} + \mathbf{k} + f \\
   & \proofcomment{\text{Table } \ref{tab:ert}} \\
 = & \mathbf{k} + \transformer{\word{tick}(q)}{\mathcal{D}}{f} 
\end{array}
$$ 

\vspace{-.5ex}
\paragraph{Assignment}
$$
\begin{array}{ll}
& \transformer{x = e}{\mathcal{D}}{\mathbf{k} + f} \\
   & \proofcomment{\text{Table } \ref{tab:ert}} \\
 = & \mathbf{k} + f[e/x] \\ 
   & \proofcomment{\text{Table } \ref{tab:ert}} \\
 = & \mathbf{k} + \transformer{x = e}{\mathcal{D}}{f} 
\end{array}
$$ 

\vspace{-.5ex}
\paragraph{Sampling} 
The proof relies on the linearity property of expectations (LPE). 
$$
\begin{array}{ll}
& \transformer{x = e \word{bop} R}{\mathcal{D}}{\mathbf{k} + f} \\
   & \proofcomment{\text{Table } \ref{tab:ert}} \\
 = & \lambda\sigma.\expt{\dist{R}}{\lambda v.(\mathbf{k} + f)(\sigma[e \word{bop} v/x])} \\ 

   & \proofcomment{\text{LPE and } \mathbf{k}(\sigma[e \word{bop} v/x]) = \mathbf{k}(\sigma)} \\
 = & \mathbf{k} + \lambda\sigma.\expt{\dist{R}}{\lambda v.f(\sigma[e \word{bop} v/x])} \\
   & \proofcomment{\text{Table } \ref{tab:ert}} \\
 = & \mathbf{k} + \transformer{x = e \word{bop} R}{\mathcal{D}}{f} 
\end{array}
$$ 

\vspace{-.5ex}
\paragraph{If}
$$
\begin{array}{ll}
& \transformer{\word{if} e \; c_1 \word{else} c_2}{\mathcal{D}}{\mathbf{k} + f} \\
   & \proofcomment{\text{Table } \ref{tab:ert}} \\
 = & \sem{e:\word{true}}{\cdot}\transformer{c_1}{\mathcal{D}}{\mathbf{k} + f} + \\
   & \sem{e:\word{false}}{\cdot}\transformer{c_2}{\mathcal{D}}{\mathbf{k} + f} \\

   & \proofcomment{\text{By I.H on } c_1 \text{ and } c_2} \\
 \sqsubseteq & \sem{e:\word{true}}{\cdot}(\mathbf{k} + \transformer{c_1}{\mathcal{D}}{f}) + \\
   & \sem{e:\word{false}}{\cdot}(\mathbf{k} + \transformer{c_2}{\mathcal{D}}{f}) \\

   & \proofcomment{\text{Algebra}} \\
 = & \mathbf{k} + \sem{e:\word{true}}{\cdot}\transformer{c_1}{\mathcal{D}}{f} + \\
   & \sem{e:\word{false}}{\cdot}\transformer{c_2}{\mathcal{D}}{f} \\

   & \proofcomment{\text{Table } \ref{tab:ert}} \\
 = & \mathbf{k} + \transformer{\word{if} e \; c_1 \word{else} c_2}{\mathcal{D}}{f} 
\end{array}
$$ 

\vspace{-.5ex}
\paragraph{Nondeterministic branching} 
$$
\begin{array}{ll}
& \transformer{\word{if} \star \; c_1 \word{else} c_2}{\mathcal{D}}{\mathbf{k} + f} \\
   & \proofcomment{\text{Table } \ref{tab:ert}} \\
 = & {\word{max}}\{\transformer{c_1}{\mathcal{D}}{\mathbf{k} + f},\transformer{c_2}{\mathcal{D}}{\mathbf{k} + f}\} \\

   & \proofcomment{\text{By I.H on } c_1 \text{ and } c_2} \\
 \sqsubseteq & {\word{max}}\{\mathbf{k} + \transformer{c_1}{\mathcal{D}}{f},\mathbf{k} + \transformer{c_2}{\mathcal{D}}{f}\} \\

   & \proofcomment{\text{Algebra}} \\
 = & \mathbf{k} + {\word{max}}\{\transformer{c_1}{\mathcal{D}}{f},\transformer{c_2}{\mathcal{D}}{f}\} \\

   & \proofcomment{\text{Table } \ref{tab:ert}} \\
 = & \mathbf{k} + \transformer{\word{if} \star \; c_1 \word{else} c_2}{\mathcal{D}}{f} 
\end{array}
$$ 

\vspace{-.5ex}
\paragraph{Probabilistic branching} 
$$
\begin{array}{ll}
& \transformer{c_1 \oplus_{p} c_2}{\mathcal{D}}{\mathbf{k} + f} \\
   & \proofcomment{\text{Table } \ref{tab:ert}} \\
 = & p{\cdot}\transformer{c_1}{\mathcal{D}}{\mathbf{k} + f} + (1-p){\cdot}\transformer{c_2}{\mathcal{D}}{\mathbf{k} + f} \\

   & \proofcomment{\text{By I.H on } c_1 \text{ and } c_2} \\
 \sqsubseteq & p{\cdot}(\mathbf{k} + \transformer{c_1}{\mathcal{D}}{f}) + (1-p){\cdot}(\mathbf{k} + \transformer{c_2}{\mathcal{D}}{f}) \\

   & \proofcomment{\text{Algebra}} \\
 = & \mathbf{k} + p{\cdot}\transformer{c_1}{\mathcal{D}}{f} + (1-p){\cdot}\transformer{c_2}{\mathcal{D}}{f} \\

   & \proofcomment{\text{Table } \ref{tab:ert}} \\
 = & \mathbf{k} + \transformer{c_1 \oplus_{p} c_2}{\mathcal{D}}{f} 
\end{array}
$$ 

\vspace{-.5ex}
\paragraph{Sequence} 
$$
\begin{array}{ll}
& \transformer{c_1; c_2}{\mathcal{D}}{\mathbf{k} + f} \\
   & \proofcomment{\text{Table } \ref{tab:ert}} \\
 = & \transformer{c_1}{\mathcal{D}}{\transformer{c_2}{\mathcal{D}}{\mathbf{k} + f}} \\

   & \proofcomment{\text{By I.H on } c_2} \\
 \sqsubseteq & \transformer{c_1}{\mathcal{D}}{\mathbf{k} + \transformer{c_2}{\mathcal{D}}{f}} \\

   & \proofcomment{\text{By I.H on } c_1} \\
 \sqsubseteq & \mathbf{k} + \transformer{c_1}{\mathcal{D}}{\transformer{c_2}{\mathcal{D}}{f}} \\

   & \proofcomment{\text{Table } \ref{tab:ert}} \\
 = & \mathbf{k} + \transformer{c_1; c_2}{\mathcal{D}}{f} 
\end{array}
$$ 

\vspace{-.5ex}
\paragraph{Loop}
Consider the characteristic function w.r.t the expectation $f$
$$
F_{f} \defineas \sem{e:{\word{true}}}{\cdot}\transformer{c}{\mathcal{D}}{X} + \sem{e:\word{false}}{\cdot}f
$$
We first need to show that $F_{\mathbf{k}+f}(\mathbf{k} + \word{lfp}F_{f}) \sqsubseteq \mathbf{k} + \word{lfp}F_{f}$. Then following Park's Theorem\footnote{If $H: \mathcal{D} \rightarrow \mathcal{D}$ is a continuous function over an $\omega$-cpo $(\mathcal{D},\sqsubseteq)$ with bottom element, then $H(d) \sqsubseteq d$ implies $\word{lfp}H \sqsubseteq d$ for all $d \in \mathcal{D}$.}~\cite{Wechler92}, we get $\word{lfp}F_{\mathbf{k}+f} \sqsubseteq \mathbf{k} + \word{lfp}F_{f}$. 
$$
\begin{array}{ll}
& F_{\mathbf{k}+f}(\mathbf{k} + \word{lfp}F_{f}) \\
   & \proofcomment{\text{Definition of } F_{\mathbf{k}+f}} \\
 = & \sem{e:{\word{true}}}{\cdot}\transformer{c}{\mathcal{D}}{\mathbf{k} + \word{lfp}F_{f}} + \sem{e:\word{false}}{\cdot}(\mathbf{k}+f) \\

   & \proofcomment{\text{By I.H on } c} \\
 \sqsubseteq & \sem{e:{\word{true}}}{\cdot}(\mathbf{k} + \transformer{c}{\mathcal{D}}{\word{lfp}F_{f}}) + \sem{e:\word{false}}{\cdot}(\mathbf{k}+f) \\

   & \proofcomment{\text{Algebra}} \\
 = & \mathbf{k} + \sem{e:{\word{true}}}{\cdot}\transformer{c}{\mathcal{D}}{\word{lfp}F_{f}} + \sem{e:\word{false}}{\cdot}f \\

   & \proofcomment{\text{Definition of } F_{f}} \\
 = & \mathbf{k} + F_f(\word{lfp}F_{f}) \\

   & \proofcomment{\text{Definition of \word{lfp}}} \\
 = & \mathbf{k} + \word{lfp}F_{f} \\
\end{array}
$$ 	 

\vspace{-.5ex}
\paragraph{Procedure call}
Because ${\word{call}}^{\mathcal{D}}_{n}P$ (defined in Section \ref{subsec:supfixpoint}) is closed command for all $n \in \mathbb{N}$, by I.H we get
$$
{\word{ert}}[{\word{call}}^{\mathcal{D}}_{n}P](\mathbf{k} + f) \sqsubseteq \mathbf{k} + {\word{ert}}[{\word{call}}^{\mathcal{D}}_{n}P](f)
$$
On the other hand, by Theorem \ref{theo:limitapproximation}, we have 
$$
\word{ert}[{\word{call}P,\mathcal{D}}] = {\word{sup}}_{n}{\word{ert}}[{\word{call}}^{\mathcal{D}}_{n}P]
$$
where ${\word{call}}^{\mathcal{D}}_{n}P$ is the $n^{th}$-inlining of procedure call, we have 
$$
\begin{array}{ll}
& \transformer{\word{call}P}{\mathcal{D}}{\mathbf{k} + f} \\
 = & {\word{sup}}_{n}{\word{ert}}[{\word{call}}^{\mathcal{D}}_{n}P]({\mathbf{k} + f}) \\

   & \proofcomment{\text{Observation above}} \\
 \sqsubseteq & {\word{sup}}_{n}(\mathbf{k} + {\word{ert}}[{\word{call}}^{\mathcal{D}}_{n}P](f)) \\
 = & \mathbf{k} + {\word{sup}}_{n}{\word{ert}}[{\word{call}}^{\mathcal{D}}_{n}P](f) \\

   & \proofcomment{\text{By Theorem } \ref{theo:limitapproximation}} \\
 = & \mathbf{k} + \transformer{\word{call}P}{\mathcal{D}}{f} 
\end{array}
$$
\vspace{-.5ex}
\subsubsection{Sub-additivity}
The proof is done by induction on the structure of the command $c$. Note that $c$ is \emph{non-determinism} free, that is, $c$ contains no \word{non-deterministic} commands.

\vspace{-.5ex}
\paragraph{Skip} 
$$
\begin{array}{ll}
& {\transformer{\word{skip}}{\mathcal{D}}{f + g}} \\
   & \proofcomment{\text{Table } \ref{tab:ert}} \\
 = & f + g \\

   & \proofcomment{\text{Table } \ref{tab:ert}} \\
 = & \transformer{\word{skip}}{\mathcal{D}}{f} + \transformer{\word{skip}}{\mathcal{D}}{g}
\end{array}
$$ 

\vspace{-.5ex}
\paragraph{Abort} 
$$
\begin{array}{ll}
& {\transformer{\word{abort}}{\mathcal{D}}{f + g}} \\
   & \proofcomment{\text{Table } \ref{tab:ert} \text{ and algebra}} \\
 = & \mathbf{0} \\
 = & \transformer{\word{abort}}{\mathcal{D}}{f} + \transformer{\word{abort}}{\mathcal{D}}{g}
\end{array}
$$ 

\vspace{-.5ex}
\paragraph{Assert} 
$$
\begin{array}{ll}
& \transformer{\word{assert} e}{\mathcal{D}}{f + g} \\
   & \proofcomment{\text{Table } \ref{tab:ert} \text{ and algebra}} \\
 = & \sem{e:\word{true}}{\cdot}(f + g)  = \sem{e:\word{true}}{\cdot}f + \sem{e:\word{true}}{\cdot}g\\

   & \proofcomment{\text{Table } \ref{tab:ert}} \\
 = & \transformer{\word{assert} e}{\mathcal{D}}{f} + \transformer{\word{assert} e}{\mathcal{D}}{g}  
\end{array}
$$ 

\vspace{-.5ex}
\paragraph{Weaken} 
$$
\begin{array}{ll}
& \transformer{\word{weaken}}{\mathcal{D}}{f + g} \\
   & \proofcomment{\text{Table } \ref{tab:ert}} \\
 = & f + g \\
   & \proofcomment{\text{Table } \ref{tab:ert}} \\
 = & \transformer{\word{weaken}}{\mathcal{D}}{f} + \transformer{\word{weaken}}{\mathcal{D}}{g}  
\end{array}
$$ 

\vspace{-.5ex}
\paragraph{Tick}
$$
\begin{array}{ll}
& \transformer{\word{tick}(q)}{\mathcal{D}}{f + g} \\
   & \proofcomment{\text{Table } \ref{tab:ert}} \\
 = & \mathbf{q} + f + g\\

   & \proofcomment{\text{Algebra}} \\
 \sqsubseteq &  \mathbf{q} + f + \mathbf{q} + g \\

   & \proofcomment{\text{Table } \ref{tab:ert}} \\
 = & \transformer{\word{tick}(q)}{\mathcal{D}}{f} + \transformer{\word{tick}(q)}{\mathcal{D}}{g}  
\end{array}
$$ 

\vspace{-.5ex}
\paragraph{Assignment}
$$
\begin{array}{ll}
& \transformer{x = e}{\mathcal{D}}{f + g} \\
   & \proofcomment{\text{Table } \ref{tab:ert}} \\
 = & f[e/x] + g[e/x] \\ 

   & \proofcomment{\text{Table } \ref{tab:ert}} \\
 = & \transformer{x = e}{\mathcal{D}}{f} + \transformer{x = e}{\mathcal{D}}{g}
\end{array}
$$ 

\vspace{-.5ex}
\paragraph{Sampling} 
The proof relies on the linearity property of expectations (LPE). 
$$
\begin{array}{ll}
& \transformer{x = e \word{bop} R}{\mathcal{D}}{f + g} \\
   & \proofcomment{\text{Table } \ref{tab:ert}} \\
 = & \lambda\sigma.\expt{\dist{R}}{\lambda v.(f + g)(\sigma[e \word{bop} v/x])} \\ 

   & \proofcomment{\text{Linearity of expectations}} \\
 = & \lambda\sigma.\expt{\dist{R}}{\lambda v.f(\sigma[e \word{bop} v/x])} + \\
 & \lambda\sigma.\expt{\dist{R}}{\lambda v.g(\sigma[e \word{bop} v/x])} \\
   & \proofcomment{\text{Table } \ref{tab:ert}} \\
 = & \transformer{x = e \word{bop} R}{\mathcal{D}}{f} + \transformer{x = e \word{bop} R}{\mathcal{D}}{g}  
\end{array}
$$ 

\vspace{-.5ex}
\paragraph{If}
$$
\begin{array}{ll}
& \transformer{\word{if} e \; c_1 \word{else} c_2}{\mathcal{D}}{f + g} \\
   & \proofcomment{\text{Table } \ref{tab:ert}} \\
 = & \sem{e:\word{true}}{\cdot}\transformer{c_1}{\mathcal{D}}{f + g} + \\
   & \sem{e:\word{false}}{\cdot}\transformer{c_2}{\mathcal{D}}{f + g} \\

   & \proofcomment{\text{By I.H on } c_1 \text{ and } c_2} \\
 \sqsubseteq & \sem{e:\word{true}}{\cdot}(\transformer{c_1}{\mathcal{D}}{f} + \transformer{c_1}{\mathcal{D}}{g}) + \\
   & \sem{e:\word{false}}{\cdot}(\transformer{c_2}{\mathcal{D}}{f} + \transformer{c_2}{\mathcal{D}}{g}) \\

   & \proofcomment{\text{Table } \ref{tab:ert} \text{ and algebra}} \\
 = & \transformer{\word{if} e \; c_1 \word{else} c_2}{\mathcal{D}}{f} + \transformer{\word{if} e \; c_1 \word{else} c_2}{\mathcal{D}}{g}   
\end{array}
$$ 

\vspace{-.5ex}
\paragraph{Probabilistic branching} 
$$
\begin{array}{ll}
& \transformer{c_1 \oplus_{p} c_2}{\mathcal{D}}{f + g} \\
   & \proofcomment{\text{Table } \ref{tab:ert}} \\
 = & p{\cdot}\transformer{c_1}{\mathcal{D}}{f + g} + (1-p){\cdot}\transformer{c_2}{\mathcal{D}}{f + g} \\

   & \proofcomment{\text{By I.H on } c_1 \text{ and } c_2} \\
 \sqsubseteq & p{\cdot}(\transformer{c_1}{\mathcal{D}}{f} + \transformer{c_1}{\mathcal{D}}{g}) + \\
   & (1-p){\cdot}(\transformer{c_2}{\mathcal{D}}{f} + \transformer{c_2}{\mathcal{D}}{g}) \\

   & \proofcomment{\text{Table } \ref{tab:ert} \text{ and algebra}} \\
 = & \transformer{c_1 \oplus_{p} c_2}{\mathcal{D}}{f} + \transformer{c_1 \oplus_{p} c_2}{\mathcal{D}}{g}  
\end{array}
$$ 

\vspace{-.5ex}
\paragraph{Sequence} 
$$
\begin{array}{ll}
& \transformer{c_1; c_2}{\mathcal{D}}{f + g} \\
   & \proofcomment{\text{Table } \ref{tab:ert}} \\
 = & \transformer{c_1}{\mathcal{D}}{\transformer{c_2}{\mathcal{D}}{f + g}} \\

   & \proofcomment{\text{By I.H on } c_2} \\
 \sqsubseteq & \transformer{c_1}{\mathcal{D}}{\transformer{c_2}{\mathcal{D}}{f} + \transformer{c_2}{\mathcal{D}}{g}} \\

   & \proofcomment{\text{By I.H on } c_1} \\
 \sqsubseteq & \transformer{c_1}{\mathcal{D}}{\transformer{c_2}{\mathcal{D}}{f}} + \transformer{c_1}{\mathcal{D}}{\transformer{c_2}{\mathcal{D}}{g}}\\

   & \proofcomment{\text{Table } \ref{tab:ert}} \\
 = & \transformer{c_1; c_2}{\mathcal{D}}{f} + \transformer{c_1; c_2}{\mathcal{D}}{g}  
\end{array}
$$ 

\vspace{-.5ex}
\paragraph{Loop}
Consider the characteristic function w.r.t the expectation $f$
$$
F_{f} \defineas \sem{e:{\word{true}}}{\cdot}\transformer{c}{\mathcal{D}}{X} + \sem{e:\word{false}}{\cdot}f
$$
We first need to show that $F_{\mathbf{k}+f}(\word{lfp}F_{f} + \word{lfp}F_{g}) \sqsubseteq \word{lfp}F_{f} + \word{lfp}F_{g}$. Then following Park's Theorem~\cite{Wechler92}, we get $\word{lfp}F_{\mathbf{k}+f} \sqsubseteq \word{lfp}F_{f} + \word{lfp}F_{g}$. 
$$
\begin{array}{ll}
& F_{\mathbf{k}+f}(\word{lfp}F_{f} + \word{lfp}F_{g}) \\
   & \proofcomment{\text{Definition of } F_{\mathbf{k}+f}} \\
 = & \sem{e:{\word{true}}}{\cdot}\transformer{c}{\mathcal{D}}{\word{lfp}F_{f} + \word{lfp}F_{g}} + \\
 & \sem{e:\word{false}}{\cdot}(\word{lfp}F_{f} + \word{lfp}F_{g}) \\

   & \proofcomment{\text{By I.H on } c} \\
 \sqsubseteq & \sem{e:{\word{true}}}{\cdot}(\transformer{c}{\mathcal{D}}{\word{lfp}F_{f}} + \transformer{c}{\mathcal{D}}{\word{lfp}F_{g}}) + \\
   & \sem{e:\word{false}}{\cdot}(\word{lfp}F_{f} + \word{lfp}F_{g}) \\

   & \proofcomment{\text{Algebra}} \\
 = & \sem{e:{\word{true}}}{\cdot}\transformer{c}{\mathcal{D}}{\word{lfp}F_{f}} + \sem{e:\word{false}}{\cdot}\word{lfp}F_{f} + \\
   & \sem{e:{\word{true}}}{\cdot}\transformer{c}{\mathcal{D}}{\word{lfp}F_{g}} + \sem{e:\word{false}}{\cdot}\word{lfp}F_{g} \\

   & \proofcomment{\text{Definition of } F_{f} \text{ and } F_{g}} \\
 = & F_f(\word{lfp}F_{f}) + F_g(\word{lfp}F_{g})\\

   & \proofcomment{\text{Definition of \word{lfp}}} \\
 = & \word{lfp}F_{f} + \word{lfp}F_{g}\\
\end{array}
$$ 	 

\vspace{-.5ex}
\paragraph{Procedure call}
Because ${\word{call}}^{\mathcal{D}}_{n}P$ (defined in Section \ref{subsec:supfixpoint}) is closed command for all $n \in \mathbb{N}$, by I.H we get
$$
\word{ert}[{\word{call}}^{\mathcal{D}}_{n}P](f + g) \sqsubseteq \word{ert}[{\word{call}}^{\mathcal{D}}_{n}P](f) + \word{ert}[{\word{call}}^{\mathcal{D}}_{n}P](g)
$$
On the other hand, by Theorem \ref{theo:limitapproximation}, we get
$$
\word{ert}[{\word{call}P,\mathcal{D}}] = {\word{sup}}_{n}{\word{ert}}[{\word{call}}^{\mathcal{D}}_{n}P]
$$
where ${\word{call}}^{\mathcal{D}}_{n}P$ is the $n^{th}$-inlining of procedure call, we have 
$$
\begin{array}{ll}
& \transformer{\word{call}P}{\mathcal{D}}{f + g} \\
 = & {\word{sup}}_{n}{\word{ert}}[{\word{call}}^{\mathcal{D}}_{n}P](f + g) \\

   & \proofcomment{\text{Observation above}} \\
 \sqsubseteq & {\word{sup}}_{n}(\word{ert}[{\word{call}}^{\mathcal{D}}_{n}P](f) + \word{ert}[{\word{call}}^{\mathcal{D}}_{n}P](g)) \\
 = & {\word{sup}}_{n}\word{ert}[{\word{call}}^{\mathcal{D}}_{n}P](f) + {\word{sup}}_{n}\word{ert}[{\word{call}}^{\mathcal{D}}_{n}P](g) \\

   & \proofcomment{\text{By Theorem } \ref{theo:limitapproximation}} \\
 = & \transformer{\word{call}P}{\mathcal{D}}{f} + \transformer{\word{call}P}{\mathcal{D}}{g}  
\end{array}
$$
\vspace{-.5ex}
\subsubsection{Scaling}
The proof is done by induction on the structure of the command $c$. 

\vspace{-.5ex}
\paragraph{Skip} 
$$
\begin{array}{ll}
   & \word{min}(1,r){\cdot}f \sqsubseteq r{\cdot}f \sqsubseteq \word{max}(1,r){\cdot}f \\

   & \proofcomment{\text{Table } \ref{tab:ert}} \\
 \Leftrightarrow & \word{min}(1,r){\cdot}f \sqsubseteq \transformer{\word{skip}}{\mathcal{D}}{r{\cdot}f} \sqsubseteq \word{max}(1,r){\cdot}f \\

   & \proofcomment{\text{Table } \ref{tab:ert}} \\
 \Leftrightarrow & \word{min}(1,r){\cdot}\transformer{\word{skip}}{\mathcal{D}}{f} \sqsubseteq \transformer{\word{skip}}{\mathcal{D}}{r{\cdot}f} \\
 & \sqsubseteq \word{max}(1,r){\cdot}\transformer{\word{skip}}{\mathcal{D}}{f}
\end{array}
$$ 

\vspace{-.5ex}
\paragraph{Abort} 
$$
\begin{array}{ll}
   & \word{min}(1,r){\cdot}\mathbf{0} \sqsubseteq r{\cdot}\mathbf{0} \sqsubseteq \word{max}(1,r){\cdot}\mathbf{0} \\

   & \proofcomment{\text{Table } \ref{tab:ert}} \\
 \Leftrightarrow & \word{min}(1,r){\cdot}\transformer{\word{abort}}{\mathcal{D}}{f} \sqsubseteq \transformer{\word{abort}}{\mathcal{D}}{r{\cdot}f} \sqsubseteq \\
 & \word{max}(1,r){\cdot}\transformer{\word{abort}}{\mathcal{D}}{f}
\end{array}
$$ 

\vspace{-.5ex}
\paragraph{Assert} 
$$
\begin{array}{ll}
   & \word{min}(1,r){\cdot}\sem{e:\word{true}}{\cdot}f \sqsubseteq r{\cdot}\sem{e:\word{true}}{\cdot}f \\
   & \sqsubseteq \word{max}(1,r){\cdot}\sem{e:\word{true}}{\cdot}f \\

   & \proofcomment{\text{Table } \ref{tab:ert}} \\
 \Leftrightarrow & \word{min}(1,r){\cdot}\transformer{\word{assert} e}{\mathcal{D}}{f} \sqsubseteq \transformer{\word{assert} e}{\mathcal{D}}{r{\cdot}f} \\
 & \sqsubseteq \word{max}(1,r){\cdot}\transformer{\word{assert} e}{\mathcal{D}}{f}
\end{array}
$$ 

\vspace{-.5ex}
\paragraph{Weaken} 
$$
\begin{array}{ll}
   & \word{min}(1,r){\cdot}f \sqsubseteq r{\cdot}f \sqsubseteq \word{max}(1,r){\cdot}f \\

   & \proofcomment{\text{Table } \ref{tab:ert}} \\
 \Leftrightarrow & \word{min}(1,r){\cdot}f \sqsubseteq \transformer{\word{weaken}}{\mathcal{D}}{r{\cdot}f} \sqsubseteq \word{max}(1,r){\cdot}f \\

   & \proofcomment{\text{Table } \ref{tab:ert}} \\
 \Leftrightarrow & \word{min}(1,r){\cdot}\transformer{\word{weaken}}{\mathcal{D}}{f} \sqsubseteq \transformer{\word{weaken}}{\mathcal{D}}{r{\cdot}f} \\
 & \sqsubseteq \word{max}(1,r){\cdot}\transformer{\word{weaken}}{\mathcal{D}}{f}
\end{array}
$$ 

\vspace{-.5ex}
\paragraph{Tick}
$$
\begin{array}{ll}
   & \mathbf{q} + \word{min}(1,r){\cdot}f \sqsubseteq \mathbf{q} + r{\cdot}f \sqsubseteq \mathbf{q} + \word{max}(1,r){\cdot}f \\

 \Rightarrow & \word{min}(1,r)(\mathbf{q} + f) \sqsubseteq \mathbf{q} + r{\cdot}f \sqsubseteq \word{max}(1,r)(\mathbf{q} + f) \\

   & \proofcomment{\text{Table } \ref{tab:ert}} \\
 \Leftrightarrow & \word{min}(1,r){\cdot}\transformer{\word{tick}(q)}{\mathcal{D}}{f} \sqsubseteq \transformer{\word{tick}(q)}{\mathcal{D}}{r{\cdot}f} \\
 & \sqsubseteq \word{max}(1,r){\cdot}\transformer{\word{tick}(q)}{\mathcal{D}}{f}
\end{array}
$$ 

\vspace{-.5ex}
\paragraph{Assignment}
$$
\begin{array}{ll}
   & \word{min}(1,r){\cdot}f[e/x] \sqsubseteq r{\cdot}f[e/x] \sqsubseteq \word{max}(1,r){\cdot}f[e/x] \\

   & \proofcomment{\text{Table } \ref{tab:ert}} \\
 \Leftrightarrow & \word{min}(1,r){\cdot}\transformer{x = e}{\mathcal{D}}{f} \sqsubseteq \transformer{x = e}{\mathcal{D}}{r{\cdot}f} \\
 & \sqsubseteq \word{max}(1,r){\cdot}\transformer{x = e}{\mathcal{D}}{f}
\end{array}
$$ 

\vspace{-.5ex}
\paragraph{Sampling} 
The proof relies on the linearity property of expectations (LPE). 
$$
\begin{array}{ll}
   & \word{min}(1,r){\cdot}\lambda\sigma.\expt{\dist{R}}{\lambda v.f(\sigma[e \word{bop} v/x])} \sqsubseteq \\
   & r{\cdot}\lambda\sigma.\expt{\dist{R}}{\lambda v.f(\sigma[e \word{bop} v/x])} \sqsubseteq \\
   & \word{max}(1,r){\cdot}\lambda\sigma.\expt{\dist{R}}{\lambda v.f(\sigma[e \word{bop} v/x])} \\

 \Leftrightarrow & \word{min}(1,r){\cdot}\lambda\sigma.\expt{\dist{R}}{\lambda v.f(\sigma[e \word{bop} v/x])} \sqsubseteq \\
 & \lambda\sigma.r{\cdot}\expt{\dist{R}}{\lambda v.f(\sigma[e \word{bop} v/x])} \sqsubseteq \\
   & \word{max}(1,r){\cdot}\lambda\sigma.\expt{\dist{R}}{\lambda v.f(\sigma[e \word{bop} v/x])} \\

   & \proofcomment{\text{LPE}} \\
 \Leftrightarrow & \word{min}(1,r){\cdot}\lambda\sigma.\expt{\dist{R}}{\lambda v.f(\sigma[e \word{bop} v/x])} \sqsubseteq \\
 & \lambda\sigma.\expt{\dist{R}}{\lambda v.(r{\cdot}f)(\sigma[e \word{bop} v/x])} \sqsubseteq \\
   & \word{max}(1,r){\cdot}\lambda\sigma.\expt{\dist{R}}{\lambda v.f(\sigma[e \word{bop} v/x])} \\

   & \proofcomment{\text{Table } \ref{tab:ert}} \\
 \Leftrightarrow & \word{min}(1,r){\cdot}\transformer{x = e \word{bop} R}{\mathcal{D}}{f} \sqsubseteq \\
 & \transformer{x = e \word{bop} R}{\mathcal{D}}{r{\cdot}f} \sqsubseteq \\
   & \word{max}(1,r){\cdot}\transformer{x = e \word{bop} R}{\mathcal{D}}{f}
\end{array}
$$

\vspace{-.5ex}
\paragraph{If}
$$
\begin{array}{ll}
   & \proofcomment{\text{By I.H on } c_1} \\
  \Rightarrow & \word{min}(1,r){\cdot}\transformer{c_1}{\mathcal{D}}{f} \sqsubseteq \transformer{c_1}{\mathcal{D}}{r{\cdot}f} \sqsubseteq \\
  & \word{max}(1,r){\cdot}\transformer{c_1}{\mathcal{D}}{f} \\

   & \proofcomment{\text{Algebra}} \\
 \Leftrightarrow & \word{min}(1,r){\cdot}\sem{e:\word{true}}{\cdot}\transformer{c_1}{\mathcal{D}}{f} \sqsubseteq \\
 & \sem{e:\word{true}}{\cdot}\transformer{c_1}{\mathcal{D}}{r{\cdot}f} \sqsubseteq \\
   & \word{max}(1,r){\cdot}\sem{e:\word{true}}{\cdot}\transformer{c_1}{\mathcal{D}}{f} \\

   & \proofcomment{\text{By I.H on } c_2} \\
 \Rightarrow & \word{min}(1,r){\cdot}\sem{e:\word{true}}{\cdot}\transformer{c_1}{\mathcal{D}}{f} + \\
 & \word{min}(1,r){\cdot}\transformer{c_2}{\mathcal{D}}{f} \sqsubseteq \\
   & \sem{e:\word{true}}{\cdot}\transformer{c_1}{\mathcal{D}}{r{\cdot}f} + \transformer{c_2}{\mathcal{D}}{r{\cdot}f} \sqsubseteq \\
   & \word{max}(1,r){\cdot}\sem{e:\word{true}}{\cdot}\transformer{c_1}{\mathcal{D}}{f} + \\
   & \word{max}(1,r){\cdot}\transformer{c_2}{\mathcal{D}}{f}\\

   & \proofcomment{\text{Algebra}} \\
 \Leftrightarrow & \word{min}(1,r){\cdot}\sem{e:\word{true}}{\cdot}\transformer{c_1}{\mathcal{D}}{f} + \\
 & \word{min}(1,r){\cdot}\sem{e:\word{false}}{\cdot}\transformer{c_2}{\mathcal{D}}{f} \sqsubseteq \\
   & \sem{e:\word{true}}{\cdot}\transformer{c_1}{\mathcal{D}}{r{\cdot}f} + \sem{e:\word{false}}{\cdot}\transformer{c_2}{\mathcal{D}}{r{\cdot}f} \sqsubseteq \\
   & \word{max}(1,r){\cdot}\sem{e:\word{true}}{\cdot}\transformer{c_1}{\mathcal{D}}{f} + \\
   & \word{max}(1,r){\cdot}\sem{e:\word{false}}{\cdot}\transformer{c_2}{\mathcal{D}}{f}\\

   & \proofcomment{\text{Table } \ref{tab:ert}} \\
 \Leftrightarrow & \word{min}(1,r){\cdot}\transformer{\word{if} e \; c_1 \word{else} c_2}{\mathcal{D}}{f} \sqsubseteq \\
 & \transformer{\word{if} e \; c_1 \word{else} c_2}{\mathcal{D}}{r{\cdot}f} \sqsubseteq  \\
   & \word{max}(1,r){\cdot}\transformer{\word{if} e \; c_1 \word{else} c_2}{\mathcal{D}}{f}
\end{array}
$$ 

\vspace{-.5ex}
\paragraph{Nondeterministic branching} 
$$
\begin{array}{ll}
   & \proofcomment{\text{By I.H on } c_1} \\
  \Rightarrow & \word{min}(1,r){\cdot}\transformer{c_1}{\mathcal{D}}{f} \sqsubseteq \transformer{c_1}{\mathcal{D}}{r{\cdot}f} \sqsubseteq \\
  & \word{max}(1,r){\cdot}\transformer{c_1}{\mathcal{D}}{f} \\

   & \proofcomment{\text{By I.H on } c_2} \\
 \Rightarrow & \word{min}(1,r){\cdot}\transformer{c_2}{\mathcal{D}}{f} \sqsubseteq \transformer{c_2}{\mathcal{D}}{r{\cdot}f} \sqsubseteq \\
 & \word{max}(1,r){\cdot}\transformer{c_2}{\mathcal{D}}{f} \\

   & \proofcomment{\text{Algebra}} \\
 \Leftrightarrow & \word{max}\{\word{min}(1,r){\cdot}\transformer{c_1}{\mathcal{D}}{f}, \\
 & \word{min}(1,r){\cdot}\transformer{c_2}{\mathcal{D}}{f}\} \\ 
 & \sqsubseteq \word{max}\{\transformer{c_1}{\mathcal{D}}{r{\cdot}f}, \transformer{c_2}{\mathcal{D}}{r{\cdot}f}\} \sqsubseteq \\
   & \word{max}\{\word{max}(1,r){\cdot}\transformer{c_1}{\mathcal{D}}{f}, \\
   & \word{max}(1,r){\cdot}\transformer{c_2}{\mathcal{D}}{f}\} \\

 \Leftrightarrow & \word{min}(1,r){\cdot}\word{max}\{\transformer{c_1}{\mathcal{D}}{f}, \transformer{c_2}{\mathcal{D}}{f}\} \sqsubseteq \\
   & \word{max}\{\transformer{c_1}{\mathcal{D}}{r{\cdot}f}, \transformer{c_2}{\mathcal{D}}{r{\cdot}f}\} \sqsubseteq \\
   & \word{max}(1,r){\cdot}\word{max}\{\transformer{c_1}{\mathcal{D}}{f}, \transformer{c_2}{\mathcal{D}}{f}\} \\

   & \proofcomment{\text{Table } \ref{tab:ert}} \\
 \Leftrightarrow & \word{min}(1,r){\cdot}\transformer{\word{if} \star \; c_1 \word{else} c_2}{\mathcal{D}}{f} \sqsubseteq \\
 & \transformer{\word{if} \star \; c_1 \word{else} c_2}{\mathcal{D}}{r{\cdot}f} \sqsubseteq  \\
   & \word{max}(1,r){\cdot}\transformer{\word{if} \star \; c_1 \word{else} c_2}{\mathcal{D}}{f}
\end{array}
$$ 

\vspace{-.5ex}
\paragraph{Probabilistic branching} 
$$
\begin{array}{ll}
   & \proofcomment{\text{By I.H on } c_1} \\
  \Rightarrow & \word{min}(1,r){\cdot}\transformer{c_1}{\mathcal{D}}{f} \sqsubseteq \transformer{c_1}{\mathcal{D}}{r{\cdot}f} \sqsubseteq \\
  & \word{max}(1,r){\cdot}\transformer{c_1}{\mathcal{D}}{f} \\

   & \proofcomment{\text{Algebra}} \\
 \Leftrightarrow & \word{min}(1,r){\cdot}p{\cdot}\transformer{c_1}{\mathcal{D}}{f} \sqsubseteq p{\cdot}\transformer{c_1}{\mathcal{D}}{r{\cdot}f} \sqsubseteq \\
 & \word{max}(1,r){\cdot}p{\cdot}\transformer{c_1}{\mathcal{D}}{f} \\

   & \proofcomment{\text{By I.H on } c_2} \\
 \Rightarrow & \word{min}(1,r){\cdot}\transformer{c_2}{\mathcal{D}}{f} \sqsubseteq \transformer{c_2}{\mathcal{D}}{r{\cdot}f} \sqsubseteq \\
 & \word{max}(1,r){\cdot}\transformer{c_2}{\mathcal{D}}{f} \\

   & \proofcomment{\text{Algebra}} \\
 \Leftrightarrow & \word{min}(1,r){\cdot}(1-p){\cdot}\transformer{c_2}{\mathcal{D}}{f} \sqsubseteq \\
 & (1-p){\cdot}\transformer{c_2}{\mathcal{D}}{r{\cdot}f} \sqsubseteq \\
 & \word{max}(1,r){\cdot}(1-p){\cdot}\transformer{c_2}{\mathcal{D}}{f} \\

   & \proofcomment{\text{By composing}} \\
 \Leftrightarrow & \word{min}(1,r){\cdot}(p{\cdot}\transformer{c_1}{\mathcal{D}}{f} + (1-p){\cdot}\transformer{c_2}{\mathcal{D}}{f}) \sqsubseteq \\
   & p{\cdot}\transformer{c_1}{\mathcal{D}}{r{\cdot}f} + (1-p){\cdot}\transformer{c_2}{\mathcal{D}}{r{\cdot}f} \sqsubseteq \\
   & \word{max}(1,r){\cdot}(p{\cdot}\transformer{c_1}{\mathcal{D}}{f} + (1-p){\cdot}\transformer{c_2}{\mathcal{D}}{f})\\

   & \proofcomment{\text{Table } \ref{tab:ert}} \\
 \Leftrightarrow & \word{min}(1,r){\cdot}\transformer{c_1 \oplus_{p} c_2}{\mathcal{D}}{f} \sqsubseteq \transformer{c_1 \oplus_{p} c_2}{\mathcal{D}}{r{\cdot}f} \\
 & \sqsubseteq \word{max}(1,r){\cdot}\transformer{c_1 \oplus_{p} c_2}{\mathcal{D}}{f}
\end{array}
$$ 

\vspace{-.5ex}
\paragraph{Sequence} 
$$
\begin{array}{ll}
   & \proofcomment{\text{By I.H on } c_2} \\
 \Rightarrow & \word{min}(1,r){\cdot}\transformer{c_2}{\mathcal{D}}{f} \sqsubseteq \transformer{c_2}{\mathcal{D}}{r{\cdot}f} \sqsubseteq \\
 & \word{max}(1,r){\cdot}\transformer{c_2}{\mathcal{D}}{f} \\

   & \proofcomment{\text{Monotonicity of \word{ert}}} \\
 \Rightarrow & \transformer{c_1}{\mathcal{D}}{\word{min}(1,r){\cdot}\transformer{c_2}{\mathcal{D}}{f}} \sqsubseteq \\
 & \transformer{c_1}{\mathcal{D}}{\transformer{c_2}{\mathcal{D}}{r{\cdot}f}} \sqsubseteq \\
   & \transformer{c_1}{\mathcal{D}}{\word{max}(1,r){\cdot}\transformer{c_2}{\mathcal{D}}{f}} \\

   & \proofcomment{\text{By I.H on } c_1} \\
 \Rightarrow & \word{min}(1,r){\cdot}\transformer{c_1}{\mathcal{D}}{\transformer{c_2}{\mathcal{D}}{f}} \sqsubseteq \\
 & \transformer{c_1}{\mathcal{D}}{\transformer{c_2}{\mathcal{D}}{r{\cdot}f}} \sqsubseteq \\
   & \word{max}(1,r){\cdot}\transformer{c_1}{\mathcal{D}}{\transformer{c_2}{\mathcal{D}}{f}} \\

   & \proofcomment{\text{Table } \ref{tab:ert}} \\
 \Leftrightarrow & \word{min}(1,r){\cdot}\transformer{c_1; c_2}{\mathcal{D}}{f} \sqsubseteq \transformer{c_1; c_2}{\mathcal{D}}{r{\cdot}f} \sqsubseteq \\
 & \word{max}(1,r){\cdot}\transformer{c_1; c_2}{\mathcal{D}}{f}
\end{array}
$$ 

\vspace{-.5ex}
\paragraph{Loop}
Consider the characteristic function w.r.t the expectation $f$
$$
F_{f} \defineas \sem{e:{\word{true}}}{\cdot}\transformer{c}{\mathcal{D}}{X} + \sem{e:\word{false}}{\cdot}f
$$
For all $n \in \mathbb{N}$, we first need to show that 
$$
\word{min}(1,r){\cdot}{\word{sup}}_{n}F^{n}_{f}(\mathbf{0}) \sqsubseteq {\word{sup}}_{n}F^{n}_{r{\cdot}f}(\mathbf{0}) \sqsubseteq \word{max}(1,r){\cdot}{\word{sup}}_{n}F^{n}_{f}(\mathbf{0})
$$
where $F^{0}_{f} \defineas \word{id}$ and $F^{k+1}_f \defineas F_f \circ F^{k}_f$. Because $F_f$ and $F_{r{\cdot}f}$ are monotone because of the monotonicity of \word{ert}, then using Kleene’s Fixed Point Theorem, if holds that 
$$
\begin{array}{l}
\word{min}(1,r){\cdot}\transformer{\word{while} e \; c}{\mathcal{D}}{f} \sqsubseteq \transformer{\word{while} e \; c}{\mathcal{D}}{r{\cdot}f} \\
\sqsubseteq \word{max}(1,r){\cdot}\transformer{\word{while} e \; c}{\mathcal{D}}{f}
\end{array}
$$
To prove we need to show the following holds for all $n \in \mathbb{N}$
$$
\word{min}(1,r){\cdot}F^{n}_{f}(\mathbf{0}) \sqsubseteq F^{n}_{r{\cdot}f}(\mathbf{0}) \sqsubseteq 
\word{max}(1,r){\cdot}F^{n}_{f}(\mathbf{0})
$$
The proof is done by induction on the natural value $n$. 
\begin{itemize}
	\item \emph{Base case.} For $n = 0$, we have
$$
\begin{array}{ll}
   & \word{min}(1,r){\cdot}F^{0}_{f}(\mathbf{0}) \sqsubseteq F^{0}_{r{\cdot}f}(\mathbf{0}) \sqsubseteq \word{max}(1,r){\cdot}F^{0}_{f}(\mathbf{0}) \\

 \Leftrightarrow & \word{min}(1,r){\cdot}\mathbf{0} \sqsubseteq \mathbf{0} \sqsubseteq \word{max}(1,r){\cdot}\mathbf{0} \\
 \Leftrightarrow & \mathbf{0} \sqsubseteq \mathbf{0} \sqsubseteq \mathbf{0}
\end{array}
$$ 
	\item \emph{Induction case.} Assume that 
$$
\begin{array}{ll}
   & \proofcomment{\text{By I.H on } n} \\
   & \word{min}(1,r){\cdot}F^{n}_{f}(\mathbf{0}) \sqsubseteq F^{n}_{r{\cdot}f}(\mathbf{0}) \sqsubseteq \word{max}(1,r){\cdot}F^{n}_{f}(\mathbf{0}) \\

   & \proofcomment{\text{Monotonicity of \word{ert}}} \\
 \Rightarrow & \transformer{c}{\mathcal{D}}{\word{min}(1,r){\cdot}F^{n}_{f}(\mathbf{0})} \sqsubseteq \transformer{c}{\mathcal{D}}{F^{n}_{r{\cdot}f}(\mathbf{0})} \sqsubseteq \\
 & \transformer{c}{\mathcal{D}}{\word{max}(1,r){\cdot}F^{n}_{f}(\mathbf{0})} \\

   & \proofcomment{\text{By I.H on } c} \\
 \Rightarrow & \word{min}(1,r){\cdot}\transformer{c}{\mathcal{D}}{F^{n}_{f}(\mathbf{0})} \sqsubseteq \transformer{c}{\mathcal{D}}{F^{n}_{r{\cdot}f}(\mathbf{0})} \sqsubseteq \\
 & \word{max}(1,r){\cdot}\transformer{c}{\mathcal{D}}{F^{n}_{f}(\mathbf{0})} \\

   & \proofcomment{\text{Algebra}} \\
 \Leftrightarrow & \word{min}(1,r){\cdot}\sem{e:\word{true}}{\cdot}\transformer{c}{\mathcal{D}}{F^{n}_{f}(\mathbf{0})} \sqsubseteq \\
 & \sem{e:\word{true}}{\cdot}\transformer{c}{\mathcal{D}}{F^{n}_{r{\cdot}f}(\mathbf{0})} \sqsubseteq \\
   & \word{max}(1,r){\cdot}\sem{e:\word{true}}{\cdot}\transformer{c}{\mathcal{D}}{F^{n}_{f}(\mathbf{0})} \\

   & \proofcomment{\text{By I.H on } n} \\
 \Rightarrow & \sem{e:\word{false}}{\cdot}\word{min}(1,r){\cdot}F^{n}_{f}(\mathbf{0}) + \\
 & \word{min}(1,r){\cdot}\sem{e:\word{true}}{\cdot}\transformer{c}{\mathcal{D}}{F^{n}_{f}(\mathbf{0})} \sqsubseteq \\
   & \sem{e:\word{false}}{\cdot}F^{n}_{r{\cdot}f}(\mathbf{0}) + \sem{e:\word{true}}{\cdot}\transformer{c}{\mathcal{D}}{F^{n}_{r{\cdot}f}(\mathbf{0})} \sqsubseteq \\
   & \sem{e:\word{false}}{\cdot}\word{max}(1,r){\cdot}F^{n}_{f}(\mathbf{0}) + \\
   & \word{max}(1,r){\cdot}\sem{e:\word{true}}{\cdot}\transformer{c}{\mathcal{D}}{F^{n}_{f}(\mathbf{0})} \\

   & \proofcomment{\text{Definitions of } F_f \text{ and } F_{r{\cdot}f}} \\
 \Leftrightarrow & \word{min}(1,r){\cdot}F^{n+1}_f(\mathbf{0}) \sqsubseteq F^{n+1}_{r{\cdot}f}(\mathbf{0}) \sqsubseteq \word{max}(1,r){\cdot}F^{n+1}_f(\mathbf{0})

\end{array}
$$ 
\end{itemize}

\vspace{-.5ex}
\paragraph{Procedure call}
Because ${\word{call}}^{\mathcal{D}}_{n}P$ (defined in Section \ref{subsec:supfixpoint}) is closed command for all $n \in \mathbb{N}$, by I.H we get
$$
\begin{array}{l}
\word{min}(1,r){\cdot}{\word{ert}}[{\word{call}}^{\mathcal{D}}_{n}P](f) \sqsubseteq {\word{ert}}[{\word{call}}^{\mathcal{D}}_{n}P](r{\cdot}f) \sqsubseteq \\
\word{max}(1,r){\cdot}{\word{ert}}[{\word{call}}^{\mathcal{D}}_{n}P](f)
\end{array}
$$
On the other hand, by Theorem \ref{theo:limitapproximation}, $\word{ert}[{\word{call}P,\mathcal{D}}](r{\cdot}f) = {\word{sup}}_{n}{\word{ert}}[{\word{call}}^{\mathcal{D}}_{n}P](r{\cdot}f)$, where ${\word{call}}^{\mathcal{D}}_{n}P$ is the $n^{th}$-inlining of procedure call, we have 
$$
\begin{array}{ll}
   & {\word{sup}}_{n}(\word{min}(1,r){\cdot}{\word{ert}}[{\word{call}}^{\mathcal{D}}_{n}P](f)) \sqsubseteq \\
   & {\word{sup}}_{n}{\word{ert}}[{\word{call}}^{\mathcal{D}}_{n}P](r{\cdot}f) \sqsubseteq \\
   & {\word{sup}}_{n}(\word{max}(1,r){\cdot}{\word{ert}}[{\word{call}}^{\mathcal{D}}_{n}P](f)) \\

 \Leftrightarrow & {\word{sup}}_{n}(\word{min}(1,r){\cdot}{\word{ert}}[{\word{call}}^{\mathcal{D}}_{n}P](f)) \sqsubseteq \\
 & \word{ert}[{\word{call}P,\mathcal{D}}](r{\cdot}f) \sqsubseteq \\
   & {\word{sup}}_{n}(\word{max}(1,r){\cdot}{\word{ert}}[{\word{call}}^{\mathcal{D}}_{n}P](f)) \\

 \Leftrightarrow & \word{min}(1,r){\cdot}{\word{sup}}_{n}({\word{ert}}[{\word{call}}^{\mathcal{D}}_{n}P](f)) \sqsubseteq \\
 & \word{ert}[{\word{call}P,\mathcal{D}}](r{\cdot}f) \sqsubseteq \\
   & \word{max}(1,r){\cdot}{\word{sup}}_{n}({\word{ert}}[{\word{call}}^{\mathcal{D}}_{n}P](f)) \\

   & \proofcomment{\text{Theorem \ref{theo:limitapproximation}}} \\
 \Leftrightarrow & \word{min}(1,r){\cdot}\word{ert}[{\word{call}P,\mathcal{D}}](f) \sqsubseteq \word{ert}[{\word{call}P,\mathcal{D}}](r{\cdot}f) \sqsubseteq \\
   & \word{max}(1,r){\cdot}\word{ert}[{\word{call}P,\mathcal{D}}](f) \\
\end{array}
$$ 
\vspace{-.5ex}
\subsubsection{Preservation of infinity}
The proof is done by induction on the structure of the command $c$. Note that $c$ is \emph{abort} free, that is, $c$ contains no \word{abort} commands.

\vspace{-.5ex}
\paragraph{Skip} 
$$
\begin{array}{ll}
& \transformer{\word{skip}}{\mathcal{D}}{\infty} \\
   & \proofcomment{\text{Table } \ref{tab:ert}} \\
 = & \infty
\end{array}
$$

\vspace{-.5ex}
\paragraph{Assert} 
$$
\begin{array}{ll}
& \transformer{\word{assert} e}{\mathcal{D}}{\infty} \\
   & \proofcomment{\text{Table } \ref{tab:ert}} \\
 = & \sem{e:\word{true}}{\cdot}\infty = \infty
\end{array}
$$ 

\vspace{-.5ex}
\paragraph{Weaken} 
$$
\begin{array}{ll}
& \transformer{\word{weaken}}{\mathcal{D}}{\infty} \\
   & \proofcomment{\text{Table } \ref{tab:ert}} \\
 = & \infty
\end{array}
$$

\vspace{-.5ex}
\paragraph{Tick}
$$
\begin{array}{ll}
& \transformer{\word{tick}(q)}{\mathcal{D}}{\infty} \\
   & \proofcomment{\text{Table } \ref{tab:ert}} \\
 = & \mathbf{q} + \infty = \infty
\end{array}
$$ 

\vspace{-.5ex}
\paragraph{Assignment}
$$
\begin{array}{ll}
& \transformer{x = e}{\mathcal{D}}{\infty} \\
   & \proofcomment{\text{Table } \ref{tab:ert}} \\
 = & \infty[e/x] = \infty
\end{array}
$$ 

\vspace{-.5ex}
\paragraph{Sampling} 
$$
\begin{array}{ll}
& \transformer{x = e \word{bop} R}{\mathcal{D}}{\infty} \\
   & \proofcomment{\text{Table } \ref{tab:ert}} \\
 = & \lambda\sigma.\expt{\dist{R}}{\lambda v.(\infty)(\sigma[e \word{bop} v/x])} \\ 
 = & \expt{\dist{R}}{\infty} = \infty \\ 
\end{array}
$$ 

\vspace{-.5ex}
\paragraph{If}
$$
\begin{array}{ll}
& \transformer{\word{if} e \; c_1 \word{else} c_2}{\mathcal{D}}{\infty} \\
   & \proofcomment{\text{Table } \ref{tab:ert}} \\
 = & \sem{e:\word{true}}{\cdot}\transformer{c_1}{\mathcal{D}}{\infty} + \\
   & \sem{e:\word{false}}{\cdot}\transformer{c_2}{\mathcal{D}}{\infty} \\

   & \proofcomment{\text{By I.H on } c_1 \text{ and } c_2} \\
 = & \sem{e:\word{true}}{\cdot}\infty + \sem{e:\word{false}}{\cdot}\infty \\
 = & \infty 
\end{array}
$$ 

\vspace{-.5ex}
\paragraph{Nondeterministic branching} 
$$
\begin{array}{ll}
& \transformer{\word{if} \star \; c_1 \word{else} c_2}{\mathcal{D}}{\infty} \\
   & \proofcomment{\text{Table } \ref{tab:ert}} \\
 = & {\word{max}}\{\transformer{c_1}{\mathcal{D}}{\infty},\transformer{c_2}{\mathcal{D}}{\infty}\} \\

   & \proofcomment{\text{By I.H on } c_1 \text{ and } c_2} \\
 = & {\word{max}}\{\infty,\infty\} \\
 = & \infty 
\end{array}
$$ 

\vspace{-.5ex}
\paragraph{Probabilistic branching} 
$$
\begin{array}{ll}
& \transformer{c_1 \oplus_{p} c_2}{\mathcal{D}}{\infty} \\
   & \proofcomment{\text{Table } \ref{tab:ert}} \\
 = & p{\cdot}\transformer{c_1}{\mathcal{D}}{\infty} + (1-p){\cdot}\transformer{c_2}{\mathcal{D}}{\infty} \\

   & \proofcomment{\text{By I.H on } c_1 \text{ and } c_2} \\
 = & p{\cdot}\infty + (1-p){\cdot}\infty = 1{\cdot}\infty \\
 = & \infty 
\end{array}
$$ 

\vspace{-.5ex}
\paragraph{Sequence} 
$$
\begin{array}{ll}
& \transformer{c_1; c_2}{\mathcal{D}}{\infty} \\
   & \proofcomment{\text{Table } \ref{tab:ert}} \\
 = & \transformer{c_1}{\mathcal{D}}{\transformer{c_2}{\mathcal{D}}{\infty}} \\

   & \proofcomment{\text{By I.H on } c_2} \\
 = & \transformer{c_1}{\mathcal{D}}{\infty} \\

   & \proofcomment{\text{By I.H on } c_1} \\
 = & \infty 
\end{array}
$$ 

\vspace{-.5ex}
\paragraph{Loop}
Consider the characteristic function w.r.t the expectation $\infty$
$$
F_{\infty} \defineas \sem{e:{\word{true}}}{\cdot}\transformer{c}{\mathcal{D}}{X} + \sem{e:\word{false}}{\cdot}\infty
$$
By Theorem \ref{theo:loopsup}, we have $\transformer{\word{while} e \; c}{\mathcal{D}}{\infty} = {\word{sup}}_{n}F^{n}_{\infty}(\mathbf{0})$, thus we show ${\word{sup}}_{n}F^{n}_{\infty}(\mathbf{0}) = \infty$. The proof is done by contradiction. Assume that ${\word{sup}}_{n}F^{n}_{\infty}(\mathbf{0}) < \infty$. Hence, there exists $M < \infty$ such that $\forall n \in \mathbb{N}.\sigma \in \Sigma. F^{n}_{\infty}(\mathbf{0})(\sigma) \leq M$. By definition of $F^{n}_{\infty}(\mathbf{0})$, it is 
$$
\sem{e:{\word{true}}}(\sigma){\cdot}\transformer{c}{\mathcal{D}}{F^{n-1}_{\infty}(\mathbf{0})}(\sigma) + \sem{e:\word{false}}(\sigma){\cdot}\infty
$$
If $\sigma \not \models e$ then $\sem{e:{\word{true}}}(\sigma) = 0$ and $\sem{e:\word{false}}(\sigma) = 1$. We get $F^{n}_{\infty}(\mathbf{0})(\sigma) = \infty$. Therefore, the assumption is contradictory, or $\transformer{\word{while} e \; c}{\mathcal{D}}{\infty} = \infty$.

\vspace{-.5ex}
\paragraph{Procedure call}
Because ${\word{call}}^{\mathcal{D}}_{n}P$ (defined in Section \ref{subsec:supfixpoint}) is closed command for all $n \in \mathbb{N}$, by I.H we get
$$
{\word{ert}}[{\word{call}}^{\mathcal{D}}_{n}P](\infty) = \infty
$$
On the other hand, by Theorem \ref{theo:limitapproximation}, we have 
$$
\begin{array}{ll}
& \transformer{\word{call}P}{\mathcal{D}}{\infty} \\
 = & {\word{sup}}_{n}{\word{ert}}[{\word{call}}^{\mathcal{D}}_{n}P](\infty) \\

   & \proofcomment{\text{Observation above}} \\
 = & {\word{sup}}_{n}\infty \\
 = & \infty
\end{array}
$$

\vspace{-1.5ex}
\section{Simulation-based comparison}
\label{app:simulation}
We measured the expected numbers of ticks for the collected $\numexamp$ challenging examples 
with different looping and recursion patterns that use probabilistic branching and sampling 
assignments. Then we compared the results to our computed bounds. 
The following figures show the complete list of plots of these comparisons. In the plots, 
we draw a candlestick chart for each example as well as the measured mean value and the 
corresponding inferred bound of the number of ticks. The candlesticks represent the highest 
($h$), lowest ($l$) measured values, $75\%$, and $25\%$ of the interval $(h-l)$. 

Our experiments indicate that the computed bounds are close to the measured expected values needed 
for both linear and polynomial expected bound programs. 
However, there is no guarantee that \toolname{} infers asymptotically tight bounds 
and there is many classes of bounds that \toolname{} cannot derive. For example, 
\progname{C4B\_t15} has expected logarithmic 
expected cost, thus the best bound that \toolname{} can derive is a linear bound. 
Similarly, $\interval{{0}{,}{n}}{+}\interval{{0}{,}{m}}$ is the best bound that can be inferred 
for \progname{condand} whose expected cost defined by function 
$2{\cdot}\text{min}\{\interval{{0}{,}{n}},\interval{{0}{,}{m}}\}$. Another source of imprecise constant 
factors in the bounds is rounding. Program \progname{robot} has an imprecise constant 
factor because it uses a profound depth of nested probabilistic choice. 

Some programs whose worst-case resource usage is only triggered by some 
particular inputs, called \emph{worst-case inputs}. Thus, this makes the difference between the measured values and the inferred bounds big. For instance, the worst-case inputs of \progname{C4B\_t30} are values of $x$ and $y$ such that they are equal. While we measured the expected values with $x$ varying from $1000$ to $5000$ and $y$ is fixed to be $3000$.
\begin{figure}[th!]
\centering
\includegraphics[width=0.5\textwidth]{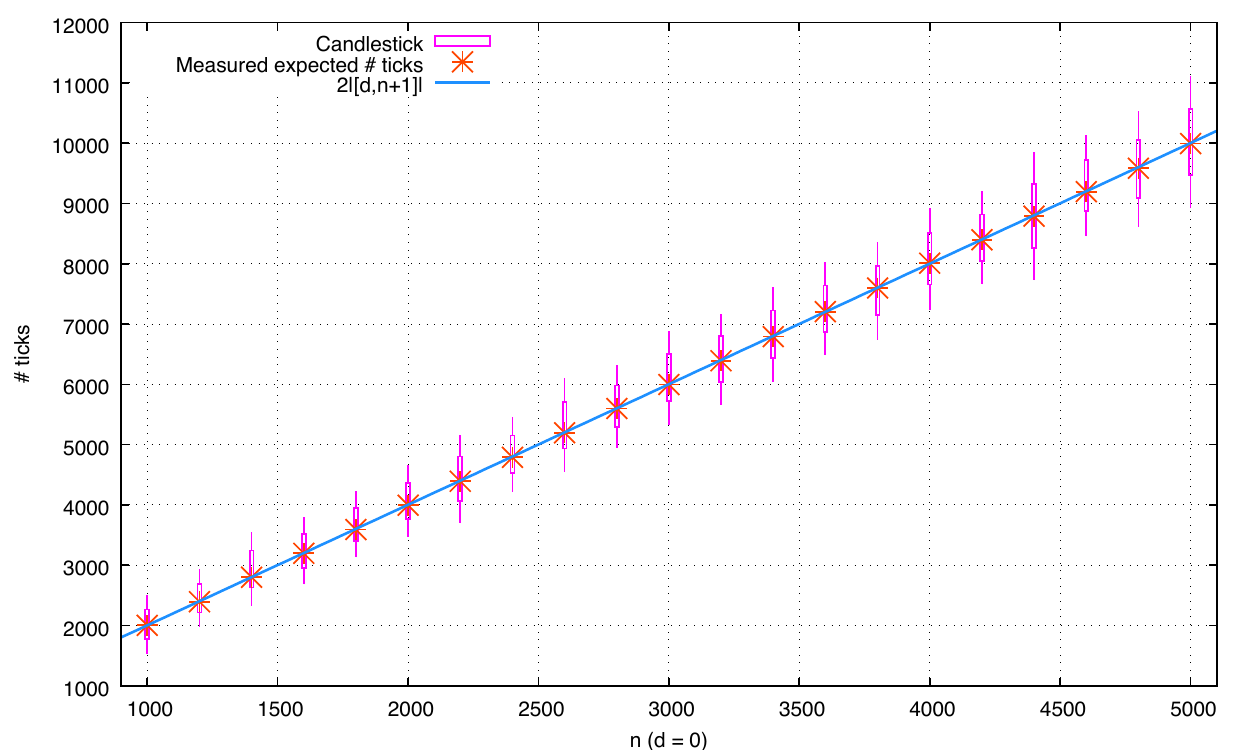}
\caption{Example \progname{2drwalk}.}
\label{fig:2drwalk}
\end{figure}
\begin{figure}[th!]
\centering
\includegraphics[width=0.5\textwidth]{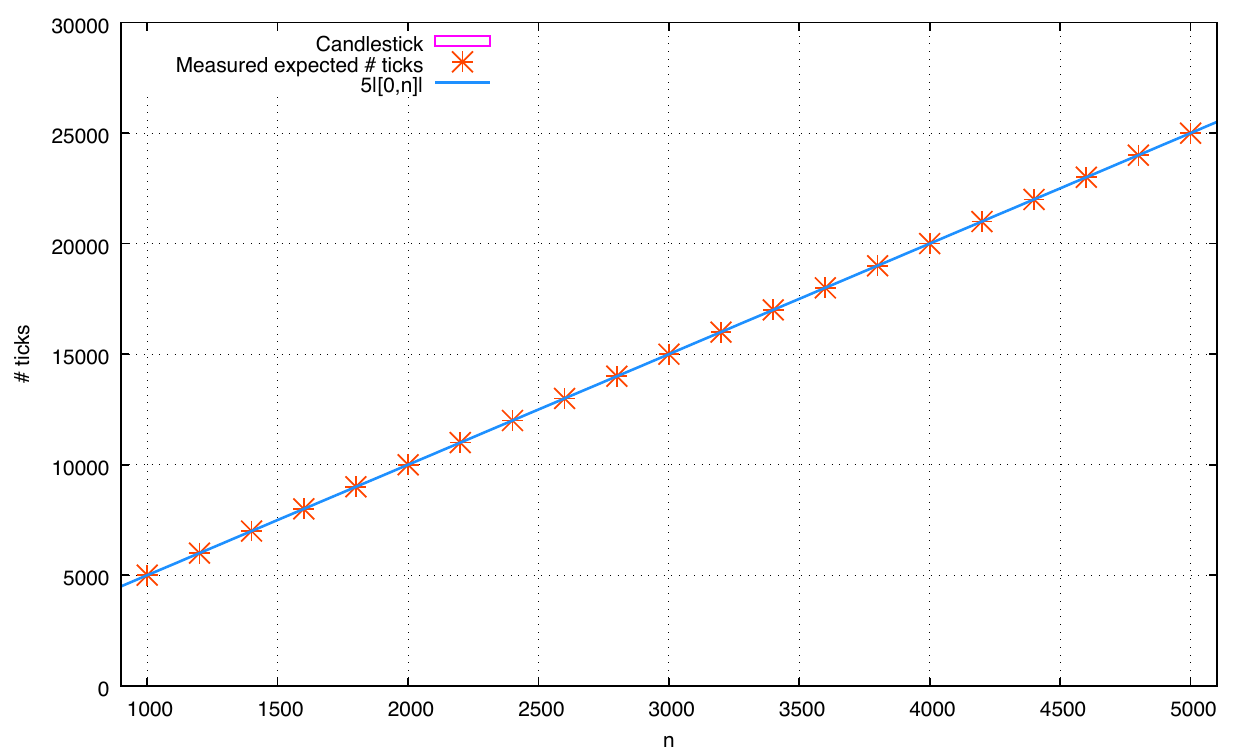}
\caption{Example \progname{bayesian}.}
\label{fig:bayesian}
\end{figure}
\begin{figure}[th!]                                                 
\centering                                                     
\includegraphics[width=0.5\textwidth]{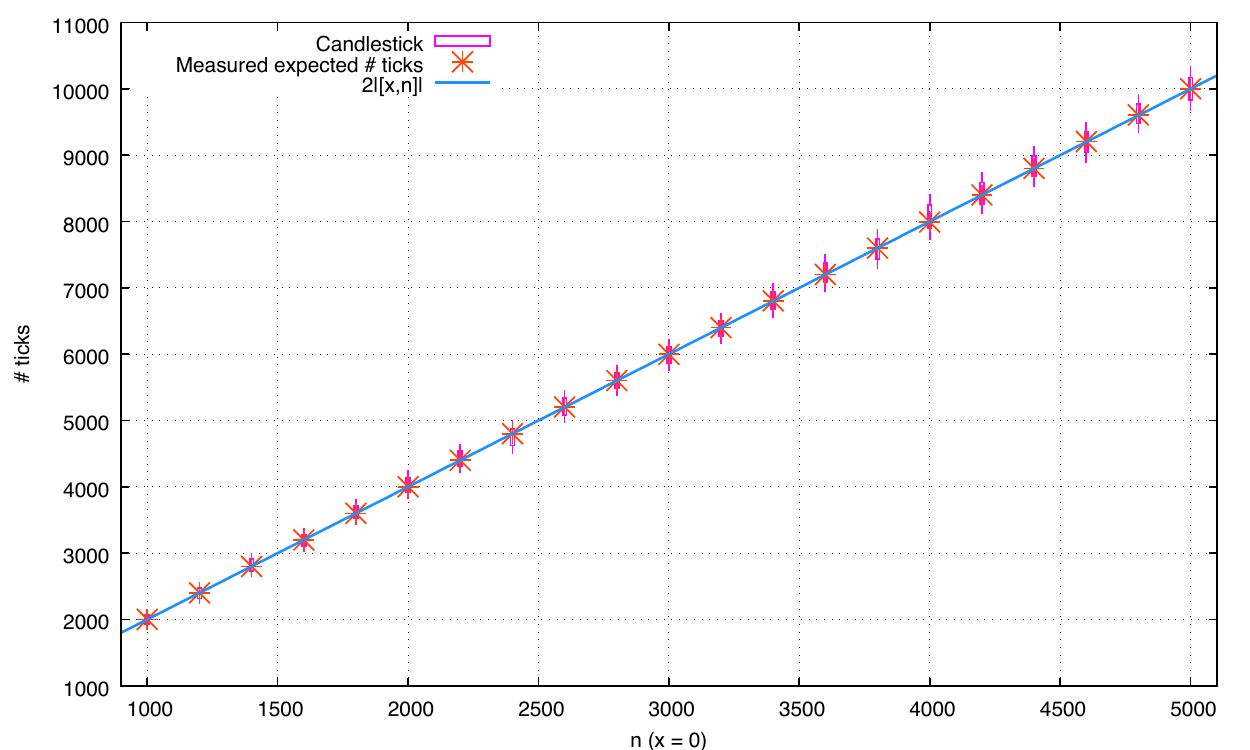}     
\caption{Example \progname{ber}.}                                                                           
\label{fig:ber}                                           
\end{figure}
\begin{figure}[th!]                                                                                              
\centering                                                     
\includegraphics[width=0.5\textwidth]{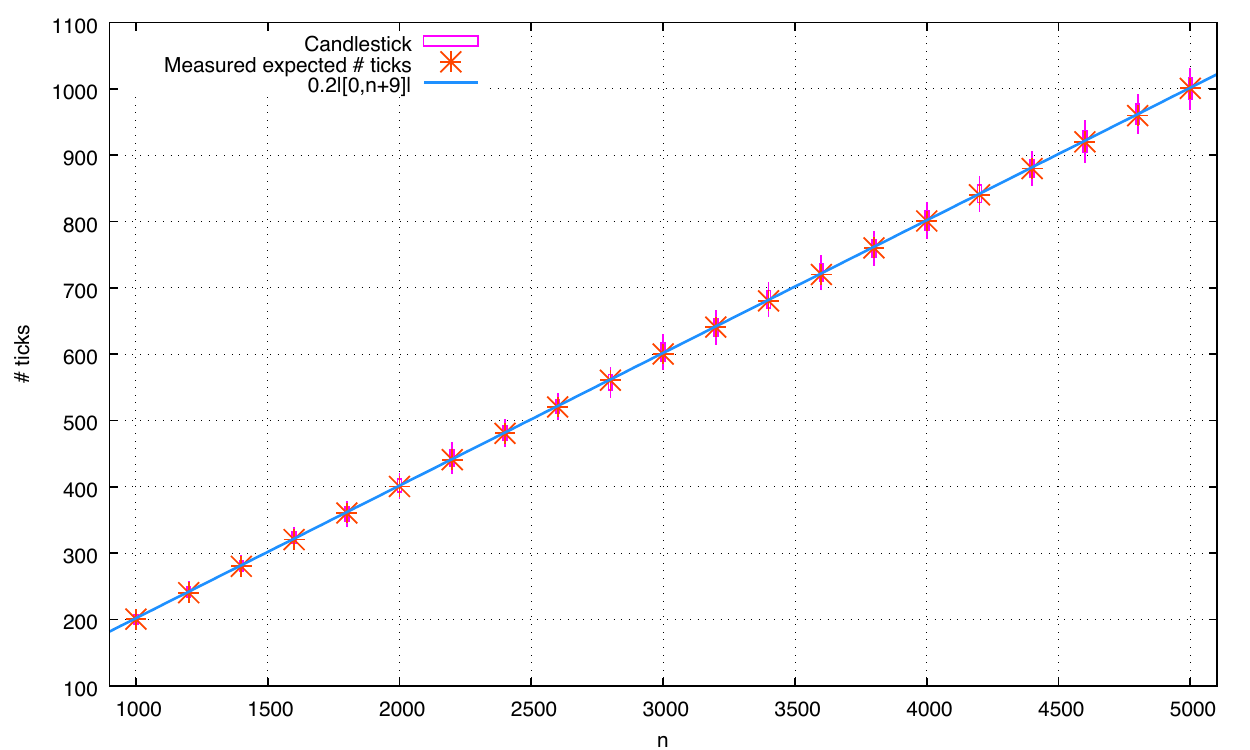}                                                       
\caption{Example \progname{bin}.}                                                                                
\label{fig:bin}                                                
\end{figure}

\clearpage

\begin{figure}[th!]                                                                                              
\centering                                                     
\includegraphics[width=0.5\textwidth]{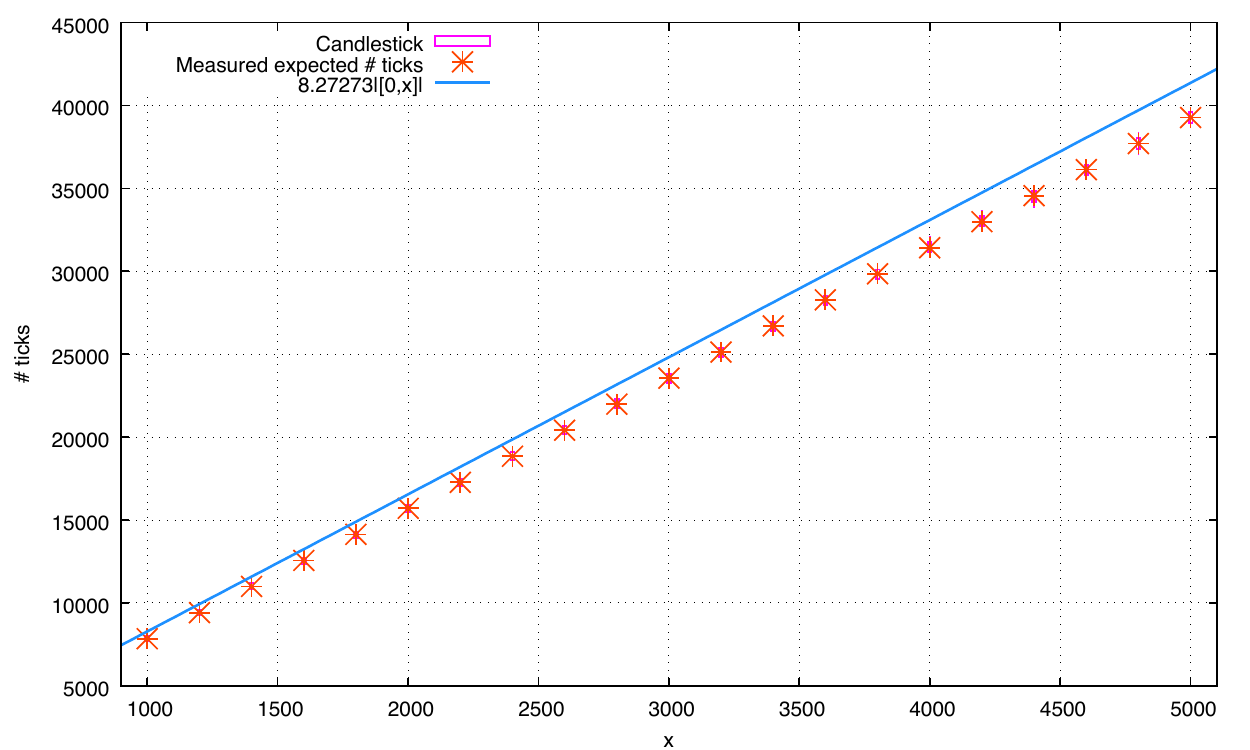}                                                       
\caption{Example \progname{C4B\_t09}.}                                                                                
\label{fig:c4bt09}                                                
\end{figure}                                                   
\begin{figure}[th!]                                            
\centering                                                     
\includegraphics[width=0.5\textwidth]{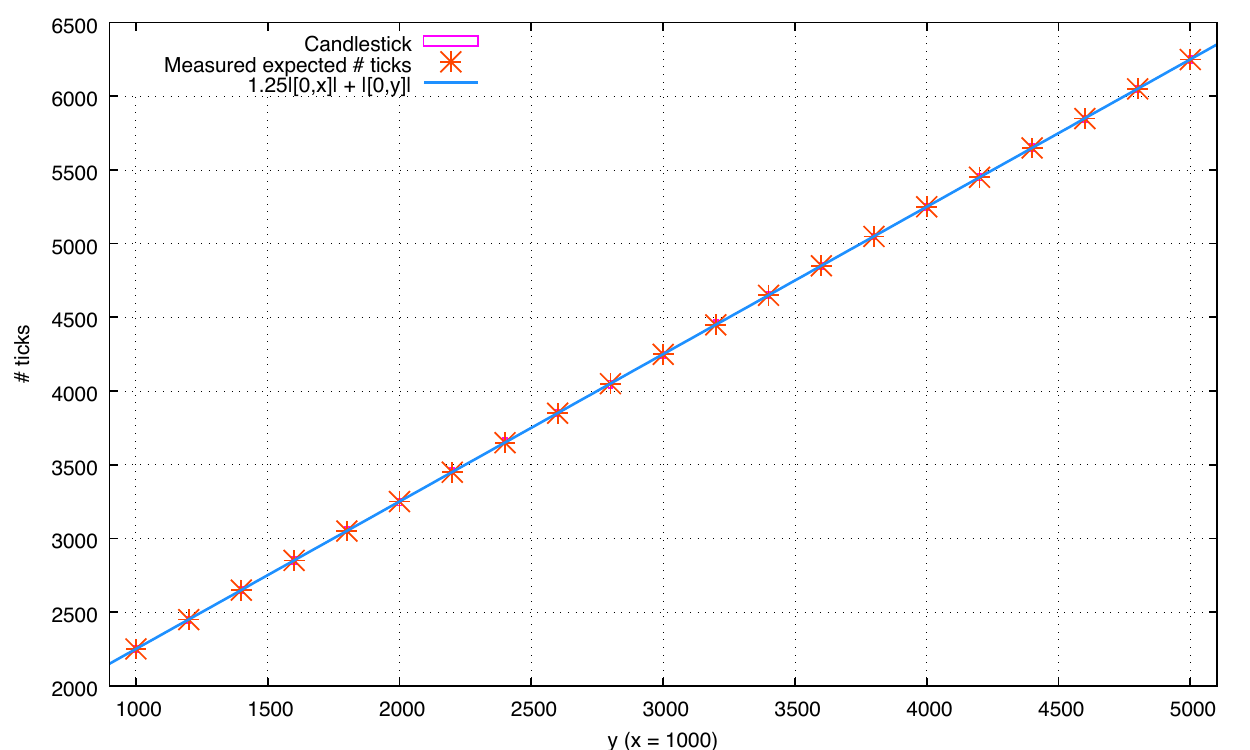} 
\caption{Example \progname{C4B\_t13}.}                         
\label{fig:c4bt13}                                             
\end{figure}
\begin{figure}[th!]                                            
\centering                                                     
\includegraphics[width=0.5\textwidth]{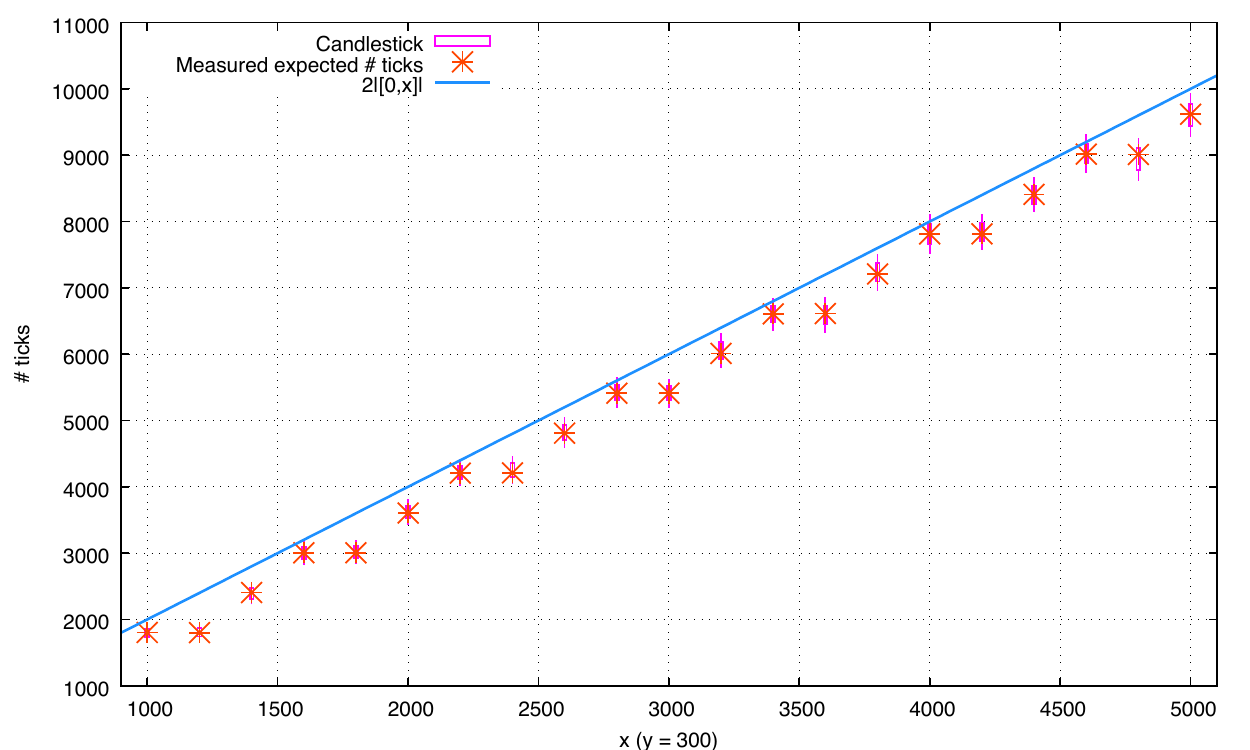} 
\caption{Example \progname{C4B\_t15}.}                         
\label{fig:c4bt15}                                             
\end{figure}
\begin{figure}[th!]                                            
\centering                                                     
\includegraphics[width=0.5\textwidth]{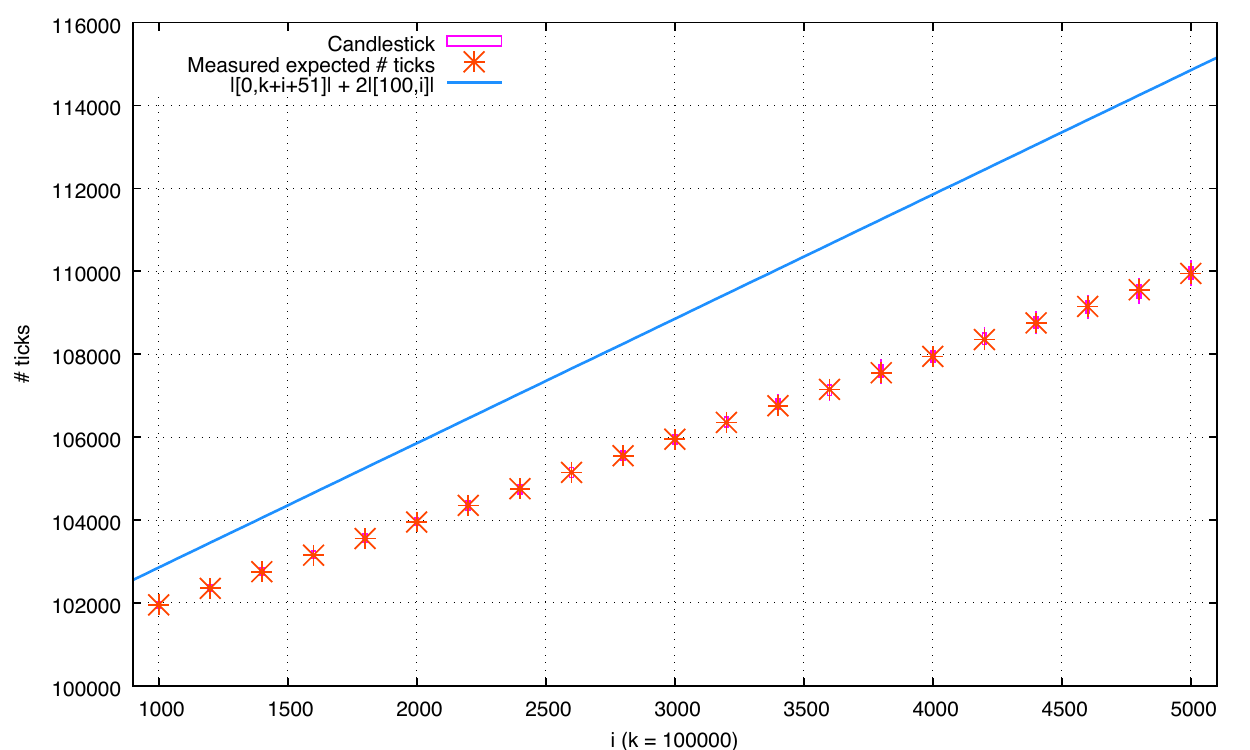} 
\caption{Example \progname{C4B\_t19}.}                         
\label{fig:c4bt19}                                             
\end{figure}
\begin{figure}[th!]                                            
\centering                                                     
\includegraphics[width=0.5\textwidth]{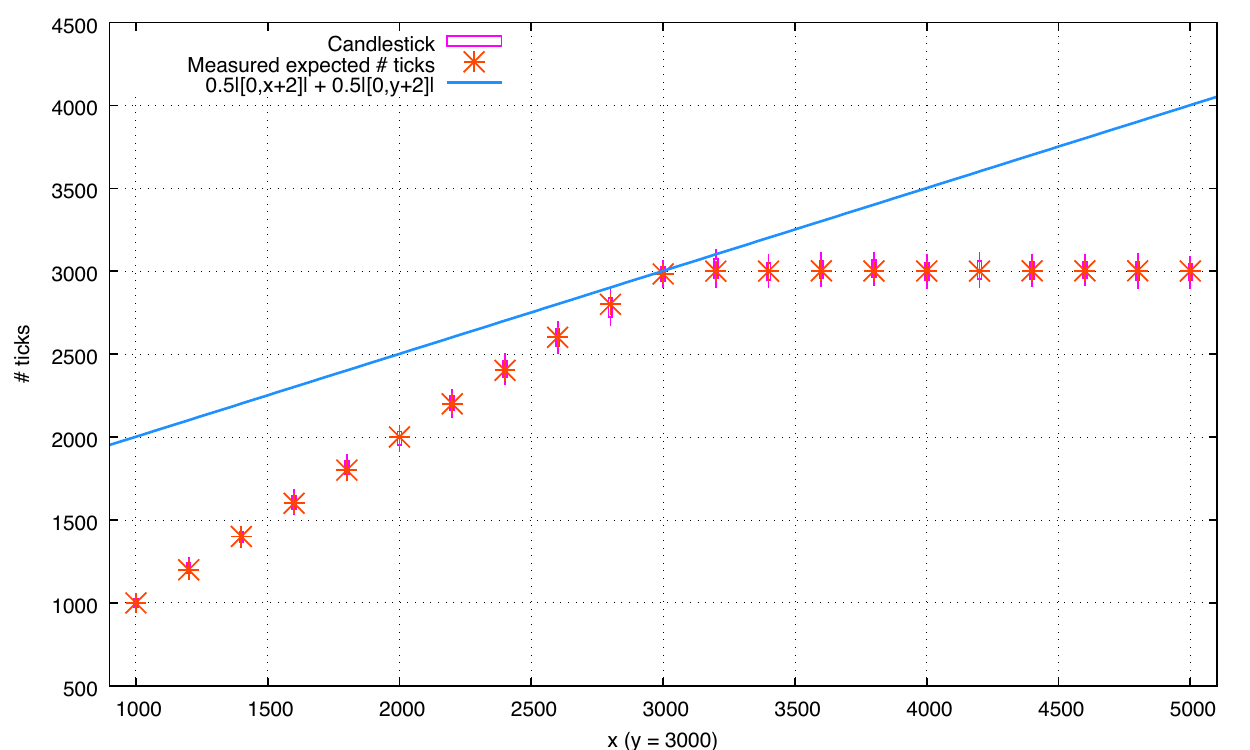} 
\caption{Example \progname{C4B\_t30}.}                         
\label{fig:c4bt30}                                             
\end{figure}
\begin{figure}[th!]                                            
\centering                                                     
\includegraphics[width=0.5\textwidth]{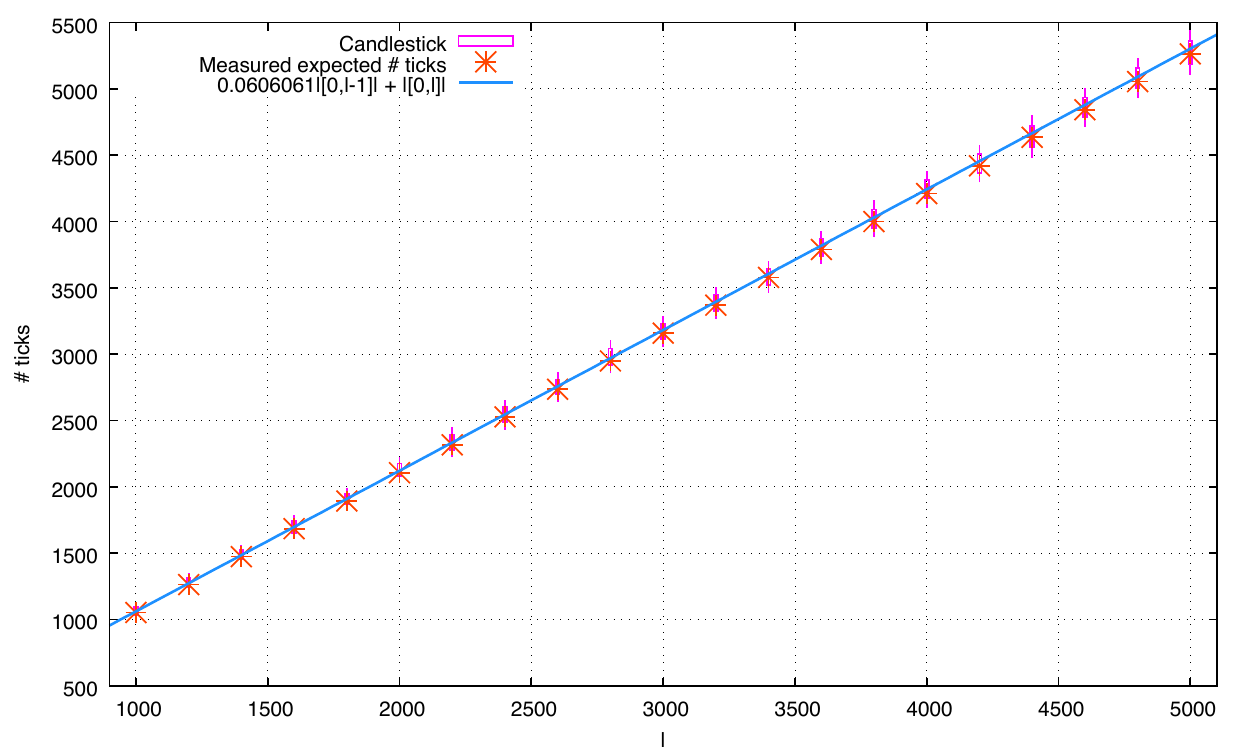} 
\caption{Example \progname{C4B\_t61}.}                         
\label{fig:c4bt61}                                             
\end{figure}

\clearpage

\begin{figure}[th!]                                            
\centering                                                     
\includegraphics[width=0.5\textwidth]{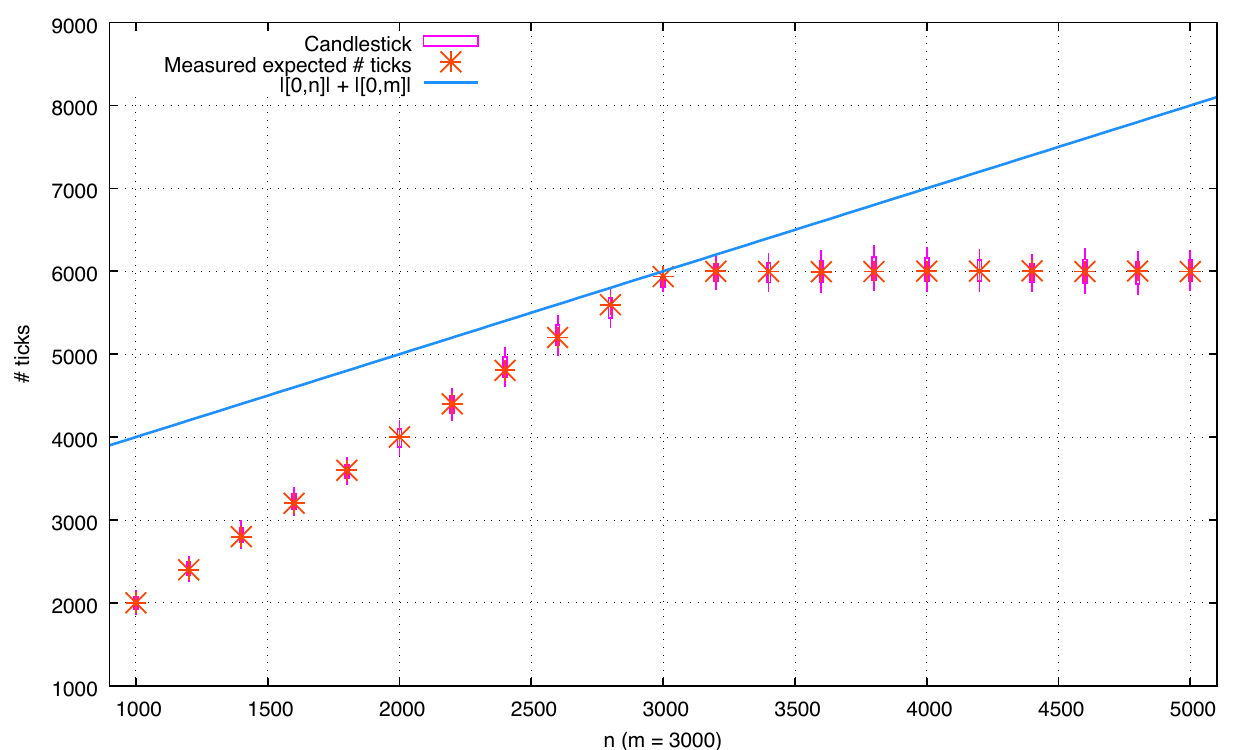} 
\caption{Example \progname{condand}.}                         
\label{fig:condand}                                             
\end{figure}
\begin{figure}[th!]                                            
\centering                                                     
\includegraphics[width=0.5\textwidth]{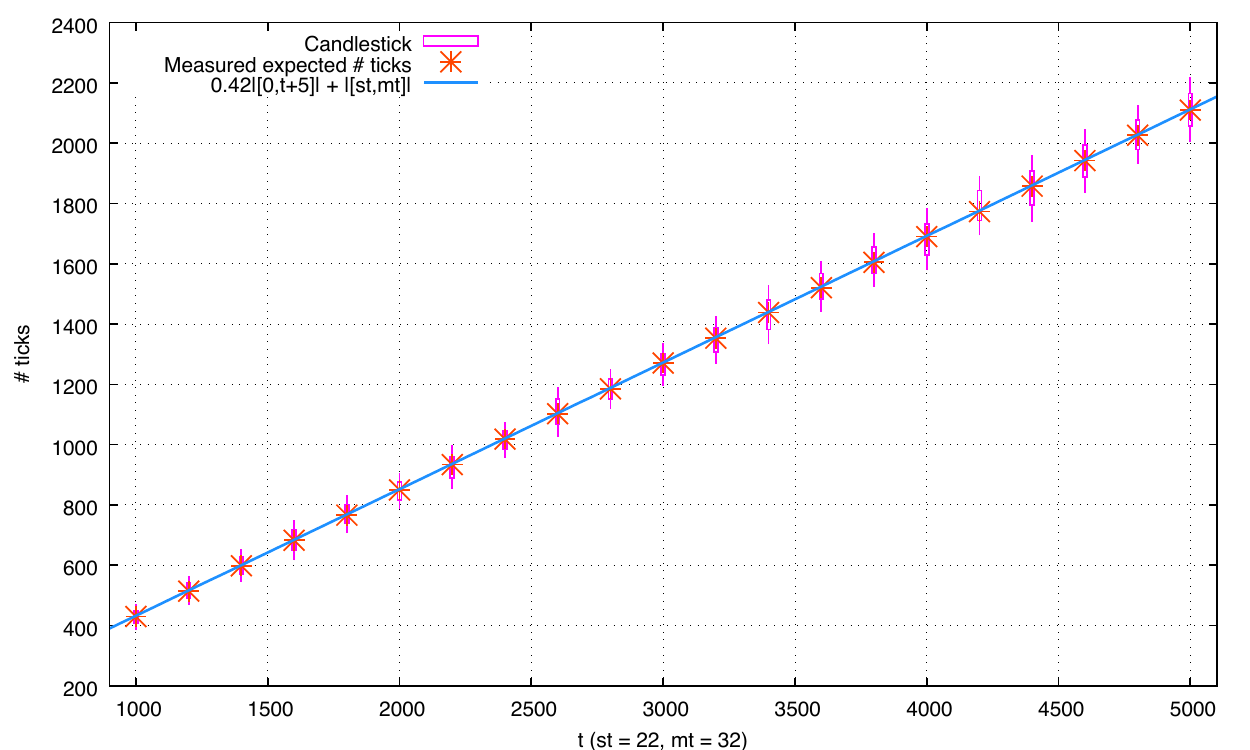} 
\caption{Example \progname{cooling}.}                         
\label{fig:cooling}                                             
\end{figure}
\begin{figure}[th!]                                            
\centering                                                     
\includegraphics[width=0.5\textwidth]{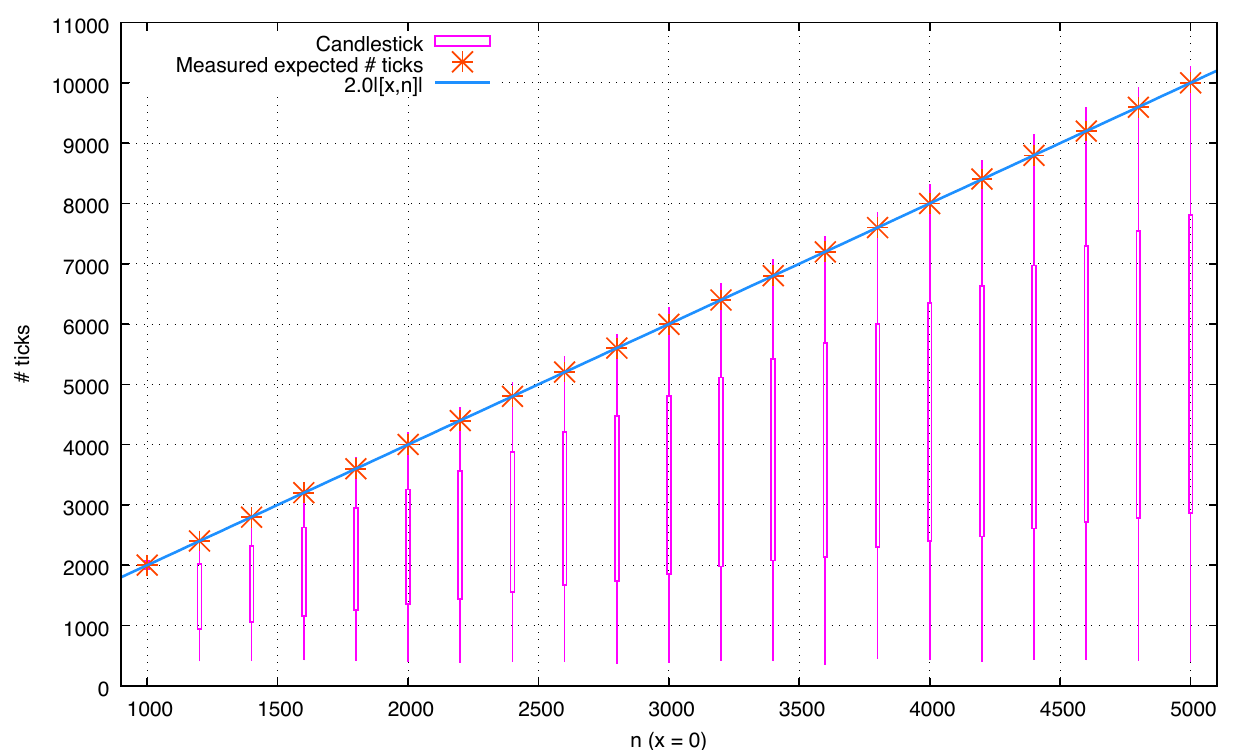} 
\caption{Example \progname{fcall}.}                         
\label{fig:fcall}                                             
\end{figure}
\begin{figure}[th!]                                            
\centering                                                     
\includegraphics[width=0.5\textwidth]{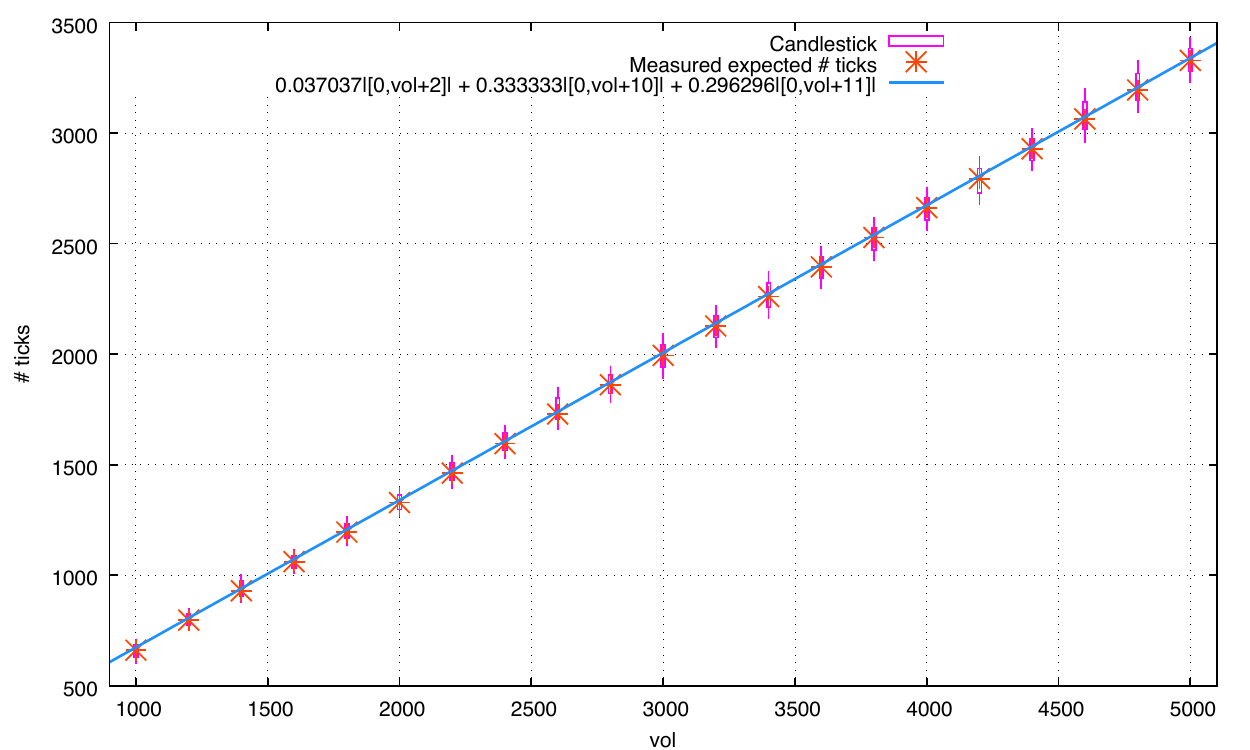} 
\caption{Example \progname{filling}.}                         
\label{fig:filling}                                             
\end{figure}
\begin{figure}[th!]                                            
\centering                                                     
\includegraphics[width=0.5\textwidth]{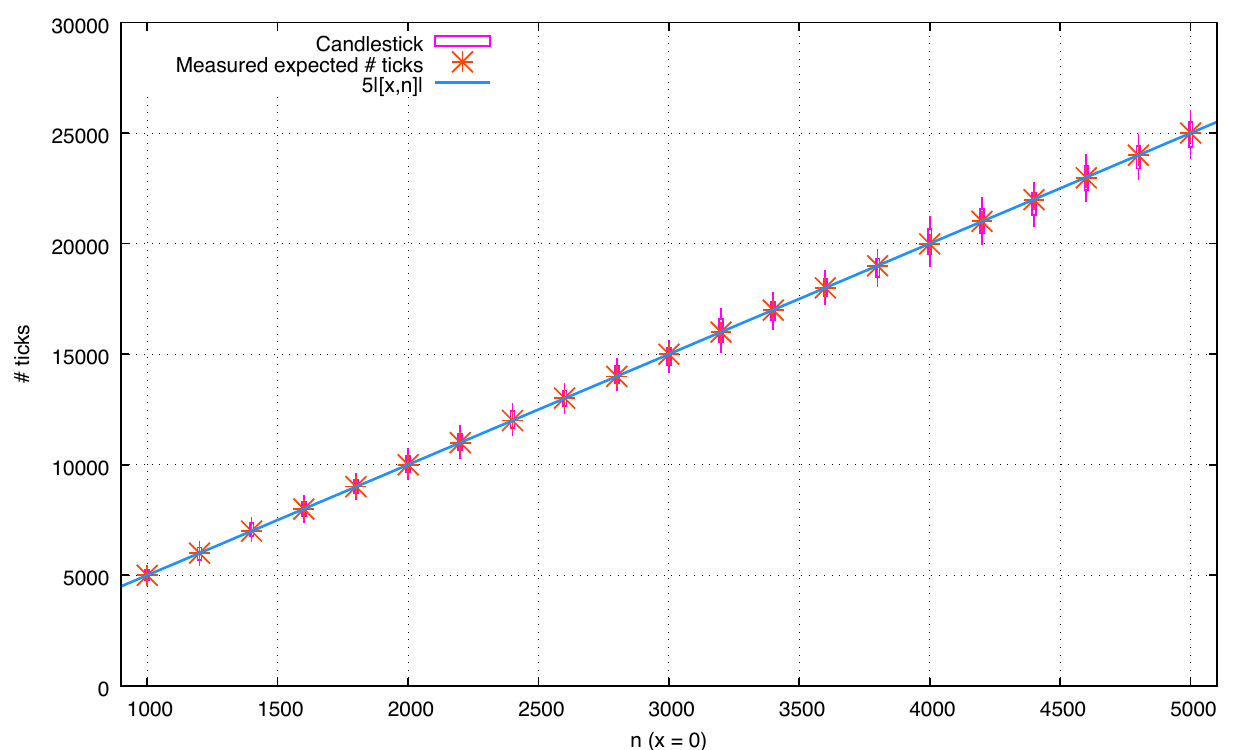} 
\caption{Example \progname{hyper}.}                         
\label{fig:hyper}                                             
\end{figure}                                                   
\begin{figure}[th!]                                            
\centering                                                     
\includegraphics[width=0.5\textwidth]{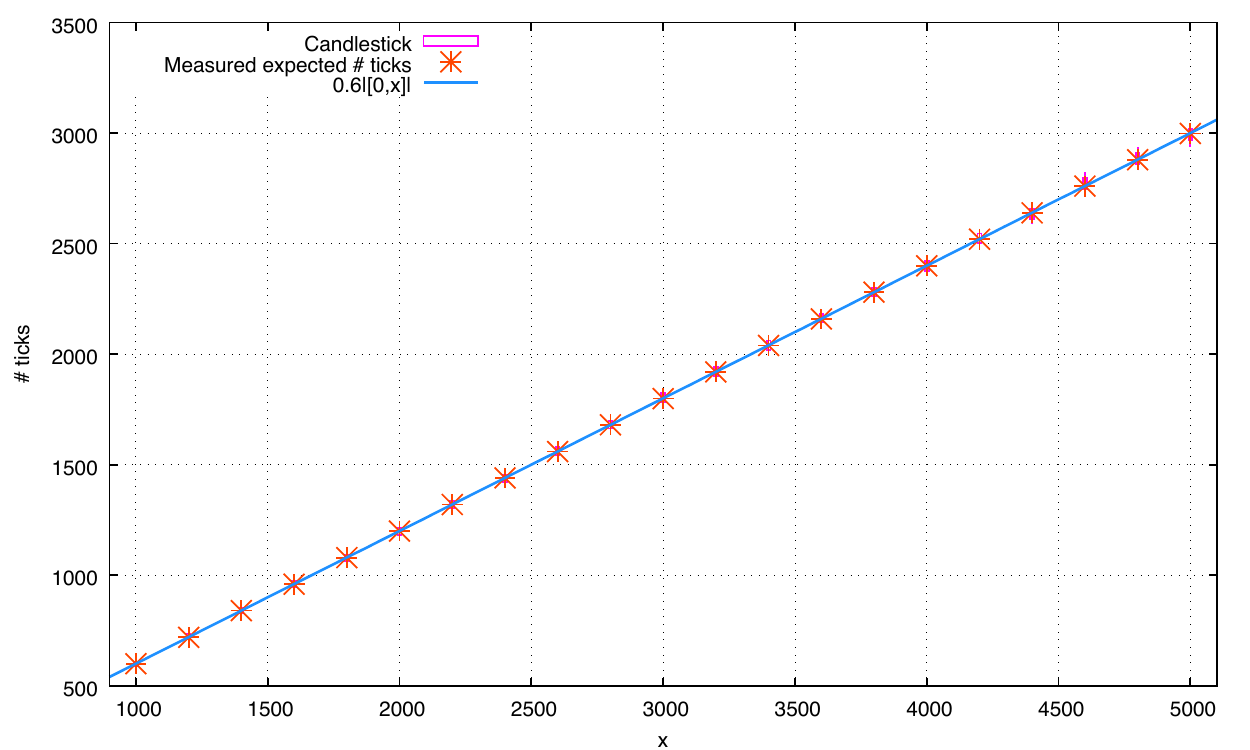}    
\caption{Example \progname{linear01}.}                            
\label{fig:linear01}                                              
\end{figure}
\clearpage
\begin{figure}[th!]                                            
\centering                                                     
\includegraphics[width=0.5\textwidth]{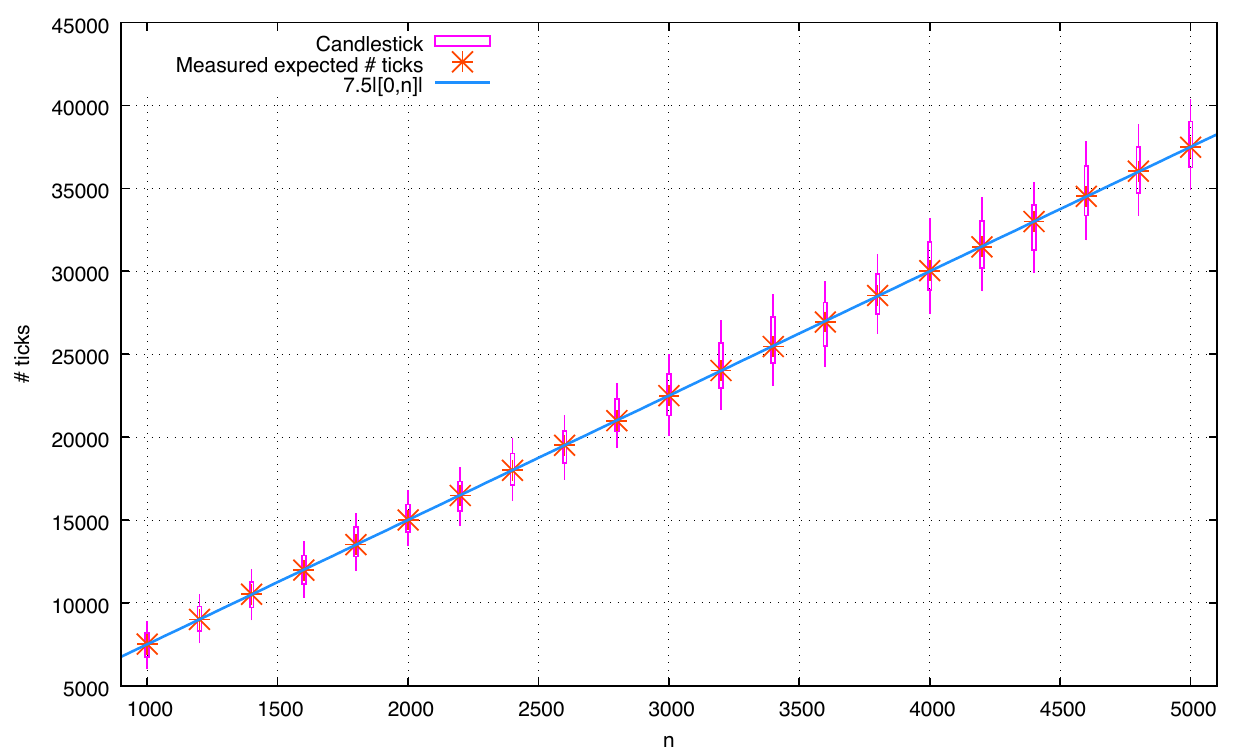}    
\caption{Example \progname{miner}.}                            
\label{fig:miner}                                              
\end{figure}
\begin{figure}[th!]                                            
\centering                                                     
\includegraphics[width=0.5\textwidth]{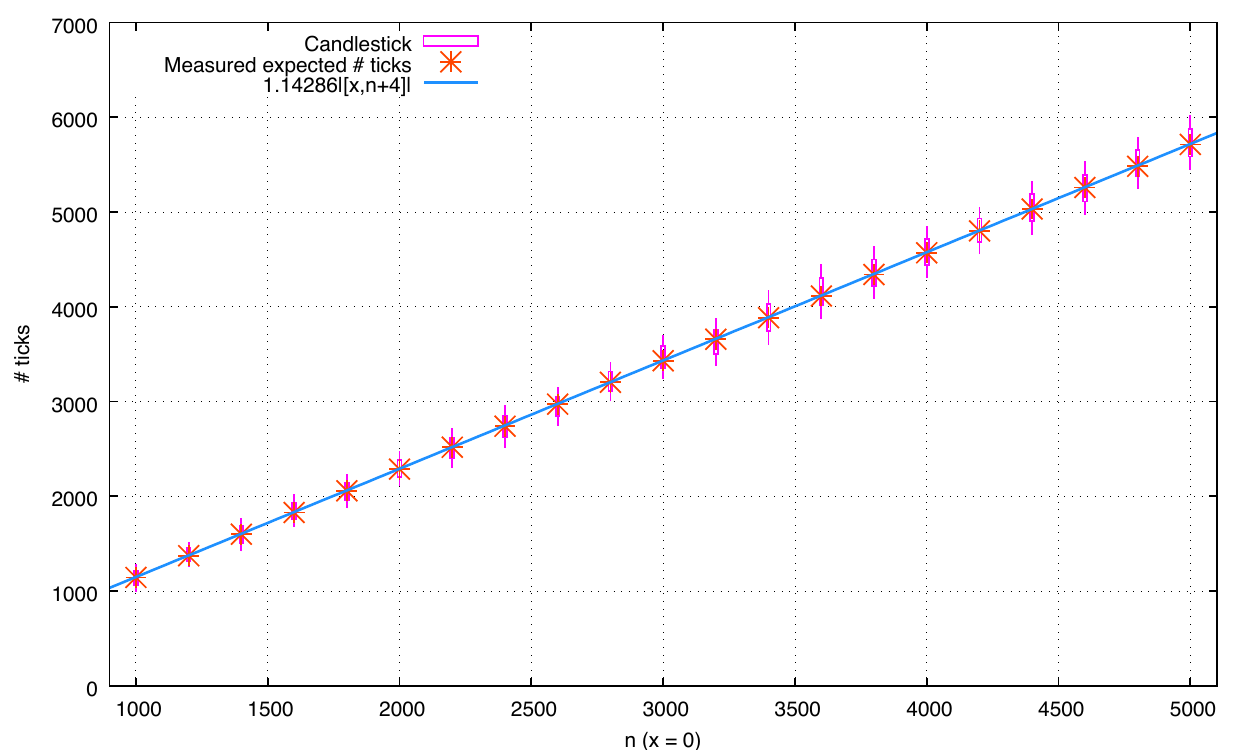}    
\caption{Example \progname{prdwalk}.}                            
\label{fig:prdwalk}                                              
\end{figure}
\begin{figure}[th!]                                            
\centering                                                     
\includegraphics[width=0.5\textwidth]{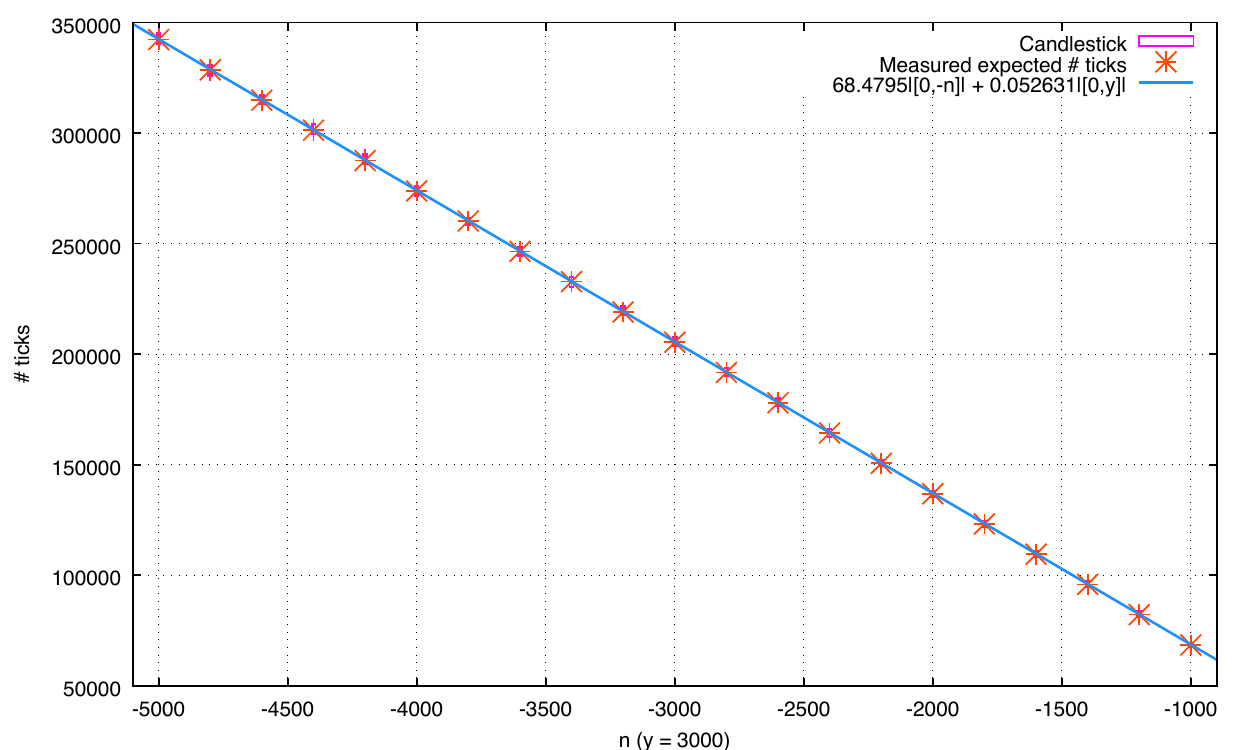}    
\caption{Example \progname{prnes}.}                            
\label{fig:prnes}                                              
\end{figure}
\begin{figure}[th!]                                            
\centering                                                     
\includegraphics[width=0.5\textwidth]{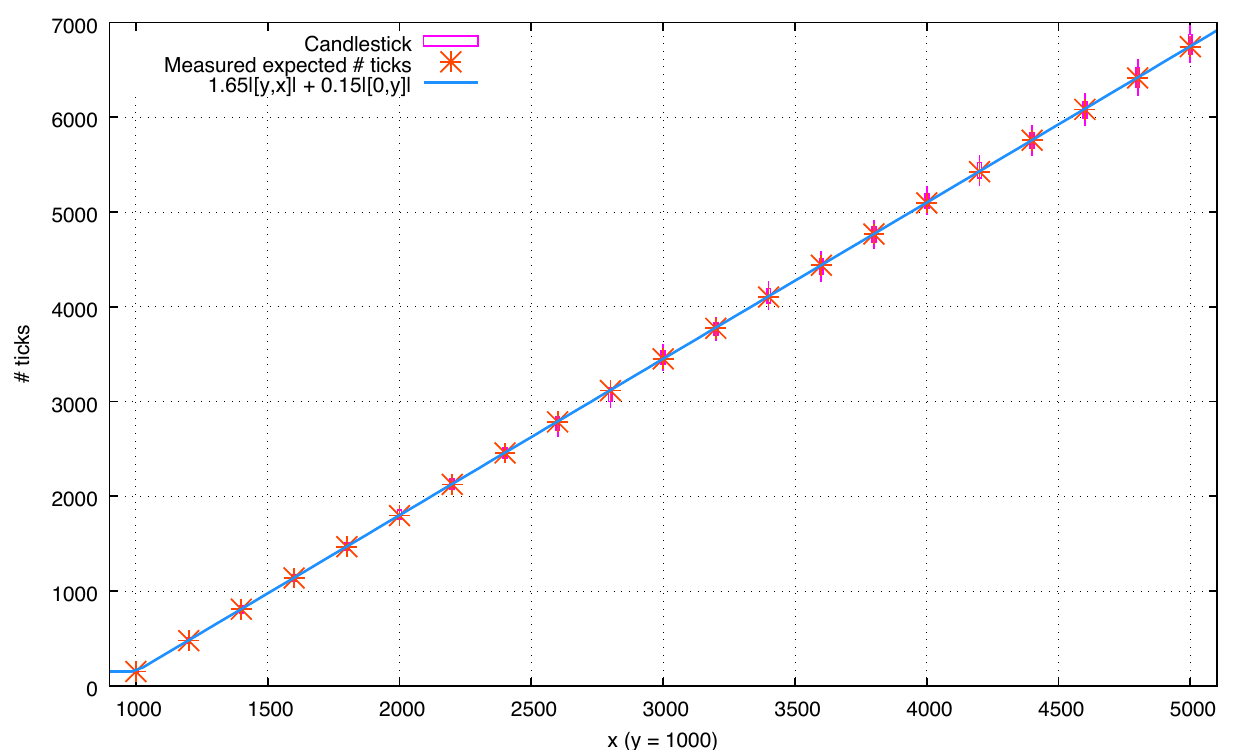}    
\caption{Example \progname{prseq}.}                            
\label{fig:prseq}                                              
\end{figure}
\begin{figure}[th!]                                            
\centering                                                     
\includegraphics[width=0.5\textwidth]{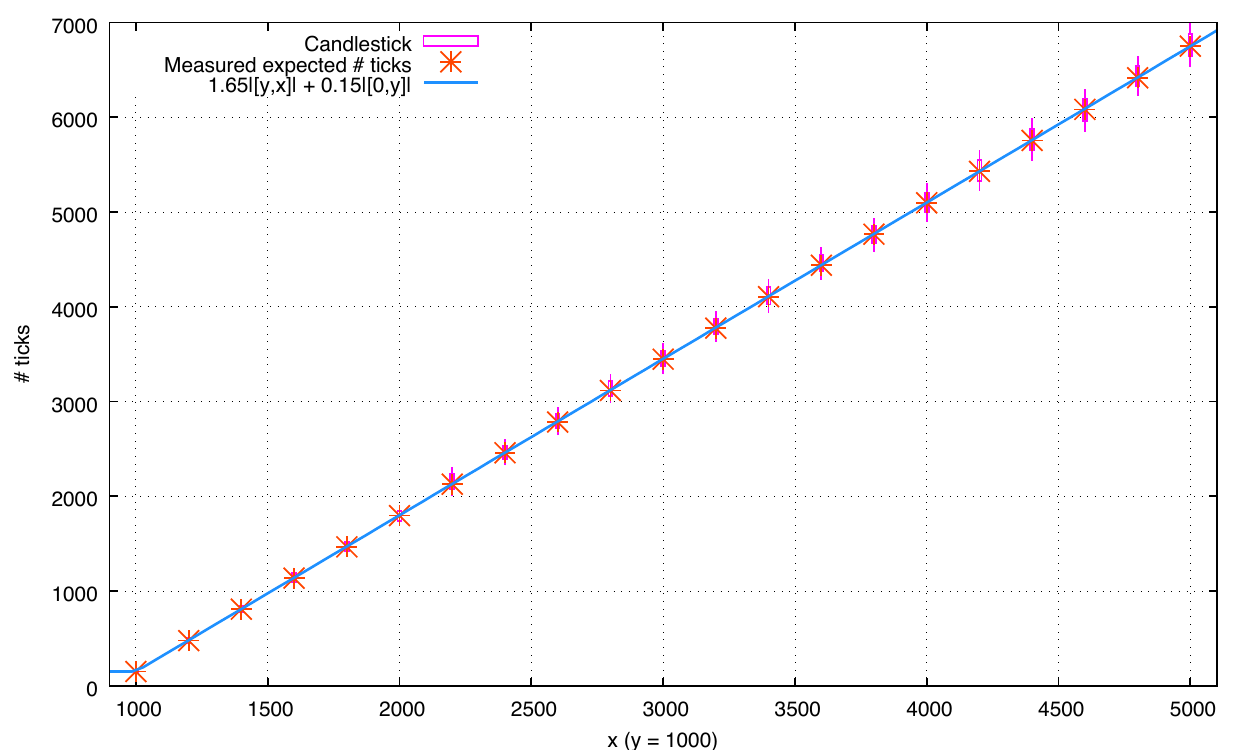}    
\caption{Example \progname{prseq\_bin}.}                            
\label{fig:prseqbin}                                              
\end{figure}
\begin{figure}[th!]                                            
\centering                                                     
\includegraphics[width=0.5\textwidth]{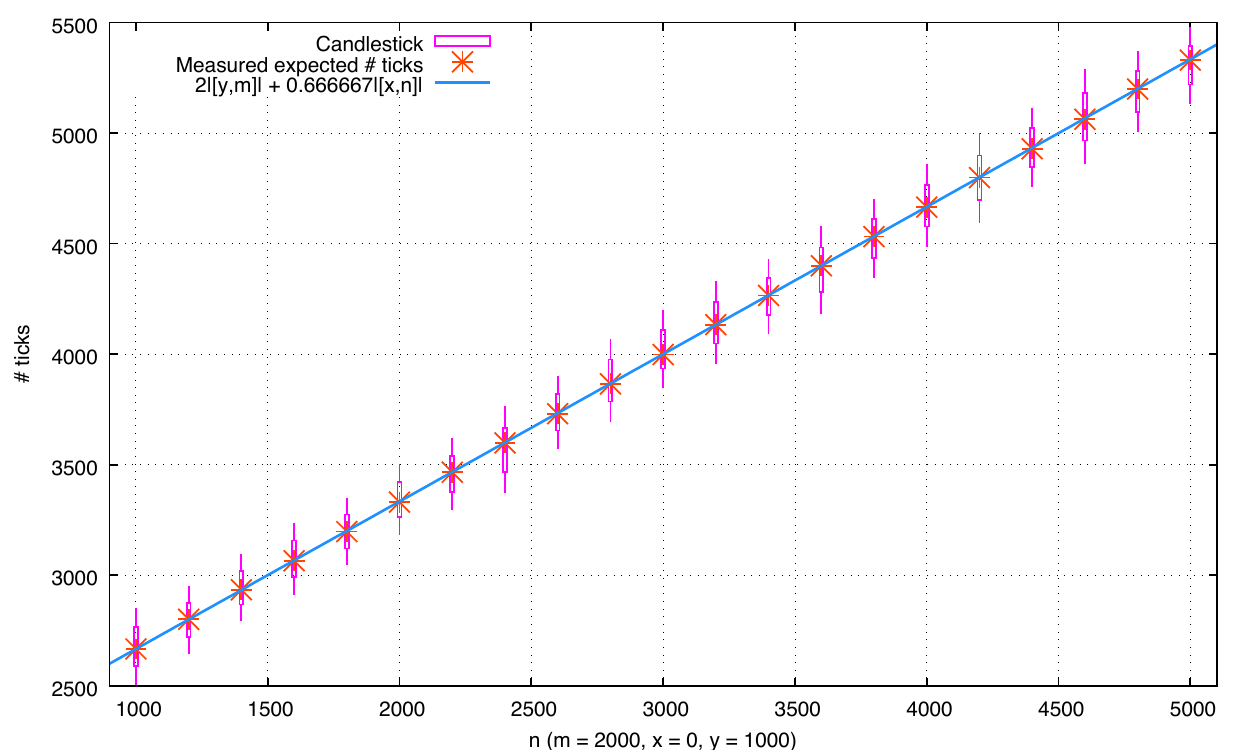}    
\caption{Example \progname{prspeed}.}                            
\label{fig:prspeed}                                              
\end{figure}
\begin{figure}[th!]                                            
\centering                                                     
\includegraphics[width=0.5\textwidth]{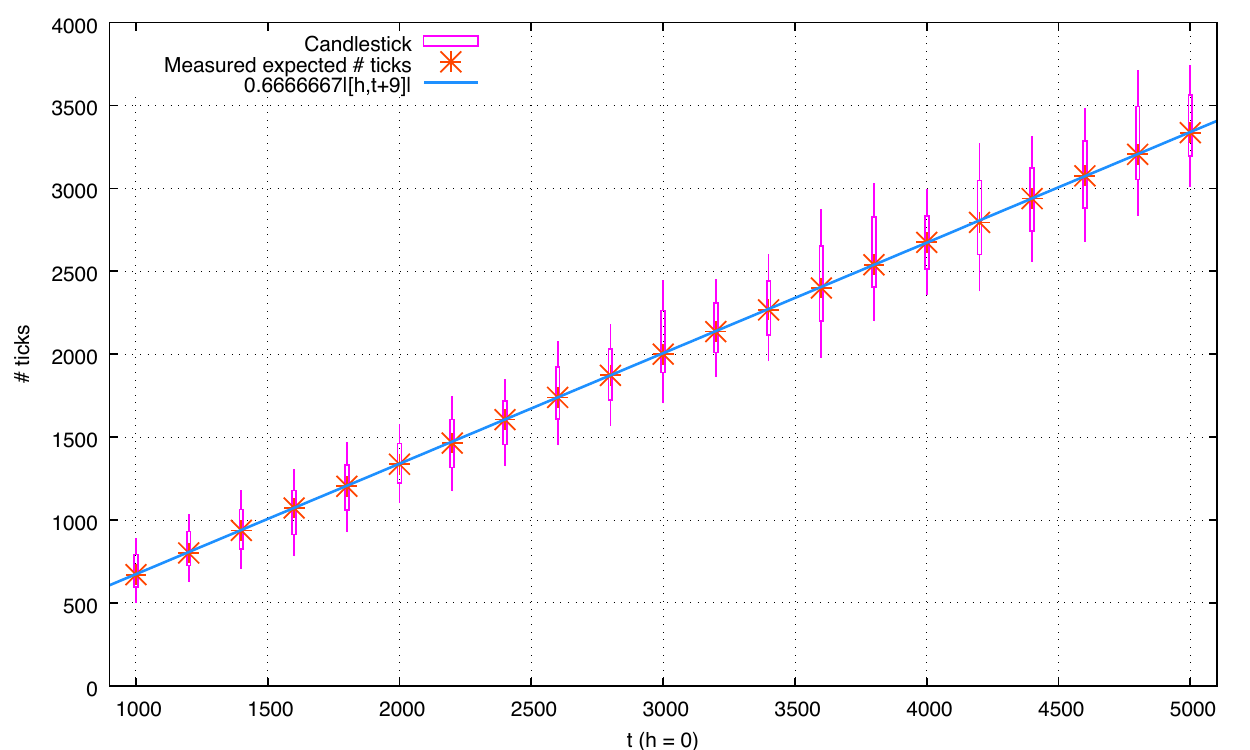}    
\caption{Example \progname{race}.}                            
\label{fig:race}                                              
\end{figure}
\begin{figure}[th!]                                            
\centering                                                     
\includegraphics[width=0.5\textwidth]{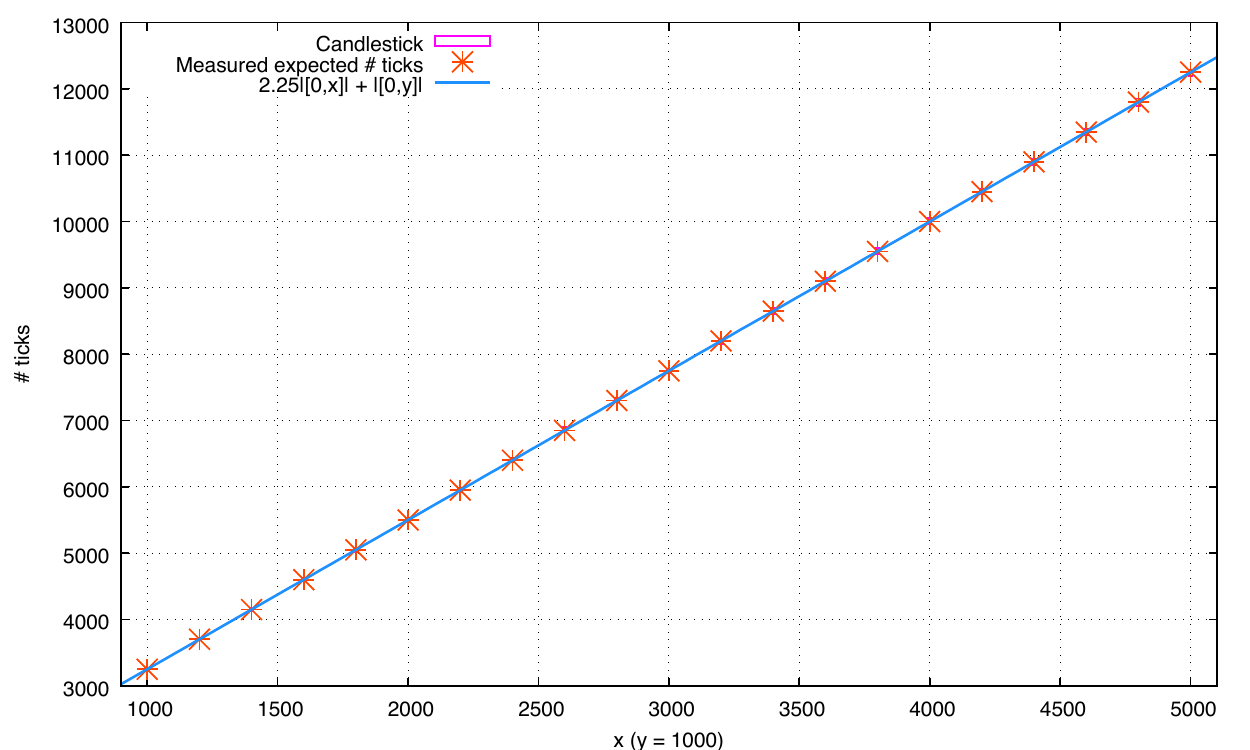}    
\caption{Example \progname{rdseql}.}                            
\label{fig:rdseql}                                              
\end{figure}
\begin{figure}[th!]                                            
\centering                                                     
\includegraphics[width=0.5\textwidth]{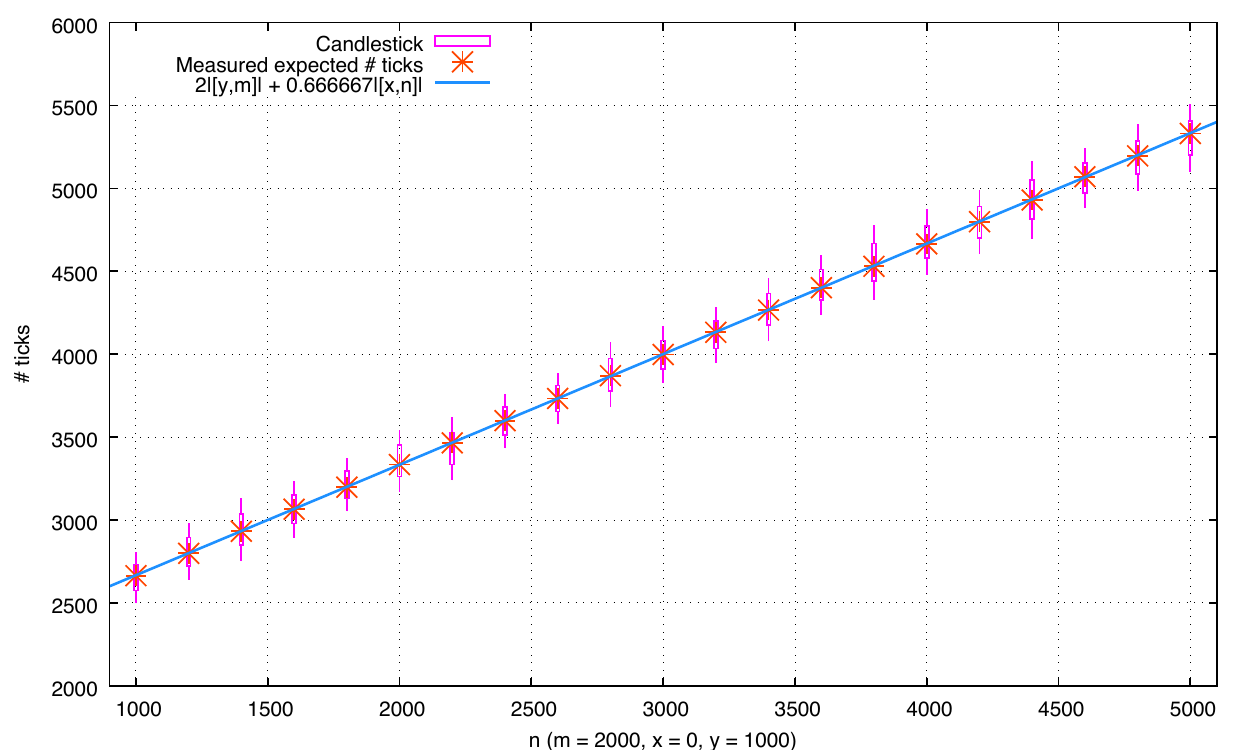}    
\caption{Example \progname{rdspeed}.}                            
\label{fig:rdspeed}                                              
\end{figure}                                                       
\begin{figure}[th!]                                            
\centering                                                     
\includegraphics[width=0.5\textwidth]{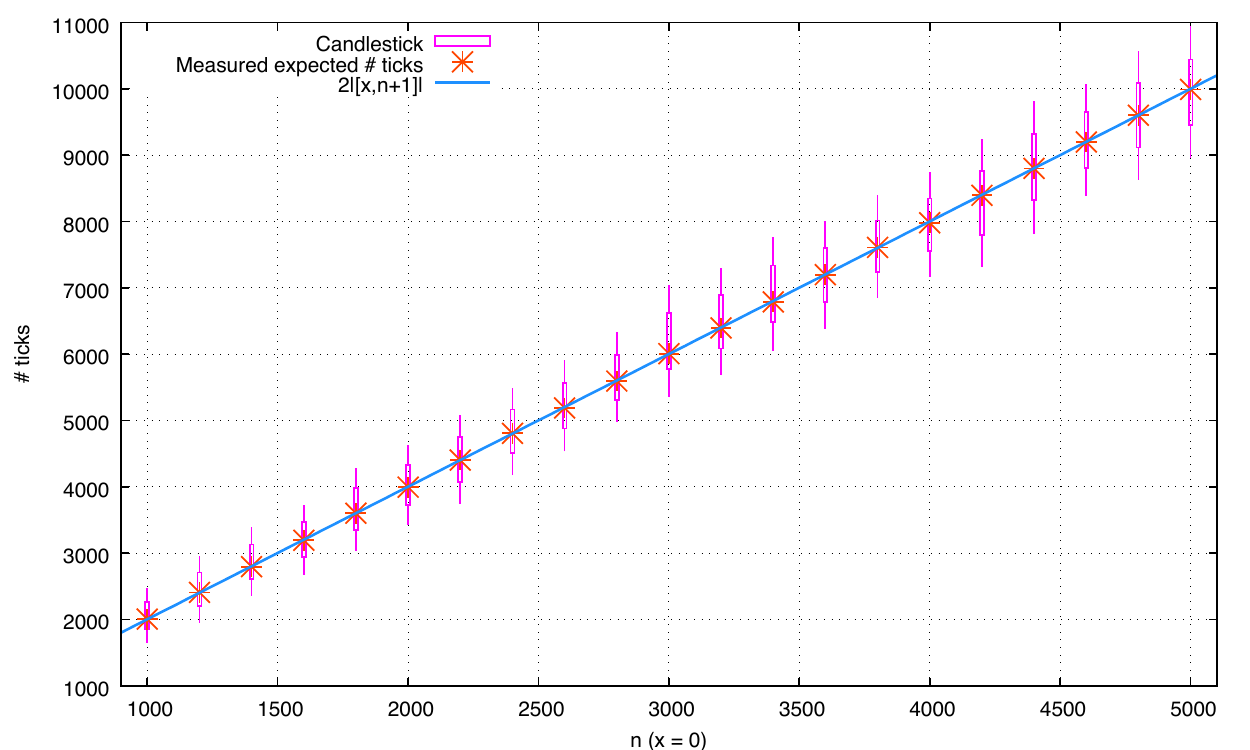}    
\caption{Example \progname{rdwalk}.}                            
\label{fig:rdwalk}                                              
\end{figure}
\begin{figure}[th!]                                            
\centering                                                     
\includegraphics[width=0.5\textwidth]{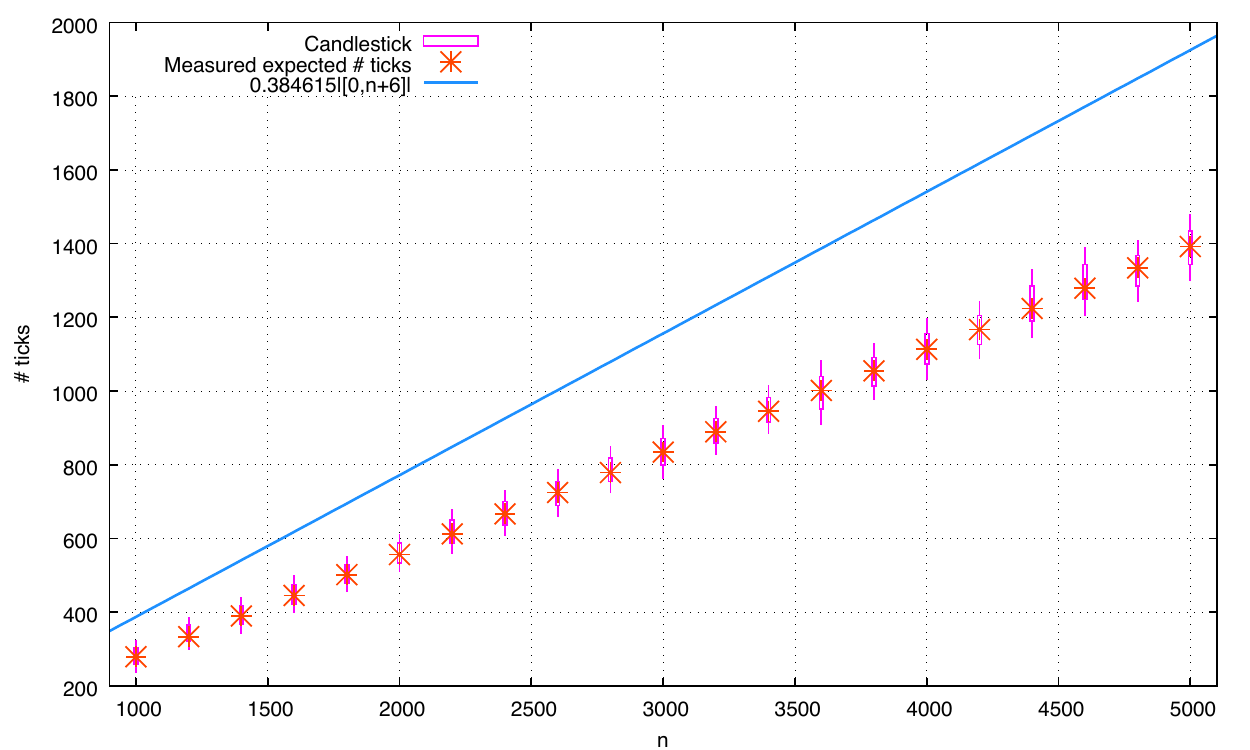}    
\caption{Example \progname{robot}.}                            
\label{fig:robot}                                              
\end{figure}
\begin{figure}[th!]                                            
\centering                                                     
\includegraphics[width=0.5\textwidth]{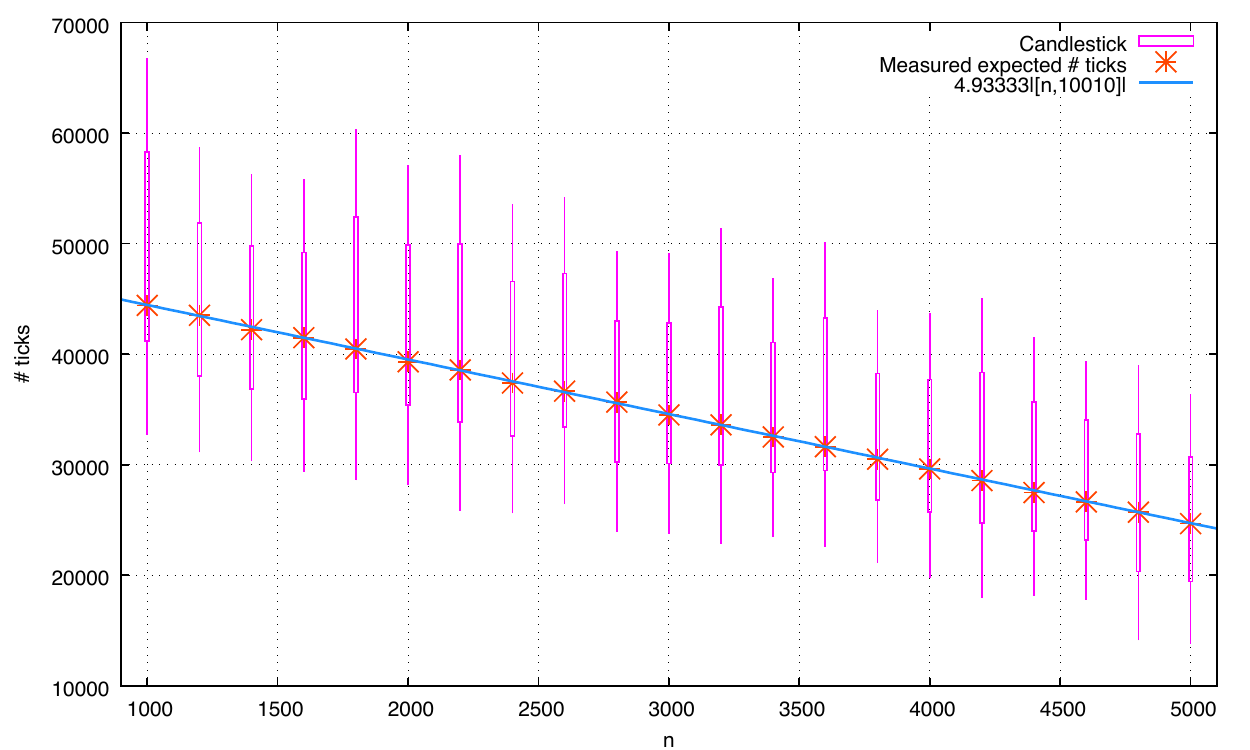}    
\caption{Example \progname{roulette}.}                            
\label{fig:roulette}                                              
\end{figure}
\begin{figure}[th!]                                            
\centering                                                     
\includegraphics[width=0.5\textwidth]{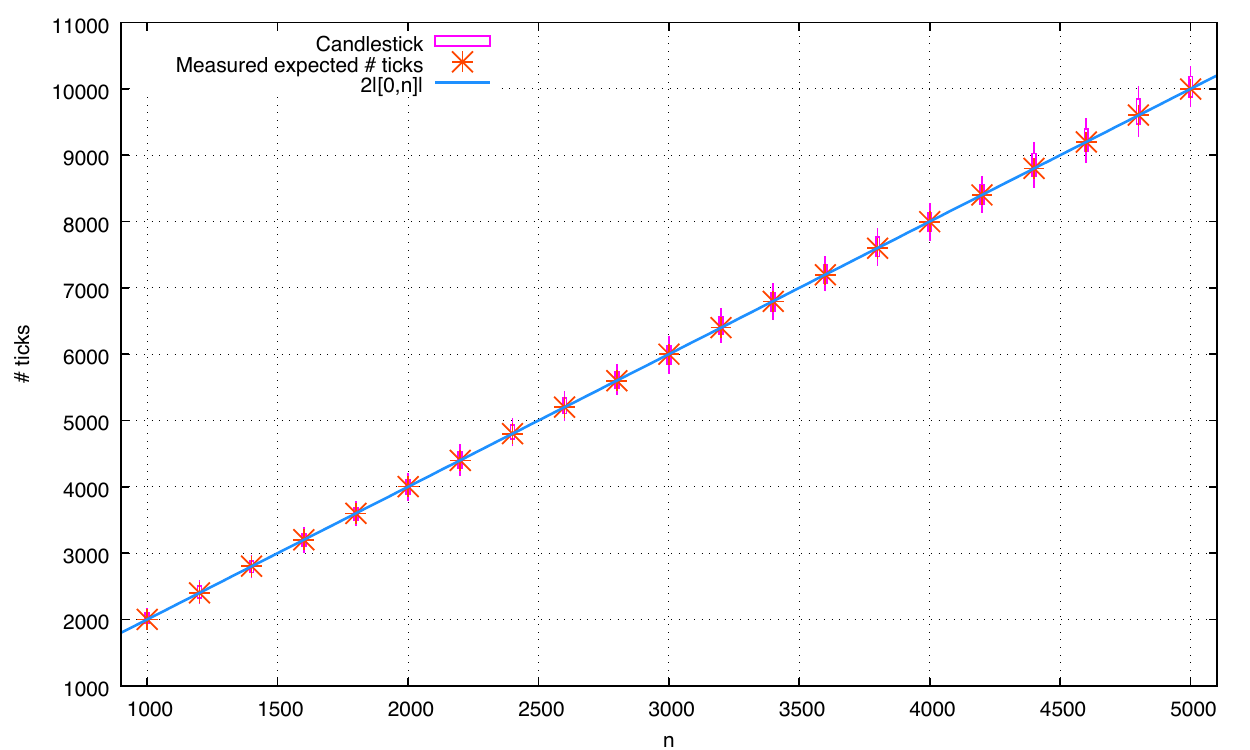}    
\caption{Example \progname{sampling}.}                            
\label{fig:sampling}                                              
\end{figure}
\begin{figure}[th!]                                            
\centering                                                     
\includegraphics[width=0.5\textwidth]{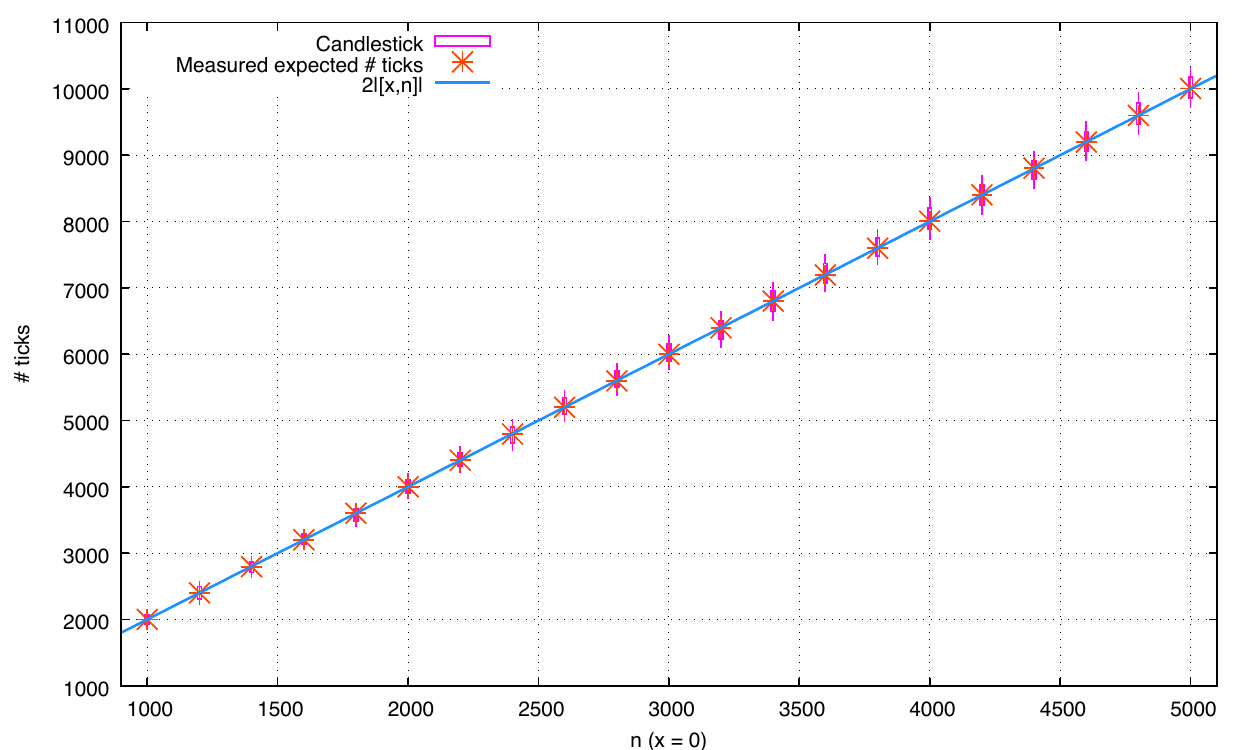}    
\caption{Example \progname{sprdwalk}.}                            
\label{fig:sprdwalk}                                              
\end{figure}                                                   
\begin{figure}[th!]                                            
\centering                                                     
\includegraphics[width=0.5\textwidth]{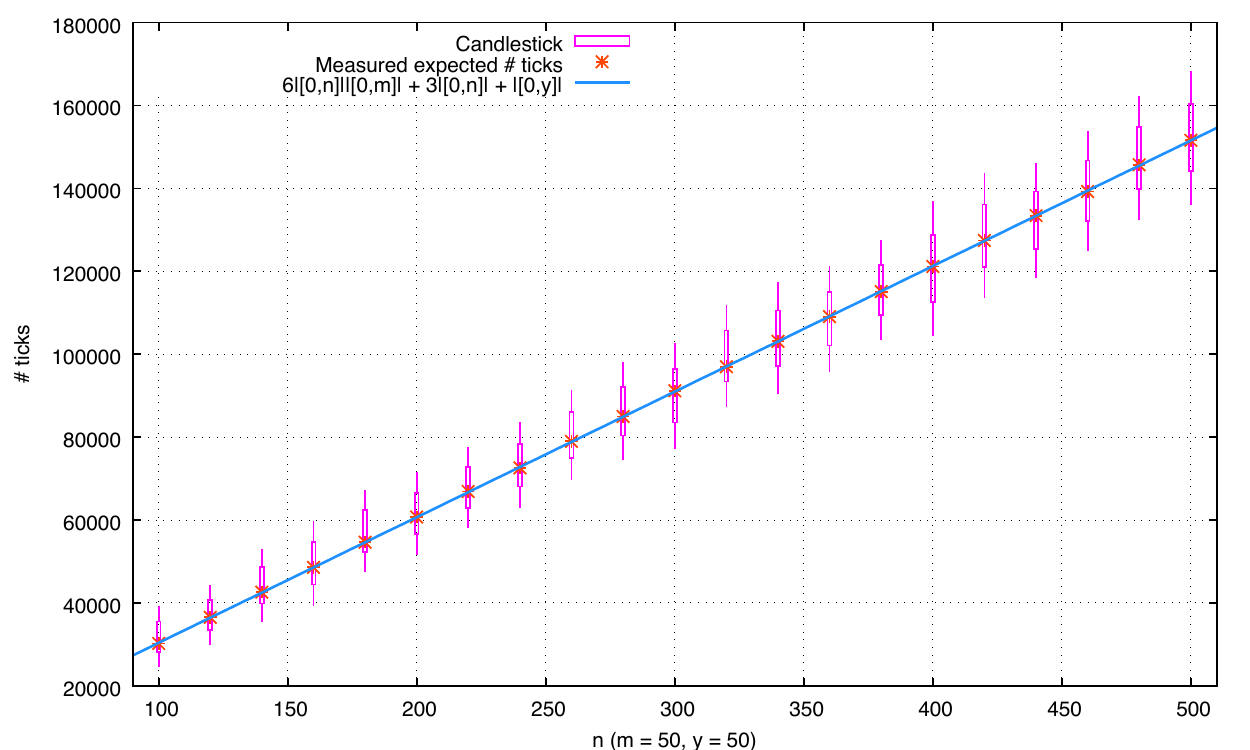}    
\caption{Example \progname{complex}.}                            
\label{fig:complex}                                              
\end{figure}
\begin{figure}[th!]                                            
\centering                                                     
\includegraphics[width=0.5\textwidth]{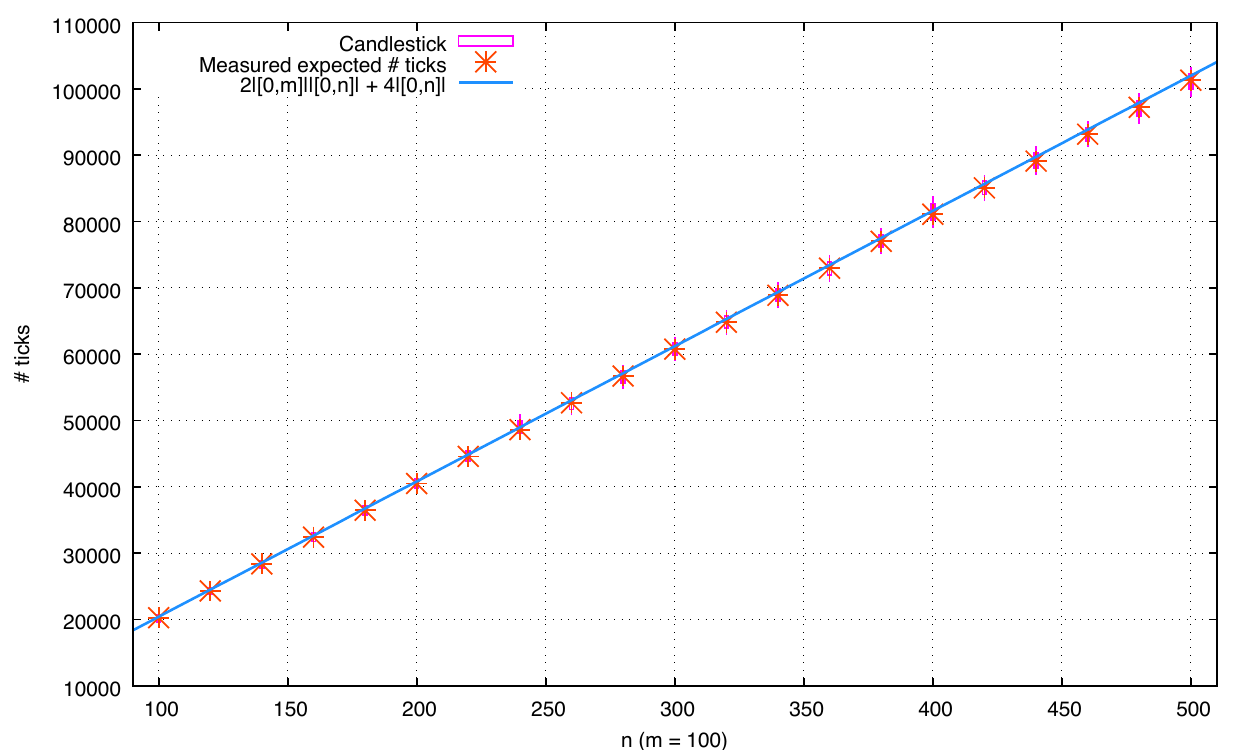}    
\caption{Example \progname{multirace}.}                            
\label{fig:multirace}                                              
\end{figure}
\begin{figure}[th!]                                            
\centering                                                     
\includegraphics[width=0.5\textwidth]{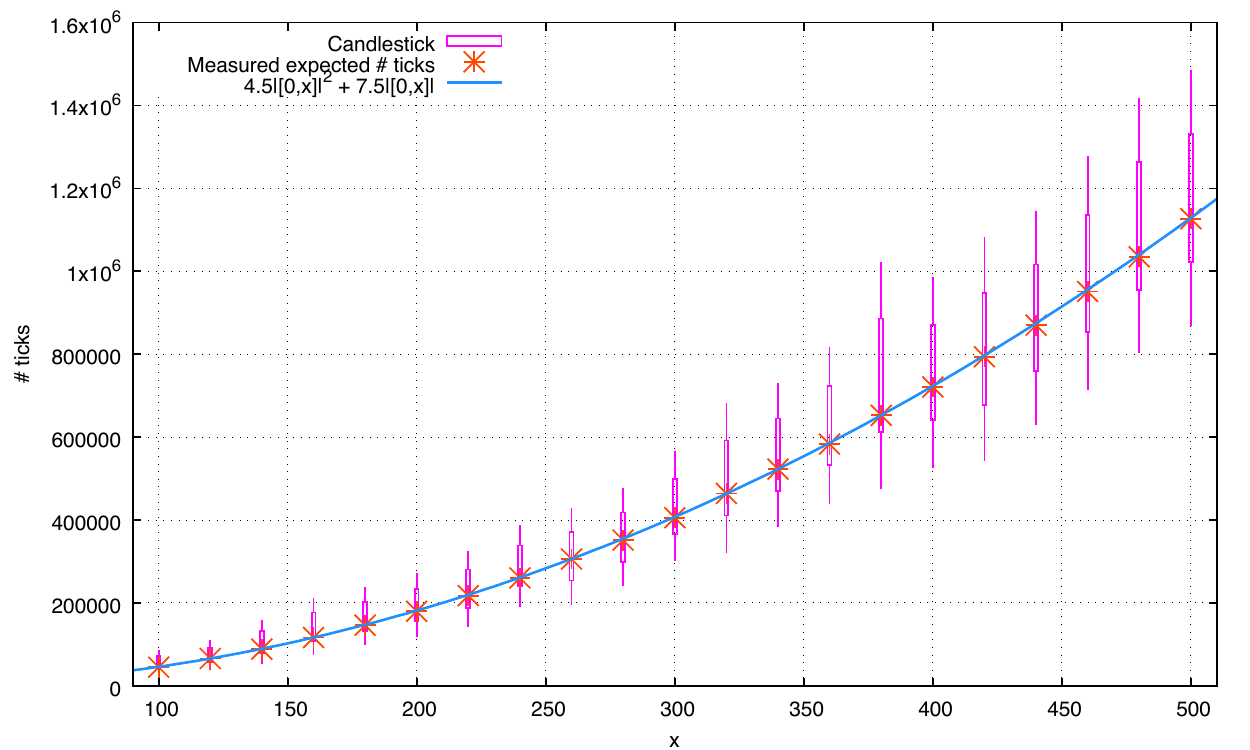}    
\caption{Example \progname{pol04}.}                            
\label{fig:pol04}                                              
\end{figure}
\begin{figure}[th!]                                            
\centering                                                     
\includegraphics[width=0.5\textwidth]{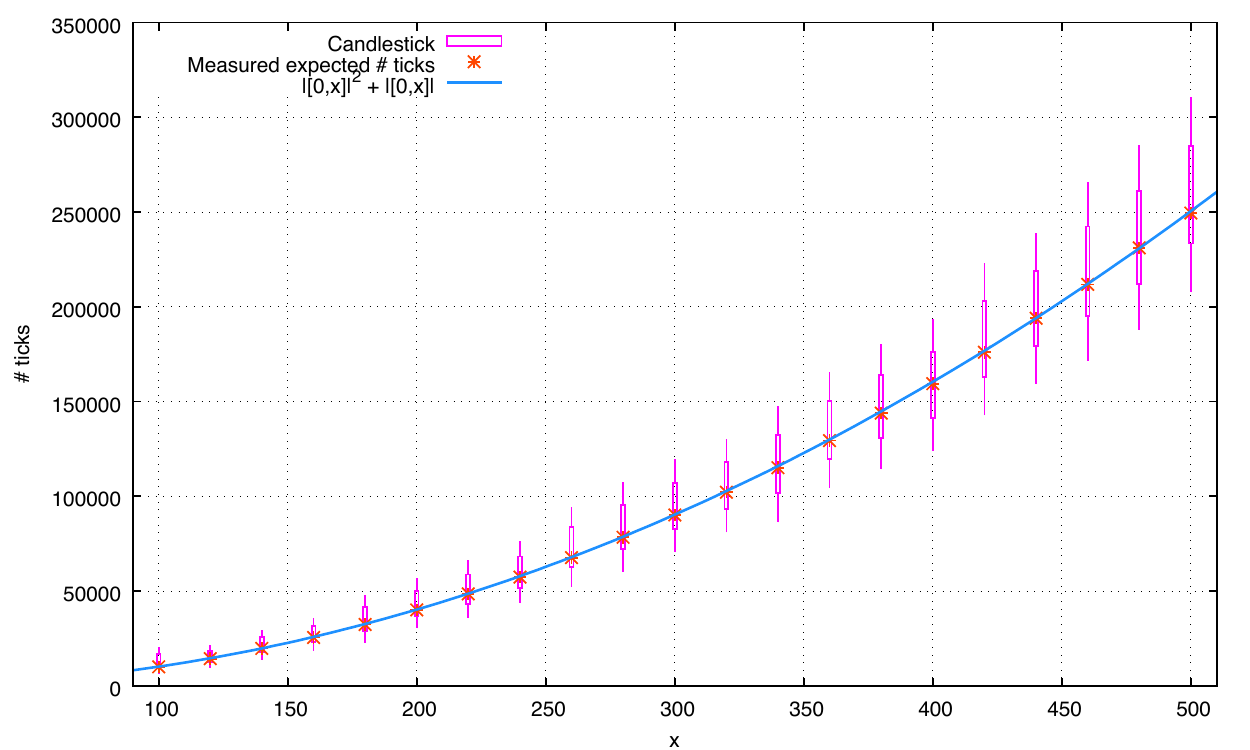}    
\caption{Example \progname{pol05}.}                            
\label{fig:pol05}                                              
\end{figure}
\clearpage
\begin{figure}[th!]                                            
\centering                                                     
\includegraphics[width=0.5\textwidth]{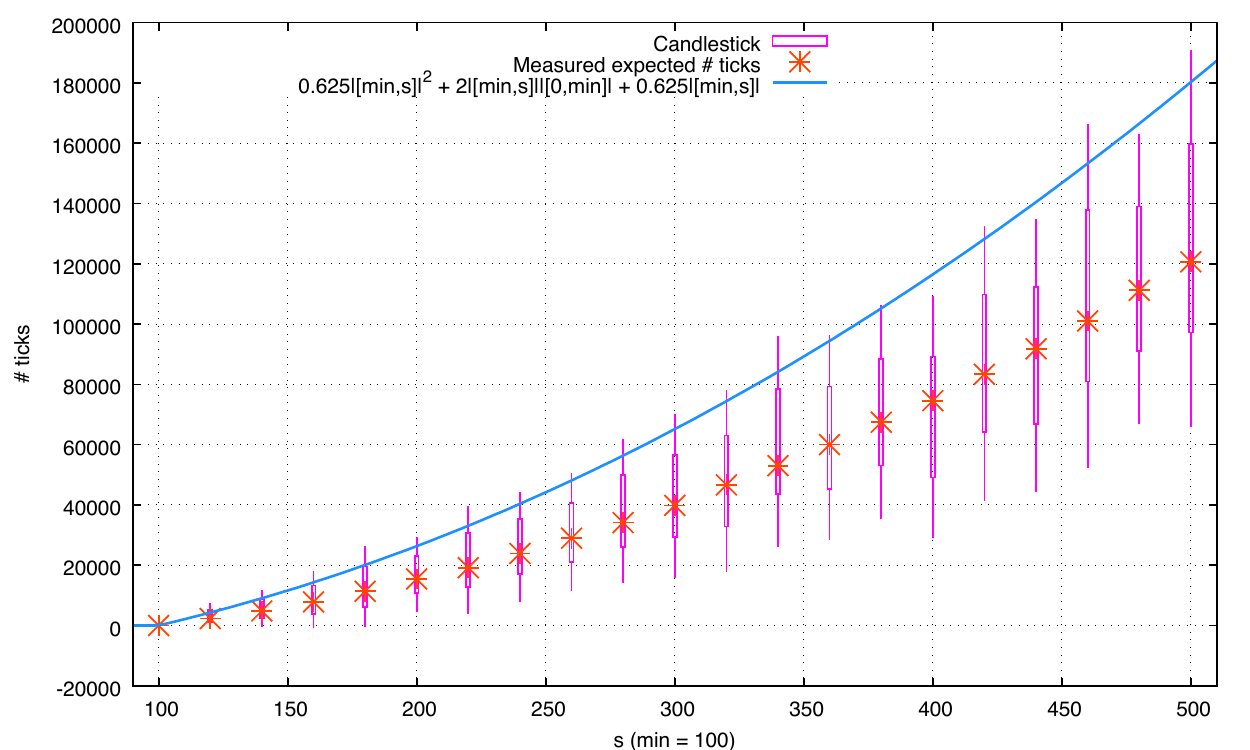}    
\caption{Example \progname{pol06}.}                            
\label{fig:pol06}                                              
\end{figure}                                                   
\begin{figure}[th!]                                            
\centering                                                     
\includegraphics[width=0.5\textwidth]{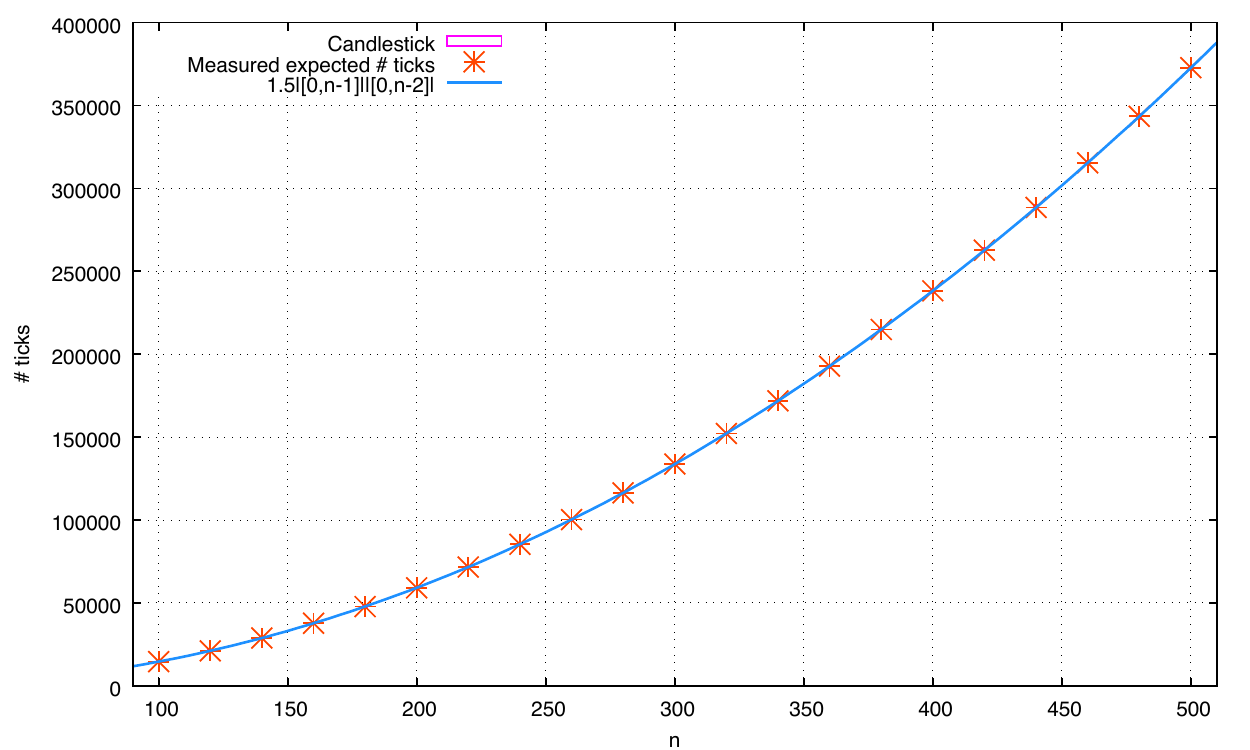}    
\caption{Example \progname{pol07}.}                            
\label{fig:pol07}                                              
\end{figure}
\begin{figure}[th!]                                            
\centering                                                     
\includegraphics[width=0.5\textwidth]{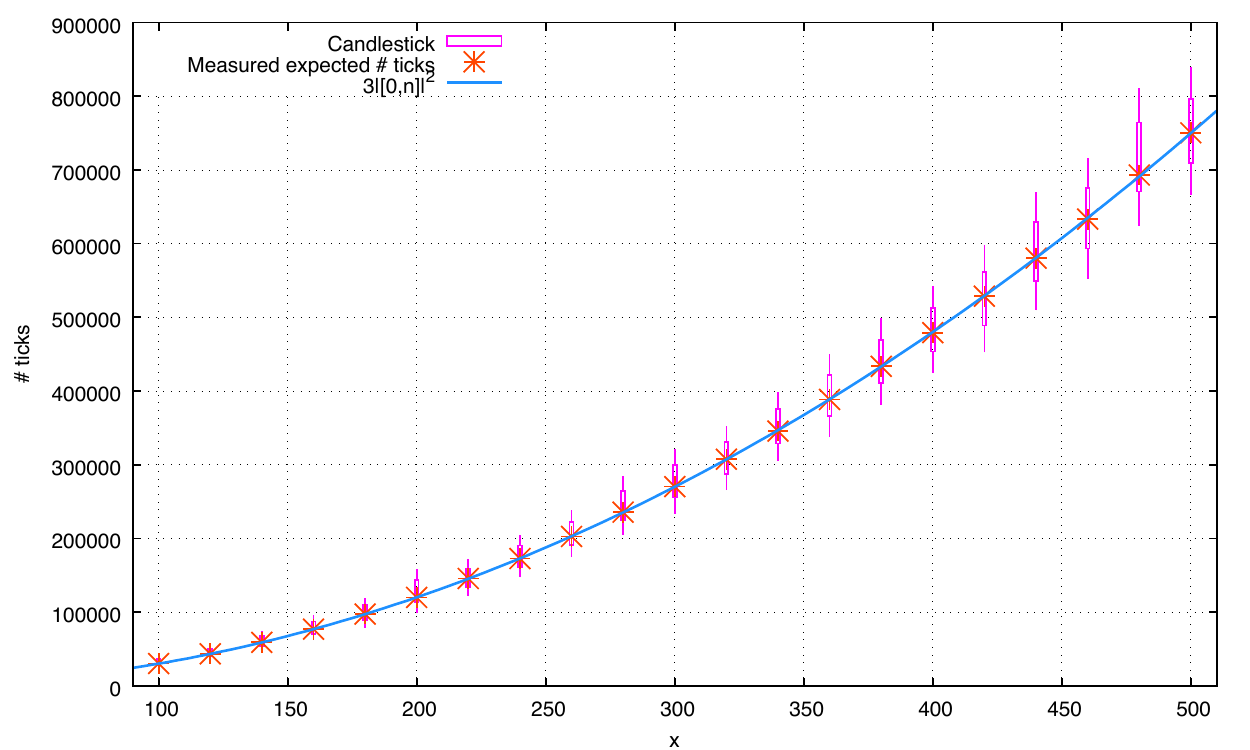}    
\caption{Example \progname{rdbub}.}                            
\label{fig:rdbub}                                              
\end{figure}
\begin{figure}[th!]                                            
\centering                                                     
\includegraphics[width=0.5\textwidth]{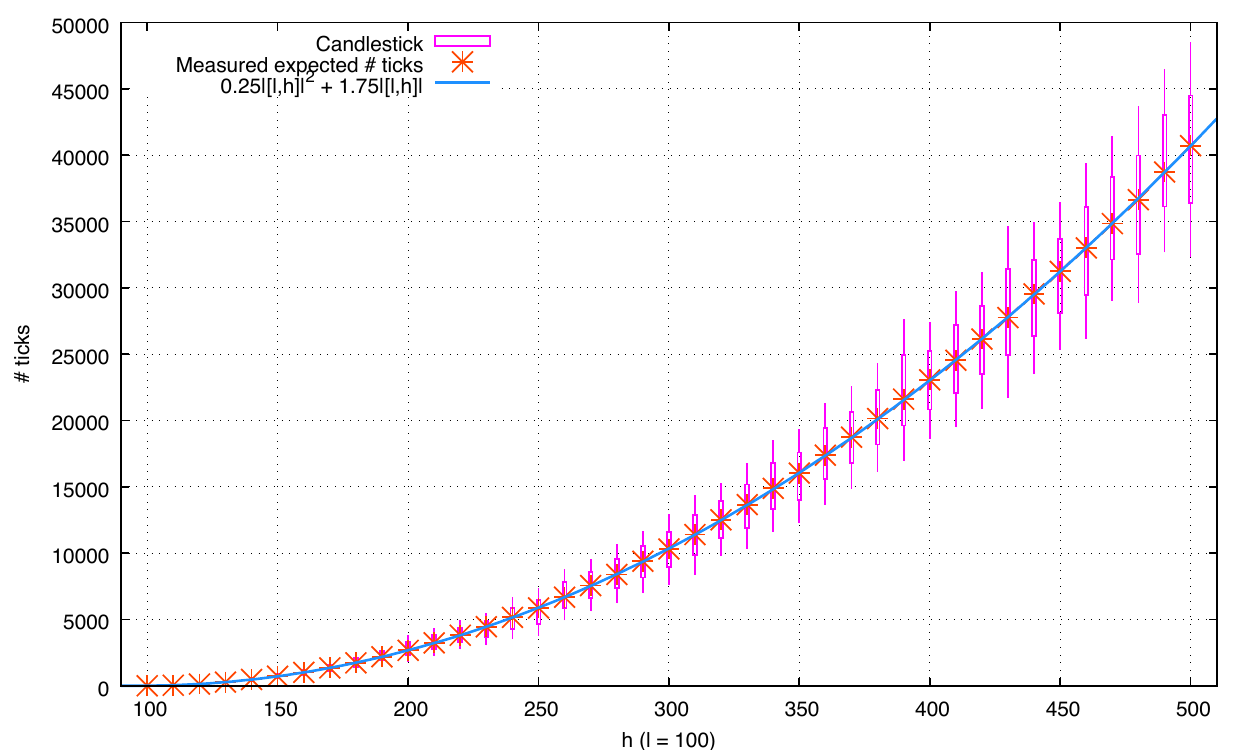}    
\caption{Example \progname{recursive}.}                            
\label{fig:recursive}                                              
\end{figure}
\begin{figure}[th!]                                            
\centering                                                     
\includegraphics[width=0.5\textwidth]{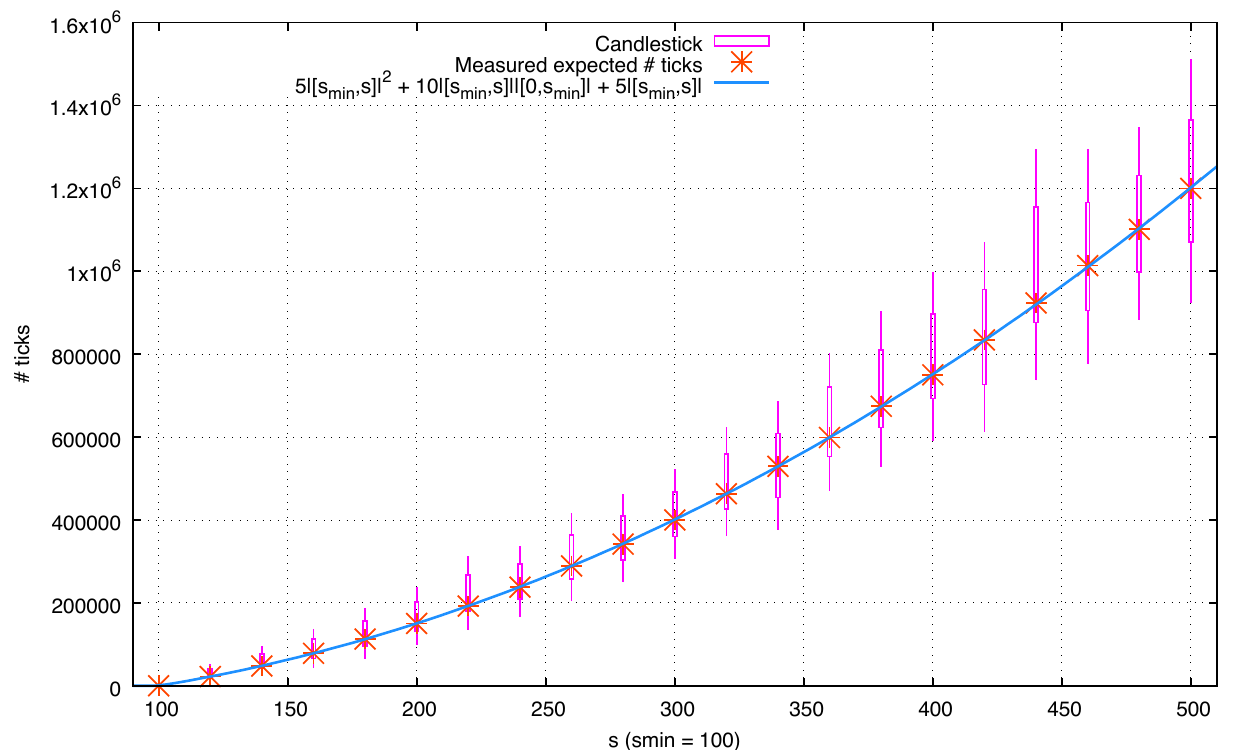}    
\caption{Example \progname{trader}.}                            
\label{fig:trader}                                              
\end{figure}                                                   

\vspace{-1.5ex}
\section{More examples of derivation}
\label{app:morederivation}
In this section, we show more examples of derivation for 
challenging probabilistic programs. We also demonstrate how 
\toolname{} handles Boolean values to get the tightest expected bounds. 
All the presented bound derivations can be derived automatically by our tool
\toolname{}. 

The derivation on the left side of Figure \ref{fig:moredevration01} infers the bound
$\Phi =
\frac{2T}{{pK_1}{+}{({1}{-}{p}){K_2}}}{\cdot}{\interval{x,{n}{+}{K_2}{-}{1}}}$
on the number of ticks for a generalized version of
$\progname{rdwalk}$,
in which both probabilistic branching and sampling commands are
used. The value of $x$
is added by a uniformly-distributed random variable from $0$
to $K_1$
with probability $p$.
With probability $({1}{-}{p})$,
it is incremented by a uniformly-distributed random variable from $0$
to $K_2$.
The reasoning is similar to the ones of previous examples, 
in which we obtain the following post-potential after the samplings.
$$
\Phi' = \frac{2T}{{pK_1}{+}{({1}{-}{p}){K_2}}}{\cdot}{\interval{x,{n}{+}{K_2}{-}{1}}}{+}{T}
$$
The corresponding pre-potentials w.r.t the samplings are given as follows.
$$
\begin{array}{l}
\Phi_1 = {(}{T} {-} {\frac{TK_1}{{pK_1}{+}(1{-}p)K_2}} {+}
{\frac{2T}{pK_1{+}(1{-}p)K_2}}{\cdot}{\interval{x,{n}{+}{K_2}{-}{1}}}{)} \\
\Phi_2 = {(}{T} {-} {\frac{TK_2}{pK_1{+}(1{-}p)K_2}} {+}
\frac{2T}{pK_1{+}(1{-}p)K_2}{\cdot}\interval{x,{n}{+}{K_2}{-}{1}}{)}
\end{array}
$$
Thus, the pre-potential before the probabilistic branching that also serves 
as a loop invariant is 
$\Phi = {p}{\cdot}{\Phi_1} {+} {(}{1}{-}{p}{)}{\cdot}{\Phi_2}$. 
\begin{figure*}[th!]
\centering
\begin{minipage}[b]{0.45\textwidth}
\centering
\begin{lstlisting}[basicstyle=\footnotesize,escapeinside={@}{@}]
@$\devcomment{.;{0}{+}{\frac{2T}{pK_1+(1-p)K_2}}{\cdot}{\interval{x,{n}{+}{K_2}{-}{1}}}}$@
while (x<n) {
  @$\devcomment{{x}{<{n}};{T}{-}{\frac{TK_1}{{pK_1}{+}{({1}{-}{p})K_2}}}{+}{\frac{2T}{pK_1+(1-p)K_2}}{\cdot}{\interval{x,{n}{+}{K_2}{-}{1}}}}$@
  x=x+unif(0,K1)
  @$\devcomment{{x}{\leq{n}{+}{K_2}{-}{1}};{T}{+}{\frac{2T}{pK_1+(1-p)K_2}}{\cdot}{\interval{x,{n}{+}{K_2}{-}{1}}}}$@
  @$\oplus_{p}$@
  @$\devcomment{{x}{<{n}};{T}{-}{\frac{TK_2}{pK_1+(1-p)K_2}}{+}{\frac{2T}{pK_1+(1-p)K_2}}{\cdot}{\interval{x,{n}{+}{K_2}{-}{1}}}}$@
  x=x+unif(0,K2);
  @$\devcomment{{x}{\leq{n}{+}{K_2}{-}{1}};{T}{+}{\frac{2T}{pK_1+(1-p)K_2}}{\cdot}{\interval{x,{n}{+}{K_2}{-}{1}}}}$@
  tick(T);
  @$\devcomment{{x}{\leq{n}{+}{K_2}{-}{1}};{0}{+}{\frac{2T}{pK_1+(1-p)K_2}}{\cdot}{\interval{x,{n}{+}{K_2}{-}{1}}}}$@
}
@$\devcomment{.;{0}{+}{\frac{2T}{pK_1+(1-p)K_2}}{\cdot}{\interval{x,{n}{+}{K_2}{-}{1}}}}$@
\end{lstlisting}
$\progname{prdwalk}$\\
${\frac{2T}{pK_1+(1-p)K_2}}{\cdot}{\interval{x,{n}{+}{K_2}{-}{1}}}$
\end{minipage}%
\begin{minipage}[b]{0.25\textwidth}
\centering
\begin{lstlisting}[basicstyle=\footnotesize,escapeinside={@}{@}]
@$\devcomment{.;{0}{+}{\frac{2}{3}}{\cdot}{\interval{h,{t}{+}{9}}}}$@
while (h<=t) {
  @$\devcomment{{h}{\leq}{t};{0}{+}{\frac{2}{3}}{\cdot}{\interval{h,{t}{+}{9}}}}$@
  t=t+1;
  @$\devcomment{{h}{<}{t};{{-}{\frac{2}{3}}{+}\frac{2}{3}}{\cdot}{\interval{h,{t}{+}{9}}}}$@
  @$\devcomment{{h}{<}{t};{{-}{\frac{7}{3}}{+}\frac{2}{3}}{\cdot}{\interval{h,{t}{+}{9}}}}$@
  h=h+unif(0,10)
  @$\devcomment{{h}{\leq}{t{+}{9}};{{1}{+}\frac{2}{3}}{\cdot}{\interval{h,{t}{+}{9}}}}$@
  @$\oplus_{\frac{1}{2}}$@
  @$\devcomment{{h}{<}{t};{{1}+\frac{2}{3}}{\cdot}{\interval{h,{t}{+}{9}}}}$@
  skip;
  @$\devcomment{{h}{\leq}{t{+}{9}};{{1}{+}\frac{2}{3}}{\cdot}{\interval{h,{t}{+}{9}}}}$@
  tick(1);
  @$\devcomment{{h}{\leq}{t};{{0}{+}\frac{2}{3}}{\cdot}{\interval{h,{t}{+}{9}}}}$@
}
@$\devcomment{.;{0}{+}{\frac{2}{3}}{\cdot}{\interval{h,{t}{+}{9}}}}$@
\end{lstlisting}
$\progname{race}$\\
${\frac{2}{3}}{\cdot}{\interval{h,{t}{+}{9}}}$
\end{minipage}%
\begin{minipage}[b]{0.30\textwidth}
\centering
\begin{lstlisting}[basicstyle=\footnotesize,escapeinside={@}{@}]
@$\devcomment{.;{0}{+}{\frac{5}{4}}{\cdot}\interval{0,x}{+}\interval{0,y}}$@
while (x>0) {
  @$\devcomment{{x}{>}{0};{0}{+}{\frac{5}{4}}{\cdot}\interval{0,x}{+}\interval{0,y}}$@
  x=x-1;
  @$\devcomment{{x}{\geq}{0};{\frac{5}{4}{+}\frac{5}{4}}{\cdot}\interval{0,x}{+}\interval{0,y}}$@
    @$\devcomment{{x}{\geq}{0};{{2}{+}\frac{5}{4}}{\cdot}\interval{0,x}{+}\interval{0,y}}$@
    y=y+1
    @$\devcomment{{x}{\geq}{0};{{1}{+}\frac{5}{4}}{\cdot}\interval{0,x}{+}\interval{0,y}}$@
  @$\oplus_{\frac{1}{4}}$@
    @$\devcomment{{x}{\geq}{0};{{1}{+}\frac{5}{4}}{\cdot}\interval{0,x}{+}\interval{0,y}}$@
    while (y>0) {
      @$\devcomment{{x}{\geq}{0}{\wedge}{y}{>}{0};{{1}{+}\frac{5}{4}}{\cdot}\interval{0,x}{+}\interval{0,y}}$@
      y=y-1;
      @$\devcomment{{x}{\geq}{0}{\wedge}{y}{\geq}{0};{{2}{+}\frac{5}{4}}{\cdot}\interval{0,x}{+}\interval{0,y}}$@
      tick(1);
      @$\devcomment{{x}{\geq}{0}{\wedge}{y}{\geq}{0};{{1}{+}\frac{5}{4}}{\cdot}\interval{0,x}{+}\interval{0,y}}$@
    }
  @$\devcomment{{x}{\geq}{0};{{1}{+}\frac{5}{4}}{\cdot}\interval{0,x}{+}\interval{0,y}}$@
  tick(1);
  @$\devcomment{{x}{\geq}{0};{{0}{+}\frac{5}{4}}{\cdot}\interval{0,x}{+}\interval{0,y}}$@
}
@$\devcomment{.;{{0}{+}\frac{5}{4}}{\cdot}\interval{0,x}{+}\interval{0,y}}$@
\end{lstlisting}
$\progname{C4B\_t13}$\\
$\frac{5}{4}{\cdot}{\interval{0,x}}{+}{\interval{0,y}}$
\end{minipage}%
\caption{More derivations of bounds on the number of ticks for loopy probabilistic programs using both probabilistic branching and random sampling assignment. The parameters $K_2 \geq K_1 > 0$, $T > 0$, and $p \geq 0$ denote concrete constants.}
\label{fig:moredevration01}
\end{figure*}

Example $\progname{race}$ introduced in Section~\ref{sec:programmodel} models a race between a hare
(variable $h$) and a tortoise (variable $t$). With the same reasoning for $\progname{prdwalk}$, we can infer the bound ${\frac{2}{3}}{\cdot}{\interval{h,{t}{+}{9}}}$ on the expected 
value of the elapsed time.

Example $\progname{C4B\_t13}$ adapted from~\cite{CarbonneauxHZ15} shows how expected potential method can be used to find linear expected bound for nested loop. The outer loop iterates $\interval{0,x}$ times. We increase $y$ by $1$ with probability $\frac{1}{4}$, otherwise the inner loop is executed until $y$ is $0$ with probability $\frac{3}{4}$. \toolname{} computes a tight bound by discovering that only one run of the inner loop depends on the initial value of $y$, that is ${\interval{0,y}}$, for the others runs, the expected value of total number of iterations is $\frac{1}{4}{\cdot}{\interval{0,x}}$. 

In Figure~\ref{fig:moredevration02}, example $\progname{miner}$ simulates a miner who is trapped in a mine. The miner is sent to the mine for $n$ times independently, for each time, with probability $\frac{1}{2}$ he is trapped and with the same probability he is safe. When he is trapped, there are $3$ doors in the mine for using. The first door leads to a tunnel that will take him to safety after $3$ hours (representing by $3$ \word{ticks}). The second door leads to a tunnel that returns him to the mine after $5$ hours. And the third door leads to a tunnel that returns him to the mine after $7$ hours. At all times, the miner is equally likely to choose any one of the doors, meaning that he chooses any door with equivalent probability $\frac{1}{3}$. 

Let's consider the manual way to compute expected time the miner reaches safety for $n$ independent times. 
Let $X_i$ be the random variable representing the miner escape time in the $i^{th}$ time and let $X = X_1 + \cdots + X_n$, we are asking for $\expt{}{X}$. We first need to use the \emph{law of total expectation} to calculate $\expt{}{X_i}$. 
Let $A$ be the event that the miner is trapped then $\expt{}{X_i} = \expt{}{X_i|A}{\cdot}\prob{A} + 0{\cdot}\prob{\neg A} = \frac{1}{2}{\cdot}\expt{}{X_i|A}$. Now use again the law of total expectation to calculate $\expt{}{X_i|A}$, 
let $B$ be the event that the miner reaches safety in the first try, then we have $\expt{}{X_i|A} = \expt{}{(X_i|A)|B}{\cdot}\prob{B} + \expt{}{(X_i|A)| \neg B}{\cdot}\prob{\neg B} = 3{\cdot}\frac{1}{3} + (5{\cdot}\frac{1}{2} + 7{\cdot}\frac{1}{2} + \expt{}{X_i|A}){\cdot}\frac{2}{3} = 5 + \frac{2}{3}{\cdot}\expt{}{X_i|A}$, thus $\expt{}{X_i|A} = 15$. Finally, by linearity of expectation, we get $\expt{}{X} = \sum^{n}_{i=1}\expt{}{X_i} = \frac{15}{2}{\cdot}n$ if $n > 0$, $0$ otherwise. 

As we have shown in the above reasoning process, the manual method involves many heuristic theorems and techniques in probability theory and algebra. As a result, it is very hard or impossible to automate. This example demonstrates how \toolname{} handles Boolean values (the variable $\word{flag}$) to get the tightest expected bound $\frac{15}{2}{\cdot}{\interval{0,n}}$. It assigns the potential ${15}{\cdot}{\interval{0,\word{flag}}}$ to the variable $\word{flag}$ for paying the cost of the inner loop, in which $\frac{15}{2}{\cdot}{\interval{0,n}}{-}{\frac{15}{2}}{+}{15}{\cdot}{\interval{0,\word{flag}}}$ is the loop invariant. 
\begin{figure}[th!]
\centering
\begin{minipage}[b]{0.29\textwidth}
\centering
\begin{lstlisting}[basicstyle=\footnotesize,escapeinside={@}{@}]
@$\devcomment{{q}{:=}{\frac{15}{2}}{\cdot}{\interval{0,n}}{-}{\frac{15}{2}}}$@
@$\devcomment{{.};{\frac{15}{2}}{\cdot}{\interval{0,n}}}$@
while (n>0) {
  @$\devcomment{{{n}{>}{0}};{\frac{15}{2}}{+}{q}}$@
  flag = 1;
    @$\devcomment{{{\word{flag}}{=}{1}};{15}{\cdot}{\interval{0,\word{flag}}}{+}{q}}$@
    while (flag > 0) {
      @$\devcomment{{{\word{flag}}{=}{1}};{3}{+}{q}}$@
        flag=0;
        @$\devcomment{{{\word{flag}}{=}{0}};{3}{+}{15}{\cdot}{\interval{0,\word{flag}}}{+}{q}}$@ 
        tick(3)
        @$\devcomment{{{\word{flag}}{=}{0}};{15}{\cdot}{\interval{0,\word{flag}}}{+}{q}}$@ 
      @$\oplus_{\frac{1}{3}}$@
      @$\devcomment{{{\word{flag}}{=}{1}};{\frac{-3}{2}}{+}{\frac{45}{2}}{+}{q}}$@
          @$\devcomment{{{\word{flag}}{=}{1}};{\frac{-5}{2}}{+}{\frac{45}{2}}{+}{q}}$@
          flag=1;
          @$\devcomment{{{\word{flag}}{=}{1}};{5}{+}{15}{\cdot}{\interval{0,\word{flag}}}{+}{q}}$@ 
          tick(5)
           @$\devcomment{{{\word{flag}}{=}{1}};{15}{\cdot}{\interval{0,\word{flag}}}{+}{q}}$@ 
        @$\oplus_{\frac{1}{2}}$@
          @$\devcomment{{{\word{flag}}{=}{1}};{\frac{-1}{2}}{+}{\frac{45}{2}}{+}{q}}$@
          flag=1;
          @$\devcomment{{{\word{flag}}{=}{1}};{7}{+}{15}{\cdot}{\interval{0,\word{flag}}}{+}{q}}$@  
          tick(7);
          @$\devcomment{{{\word{flag}}{=}{1}};{15}{\cdot}{\interval{0,\word{flag}}}{+}{q}}$@  
    }
    @$\devcomment{{{\word{flag}}{\leq}{0}};{15}{\cdot}{\interval{0,\word{flag}}}{+}{q}}$@
    @$\devcomment{{n}{>}{0};{q}}$@
  @$\oplus_{\frac{1}{2}}$@
    @$\devcomment{{n}{>}{0};{\frac{15}{2}}{\cdot}{\interval{0,n}}{-}{\frac{15}{2}}}$@
    skip; 
  @$\devcomment{{n}{>}{0};{q}}$@
  n=n-1;
  @$\devcomment{{n}{\geq}{0};{\frac{15}{2}}{\cdot}{\interval{0,n}}}$@
}
\end{lstlisting}
$\progname{miner}$\\
${\frac{15}{2}}{\cdot}{\interval{0,n}}$
\end{minipage}%
\begin{minipage}[b]{0.20\textwidth}
\centering
\begin{lstlisting}[basicstyle=\footnotesize,escapeinside={@}{@}]
@$\devcomment{{\word{inv}}{:=}{6}{\cdot}\binom{\interval{0,n}}{2} {+} {3}{\cdot}{\interval{{0},{m}}}}$@
@$\devcomment{{.};{3}{\cdot}{\interval{{0},{n}}}^{2}}$@
while (n>0) {
  @$\devcomment{{n}{>}{0};{3}{\cdot}{\interval{{0},{n}}}^{2}}$@
  @$\devcomment{{n}{>}{0};{6}{\cdot}\binom{\interval{0,n}}{2}}$@
  n=n-unif(0,1);
  @$\devcomment{{n}{\geq}{0};{6}{\cdot}\binom{\interval{0,n}}{2} {+} {3}{\cdot}{\interval{{0},{n}}}}$@
  m=n
  @$\devcomment{{n}{=}{m};{\word{inv}}}$@
  while (m>0) {
    @$\devcomment{{m}{=}{n}{\land}{m}{>}{0};{-2}{+}{\word{inv}}}$@
    m=m-1
    @$\devcomment{{m}{=}{n}{\land}{m}{\geq}{0};{1}{+}{\word{inv}}}$@
    @$\oplus_{\frac{1}{3}}$@
    @$\devcomment{{m}{=}{n}{\land}{m}{>}{0};{1}{+}{\word{inv}}}$@
    skip;
    @$\devcomment{{m}{=}{n}{\land}{m}{\geq}{0};{1}{+}{\word{inv}}}$@
    tick(1);
    @$\devcomment{{m}{=}{n}{\land}{m}{\geq}{0};{0}{+}{\word{inv}}}$@
  }
  @$\devcomment{{m}{\leq}{0};{\word{inv}}}$@
}
\end{lstlisting}
$\progname{rdbub}$\\
${3}{\cdot}{\interval{{0},{n}}}^{2}$
\end{minipage}%

\caption{More representative examples where probabilistic branching is used with compound statements and polynomial bound.}
\label{fig:moredevration02}
\end{figure}

Example $\progname{rdbub}$ models a probabilistic bubble sort algorithm. For each iteration of the inner loop, a pair of adjacent elements in the input array can be swapped. However, the algorithm performs the swapping only with probability $\frac{1}{3}$ and skips the swap with probability $\frac{2}{3}$.
The derivation proves the polynomial bound ${3}{\cdot}{\interval{{0},{n}}}^{2}$
on the expected number of iterations. This example shows the linear
potential ${3}{\cdot}{\interval{{0},{m}}}$ is spilled from the a quadratic potential
${6}{\cdot}\binom{\interval{0,n}}{2}$ after the sampling in the outer loop. This linear potential is used to
pay for the ticks
in the inner loop. The derivation of the bound for the inner loop is
similar to the bound derivations for the previous program.
However, the annotation carries an extra quadratic part which is
invariant in the inner loop and just passed along in the derivation. Note that the derivation of the potential 
${3}{\cdot}{\interval{{0},{n}}}^{2}$ from the potential ${6}{\cdot}\binom{\interval{0,n}}{2}$ reflects the use of the rule \rul{Q:Weaken} in the derivation system. 

\end{document}